\documentclass[11pt]{amsart}

\usepackage{latexsym}
\usepackage{amssymb}
\usepackage{amsmath}
\usepackage{color}
\usepackage{bbm}
\usepackage{hyperref}
\usepackage{lmodern}

\allowdisplaybreaks 

\usepackage{graphicx}
\usepackage{tcolorbox}
\usepackage{tabularx}
\usepackage{tikz-cd}



\newtheorem{theorem}{Theorem}[section]
\newtheorem{lemma}[theorem]{Lemma}
\newtheorem{proposition}[theorem]{Proposition}
\newtheorem{corollary}[theorem]{Corollary}

\theoremstyle{definition}
\newtheorem{defn}[theorem]{Definition}

\newtheorem{example}[theorem]{Example}
\newtheorem{remark}[theorem]{Remark}

\newtheorem{question}[theorem]{Question}

\DeclareMathOperator{\id}{id}
\DeclareMathOperator{\tr}{Tr}
\DeclareMathOperator{\Gg}{\mathcal{G}}
\DeclareMathOperator{\G}{\mathbb{G}}
\DeclareMathOperator{\Hbb}{\mathbb{H}}

\DeclareMathOperator{\GH}{\mathbb{G}\times \mathbb{H}}

\newcommand\nph{\varphi}

\newcommand{\cl}[1]{\mathcal{#1}}
\newcommand{\bb}[1]{\mathbb{#1}}
\DeclareMathOperator{\Tr}{Tr}

\newcommand{\mc}[1]{\mathcal{#1}}
\newcommand{\la}{\langle}
\newcommand{\ra}{\rangle}
\newcommand{\hs}{\hskip2pt}
\newcommand{\ten}{\otimes}

\newcommand{\vphi}{\varphi}

\providecommand{\wt}[1]{\widetilde{#1}}
\providecommand{\norm}[1]{\lVert#1\rVert}

\newcommand{\N}{\mathbb{N}}
\newcommand{\C}{\mathbb{C}}

\begin{document}

\title
{An operator system approach to self-testing}


\author[J. Crann]{Jason Crann$^{1}$}
\address{$^1$School of Mathematics \& Statistics, Carleton University, Ottawa, ON, Canada H1S 5B6}
\email{jasoncrann@cunet.carleton.ca}

\author[I. G. Todorov]{Ivan G. Todorov$^{2}$}
\address{$^{2}$School of Mathematical Sciences, University of Delaware, 501 Ewing Hall, Newark, DE 19716, USA}
\email{todorov@udel.edu}

\author[L. Turowska]{Lyudmila Turowska$^{3}$}
\address{$^{3}$Department of Mathematical Sciences, Chalmers University of Technology and the University of Gothenburg, Gothenburg SE-412 96, Sweden}
\email{turowska@chalmers.se}

\begin{abstract} 
We develop a general framework for self-testing, in which bipartite 
correlations are described by states on the commuting tensor product
of a pair of operator systems. We propose a definition of a local isometry 
between bipartite quantum systems in the commuting operator model, and 
define self-testing and abstract self-testing in the latter generality. 
We show that self-tests are in the general case 
always abstract self-tests and that, in 
some cases, the converse is also true. 
We apply our framework in a variety of 
instances, including to correlations with quantum inputs and outputs, 
quantum commuting correlations for the CHSH game, synchronous correlations, 
contextuality scenarios, quantum colourings
and Schur quantum channels. 
\end{abstract}

\date{22 June 2025}

\maketitle

\tableofcontents



\section{Introduction}\label{s_intro}

About two decades ago, a remarkable feature of quantum computation was uncovered
in \cite{myao}: quantum devices can be certified
either by a user or by an independent third party by only observing 
their outputs when completing specific quantum computational tasks. 
This feature became known as \emph{self-testing}, and was intensively studied in the 
subsequent years, in particular in connection with device-independent quantum 
cryptography (see e.g. \cite{bsca-1,bsca-2,pabgms, ruv}). 
The notion was used in the celebrated solution of 
the Tsirelson Problem in \cite{jnvwy}, and 
a number of recent advances on the topic, from a variety of perspectives, has been 
achieved in \cite{broadbent, cgjv, mps, mpw, pszz, zhao}. 
A survey of the concept and associated results can be found in \cite{supic-bowles}. 

In its simplest form, namely, the (bipartite) Bell scenario, self-testing rests on the correlation between the behaviours of two non-communicating 
parties, Alice and Bob, performing a quantum experiment. 
Each of the two parties holds a quantum system, modeled by two finite dimensional 
Hilbert spaces $H_A$ and $H_B$, and share an entangled state, 
modeled by a unit vector $\xi$ in the tensor product $H_A\otimes H_B$. 
Alice (resp. Bob) has access to measurement devices, indexed by 
a finite set $X$ (resp. $Y$), modeled by 
POVM's, say $(E_{x,a})_{a\in A}$, $x\in X$,  
(resp. $(F_{y,b})_{b\in B}$, $y\in Y$), where $A$ (resp. $B$) 
is the set of outputs for Alice (resp. Bob). 
The tuple $M = (H_A,H_B,(E_{x,a})_{a\in A},(F_{y,b})_{b\in B},\xi)$
yields a \emph{correlation}, that is, a family $p_M$ of conditional 
probability distributions over $A\times B$, given by
\begin{equation}\label{eq_first}
p_M(a,b|x,y) = \langle (E_{x,a}\otimes F_{y,b})\xi,\xi\rangle, \ \ \ 
x\in X, y\in Y, a\in A, b\in B.
\end{equation}
We say that the tuple $M$ is a model of $p_M$;
naturally, different models may yield the same correlation $p_M$. 
In the Bell scenario, self-testing consists in the fact that, in some cases, the correlation $p_M$
determines uniquely (up to a natural equivalence) the model $M$. 
Setups for self-testing, distinct from the Bell scenario, have also been 
considered; for example, it was shown in \cite{brv-etal} that
probabilistic assignments of some contextuality scenarios can be 
self-tested, thus enlarging the capabilities of self-testing 
beyond classical bipartite experiments.

Recent operator algebraic approaches to self-testing in the 
Bell scenario were made in \cite{mps}, \cite{pszz} and \cite{zhao}.
They rely on the fact that, given the index sets $X$ and $A$ for Alice, 
there exist an operator system $\cl S_{X,A}$ 
(resp. a C*-algebra $\cl A_{X,A}$) that 
is universal for families of POVM's $(E_{x,a})_{a\in A}$, 
indexed over $X$, in that the latter 
completely determine the unital completely positive maps 
(resp. unital *-homomorphisms) on the former. 
This led to an abstract view on self-testing, developed in 
\cite{pszz} and pursued further in \cite{zhao}, which resides in 
the states of the commuting tensor product $\cl S_{X,A}\otimes_{\rm c} \cl S_{Y,B}$ 
\cite{KPTT11} admitting a unique extension to a state on the 
maximal tensor product $\cl A_{X,A}\otimes_{\max} \cl A_{Y,B}$. 
It was shown in \cite{pszz} that, in some cases, self-testing and abstract 
self-testing coincide. 

In a parallel development, the \emph{commuting operator model}
of quantum mechanics has been gaining importance over 
the past decade  
as a genuinely different (due to \cite{jnvwy}) and useful model 
in quantum information theory; see, for example, \cite{cklt, cltt, psstw, pt}. 
Abstract self-testing was studied in the commuting operator model 
in \cite{pszz}, but no advances have so far been made on providing an  
operational definition of self-testing in the latter context. 

On the other hand, 
in connections with non-local games with quantum inputs and quantum outputs, 
no-signalling correlations were extended to the quantum setting in 
\cite{tt-QNS}, and further used as strategies for such games, 
notably for games arising from quantum graphs, in \cite{bhtt-JFA}
and \cite{bhtt-Adv}. The question of a 
suitable framework for self-testing of quantum-to-quantum, or
classical-to-quantum, correlations has been, however, open. 

The purpose of the present work is to fill both of these gaps. 
We provide a far-reaching framework for self-testing, which includes 
as a special case self-testing in the Bell scenario for 
classical bipartitie no-signalling correlations, as well as  
known instances of self-testing for a single contextuality scenario, 
simultaneously developing an operational concept of self-testing 
in the commuting operator framework. 
Our setup includes self-testing of all correlation types of current interest, 
and is developed in a general case, where the correlations between Alice and Bob 
arise from an arbitrary pair of operator systems. 
We apply our framework in a variety of 
new instances: quantum commuting self-testing for the CHSH game, 
self-testing of classical correlations arising from 
representations of the Clifford relations, of synchronous correlations
of quantum type, of probabilistic assignments over a pair of 
contextuality scenarios, of perfect strategies for quantum colourings
of classical complete graphs, as well as of quantum channels 
arising from Schur multipliers. 

We describe the content of the paper in more detail. 
In self-testing of no-signalling correlations of quantum type, 
a fundamental role for the identification 
of the uniqueness of the model $M$ (see (\ref{eq_first})) 
is played by \emph{local isometries}; 
in this case the latter have the form $V_A\otimes V_B$, where $V_A : H_A\to K_A$ and 
$V_B : H_B\to K_B$ are isometries between the Alice's and Bob's systems, 
respectively. 
Local isometries allow to push forward correlation models
via operations on each of the systems of Alice and Bob, separately. 
In the commuting operator model, the tensor splitting 
$H_A\otimes H_B$ of the joint quantum system is not available; instead, 
the latter is modeled by a single Hilbert space $H$. 
In Section \ref{s_locisom} we therefore develop a concept of 
a local isometry between quantum commuting bipartite systems, which leads to a natural relation of dominance of one system by another.
Along with establishing some basic properties, related to composing and 
tensoring, we show in Theorem \ref{preorder} that the dominance relation is a 
pre-order.

In Section \ref{s_models}, we propose a notion of a model 
of a correlation for quantum commuting systems. 
The quantum commuting correlations between Alice and Bob are represented 
via states on the commuting tensor product $\cl S_A\otimes_{\rm c} \cl S_B$ of
operator systems $\cl S_A$ and $\cl S_B$, where the former captures 
the local degrees of freedom of Alice, while the latter -- those of Bob. 
A quantum commuting model over the pair $(\cl S_A,\cl S_B)$
is defined as a tuple
$S=(_{\cl A}H_{\cl B},\varphi_A, \varphi_B, \xi)$, 
where $\cl A$ and $\cl B$ are the von Neumann algebras of local observables 
for Alice and Bob, respectively, acting on a Hilbert space $H$,
$\xi\in H$ is a pure state, and
$\varphi_A : \cl S_A\to \cl B(H)$ and $\varphi_B : \cl S_B\to \cl B(H)$
are unital completely positive maps with commuting ranges that are contained in 
the corresponding observable algebras. 
The model $S$ determines a correlation $f_S$, given by 
\begin{equation}\label{eq_initde}
f_S(u) = \langle (\nph_A\cdot\nph_B)(u)\xi,\xi\rangle, \ \ \ 
u\in \cl S_A\otimes_{\rm c} \cl S_B.
\end{equation}
Local isometries now allow to define a \emph{local dilation} relation 
$S\preceq \tilde{S}$ between models $S$ and $\tilde{S}$
(of the same correlation); we show that the latter is a pre-order. 
In fact, it is natural to work in what appears to be a greater generality
of the \emph{approximate local dilation} order $S\preceq_{\rm a} \tilde{S}$,
for which we require the existence of a sequence of local isometries that 
attain $\tilde{S}$ from $S$ in the limit, and 
which continues to imply that $f_S = f_{\tilde{S}}$. 
(Approximate) local dilations between models lead
naturally to the notion of (weak) self-testing of a given 
quantum commuting correlation by requiring 
the existence of a maximal (with respect to the
(approximate) local dilation pre-order) model for the given correlation.
The main result of Section \ref{s_models} is Theorem \ref{th_stisabs}, where we 
show that, under some additional hypotheses, if the correlation $f$ is a 
weak self-test then it is an abstract self-test; 
the result extends \cite[Proposition 4.10]{pszz} to the commuting operator framework. 

In Section \ref{s_abstractselft}, we study further abstract self-tests, 
providing a characterisation thereof in the general case. 
In Theorem \ref{th_rev}, we extend \cite[Theorem 3.5]{pszz}, 
showing that self-testing is equivalent to abstract self-testing 
for correlations arising from tensor products $\nph_A\otimes\nph_B$
in the place of products $\nph_A\cdot\nph_B$ in (\ref{eq_initde}), 
in the case where the observable algebras of Alice and Bob are of type I. 

Section \ref{s_app} is dedicated to applications and examples. 
In their majority, they are concerned with \emph{finitary} quantum systems, 
that is, based on a pair of finite dimensional operator systems. 
In Subsection \ref{ss_QNS}, we show how the proposed framework includes 
self-testing of QNS correlations introduced in \cite{tt-QNS}. 
Subsection \ref{ss_POVMNS} places the self-testing of 
POVM correlations \cite{pszz} in our setup, and explores the 
relation between the latter and QNS self-testing. En route, 
we record some new observations about the relation between 
the universal operator systems for POVM self-testing and
QNS self-testing. Example \ref{CHSH_Cq_Qq} demonstrates that 
self-tests for classical no-signalling correlations cannot be 
automatically lifted to self-tests for quantum no-signalling correlations. 
Subsection \ref{ss_CSHS-qcst} shows that the optimal quantum strategy 
for the CHSH game persists in being a self-test within the class of 
quantum commuting correlations. 

In Subsection \ref{ss_clifford}, we continue the study of self-testing 
for no-signalling correlations of quantum type. 
We single out a class such correlations (which we call Clifford correlations), 
arising from representations of the Clifford relations, and 
show that synchronous Clifford correlations are abstract self-tests. 
We employ the NPA hierarchy \cite{npa}, showing that the 
latter fact leads to self-tests among the correlations with 
constraints on their order-two moments. 

Subsection \ref{ss_probassicont} shows how our framework can host 
self-testing for a pair of contextuality scenarios \cite{acin-etal}. 
We note that the framework of contextuality scenarios is designed to include as a special case
no-signalling correlations; however, no-signalling is usually studied alongside 
contextuality. 
We show that, even in the case where one of the scenarios 
consists of a single outcome, 
this setting may be non-trivial, by casting \cite[Main Theorem]{brv-etal} 
in our language. 

In Subection \ref{s:qg}, we show one yet new special case hosted within
our framework, namely, classical-to-quantum no-signalling correlations. 
We show that perfect strategies for the quantum colouring game of classical 
complete graphs are self-tests. 
Finally, Subsection \ref{s:schur} provides self-testing examples among 
Schur quantum channels.

The relation between abstract self-testing -- requiring a unique extension of a state to a tensor product of C*-covers -- and self-testing -- defined through the existence of a maximal dilation -- is reminiscent of the relation between the unique extension property (UEP) for unital completely positive maps on operator systems and maximality in dilation order \cite{arv}, wherein C*-envelopes play a crucial role. In fact, this similarity was one of our motivations for developing an operator system approach to self-testing. In order to elucidate this similarity, in Section \ref{s_conncstarenv} we show that many of the special 
cases studied in the literature as well as our new self-testing examples fall in the class of states 
that can be factored through the tensor product of C*-envelopes of the 
ground operator systems. We also show that every stochastic operator 
matrix admits a unitary dilation, allowing us to identify the 
C*-envelope of the universal operator system of QNS correlations. We leave a deeper investigation of the connections with UEP to future work.

In Section \ref{s_questions}, we comment on some questions 
arising from our work. 

\medskip

\noindent 
{\bf Notation.} 
We let $\cl X\otimes \cl Y$ be the algebraic tensor product of
vector spaces $\cl X$ and $\cl Y$, unless the latter are Hilbert spaces, 
in which case the notation will be reserved for their Hilbertian tensor product. 
Given Hilbert spaces $H$ and $K$, we let $\cl B(H,K)$ be the 
(Banach) space of all bounded linear operators from $H$ into $K$; 
we set $\cl B(H) = \cl B(H,H)$. 
If $\cl{A}$ is a unital C*-algebra, we write $1_\cl{A}$ for its unit, and let 
$1_H = I_H = 1_{\cl B(H)}$. 
The opposite C*-algebra $\cl{A}^o$ of $\cl{A}$ has the same 
set-theoretic, linear and involutive structure as $\cl{A}$ and, writing its elements as $a^o$, where $a\in \cl{A}$, its multiplication is 
given by letting $a_1^o a_2^o = (a_2 a_1)^o$, $a_1,a_2\in \cl{A}$. 
The maximal (resp. minimal) tesnor product of C*-algebras $\cl{A}$ and 
$\cl{B}$ will be denoted by $\cl{A} \otimes_{\max} \cl{B}$
(resp. $\cl{A} \otimes_{\min} \cl{B}$). If $\cl{A}$ and $\cl{B}$ are von Neumann algebras, 
their spatial weak* closed tensor product will be denoted by 
$\cl{A}\bar\otimes \cl{B}$. 

For a finite set $X$, we let $\{e_x\}_{x\in X}$ denote the canonical 
orthonormal basis of $\bb{C}^X$. We write $M_X$ for the algebra of all 
$X\times X$ complex matrices, and let $\{\epsilon_{x,x'}\}_{x,x'\in X}$
be the canonical matrix unit system in $M_X$. We let ${\rm Tr}$ 
(resp. ${\rm tr}$) be the trace (resp. normalised trace) on $M_X$; we write 
${\rm Tr}_{|X|}$ (resp. ${\rm tr}_{|X|}$) 
to improve clarity as needed.


\section{Local isometries}
\label{s_locisom}

In this section, we define a notion of local isometry 
in the commuting operator framework of quantum mechanics and 
establish some of its properties that will be used in the sequel. 
We first recall the tensor product model of composite quantum systems; 
it serves as 
a motivating example which we aim to generalise. 

Let $H_A$, $K_A$, $H_B$ and $K_B$ be Hilbert spaces; 
they give rise to two (bipartite) 
quantum mechanical systems with observable
algebras
$\cl B(H_A\ten H_B) = \cl B(H_A)\bar{\otimes}\cl B(H_B)$ and 
$\cl B(K_A\ten K_B) = \cl B(K_A)\bar{\otimes}\cl B(K_B)$, respectively. 
An isometry $V : H_A\ten H_B\to K_A\ten K_B$ is called 
\emph{local} if 
$V = V_A\ten V_B$ for some isometries
$V_A : H_A\to K_A$ and $V_B : H_B\to K_B$.
We may view $V_A\ten V_B$ as a composition 
in two ways, 
by introducing the 
intermediate Hilbert spaces $H_A\ten K_B$ and $K_A\ten H_B$ and writing
$$V_A\ten V_B=(V_A\ten I_{K_B})\circ(I_{H_A}\ten V_B)=(I_{K_A}\ten V_B)\circ(V_A\ten {I_{H_B}});$$ 
in other words, we have a commutative diagram
\begin{equation*}
\begin{tikzcd}
H_A\ten H_B\arrow[d, "V_A\ten I_{H_B}"]\arrow[r, "I_{H_A}\ten V_B"] &H_A\ten K_B\arrow[d, "V_A\ten I_{K_B}"]\\
K_A\ten H_B\arrow[r, "I_{K_A}\ten V_B"] &K_A\ten K_B.
\end{tikzcd}
\end{equation*}

Towards formulating a suitable extension 
of the framework described in the previous paragraph,
we fix von Neumann algebras $\cl{A}$ and $\cl{B}$. 
A \textit{bipartite quantum system} over the pair $(\cl{A},\cl{B})$ 
is a pair $(H,\pi)$, where 
$H$ is a Hilbert space and 
$\pi : \cl{A}\ten_{\max} \cl{B}^{o}\to \cl B(H)$ is 
a unital $*$-representation, normal in each of the two variables (equivalently, $\pi$ is a unital representation of the binormal tensor product $\cl{A}\ten_{\rm bin} \cl{B}^o$ \cite{EL}).
We will often write $\pi=\pi_H$ if we want to emphasise the 
underlying Hilbert space.
We let $\pi_H(a) = \pi(a\ten 1)$, $a\in \cl{A}$, and 
$\pi_H(b^o) = \pi(1\otimes b^o)$, $b^o\in \cl{B}^o$.  
We set
$$a\cdot\xi := \pi_{H}(a)\xi \ \mbox{ and }
\ 
\xi\cdot b := \pi_{H}(b^o)\xi, \ \ \ a\in \cl{A}, \ b\in \cl{B}, \ \xi\in H;$$
thus, a bipartite system over $(\cl{A},\cl{B})$ is, equivalently, 
a normal Hilbertian $\cl{A}$-$\cl{B}$-bimodule \cite[\S IX.3]{takesaki2}. 
For clarity, we will use the notation $_\cl{A}H_\cl{B}$ in the place of $H$, 
and, by abuse of notation, 
will refer to $_\cl{A}H_\cl{B}$ as a biparite quantum system.
Heuristically, a bipartite quantum system $_\cl{A}H_\cl{B}$ is 
a joint quantum system of parties
Alice and Bob; the algebra $\cl{A}$ represents Alice's part of the system 
(namely, the observables accessible to Alice), while 
the algebra $\cl{B}$ -- Bob's part of the system 
(namely, the observables accessible to Bob).

\begin{remark}\label{r_intten}
If $_\cl{A}H_\cl{B}$ and $_\cl{C}K_\cl{D}$ are biparite quantum systems then 
the tensor product $H\otimes K$ is a bipartite quantum system 
over the pair $(\cl{A}\bar\otimes \cl{C}, \cl{B}\bar\otimes \cl{D})$ in a canonical fashion. 
We write $H\otimes K = \mbox{}_\cl{A}H_\cl{B} \otimes \mbox{}_\cl{C}K_\cl{D}$. 
\end{remark}

\begin{defn}\label{d:Alocal} 
Let $\cl{A}, \cl{A}_i, \cl{B}, \cl{B}_i$ be von Neumann algebras, $i = 1,2$, 
and $_{\cl{A}_1}H_{\cl{B}}$, $_{\cl{A}_2}K_{\cl{B}}$, $_{\cl{A}}\tilde{H}_{\cl{B}_1}$ and 
$_{\cl{A}}\tilde{K}_{\cl{B}_2}$ be biparite quantum systems.

\begin{itemize}
\item[(i)] 
An operator $T\in\cl B(H,K)$ is called \textit{$\cl{A}_1$-$\cl{A}_2$-local} if
\begin{itemize}
\item[(i')] $T(\xi\cdot b)=(T\xi)\cdot b$, $\xi\in H$, $b\in \cl{B}$, and 
\item[(i'')] $T\pi_{H}(\cl{A}_1)T^*\subseteq\pi_{K}(\cl{A}_2)$.
\end{itemize} 

\item[(ii)] 
An operator $\tilde{T}\in\cl B(\tilde{H},\tilde{K})$ is called \textit{$\cl{B}_1$-$\cl{B}_2$-local} if
\begin{enumerate}
\item[(ii')] $\tilde{T}(a\cdot \tilde{\xi}) = a\cdot (\tilde{T}\tilde{\xi})$, 
$\tilde{\xi}\in \tilde{H}$, $a\in \cl{A}$, and 
\item[(ii'')] $T\pi_{\tilde{H}}(\cl{B}_1^o)T^*\subseteq\pi_{\tilde{K}}(\cl{B}_2^o)$.
\end{enumerate} 
\end{itemize}
\end{defn}

If no confusion arises, $\cl{A}_1$-$\cl{A}_2$-local 
(resp. $\cl{B}_1$-$\cl{B}_2$-local) operators will be simply referred to 
as \emph{$\cl{A}$-local} (resp. \emph{$\cl{B}$-local}). 
We note that the $\cl{B}$-module condition (i') in Definition \ref{d:Alocal} 
is equivalent to the operator
$T : \mbox{}_{\cl{A}_1}H_{\cl{B}} \to \textbf{}_{\cl{A}_2}K_{\cl{B}}$ 
being an intertwiner between the representations $\pi_H$ and $\pi_K$ of $\cl{B}$.
Intuitively, such an intertwiner 
is $\cl{A}$-local if, in addition, the conjugation by $T$ leaves 
Alice's algebras of observables globally invariant. Such module and invariance conditions are common features among notions of local operations in algebraic quantum theory (see e.g. \cite{cckl, cklt, kita,lsww,vw}).

\begin{defn}\label{d:localiso} Let 
$_{\cl{A}_1}\hspace{-0.04cm}H_{\cl{B}_1}$ and 
$_{\cl{A}_2}\hspace{-0.04cm}K_{\cl{B}_2}$ be bipartite quantum systems over $(\cl{A}_i,\cl{B}_i)$, $i=1,2$. 
An operator $T : \mbox{}_{\cl{A}_1}\hspace{-0.04cm}H_{\cl{B}_1}\to \mbox{}_{\cl{A}_2}\hspace{-0.04cm}K_{\cl{B}_2}$ is 
called \textit{local} if there exist
bipartite quantum systems
$_{\cl{A}_1}\hspace{-0.04cm}L_{\cl{B}_2}$ and $_{\cl{A}_2}\hspace{-0.04cm}\widetilde{L}_{\cl{B}_1}$, $\cl{A}_1$-$\cl{A}_2$ local maps $$T_{1,1}: \mbox{}_{\cl{A}_1}\hspace{-0.04cm}H_{\cl{B}_1}\to \mbox{}_{\cl{A}_2}\hspace{-0.04cm}\widetilde{L}_{\cl{B}_1} \ 
\mbox{ and } \ 
T_{2,2}: \mbox{}_{\cl{A}_1}\hspace{-0.04cm}L_{\cl{B}_2}\to \mbox{}_{\cl{A}_2}\hspace{-0.04cm}K_{\cl{B}_2},$$ and $\cl{B}_1$-$\cl{B}_2$ local maps $$T_{1,2}: \mbox{}_{\cl{A}_1}\hspace{-0.04cm}H_{\cl{B}_1}\to \mbox{}_{\cl{A}_1}\hspace{-0.04cm}L_{\cl{B}_2} \ \mbox{ and } \ T_{2,1}= \mbox{}_{\cl{A}_2}\hspace{-0.04cm}\widetilde{L}_{\cl{B}_1}\to \mbox{}_{\cl{A}_2}\hspace{-0.04cm}K_{\cl{B}_2},$$
for which the diagram 

\begin{equation}\label{eq_diag!}
\begin{tikzcd}
_{\cl{A}_1}\hspace{-0.04cm}H_{\cl{B}_1}\arrow[d, "T_{1,1}"] \arrow[r, "T_{1,2}"] &_{\cl{A}_1}\hspace{-0.04cm}L_{\cl{B}_2}\arrow[d, "T_{2,2}"]\\
_{\cl{A}_2}\hspace{-0.04cm}\widetilde{L}_{\cl{B}_1} \arrow[r, "T_{2,1}"] &_{\cl{A}_2}\hspace{-0.04cm}K_{\cl{B}_2}
\end{tikzcd}
\end{equation}
commutes, and 
$T=T_{2,2}\circ T_{1,2} \ (=T_{2,1}\circ T_{1,1})$. When all the local maps $T_{i,j}$ can be chosen to be 
isometric, we say that $T$ is a \textit{local isometry}. In this case, we write $_{\cl{A}_1}\hspace{-0.04cm}H_{\cl{B}_1}\leq\, _{\cl{A}_2}\hspace{-0.04cm}K_{\cl{B}_2}$.
\end{defn}

Henceforth, when discussing local isometries we will stick with the convention that the ``veritcal maps'' $T_{1,1}$ and $T_{2,2}$ are $\cl{A}$-local and the ``horizontal maps'' are $\cl{B}$-local.

\begin{remark}\label{r_preciselo}
\rm 
{\bf (i) } 
Let $\cl{A}$ and $\cl{B}$ be commuting von Neumann algebras acting on a Hilbert space $H$. Then we can view $H=_\cl{A}\hspace{-0.04cm}H_{\cl{B}^o}$ as an $\cl{A}$-$\cl{B}^{o}$-bimodule in the canonical fashion. Let $u\in \cl{A}$ and $v\in \cl{B}$ be isometries. Then the product 
$uv: \, _\cl{A}H_{\cl{B}^o}\to \,  _\cl{A}H_{\cl{B}^o}$ is a local isometry in the sense of Definition \ref{d:localiso}, with canonical factorisations.

\smallskip

{\bf (ii) } 
In the context of item (i), assume that 
$_{\cl{A}}\hspace{-0.04cm}H_{\cl{B}}\leq\, 
_{\tilde{\cl{A}}}\hspace{-0.04cm}\tilde{H}_{\tilde{\cl{B}}}$, 
implemented by isometries $T_{i,j}$, $i,j = 1,2$, as in (\ref{eq_diag!}). 
Writing $_{\cl{A}}\hspace{-0.04cm}L_{\tilde{\cl{B}}}$ for the 
upper right corner of (\ref{eq_diag!}), we have that there 
exists a normal *-representation $\sigma : \cl A \to \cl B(L)$ 
and a normal *-representation $\rho : \sigma(\cl A)\to \cl B(\tilde{H})$, 
such that ${\rm ran}(\rho) \subseteq \pi_{\tilde{H}}(\tilde{\cl A})$, and 
\begin{equation}\label{eq_concretefol}
T_{1,2} a = \sigma(a) T_{1,2} \ \mbox{ and } \ 
T_{2,2} \sigma(a) = \rho(\sigma(a)) T_{2,2}, \ \ \ a\in \cl A.
\end{equation}
Indeed, the existence of $\sigma$ and the first relation in (\ref{eq_concretefol}) 
are a restatement of condition (ii') in Definition \ref{d:Alocal},
while $\rho$ can be defined by letting 
$\rho(\sigma(a)) = T_{2,2} \sigma(a) T_{2,2}^*$, $a\in \cl A$. 
\end{remark}

\begin{example}\label{ex_findimset}
For a Hilbert space $H$, let $\bar{H}$ be its dual Banach space, 
and let $\partial : H\to \bar{H}$ be the conjugate-linear isometry, 
given by $\partial(\xi)(\eta) = \langle\eta,\xi\rangle$, $\xi,\eta \in H$. 
We write $\bar{\xi} = \partial(\xi)$, $\xi\in H$. 
If $K$ is a(nother) Hilbert space, let 
$\cl S_2(\bar{K},H)$ be the Hilbert 
space of all Hilbert-Schmidt operators from $\bar{K}$ into $H$, and 
$\theta: H\otimes K\to \cl S_2(\bar{K},H)$ be the unitary operator, 
given by 
$\theta(\xi\otimes\eta)(\bar{\zeta}) = \langle \eta,\zeta\rangle \xi$. 

Let $H_A$, $K_A$, $H_B$ and $K_B$ be Hilbert spaces.
We note the canonical identification $\cl B(H_B)^o = \cl B(\bar{H}_B)$,
where an element $b^o\in \cl B(H_B)^o$ is identified 
with the dual operator $\bar{b} : \bar{H}_B\to \bar{H}_B$
of $b\in \cl B(H_B)$.
Consider the normal *-representation 
$\pi_{H_A\otimes H_B} : 
\cl B(H_A)\otimes_{\max}\cl B(H_B)^o \to \cl B(H_A\otimes H_B)$, given by 
$$\pi_H(a)(\xi) = \theta^{-1}(a\theta(\xi)), \ \ \ 
\pi_H(b^o)(\xi) = \theta^{-1}(\theta(\xi)\bar{b}),$$ 
where $a\in \cl B(H_A), b^o\in \cl B(H_B)^o$ and $\xi\in H_A\otimes H_B$.
We write $b^{\rm t}$ for the operator on $H_B$ for which 
$\pi_H(b^o) = I_{H_A}\otimes b^{\rm t}$, $b\in \cl B(H_B)$. 
The *-representation $\pi_{H_A\otimes H_B}$ 
gives rise to a bipartite quantum system
$_{\cl B(H_A)}(H_A\otimes H_B)_{\cl B(H_B)^{o}}$; we similarly obtain  
a bipartite quantum system 
$_{\cl B(K_A)}(K_A\otimes K_B)_{\cl B(K_B)^{o}}$. 
Quantum systems of the latter form will be called 
\emph{quantum spacial systems}.

Suppose that $V_A:H_A\to K_A$ and $V_B:H_B\to K_B$ are isometries. 
Let $L = H_A\ten K_B$ and $\wt{L} = K_A\otimes H_B$, 
and consider the bipartite quantum systems 
$_{\cl B(H_A)}{L}_{B(K_B)^{o}}$ and 
$_{\cl B(K_A)}\wt{L}_{\cl B(H_B)^{o}}$, 
arising as described in the previous paragraph. 
Setting 
$T_{1,1} = V_A\otimes I_{H_B}$, 
$T_{1,2} = I_{H_A}\ten V_B$, $T_{2,1} = I_{K_A}\otimes V_B$ and 
$T_{2,2} = V_A\otimes I_{K_B}$, we 
obtain a commutative diagram 
\begin{equation}\label{eq_diag1}
\begin{tikzcd}
_{\cl B(H_A)}(H_A\otimes H_B)_{\cl B(H_B)^{o}} \arrow[d, "T_{1,1}"] \arrow[r, "T_{1,2}"] 
& _{\cl B(H_A)}L_{B(K_B)^{o}}\arrow[d, "T_{2,2}"]\\
_{\cl B(K_A)}\wt{L}_{\cl B(H_B)^{o}} \arrow[r, "T_{2,1}"] & 
_{\cl B(K_A)}(K_A\otimes K_B)_{\cl B(K_B)^{o}}, 
\end{tikzcd}
\end{equation}
for which $V_A\otimes V_B = T_{2,2}T_{1,2} = T_{2,1}T_{1,1}$. 
We will call the local isometries of the latter form \emph{split}. 
\end{example}

\begin{proposition}\label{p_split}
Let $H_A$, $K_A$, $H_B$ and $K_B$ be Hilbert spaces.
Every local isometry from $_{\cl B(H_A)}(H_A\otimes H_B)_{\cl B(H_B)^{o}}$ to $_{\cl B(K_A)}(K_A\otimes K_B)_{\cl B(K_B)^{o}}$
is split. 
\end{proposition}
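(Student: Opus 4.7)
The plan is to work with the factorization $T = T_{2,2} \circ T_{1,2}$ through the intermediate bipartite system $_{\cl B(H_A)}L_{\cl B(K_B)^o}$ provided by the local-isometry structure of $T$, and to show directly that $T$ has the product form $V_A \otimes V_B$; the split diagram of Example \ref{ex_findimset} is then obtained by choosing $L = H_A \otimes K_B$ and $\widetilde L = K_A \otimes H_B$ with the canonical maps built from $V_A$ and $V_B$.

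First, I would apply the structure theorem for normal unital representations of the type I factors $\cl B(H_A)$ and $\cl B(K_B)^o$ to the commuting representations on $L$. This produces a unitary identification $L \cong H_A \otimes K_B \otimes N$ for some auxiliary Hilbert space $N$, under which the two actions take the canonical forms $a \otimes I_{K_B} \otimes I_N$ and $I_{H_A} \otimes b^{\mathrm t} \otimes I_N$. The module parts of Definition \ref{d:Alocal} then pin down the shape of $T_{1,2}$ and $T_{2,2}$: since $T_{1,2}$ is $\cl B(H_A)$-module, the commutation relation $(\cl B(H_A) \otimes I)' = I \otimes \cl B(K_B \otimes N)$ yields $T_{1,2} = I_{H_A} \otimes R$ for an isometry $R\colon H_B \to K_B \otimes N$; and since $T_{2,2}$ is $\cl B(K_B)^o$-module, a flip of the last two factors of $L$ together with the analogous commutation relation produces $T_{2,2}(\xi \otimes \eta \otimes \zeta) = W(\xi \otimes \zeta) \otimes \eta$ for some isometry $W\colon H_A \otimes N \to K_A$.

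The heart of the argument, and the step I expect to be the main obstacle, is to show that $R$ is a simple tensor. I would use the conjugation half of the $\cl B$-locality of $T_{1,2}$, namely $T_{1,2}\pi(1 \otimes b^o)T_{1,2}^* \in \pi_L(1 \otimes \cl B(K_B)^o)$, which becomes the identity $R b^{\mathrm t} R^* = c_b \otimes I_N$ for some $c_b \in \cl B(K_B)$ depending on $b \in \cl B(H_B)$. Slicing $R$ by vectors $\phi \in N$ via the partial inner products $\phi^*\colon K_B \otimes N \to K_B$, $k \otimes \zeta \mapsto \langle \phi, \zeta\rangle k$, I obtain operators $R_\phi := \phi^* R \in \cl B(H_B, K_B)$ satisfying $R_\phi b^{\mathrm t} R_\psi^* = \langle \phi, \psi\rangle c_b$. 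For $\phi \perp \psi$ this gives $R_\phi b^{\mathrm t} R_\psi^* = 0$ for every $b$, and specialising to rank-one $b^{\mathrm t}$ forces $R_\phi = 0$ or $R_\psi = 0$. Picking a unit $\phi_0 \in N$ with $R_{\phi_0} \neq 0$, it follows that $R_\zeta = \langle \zeta, \phi_0\rangle R_{\phi_0}$ for every $\zeta$, and hence that $R = V_B \otimes \phi_0$ with $V_B := R_{\phi_0}$; the isometricity of $R$ and unit norm of $\phi_0$ force $V_B$ to be an isometry.

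Finally, setting $V_A\xi := W(\xi \otimes \phi_0)$ gives an isometry $V_A\colon H_A \to K_A$, and a direct computation yields
\[
T(\xi \otimes \eta) = T_{2,2}(\xi \otimes V_B\eta \otimes \phi_0) = W(\xi \otimes \phi_0) \otimes V_B\eta = (V_A \otimes V_B)(\xi \otimes \eta).
\]
With this $V_A$ and $V_B$, the spaces $L = H_A \otimes K_B$, $\widetilde L = K_A \otimes H_B$ and the maps $V_A \otimes I_{H_B}$, $I_{H_A} \otimes V_B$, $I_{K_A} \otimes V_B$, $V_A \otimes I_{K_B}$ realise $T$ as a split local isometry in the sense of Example \ref{ex_findimset}.
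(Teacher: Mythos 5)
Your proof is correct, and the basic machinery is the same as the paper's: apply the structure theory of normal representations of the type I factors to identify $L$ with $H_A\otimes K_B\otimes N$, use the module half of locality to slice $T_{1,2}$ into $I_{H_A}\otimes R$, and then use the conjugation half of locality together with rank-one operators to collapse the auxiliary factor. Your slicing argument to show $R = V_B\otimes\phi_0$ (via the contractions $R_\phi=\phi^*R$ and the dichotomy $R_\phi=0$ or $R_\psi=0$ for $\phi\perp\psi$) is an explicit version of the paper's observation that $W_B\,\cl B(H_B)^o\,W_B^*\subseteq\cl B(K_B)^o\otimes I_{L_0}$ forces $\dim L_0 = 1$; the two are equivalent — indeed your inclusion $R\,b^{\rm t}\,R^* = c_b\otimes I_N$ together with $R=V_B\otimes\phi_0$ forces $\dim N=1$, so the ``choice'' of $\phi_0$ is in fact unique up to phase. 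The one genuine economy in your argument is that you work only with the single chain $T=T_{2,2}\circ T_{1,2}$, needing the conjugation condition (ii\hspace{0.02cm}'') only on $T_{1,2}$ and just the module condition on $T_{2,2}$, whereas the paper repeats the argument for $T_{2,1},T_{1,1},T_{2,2}$ and concludes from $V=\wt W_A\otimes W_B = W_A\otimes\wt W_B$. This makes your route slightly shorter, but it does not change the essential mechanism; both rely on exactly the same rank-one and type-I-structure inputs.
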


\begin{proof}
Suppose that 
$V : H_A\ten H_B\to K_A\ten K_B$ is a local isometry in the sense of Definition \ref{d:localiso}, and let 
$_{\cl B(H_A)}L_{B(K_B)^{o}}$ and $_{\cl B(K_A)}\wt{L}_{\cl B(H_B)^{o}}$
be bipartite quantum systems, while 
$T_{1,1}, T_{1,2}, T_{2,1}$ and $T_{2,2}$ are local isometries
in the diagram (\ref{eq_diag1}), 
such that $V = T_{2,2}T_{1,2} = T_{2,1}T_{1,1}$.
Up to unitary conjugations, we may assume that 
$L = H_A\ten K_B\ten L_0$ and $\wt{L} = K_A\ten H_B\ten\wt{L}_0$ 
for some Hilbert spaces $L_0$ and $\wt{L}_0$, 
so that 
$$\pi_L(a) = a\ten I_{K_B}\ten I_{L_0} \ \mbox{ and } \ 
\pi_L(b^o) = I_{H_A} \ten b^{\rm t} \ten I_{L_0}^{\rm t},$$
for $a\in \cl B(H_A)$ and $b\in \cl B(K_B)$, 
and 
$$\pi_{\wt{L}}(a) = a\ten I_{H_B}\ten I_{\wt{L}_0} \ \mbox{ and } \ 
\pi_{\wt{L}}(b^o) = I_{K_A} \ten b^{\rm t} \ten I_{\wt{L}_0}^{\rm t},$$
for $a\in \cl B(K_A)$ and $b\in \cl B(H_B)$.
Condition (ii') in Definition \ref{d:Alocal} implies that, for every
$h_1,h_2\in H_A$, $\xi\in H_B$, $\eta\in K_B\ten L_0$ and 
$a\in \cl B(H_A)$, we have 
$$\langle T_{1,2}(ah_1\otimes\xi), h_2\otimes\eta\rangle = 
\langle T_{1,2}(h_1\otimes\xi)), a^*h_2\otimes\eta\rangle.
$$
It follows that 
$$(\id_{H_A}\ten\omega_{\xi,\eta})(T_{1,2}) \in \cl B(H_A)' = \bb{C} I_{H_A}.$$ 
Letting $\lambda_{\xi,\eta}\in \bb{C}$ with 
$(\id_{H_A}\ten\omega_{\xi,\eta})(T_{1,2}) = \lambda_{\xi,\eta} I_{H_A}$, 
it is straightforward to verify that the map
$(\xi,\eta)\to \lambda_{\xi,\eta}$ is conjugate-linear and 
bounded, therefore yielding a bounded operator 
$W_B : H_B\to K_B\ten L_0$, such that 
$$\lambda_{\xi,\eta} = \langle W_B\xi,\eta\rangle, 
\ \ \ \xi\in H_B, \eta\in K_B\ten L_0.$$
Thus $T_{1,2} = I_{H_A}\ten W_B$ and, since $T_{1,2}$ is an 
isometry, so is $W_B$. 

Condition (ii'') in Definition \ref{d:Alocal} now reads
$$(I_{H_A}\ten W_B)(I_{H_A}\otimes \cl B(H_B)^o) (I_{H_A}\ten W_B^*) 
\subseteq I_{H_A}\otimes \cl B(K_B)^o\otimes I_{L_0},$$
that is, 
\begin{equation}\label{eq_ranko}
W_B \cl B(H_B)^o W_B^*\subseteq \cl B(K_B)^o\otimes I_{L_0}.
\end{equation}
If $b^o\in \cl B(H_B)^o$ is a rank one operator, then 
the left hand side in (\ref{eq_ranko}) is a rank one operator, 
implying that $\dim L_0 = 1$. 
Similarly, $\dim \wt{L}_0 = 1$, and there exist 
isometries $\wt{W_B} : H_B \to K_B$, 
$W_A : H_A\to K_A$ and 
$\wt{W_A} : H_A\to K_A$ such that 
$$T_{2,1} = I_{K_A}\ten \wt{W_B}, \ \ T_{1,1} = W_A\ten I_{H_B}, \ \ \textnormal{and} \ \ T_{2,2} = \wt{W_A}\ten I_{K_B}.$$
It follows that 
$V = \wt{W_A}\ten W_B = W_A\ten \wt{W_B}$, implying 
that $W_A = \wt{W_A}$ and $W_B = \wt{W_B}$.
The proof is complete. 
\end{proof}

\begin{example} 
Let $_\cl{A}H_\cl{C}$, $_\cl{A}\wt{H}_\cl{C}$, $_\cl{C}K_\cl{B}$ and $_\cl{C}\wt{K}_\cl{B}$ be bipartite quantum system, for some
von Neumann algebras $\cl{A}$, $\cl{B}$ and $\cl{C}$. Suppose that  
$V : \mbox{}_\cl{A}H_\cl{C}\to \mbox{}_\cl{A}\wt{H}_\cl{C}$ is an 
$\cl{A}$-local isometry and $W : \mbox{}_\cl{C}K_\cl{B}\to \mbox{}_\cl{C}\wt{K}_\cl{B}$ is a 
$\cl{B}$-local isometry. 
Fix a faithful normal semi-finite weight $\vphi$ on $\cl{C}$ (or a faithful normal state if $\cl{C}$ is $\sigma$-finite), 
and let $\boxtimes$ denote Connes' fusion product of bimodules relative to $\vphi$ \cite[Definition IX.3.16]{takesaki2}.
Then the commuting diagram

\begin{equation*}
\begin{tikzcd}
\mbox{}_\cl{A} H_\cl{C}\boxtimes \mbox{}_\cl{C}K_\cl{B}\arrow[d, "V\ten I_K"] \arrow[r, " I_H\ten W"] 
& \mbox{}_\cl{A} H_\cl{C}\boxtimes \mbox{}_\cl{C}\wt{K}_\cl{B}\arrow[d, "V\ten I_{\wt{K}}"]\\
 \mbox{}_\cl{A}\wt{H}_\cl{C}\boxtimes \mbox{}_\cl{C}K_\cl{B} \arrow[r, " I_{\wt{H}}\ten W"] 
& \mbox{}_\cl{A}\wt{H}_\cl{C}\boxtimes \mbox{}_\cl{C}\wt{K}_\cl{B}
\end{tikzcd}
\end{equation*}
gives rise to a local isometry 
$V\boxtimes W: \ _\cl{A}H_\cl{C}\boxtimes \mbox{}_\cl{C}K_\cl{B}\to _\cl{A}\wt{H}_\cl{C}\boxtimes \mbox{}_\cl{C}\wt{K}_\cl{B}$. 
\end{example}

\begin{remark}\label{r_compo}
(i) It is straightforward to see that, if 
$\cl{A}_i, \cl{B}$ (resp. $\cl{A},\cl{B}_i$) are von Neumann algebras,  
$_{\cl{A}_i}H^{(i)}_{\cl{B}}$ (resp. $_{\cl{A}}K^{(i)}_{\cl{B}_i}$) 
are biparite quantum systems, $i = 1,2,3$, and
$S_1\in\cl B(H^{(1)},H^{(2)})$ and $S_2\in\cl B(H^{(2)},H^{(3)})$
(resp. 
$T_1\in\cl B(K^{(1)},K^{(2)})$ and $T_2\in\cl B(K^{(2)},K^{(3)})$)
are $\cl{A}$-local (resp. $\cl{B}$-local) operators, then the operator $S_2S_1$
(resp. $T_2T_1$) is $\cl{A}$-local (resp. $\cl{B}$-local). 

(ii) 
Let $_{\cl{A}_1}\hspace{-0.05cm}H_{\cl{B}_1}$ and 
$_{\cl{A}_2}\hspace{-0.04cm}K_{\cl{B}_2}$ 
(resp. $_{\tilde{\cl{A}}_1}\hspace{-0.05cm}\tilde{H}_{\tilde{\cl{B}}_1}$ and 
$_{\tilde{\cl{A}}_2}\hspace{-0.04cm}\tilde{K}_{\tilde{\cl{B}}_2}$) 
be bipartite quantum systems over $(\cl{A}_i,\cl{B}_i)$ 
(resp. $(\tilde{\cl{A}}_i,\tilde{\cl{B}}_i)$), $i=1,2$. 
If $T : \mbox{}_{\cl{A}_1}\hspace{-0.04cm}H_{\cl{B}_1}\to \mbox{}_{\cl{A}_2}\hspace{-0.04cm}K_{\cl{B}_2}$ 
and $\tilde{T} : \mbox{}_{\tilde{\cl{A}}_1}\hspace{-0.04cm}\tilde{H}_{\tilde{\cl{B}}_1}\to \mbox{}_{\tilde{\cl{A}}_2}\hspace{-0.04cm}\tilde{K}_{\tilde{\cl{B}}_2}$
are local maps (resp. isometries) then 
$T\otimes \tilde{T} : 
\mbox{}_{\cl{A}_1}\hspace{-0.04cm}H_{\cl{B}_1} \otimes \mbox{}_{\tilde{\cl{A}}_1}\hspace{-0.04cm}\tilde{H}_{\tilde{\cl{B}}_1}
\to 
\mbox{}_{\cl{A}_2}\hspace{-0.04cm}K_{\cl{B}_2} \otimes \mbox{}_{\tilde{\cl{A}}_2}\hspace{-0.04cm}\tilde{K}_{\tilde{\cl{B}}_2}$
is a local map (resp. isometry).
Indeed, suppose that 
$T=T_{2,2}\circ T_{1,2}=T_{2,1}\circ T_{1,1}$ and 
$\tilde{T} = \tilde{T}_{2,2}\circ \tilde{T}_{1,2} 
= \tilde{T}_{2,1}\circ \tilde{T}_{1,1}$ as in diagram (\ref{eq_diag!}). 
Setting $\hat{T}_{i,j} = T_{i,j}\otimes \tilde{T}_{i,j}$, we witness the 
locality of $T\otimes \tilde{T}$ through the identity 
$$T\otimes \tilde{T} = \hat{T}_{2,2}\circ \hat{T}_{1,2}
 = \hat{T}_{2,1}\circ \hat{T}_{1,1}.$$
\end{remark}

We now extend Remark \ref{r_compo} (i) by showing that the relation $\leq$ is transitive.

\begin{theorem}\label{preorder} 
Let $\cl{A}_i$ and $\cl{B}_i$ be von Neumann algebras, $i=1,2,3$. 
If $_{\cl{A}_1}\hspace{-0.04cm}H_{\cl{B}_1}\leq\, _{\cl{A}_2}\hspace{-0.04cm}K_{\cl{B}_2}$ and $_{\cl{A}_2}\hspace{-0.04cm}K_{\cl{B}_2}\leq\, _{\cl{A}_3}\hspace{-0.04cm}L_{\cl{B}_3}$, then $_{\cl{A}_1}\hspace{-0.04cm}H_{\cl{B}_1}\leq\, _{\cl{A}_3}\hspace{-0.04cm}L_{\cl{B}_3}$. 
\end{theorem}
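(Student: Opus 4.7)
The plan is to paste the two given diamond diagrams into a single commutative $3\times 3$ grid of local isometries, from which the required diamond for $_{\cl{A}_1}H_{\cl{B}_1}\leq\, _{\cl{A}_3}L_{\cl{B}_3}$ can be extracted by composition. Let $T_{i,j}$ be the local isometries witnessing $_{\cl{A}_1}H_{\cl{B}_1}\leq\, _{\cl{A}_2}K_{\cl{B}_2}$, with intermediate bipartite systems $_{\cl{A}_1}L_1{}_{\cl{B}_2}$ and $_{\cl{A}_2}\widetilde{L}_1{}_{\cl{B}_1}$, and let $S_{i,j}$ witness $_{\cl{A}_2}K_{\cl{B}_2}\leq\, _{\cl{A}_3}L_{\cl{B}_3}$, with intermediates $_{\cl{A}_2}L_2{}_{\cl{B}_3}$ and $_{\cl{A}_3}\widetilde{L}_2{}_{\cl{B}_2}$. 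Arranged as a $3\times 3$ grid with $H$ at the top-left, $L$ at the bottom-right, and $K$ in the center, the four sub-squares adjacent to $K$ are precisely the four half-diamonds and thus commute. Two corners remain to be filled: the upper-right, which should be a bipartite system $_{\cl{A}_1}M_{\cl{B}_3}$, and the lower-left $_{\cl{A}_3}\widetilde{M}_{\cl{B}_1}$.

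The technical core of the proof is the construction of $_{\cl{A}_1}M_{\cl{B}_3}$ together with local isometries $\alpha : L_1\to M$ (which must be $\cl{B}_2$-$\cl{B}_3$-local) and $\beta : M\to L_2$ (which must be $\cl{A}_1$-$\cl{A}_2$-local) satisfying $\beta\alpha = S_{1,2}T_{2,2}$; the construction of $\widetilde{M}$ is analogous, with the roles of $\cl{A}$ and $\cl{B}$ exchanged. The idea is to realize $M$ inside $L_2$. Because $T_{2,2}$ is $\cl{A}$-local, the assignment $a\mapsto T_{2,2}\pi_{L_1}(a)T_{2,2}^*$ is a normal *-homomorphism $\cl{A}_1\to \pi_K(\cl{A}_2)$, and the $\cl{A}_2$-intertwining of $S_{1,2}$ pushes this forward to a normal *-representation of $\cl{A}_1$ on $L_2$ with image inside $\pi_{L_2}(\cl{A}_2)$. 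Taking $M$ to be the closed $\pi_{L_2}(\cl{B}_3^o)$-invariant subspace of $L_2$ generated by $S_{1,2}T_{2,2}(L_1)$ — so that the $\cl{B}_3^o$-action restricts to $M$ — and equipping $M$ with the induced $\cl{A}_1$-representation, one obtains the desired $(\cl{A}_1,\cl{B}_3)$-bimodule; $\beta$ is then the inclusion and $\alpha$ is $S_{1,2}T_{2,2}$ corestricted to $M$.

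Once $M$ and $\widetilde{M}$ are constructed, the diamond for $_{\cl{A}_1}H_{\cl{B}_1}\leq\, _{\cl{A}_3}L_{\cl{B}_3}$ is obtained by composing along the two outer paths of the completed grid. The upper path $H\xrightarrow{T_{1,2}} L_1 \xrightarrow{\alpha} M \xrightarrow{\beta} L_2 \xrightarrow{S_{2,2}} L$ decomposes into two $\cl{B}$-local isometries followed by two $\cl{A}$-local isometries; by Remark \ref{r_compo}(i), the first pair composes to a $\cl{B}_1$-$\cl{B}_3$-local isometry $H\to M$ and the second to an $\cl{A}_1$-$\cl{A}_3$-local isometry $M\to L$. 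The lower path through $\widetilde{L}_1$, $\widetilde{M}$, $\widetilde{L}_2$ gives the other half of the diamond symmetrically, and the commutativity of the big diamond follows from commutativity of the four small squares of the grid.

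The main obstacle is Step 2: verifying that, with the chosen $M$, the isometry $\alpha$ genuinely satisfies $\alpha\pi_{L_1}(\cl{B}_2^o)\alpha^*\subseteq \pi_M(\cl{B}_3^o)$ and $\beta$ genuinely satisfies $\beta\pi_M(\cl{A}_1)\beta^*\subseteq \pi_{L_2}(\cl{A}_2)$. The subtle point is that the composition $S_{1,2}T_{2,2}$ appears as an $\cl{A}$-local followed by a $\cl{B}$-local map, whereas the diamond format requires the opposite order; the projection onto the range of $S_{1,2}T_{2,2}$ need not commute with all of $\pi_{L_2}(\cl{B}_3^o)$ or $\pi_{L_2}(\cl{A}_2)$, which is exactly why one passes to the $\cl{B}_3^o$-invariant closure — or, if necessary, to a suitable amplification — in order to re-factor the composite in the required order.
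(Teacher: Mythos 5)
Your overall architecture — paste the two diamonds into a $3\times 3$ grid, realize the missing corner $M$ inside $L_2$, take $\beta$ to be the inclusion and $\alpha$ the corestriction of $S_{1,2}T_{2,2}$, then compose along the outer paths via Remark \ref{r_compo}(i) — is essentially the architecture of the paper's proof. However, the difficulty you flag as the "main obstacle" is a genuine gap, and the specific choice of $M$ you propose does not close it.

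The $\cl{A}$-locality of $T_{2,2}$ gives $T_{2,2}T_{2,2}^*=\pi_K(e_2)$ for some projection $e_2\in\cl{A}_2$, and the $\cl{B}$-locality of $S_{1,2}$ gives $S_{1,2}S_{1,2}^*=\pi_{L_2}(e_3^o)$ for some $e_3^o\in\cl{B}_3^o$. The range of $S_{1,2}T_{2,2}$ lies inside $\pi_{L_2}(e_2)L_2$, and since $\pi_{L_2}(e_2)\in\pi_{L_2}(\cl{A}_2)\subseteq\pi_{L_2}(\cl{B}_3^o)'$, your $M$ (the $\cl{B}_3^o$-invariant closure of that range) is also contained in $\pi_{L_2}(e_2)L_2$. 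But $P_M$ is only forced to lie in $\pi_{L_2}(\cl{B}_3^o)'$, not in $\pi_{L_2}(\cl{A}_2)$, and it is in general a \emph{strict} subprojection of $\pi_{L_2}(e_2)$. Already for $a_1=1$, $\cl{A}$-locality of $\beta$ requires $\beta\pi_M(1)\beta^*=P_M\in\pi_{L_2}(\cl{A}_2)$, and this fails: take $\cl{A}_1=\cl{B}_1=\cl{B}_2=\bb{C}$, $\cl{A}_2=M_2$, $\cl{B}_3=\bb{C}^2$ (diagonals), $L_1=\bb{C}$, $K=\bb{C}^2$, $L_2=\bb{C}^2\ten\bb{C}^2$ with $\pi_{L_2}(a)=a\ten I$ and $\pi_{L_2}(b^o)=I\ten b$, and $T_{2,2}(1)=e_1$, $S_{1,2}(\xi)=\xi\ten\eta_1$. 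Then $M=\bb{C}(e_1\ten\eta_1)$ and $P_M=e_{11}\ten\eta_1\eta_1^*\notin M_2\ten I=\pi_{L_2}(\cl{A}_2)$. The $\cl{B}_3^o$-invariant closure repairs the $\cl{B}$-side condition on $\alpha$ but sabotages the $\cl{A}$-side condition on $\beta$; an "amplification" does not help either, because the problem is that your $M$ is too small, not too large.

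The paper's choice is $M=\pi_{L_2}(e_2)L_2$, the \emph{full} compression by the pushforward of the range projection of $T_{2,2}$, with $\cl{A}_1$-action $a_1\mapsto\pi_{L_2}\bigl(\pi_K^{-1}(T_{2,2}\pi_{L_1}(a_1)T_{2,2}^*)\bigr)\big|_M$ and $\cl{B}_3^o$-action the compression of $\pi_{L_2}|_{\cl{B}_3^o}$. Then $P_M=\pi_{L_2}(e_2)\in\pi_{L_2}(\cl{A}_2)$, so $\beta\pi_M(a_1)\beta^*=\pi_{L_2}(a_2e_2)\in\pi_{L_2}(\cl{A}_2)$, and $\alpha$ is $\cl{B}$-local via the computation $\alpha\pi_{L_1}(b_2^o)\alpha^*=S_{1,2}\pi_K(b_2^o)\pi_K(e_2)S_{1,2}^*=\pi_{L_2}(b_3^o)\pi_{L_2}(e_2)\pi_{L_2}(e_3^o)=\pi_M(b_3^oe_3^o)$, obtained by inserting $S_{1,2}^*S_{1,2}=I_K$ and using the $\cl{A}_2$-module property of $S_{1,2}$. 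With this corrected $M$ (and its mirror $\widetilde M$), the remainder of your argument goes through as you describe.
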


\begin{proof} We need to complete the diagram
\begin{equation*}
\begin{tikzcd}
_{\cl{A}_1}\hspace{-0.04cm}H_{\cl{B}_1}\arrow[d, "V_{1,1}"] \arrow[r, "V_{1,2}"] &_{\cl{A}_1}\hspace{-0.04cm}\wt{H}_{\cl{B}_2}\arrow[d, "V_{2,2}"] & \\
_{\cl{A}_2}\hspace{-0.04cm}\wt{\widetilde{H}}_{\cl{B}_1} \arrow[r, "V_{2,1}"] &_{\cl{A}_2}\hspace{-0.04cm}K_{\cl{B}_2}\arrow[r, "V_{2,3}"]\arrow[d, "\wt{V_{2,2}}"] & _{\cl{A}_2}\hspace{-0.04cm}\wt{K}_{\cl{B}_3}\arrow[d, "V_{3,3}"]\\
& _{\cl{A}_3}\hspace{-0.04cm}\wt{\wt{K}}_{\cl{B}_2}\arrow[r, "V_{3,2}"] & _{\cl{A}_3}\hspace{-0.04cm}L_{\cl{B}_3}
\end{tikzcd}
\end{equation*}
to a commuting diagram of the form
\begin{equation}\label{eq_diagdoub}
\begin{tikzcd}
_{\cl{A}_1}\hspace{-0.04cm}H_{\cl{B}_1}\arrow[d, "V_{1,1}"] \arrow[r, "V_{1,2}"] &_{\cl{A}_1}\hspace{-0.04cm}\wt{H}_{\cl{B}_2}\arrow[d, "V_{2,2}"]\arrow[r, dashed, "W_{1,3}"] & _{\cl{A}_1}\hspace{-0.04cm}\wt{L}_{\cl{B}_3}\arrow[d, dashed, "W_{2,3}"] \\
_{\cl{A}_2}\hspace{-0.04cm}\wt{\widetilde{H}}_{\cl{B}_1}\arrow[d, dashed, "W_{3,1}"] \arrow[r, "V_{2,1}"] &_{\cl{A}_2}\hspace{-0.04cm}K_{\cl{B}_2}\arrow[r, "V_{2,3}"]\arrow[d, "\wt{V_{2,2}}"] & _{\cl{A}_2}\hspace{-0.04cm}\wt{K}_{\cl{B}_3}\arrow[d, "V_{3,3}"]\\
_{\cl{A}_3}\hspace{-0.04cm}\wt{\wt{L}}_{\cl{B}_1}\arrow[r, dashed, "W_{3,2}"] & _{\cl{A}_3}\hspace{-0.04cm}\wt{\wt{K}}_{\cl{B}_2}\arrow[r, "V_{3,2}"] & _{\cl{A}_3}\hspace{-0.04cm}L_{\cl{B}_3},
\end{tikzcd}
\end{equation}
for some quantum systems $_{\cl{A}_1}\wt{L}_{\cl{B}_3}$ and $_{\cl{A}_3}\wt{\wt{L}}_{\cl{B}_1}$, 
some $\cl{B}$-local isometries $W_{1,3}$ and $W_{3,2}$, and 
some $\cl{A}$-local isometries $W_{2,3}$ and $W_{3,1}$.

Let $f_2\in \cl{A}_2$ be a central projection, such that 
$\ker \pi_K|_{\cl{A}_2} = f_2^{\perp} \cl{A}_2$. We have that 
$\pi_K|_{f_2 \cl{A}_2} : f_2 \cl{A}_2\to \pi_K(\cl{A}_2)$ is a normal unital *-isomorphism; 
in the rest of the proof, we denote by $\pi_K^{-1}$ its inverse, from 
$\pi_K(\cl{A}_2)$ onto $f_2\cl{A}_2$. 
By the $\cl{A}$-locality of $V_{2,2}$, there exists 
a (unique) projection $e_2\in f_2\cl{A}_2$ such that $p_{2} := V_{2,2}V_{2,2}^* = \pi_{K}(e_2)$. 
Let $p_\cl{A} = \pi_{\wt{K}}(e_2)\in\cl B(\wt{K})$, 
and set $\wt{L}:=p_\cl{A}\wt{K}$. 

Recalling that 
$\pi_K^{-1}(V_{2,2}\pi_{\wt{H}}(a_1)V_{2,2}^*)\in f_2\cl{A}_2$, 
define $\pi^\cl{A}_{\wt{L}}:\cl{A}_1\to\cl B(\wt{L})$ by
$$\pi^\cl{A}_{\wt{L}}(a_1)=\pi_{\wt{K}}(\pi_K^{-1}(V_{2,2}\pi_{\wt{H}}(a_1)V_{2,2}^*)), \ \ \ a_1\in \cl{A}_1.$$
Note that the $\cl{A}$-locality of $V_{2,2}$ implies $V_{2,2}\pi_{\wt{H}}(a_1)V_{2,2}^*\in\pi_{K}(\cl{A}_2)$, so by 
the fact that $V_{2,2}$ is an isometry, 
$\pi^\cl{A}_{\wt{L}}$ is a well-defined normal $*$-homomorphism. It is also unital as $\pi^\cl{A}_{\wt{L}}(1_{\cl{A}_1})=p_\cl{A}=1_{\cl B(p_\cl{A}\wt{K})}$. 

Next, define $\pi^\cl{B}_{\wt{L}} : \cl{B}_3^o\to\cl B(\wt{L})$ by 
letting 
$$\pi^\cl{B}_{\wt{L}}(b_3^o) = p_\cl{A}\pi_{\wt{K}}(b_3^o)p_\cl{A}, \ \ \ b^o\in \cl{B}_3^o.$$
Since $\pi_{\wt{K}}(\cl{A}_2)\subseteq\pi_{\wt{K}}(\cl{B}_3^o)'$, we have that 
$p_\cl{A}\in \pi_{\wt{K}}(\cl{B}_3^o)'$, and hence the map 
$\pi^\cl{B}_{\wt{L}}$ is a normal unital $*$-homomorphism. The inclusion 
$\pi_{\wt{K}}(\cl{A}_2)\subseteq\pi_{\wt{K}}(\cl{B}_3^o)'$ implies that $\pi^\cl{A}_{\wt{L}}$ and $\pi^\cl{B}_{\wt{L}}$ have commuting ranges, and therefore define a unital, 
separately weak* continuous, $*$-homomorphism 
$$\pi_{\wt{L}} := \pi^\cl{A}_{\wt{L}}\times \pi^\cl{B}_{\wt{L}}
: \cl{A}_1\ten_{\max}\cl{B}_3^o \to\cl B(\wt{L}),$$
turning $\wt{L}$ into an $\cl{A}_1$-$\cl{B}_3$-bimodule $_{\cl{A}_1}\wt{L}_{\cl{B}_3}$. 

Let $W_{1,3} = V_{2,3}V_{2,2}$.
Since
$$V_{2,3}V_{2,2}=V_{2,3}\pi_{K}(e_2)V_{2,2}=\pi_{\wt{K}}(e_2)V_{2,3}V_{2,2}=p_\cl{A}V_{2,3}V_{2,2},$$
we have that 
$W_{1,3}: \, _{\cl{A}_1}H_{\cl{B}_2}\to \, _{\cl{A}_1}\wt{L}_{\cl{B}_3}$ is a well-defined isometry. 
Given $a_1\in \cl{A}_1$, 
let $a_2$ be the unique element of $f_2\cl{A}_2$ such that 
$V_{2,2}\pi_{\wt{H}}(a_1)V_{2,2}^*=\pi_{K}(a_2)$. We have 
\begin{align*}
W_{1,3}\pi_{\wt{H}}(a_1)
&=
V_{2,3}V_{2,2}\pi_{\wt{H}}(a_1)
=V_{2,3}V_{2,2}\pi_{\wt{H}}(a_1)V_{2,2}^*V_{2,2}\\
&=V_{2,3}\pi_{K}(a_2)V_{2,2}
=\pi_{\wt{K}}(a_2)V_{2,3}V_{2,2}\\
&=\pi_{\wt{K}}(\pi_K^{-1}(V_{2,2}\pi_{\wt{H}}(a_1)V_{2,2}^*))V_{2,3}V_{2,2}
=\pi_{\wt{L}}(a_1)W_{1,3}.
\end{align*}
Given $b_2\in \cl{B}_2$, we have $V_{2,3}\pi_{K}(b_2^o)V_{2,3}^*=\pi_{\wt{K}}(b_3^o)$ for some $b_3\in \cl{B}_3$. Let $e_3\in \cl{B}_3$ be a projection for which $V_{2,3}V_{2,3}^*=\pi_{\wt{K}}(e_3^o)$. Then
\begin{align*}
W_{1,3}\pi_{\wt{H}}(b_2^o)W_{1,3}^*
&=V_{2,3}V_{2,2}\pi_{\wt{H}}(b_2^o)V_{2,2}^*V_{2,3}^*
=V_{2,3}\pi_{K}(b_2^o)V_{2,2}V_{2,2}^*V_{2,3}^*\\
&=V_{2,3}\pi_{K}(b_2^o)\pi_{K}(e_2)V_{2,3}^*
=V_{2,3}\pi_{K}(b_2^o)V_{2,3}^*V_{2,3}\pi_{K}(e_2)V_{2,3}^*\\
&=\pi_{\wt{K}}(b_3^o)V_{2,3}\pi_{K}(e_2)V_{2,3}^*
=\pi_{\wt{K}}(b_3^o)p_\cl{A}V_{2,3}V_{2,3}^*\\
&=\pi_{\wt{K}}(b_3^o)p_\cl{A}\pi_{\wt{K}}(e_3^o)
=\pi_{\wt{K}}(b_3^oe_3^o)p_\cl{A}
=\pi_{\wt{L}}(b_3^oe_3^o).
\end{align*}
Hence, $W_{1,3}$ is a $\cl{B}$-local map.
By Remark \ref{r_compo}, the composition $W_{1,3} V_{1,2}$ is a 
$\cl{B}$-local map. 

Let $W_{2,3} : \wt{L} \to \wt{K}$ be the inclusion map. We have that 
$W_{2,3}$ is $\cl{A}$-local.
In fact, condition (i') from Definition \ref{d:Alocal} is tautological
because of the definition of $W_{2,3}$, while, 
if $a_1\in \cl{A}_1$, there is unique $a_2\in f_2\cl{A}_2$ such that
\begin{align*}
W_{2,3}\pi_{\wt{L}}(a_1)W_{2,3}^*
&=\pi_{\wt{K}}(\pi_K^{-1}(V_{2,2}\pi_{\wt{H}}(a_1)V_{2,2}^*))p_\cl{A}\\
&=\pi_{\wt{K}}(a_2)\pi_{\wt{K}}(e_2)
=\pi_{\wt{K}}(a_2e_2),
\end{align*}
implying $W_{2,3}\pi_{\wt{L}}(a_1)W_{2,3}^* \in \pi_{\wt{K}}(\cl{A}_2)$,
and thus condition (i'') from Definition \ref{d:Alocal}. 

We similarly define a bipartite quantum system 
$\mbox{}_{\cl{A}_3}\wt{\wt{{L}}}_{\cl{B}_1}$, and isometries 
$W_{3,1} : \wt{\wt{{H}}}\to \wt{\wt{{L}}}$ and 
$W_{3,2} : \wt{\wt{{L}}}\to \wt{\wt{{K}}}$. Using 
symmetric arguments, we have that the operator
$W_{3,1}$ is $\cl{A}$-local, while the operator 
$W_{3,2}$ is $\cl{B}$-local. 
By Remark \ref{r_compo}, 
the composition $V_{3,2}W_{3,2}$ is $\cl{B}$-local, while 
the compositions $W_{3,1}V_{1,1}$ and $V_{3,3}W_{2,3}$ are $\cl{A}$-local. 

The relation $W_{2,3}W_{1,3} = V_{2,3}V_{2,2}$ follows from the definition 
of the operators $W_{1,3}$ and $W_{2,3}$; similarly, 
$W_{3,2}W_{3,1} = \wt{V}_{2,2}V_{2,1}$. 
The commutativity of the diagram (\ref{eq_diagdoub})
is now immediate, and the proof is complete.
\end{proof}


\section{Models and self-tests}\label{s_models}

Our generalisation of self-testing is based on the 
notion of an operator system; 
we recall some basic facts and concepts, and refer the 
reader to \cite{Pa} for details. 
If $H$ is a Hilbert space and 
$\cl S\subseteq\cl B(H)$ is a linear subspace, 
the space $M_n(\cl S)$ of all $n$ by $n$ matrices with entries in $\cl S$ 
can be viewed as a subspace of $\cl B(H^n)$ 
after identifying $M_n(\cl B(H))$ with $\cl B(H^n)$. 
If $\cl S \subseteq \cl B(H)$ and $\cl T \subseteq \cl B(K)$ 
(where $K$ is a(nother) Hilbert space) are subspaces and 
$\phi : \cl S\to \cl T$ is a linear map, 
we let $\phi^{(n)} : M_n(\cl S)\to M_n(\cl T)$ be the (linear) map, given by 
$\phi^{(n)}((a_{i,j})_{i,j}) = (\phi(a_{i,j}))_{i,j}$.
An \textit{operator system} is a subspace 
$\cl S \subseteq \cl B(H)$,
such that $I_{\cl H}\in \cl S$ and $s\in \cl S \Rightarrow s^*\in \cl S$. 
Every operator system $\cl S$ is an {\it abstract operator system} in the sense that 
(a) $\cl S$ is a linear *-space; (b) the real vector 
space $M_n(\cl S)_h$ of all hermitian elements in the 
*-space $M_n(\cl S)$
is equipped with a proper cone $M_n(\cl S)^+$;
(c) 
$T^* M_n(\cl S)^+ T\subseteq M_m(\cl S)^+$ for all $n,m\in\bb{N}$ and all $T\in M_{n,m}$, and 
(d) the cone family $(M_n(\cl S)^+)_{n\in \bb{N}}$ 
admits an Archimedean matrix order unit. 
If $\cl S$ and $\cl T$ are abstract operator systems, 
a linear map $\phi : \cl S\to \cl T$
is called \textit{positive} if $\phi (\cl S^+)\subseteq \cl T^+$,
\textit{completely positive} if $\phi^{(n)}$ is positive for every $n\in \bb{N}$, 
\textit{unital} if $\phi(I) = I$,
and a \textit{complete order isomorphism} if $\phi$ is completely positive, bijective, and 
its inverse $\phi^{-1}$ is completely positive. 
We note that every unital completely positive map is 
automatically contractive. 
By virtue of the Choi-Effros Theorem \cite[Theorem 13.1]{Pa}, 
every abstract operator system is completely order isomorphic to an
operator system. 
A \textit{state} of an operator system 
$\cl S$ is a positive unital linear functional; 
we denote by $S(\cl S)$ the (convex) set of all states of $\cl S$. 

Let $\cl S$ be an operator system. Recall that 
a pair $(C_u^*(\cl S), \iota)$ is called a universal cover of $\cl S$, if $C_u^*(\cl S)$ is a unital $C^*$-algebra, $\iota:\cl S\to C_u^*(\cl S)$ is a unital complete order embedding
such that $\iota(\cl S)$ generates $C_u^*(\cl S)$
and, whenever $H$ is a Hilbert space and $\phi:\cl S\to\cl B(H)$ is a unital completely positive map, there exists a $*$-representation $\pi_\phi:C_u^*(\cl S)\to \cl B(H)$ such that $\pi_\phi\circ\iota=\phi$
(see e.g. \cite{KPTT11}). It is clear that
the universal cover is unique up to a canonical $*$-isomorphism.

We fix throughout this section operator systems $\cl S_A$ and $\cl S_B$.
Their commuting tensor product $\cl S_A\otimes_{\rm c} \cl S_B$ 
is the operator system with underlying vector space the algebraic tensor product
$\cl S_A\otimes \cl S_B$, and matrix order structure inherited from the 
inclusion $\cl S_A\otimes_{\rm c} \cl S_B\subseteq C_u^*(\cl S_A)\otimes_{\rm max} C_u^*(\cl S_B)$; thus, $\cl S_A\otimes_{\rm c} \cl S_B$ 
sits completely order isomorphically in
$C_u^*(\cl S_A)\otimes_{\rm max} C_u^*(\cl S_B)$
(we note that we are using an equivalent definition of the 
commuting tensor product to the original one, 
see \cite[Theorem 6.4]{KPTT11}). 
Given a Hilbert space $H$ and 
unital completely positive maps
$\varphi_A : \cl S_A\to \cl B(H)$ and $\varphi_B : \cl S_B\to \cl B(H)$
with commuting ranges, 
there exists a (unique) unital completely positive map 
$\varphi_A\cdot \varphi_B : \cl S_A\otimes_{\rm c} \cl S_B \to \cl B(H)$, 
such that 
$$(\varphi_A\cdot \varphi_B)(s\otimes t) = 
\varphi_A(s) \varphi_B(t), \ \ \ s\in \cl S_A, t\in \cl S_B$$
(see \cite[Corollary 6.5]{KPTT11}).

A \emph{quantum commuting model} over the pair $(\cl S_A,\cl S_B)$
is a tuple
$$S=(_{\cl A}H_{\cl B},\varphi_A, \varphi_B, \xi),$$
where $H$ is a Hilbert space, $\xi\in H$ is a unit vector, 
$\varphi_A : \cl S_A\to \cl B(H)$ and $\varphi_B : \cl S_B\to \cl B(H)$
are unital completely positive maps with commuting ranges, 
$\cl A$ is von Neumann subalgebra of $\cl B(H)$, 
$\cl B$ is a von Neumann algebra with $\cl B^o\subseteq\cl B(H)$, 
$H = \mbox{}_{\cl A}H_{{\cl B}}$ is a bipartite quantum system over 
$(\cl A, \cl B)$, and the inclusions
$\varphi_{A}(\cl S_A)\subseteq \cl A$ and 
$\varphi_B(\cl S_B)\subseteq \cl B^o$ hold.
We say that $S$ is a \emph{Haag model} if $\cl B^o = \cl A'$.
We note that every pair $(\varphi_A,\varphi_B)$ of 
unital completely positive maps with commuting ranges, say, 
$\varphi_A : \cl S_A\to \cl B(H)$ and $\varphi_B : \cl S_B\to \cl B(H)$, 
and a choice of a unit vector
$\xi\in H$, give rise to a canonical quantum commuting model 
over the pair $(\cl S_A,\cl S_B)$ by letting $\cl A$ (resp. $\cl B^o$) 
be the von Neumann algebra, generated by $\varphi_A(\cl S_A)$
(resp. $\varphi_B(\cl S_B)$).

\begin{defn}\label{d_appdil}
Let $\cl S_A$ and $\cl S_B$ be operator systems, and 
suppose that 
$S = (_{\cl A}H_{\cl B}, \varphi_A, \varphi_B, \xi)$  and 
$\wt{S} = (_{\wt{\cl A}}\wt{H}_{\wt{\cl B}}, \wt{\varphi}_A, \wt{\varphi}_B, \wt{\xi})$ are quantum commuting models over the pair $(\cl S_A,\cl S_B)$.

\begin{itemize}
\item[(i)] We say that $\wt{S}$
is a \emph{local dilation} of $S$, and write $S\preceq \wt{S}$, 
if there exist a bipartite quantum system 
$_{{\cl A}_{\rm aux}}(H_{\rm aux})_{{\cl B}_{\rm aux}}$ 
over a pair of von Neumann algebras (${\cl A}_{\rm aux}$, ${\cl B}_{\rm aux}$), a unit vector $\xi_{\rm aux}\in H_{\rm aux}$, and
a local isometry $V : \mbox{}_{\cl A}H_{\cl B}\to
\mbox{}_{\wt{\cl A}}\wt{H}_{\wt{\cl B}}\otimes
\mbox{}_{{\cl A}_{\rm aux}}(H_{\rm aux})_{{\cl B}_{\rm aux}}$ such that 
\begin{equation}\label{eq_TAB}
V\varphi_A(s) \varphi_B(t)\xi = (\tilde\varphi_A(s)\tilde\varphi_B(t)\tilde\xi) \otimes\xi_{\rm aux}, \ \ \ s\in \cl S_A, t\in\cl S_B.
\end{equation}

\item[(ii)] 
We say that $\wt{S}$
is an \emph{approximate local dilation} of $S$, and write $S\preceq_{\rm a} \wt{S}$, 
if there exist bipartite quantum systems
$_{{\cl A}_{i,\rm aux}}(H_{i,\rm aux})_{{\cl B}_{i,\rm aux}}$ 
over a pair of von Neumann algebras (${\cl A}_{i,\rm aux}$, ${\cl B}_{i,\rm aux}$), 
unit vectors $\xi_{i,\rm aux}\in H_{i,\rm aux}$, and
local isometries $V_i : \mbox{}_{\cl A}H_{\cl B}\to
\mbox{}_{\wt{\cl A}}\wt{H}_{\wt{\cl B}}\otimes
\mbox{}_{{\cl A}_{i,\rm aux}}(H_{i,\rm aux})_{{\cl B}_{i,\rm aux}}$, 
$i\in \bb{N}$, such that 
\begin{equation}\label{eq_TABa}
\left\|V_i\varphi_A(s) \varphi_B(t)\xi - (\tilde\varphi_A(s)\tilde\varphi_B(t)\tilde\xi) \otimes\xi_{i,\rm aux}\right\|\to_{i\to \infty} 0, \ \ \ s\in \cl S_A, t\in\cl S_B.
\end{equation}
\end{itemize}
\end{defn}

We note that the reverse notation $\wt{S}\preceq S$ was used in 
\cite{pszz} to designate that $\wt{S}$
is a local dilation of $S$, but we have decided to employ the 
one specified in Definition \ref{d_appdil} as it agrees 
with the usual conventions in operator algebra theory; see e.g. 
\cite{davk}. 

\begin{remark} In self-testing examples, an ideal model $\wt{S}$ is typically smaller (in the sense of Hilbert space dimension) than a given model $S$ satisfying $S\preceq\wt{S}$ (in line with the reverse notation from \cite{pszz}). However, assistance from an auxiliary system together with a (typically entangled) auxiliary state is needed to witness the local dilation property $S\preceq \wt{S}$. We therefore view this property as a form of entanglement assisted local dilation.
\end{remark}

\begin{remark}\label{r_someap}
\rm 
By linearity, condition (\ref{eq_TAB}) is equivalent to 
$$V(\varphi_A\cdot\varphi_B)(u)\xi = 
(\tilde\varphi_A\cdot\tilde\varphi_B)(u)\tilde\xi \otimes\xi_{\rm aux}$$
being fulfilled for every $u$ in the 
completion of the tensor product $\cl S_A\otimes_{\rm c} \cl S_B$.
In addition, choosing $u = 1$, we have that 
$V\xi = \tilde{\xi}\otimes \xi_{\rm aux}$.

An $\epsilon/3$-argument, together with the uniform boundedness of the 
sequence $(V_i)_{i\in \bb{N}}$ appearing in (\ref{eq_TABa}), shows that 
(\ref{eq_TABa}) is equivalent to the condition
\begin{equation}\label{eq_arbu}
\left\|V_i (\varphi_A\cdot\varphi_B)(u) \xi - 
(\tilde\varphi_A\cdot\tilde\varphi_B)(u)\tilde\xi \otimes\xi_{i,\rm aux}\right\|\to_{i\to \infty} 0,
\end{equation}
for all $u$ in the completion of the tensor product $\cl S_A\otimes_{\rm c} \cl S_B$. 
Choosing $u = 1$, we have, in particular, 
$$\left\|V_i \xi - \tilde\xi \otimes\xi_{i,\rm aux}\right\|\to_{i\to \infty} 0.$$
\end{remark}

\noindent It is clear that 
$$S\preceq \wt{S} \ \ \Longrightarrow \ \ S\preceq_{\rm a} \wt{S}.$$

\begin{proposition}\label{p_preorder}
The relations $\preceq$ and $\preceq_{\rm a}$ are preorders.    
\end{proposition}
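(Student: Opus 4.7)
The plan is to establish reflexivity and transitivity for each of $\preceq$ and $\preceq_{\rm a}$. Since $S \preceq \tilde S$ implies $S \preceq_{\rm a} \tilde S$, reflexivity of $\preceq_{\rm a}$ follows from reflexivity of $\preceq$, so there are effectively three items to verify. For reflexivity of $\preceq$, given $S = (\mbox{}_{\cl A}H_{\cl B}, \varphi_A, \varphi_B, \xi)$, I would take the auxiliary bipartite system to be $\bb{C}$ (over $(\bb{C}, \bb{C})$), set $\xi_{\rm aux} = 1$, and let $V$ be the canonical identification $H \cong H \otimes \bb{C}$. This $V$ is a local isometry in the sense of Definition \ref{d:localiso} via the trivial diagram (all intermediate systems equal to $H$ and all $T_{i,j}$ equal to the identity), and (\ref{eq_TAB}) holds trivially.

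For transitivity of $\preceq$, I would start from witnesses $V : \mbox{}_{\cl A}H_{\cl B} \to \mbox{}_{\tilde{\cl A}}\tilde H_{\tilde{\cl B}} \otimes \mbox{}_{\cl A_{\rm aux}}(H_{\rm aux})_{\cl B_{\rm aux}}$ of $S \preceq \tilde S$ with auxiliary vector $\xi_{\rm aux}$, and $\tilde V : \mbox{}_{\tilde{\cl A}}\tilde H_{\tilde{\cl B}} \to \mbox{}_{\hat{\cl A}}\hat H_{\hat{\cl B}} \otimes \mbox{}_{\tilde{\cl A}_{\rm aux}}(\tilde H_{\rm aux})_{\tilde{\cl B}_{\rm aux}}$ of $\tilde S \preceq \hat S$ with auxiliary vector $\tilde\xi_{\rm aux}$, and form $W := (\tilde V \otimes I_{H_{\rm aux}}) \circ V$. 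After regrouping tensor factors via Remark \ref{r_intten}, $W$ is viewed as a map into $\mbox{}_{\hat{\cl A}}\hat H_{\hat{\cl B}} \otimes \mbox{}_{\tilde{\cl A}_{\rm aux} \bar\otimes \cl A_{\rm aux}}(\tilde H_{\rm aux} \otimes H_{\rm aux})_{\tilde{\cl B}_{\rm aux} \bar\otimes \cl B_{\rm aux}}$. Since $I_{H_{\rm aux}}$ is trivially a local isometry on $\mbox{}_{\cl A_{\rm aux}}(H_{\rm aux})_{\cl B_{\rm aux}}$, Remark \ref{r_compo}(ii) gives that $\tilde V \otimes I_{H_{\rm aux}}$ is local, and Theorem \ref{preorder} then ensures $W$ itself is a local isometry. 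Applying the two dilation identities in sequence gives
\[ W\varphi_A(s)\varphi_B(t)\xi = \hat\varphi_A(s)\hat\varphi_B(t)\hat\xi \otimes (\tilde\xi_{\rm aux} \otimes \xi_{\rm aux}), \]
which witnesses $S \preceq \hat S$ with auxiliary vector $\tilde\xi_{\rm aux} \otimes \xi_{\rm aux}$.

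For transitivity of $\preceq_{\rm a}$, the same scheme works with sequences $(V_n)$ and $(\tilde V_n)$ witnessing $S \preceq_{\rm a} \tilde S$ and $\tilde S \preceq_{\rm a} \hat S$ respectively, combined diagonally into $W_n := (\tilde V_n \otimes I_{H_{n,\rm aux}}) \circ V_n$ with auxiliary vectors $\tilde\xi_{n,\rm aux} \otimes \xi_{n,\rm aux}$. Since $\tilde V_n \otimes I$ is an isometry, the triangle inequality bounds the norm $\| W_n\varphi_A(s)\varphi_B(t)\xi - \hat\varphi_A(s)\hat\varphi_B(t)\hat\xi \otimes \tilde\xi_{n,\rm aux} \otimes \xi_{n,\rm aux} \|$ by the sum $\| V_n\varphi_A(s)\varphi_B(t)\xi - \tilde\varphi_A(s)\tilde\varphi_B(t)\tilde\xi \otimes \xi_{n,\rm aux} \| + \| \tilde V_n \tilde\varphi_A(s)\tilde\varphi_B(t)\tilde\xi - \hat\varphi_A(s)\hat\varphi_B(t)\hat\xi \otimes \tilde\xi_{n,\rm aux} \|$, and both summands tend to zero with $n$ for each fixed $s, t$, yielding (\ref{eq_TABa}) for $W_n$.

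The only non-routine point is that the composition of local isometries is again a local isometry at the level of bipartite quantum systems, which is precisely the content of Theorem \ref{preorder} (together with Remark \ref{r_compo}(ii) to tensor with the identity on the first auxiliary system). The rest is tensor-factor bookkeeping and a standard triangle inequality, so I do not anticipate any significant obstacle.
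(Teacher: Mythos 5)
Your proof is correct and follows essentially the same route as the paper: compose the two sequences of witnessing local isometries as $(\tilde V_n \otimes I_{H_{n,\rm aux}})\circ V_n$, invoke Remark \ref{r_compo}(ii) and Theorem \ref{preorder} for locality, and run the triangle inequality (using that $\tilde V_n \otimes I$ is isometric) to pass the approximate identity through. The only cosmetic difference is that you spell out reflexivity via the trivial auxiliary $\bb{C}$-system and treat $\preceq$ and $\preceq_{\rm a}$ in parallel, while the paper proves transitivity of $\preceq_{\rm a}$ and then deduces $\preceq$ as the constant-sequence special case.
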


\begin{proof}
Assume that $S\preceq_{\rm a}\wt{S}$ and $\wt{S}\preceq_{\rm a} \wt{\wt{S}}$, 
realised via sequences $(V_i)_{i\in \bb{N}}$ and $(\wt{V}_i)_{i\in \bb{N}}$
of local isometries, where 
$V_i : \mbox{}_{{\cl A}}{H}_{{\cl B}}
\to \mbox{}_{{\wt{\cl A}}}{\wt{H}}_{{\wt{\cl B}}}\otimes 
\mbox{}_{{{\cl A}}_{i,\rm aux}}({H}_{i,\rm aux})_{{{\cl B}}_{i,\rm aux}}$
and 
$\wt{V}_i: \mbox{}_{\wt{\cl A}}\wt{H}_{\wt{\cl B}}\to
\mbox{}_{\wt{\wt{\cl A}}}\wt{\wt{H}}_{\wt{\wt{\cl B}}}\otimes
\mbox{}_{{\wt{\cl A}}_{i,\rm aux}}(\wt{H}_{i,\rm aux})_{{\wt{\cl B}}_{i,\rm aux}}$,
$i\in \bb{N}$. 
By Remark \ref{r_compo}, 
$\wt{V}_i\otimes I_{H_{i, \rm aux}}$ is a local isometry from $_{{\wt{\cl A}}}{\wt{H}}_{{\wt{\cl B}}}\otimes\mbox{}_{{{\cl A}}_{i, \rm aux}}({H}_{\rm aux})_{{{\cl B}}_{i, \rm aux}}$ to $_{\wt{\wt{\cl A}}}\wt{\wt{H}}_{\wt{\wt{\cl B}}}\otimes\mbox{}_{{\wt{\cl A}}_{i, \rm aux}}(\wt{H}_{i, \rm aux})_{{\wt{\cl B}}_{i, \rm aux}}\otimes\mbox{}_{{{\cl A}}_{i, \rm aux}}({H}_{i,\rm aux})_{{{\cl B}}_{i,\rm aux}}$. 
It follows now from 
Theorem \ref{preorder} that $(\wt{V}_i\otimes I_{H_{i,\rm aux}})V_i$ is a local isometry  from $_{{\cl A}}{H}_{{\cl B}}$ to $_{\wt{\wt{\cl A}}}\wt{\wt{H}}_{\wt{\wt{\cl B}}}\otimes\mbox{}_{{\wt{\cl A}}_{i,\rm aux}}(\wt{H}_{i,\rm aux})_{{\wt{\cl B}}_{i,\rm aux}}\otimes\mbox{}_{{{\cl A}}_{i,\rm aux}}({H}_{i,\rm aux})_{{{\cl B}}_{i,\rm aux}}$, $i\in \bb{N}$. 
Further, if $s\in \cl S_A$ and $t\in \cl S_B$ then 
\begin{eqnarray*}
& & 
\left\|(\wt{V}_i\otimes I_{H_{i,\rm aux}})V_i\varphi_A(s) \varphi_B(t) \xi 
- 
(\wt{\wt{\varphi}}_A(s)\wt{\wt{\varphi}}_B(t)\wt{\wt{\xi}})
\otimes\wt{\xi}_{i,\rm aux}\otimes\xi_{i,\rm aux}\right\|\\
&& \leq  
\left\|(\wt{V}_i\otimes I_{H_{i,\rm aux}})V_i\varphi_A(s) \varphi_B(t)\xi - (\wt{V}_i\otimes I_{H_{i,\rm aux}})(\tilde\varphi_A(s)\tilde\varphi_B(t)\tilde\xi) \otimes\xi_{i,\rm aux}\right\|\\
&& +  
\left\|(\wt{V}_i\otimes I_{H_{i,\rm aux}})\tilde\varphi_A(s)\tilde\varphi_B(t)\tilde\xi \otimes\xi_{i,\rm aux} - 
(\wt{\wt{\varphi}}_A(s)\wt{\wt{\varphi}}_B(t)\wt{\wt{\xi}})
\otimes\wt{\xi}_{i,\rm aux}\otimes\xi_{i,\rm aux}\right\|\\
& &\leq  
\left\|V_i\varphi_A(s) \varphi_B(t)\xi 
- 
\tilde\varphi_A(s)\tilde\varphi_B(t)\tilde\xi \otimes\xi_{i,\rm aux}\right\|\\
&& +  
\left\|(\wt{V}_i\otimes I_{H_{i,\rm aux}})\tilde\varphi_A(s)\tilde\varphi_B(t)\tilde\xi  
- 
(\wt{\wt{\varphi}}_A(s)\wt{\wt{\varphi}}_B(t)\wt{\wt{\xi}})
\otimes\wt{\xi}_{i,\rm aux}\right\|.
\end{eqnarray*}
It follows that 
$$ \left\|(\wt{V}_i\otimes I_{H_{i,\rm aux}})V_i\varphi_A(s) \varphi_B(t) \xi 
- 
(\wt{\wt{\varphi}}_A(s)\wt{\wt{\varphi}}_B(t)\wt{\wt{\xi}})
\otimes\wt{\xi}_{i,\rm aux}\otimes\xi_{i,\rm aux}\right\|\to_{i\to \infty} 0.$$
Thus the family $((\wt{V}_i\otimes I_{H_{i,\rm aux}})V_i)_{i\in \bb{N}}$
implements an approximate dilation yielding the relation 
$S\preceq_{\rm a} \wt{\wt{S}}$, and showing that 
the relation $\preceq_{\rm a}$ is a preorder.
The fact that $\preceq$ is a preorder now follows
after noticing that it is a special case of $\preceq_{\rm a}$, 
implemented by constant sequences of local isometries. 
\end{proof}


To each model $S = (_{\cl A}H_{\cl B}, \varphi_A, \varphi_B, \xi)$, we associate a linear functional 
$f_S : \cl S_A\otimes_{\rm c} \cl S_B\to\mathbb C$ by letting
$$f_S(u)=\langle (\nph_A\cdot\nph_B)(u)\xi,\xi\rangle, \ \ \ 
u\in \cl S_A\otimes_{\rm c} \cl S_B,$$ 
and a linear functional 
$\tilde{f}_S : C^*_u(\cl S_A)\otimes_{\max} C^*_u(\cl S_B)\to\mathbb C$ by letting
$$\tilde{f}_S(u) = \langle (\pi_A\cdot\pi_B)(u)\xi,\xi\rangle, 
\ \ \ u\in C^*_u(\cl S_A)\otimes_{\max} C^*_u(\cl S_B).$$
We note that $f_S$ and $\tilde{f}_S$
depend only on the tuple $(H, \varphi_A, \varphi_B, \xi)$
and not on the von Neumann algebras $\cl A$ and $\cl B$.
It is straightforward that if $S\preceq \wt{S}$ then $f_S=f_{\wt{S}}$. 
In fact, the same implication holds for approximate dilations.

\begin{proposition}\label{p_fS=f}
If $S\preceq_{\rm a} \wt{S}$ then $f_S=f_{\wt{S}}$.
\end{proposition}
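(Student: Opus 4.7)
The plan is to reduce the claim to a standard approximation argument, using the fact that the local isometries $V_i$ preserve inner products together with the extended version of (\ref{eq_TABa}) recorded in (\ref{eq_arbu}) of Remark \ref{r_someap}. Fix $u \in \cl S_A \otimes_{\rm c} \cl S_B$. The starting point is the identity
\[
f_S(u) = \langle (\varphi_A\cdot\varphi_B)(u)\xi,\xi\rangle = \langle V_i(\varphi_A\cdot\varphi_B)(u)\xi, V_i\xi\rangle,
\]
which holds for every $i \in \bb{N}$ since $V_i$ is an isometry. The strategy is then to show that the right-hand side converges, as $i\to\infty$, to $f_{\tilde S}(u)$.

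Next, I would invoke (\ref{eq_arbu}) (applied to the given $u$) to assert that $V_i(\varphi_A\cdot\varphi_B)(u)\xi$ is close in norm to $(\tilde\varphi_A\cdot\tilde\varphi_B)(u)\tilde\xi\otimes \xi_{i,\rm aux}$, and (\ref{eq_arbu}) again with $u = 1$ to assert that $V_i\xi$ is close to $\tilde\xi \otimes \xi_{i,\rm aux}$. I would then estimate the difference
\[
\bigl\langle V_i(\varphi_A\cdot\varphi_B)(u)\xi, V_i\xi\bigr\rangle - \bigl\langle (\tilde\varphi_A\cdot\tilde\varphi_B)(u)\tilde\xi \otimes \xi_{i,\rm aux},\, \tilde\xi \otimes \xi_{i,\rm aux}\bigr\rangle
\]
by splitting it in the standard way into two inner products, one having the first slot equal to $V_i(\varphi_A\cdot\varphi_B)(u)\xi - (\tilde\varphi_A\cdot\tilde\varphi_B)(u)\tilde\xi\otimes\xi_{i,\rm aux}$ and the other having its second slot equal to $V_i\xi - \tilde\xi \otimes \xi_{i,\rm aux}$. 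By the Cauchy--Schwarz inequality, together with the uniform boundedness of $V_i\xi$ (it is a unit vector) and of $(\tilde\varphi_A\cdot\tilde\varphi_B)(u)\tilde\xi \otimes \xi_{i,\rm aux}$ (its norm is bounded by $\|(\tilde\varphi_A\cdot\tilde\varphi_B)(u)\|$), each of the two terms tends to zero.

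Finally, since $\xi_{i,\rm aux}$ is a unit vector, the elementary identity
\[
\bigl\langle (\tilde\varphi_A\cdot\tilde\varphi_B)(u)\tilde\xi \otimes \xi_{i,\rm aux},\, \tilde\xi \otimes \xi_{i,\rm aux}\bigr\rangle = \bigl\langle (\tilde\varphi_A\cdot\tilde\varphi_B)(u)\tilde\xi, \tilde\xi\bigr\rangle = f_{\tilde S}(u)
\]
holds for every $i$, so passing to the limit in the displayed estimate yields $f_S(u) = f_{\tilde S}(u)$. Since $u$ was arbitrary, this completes the argument. I do not anticipate a serious obstacle: the only subtle point is to make sure that (\ref{eq_arbu}) is indeed available for arbitrary $u \in \cl S_A \otimes_{\rm c} \cl S_B$ (not just elementary tensors), which is precisely the content of Remark \ref{r_someap} and rests on the uniform bound $\|V_i\| = 1$ and the contractivity of the unital completely positive map $\varphi_A\cdot\varphi_B$.
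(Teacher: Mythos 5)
Your argument is correct and is essentially the paper's own proof: both pass via the identities $f_S(u)=\langle V_i(\varphi_A\cdot\varphi_B)(u)\xi,V_i\xi\rangle$ and $f_{\tilde S}(u)=\langle(\tilde\varphi_A\cdot\tilde\varphi_B)(u)\tilde\xi\otimes\xi_{i,\rm aux},\tilde\xi\otimes\xi_{i,\rm aux}\rangle$, split the difference by the standard add-and-subtract trick, and close with Cauchy--Schwarz, the uniform norm bound, and (\ref{eq_arbu}) applied to $u$ and to $u=1$. The only cosmetic difference is that the paper fixes $\epsilon>0$ and an explicit index $i$ rather than phrasing the final step as a limit.
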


\begin{proof}
Let 
$_{{\cl A}_{i,\rm aux}}(H_{i,\rm aux})_{{\cl B}_{i,\rm aux}}$ 
be bipartite quantum systems,
$\xi_{i,\rm aux}\in H_{i,\rm aux}$ be unit vectors, and
$V_i : \mbox{}_{\cl A}H_{\cl B}\to
\mbox{}_{\wt{\cl A}}\wt{H}_{\wt{\cl B}}\otimes
\mbox{}_{{\cl A}_{i,\rm aux}}(H_{i,\rm aux})_{{\cl B}_{\rm aux}}$
be local isometries, $i\in \bb{N}$, for which condition (\ref{eq_TABa}) holds. 
Given $\epsilon > 0$ and $u\in \cl S_A\otimes_{\rm c}\cl S_B$, 
in view of (\ref{eq_arbu}), let $i\in \bb{N}$ be such that 
$$\left\|V_i (\varphi_A\cdot\varphi_B)(u) \xi - 
(\tilde\varphi_A\cdot\tilde\varphi_B)(u)\tilde\xi \otimes\xi_{i,\rm aux}\right\| 
< \frac{\epsilon}{2}$$
and 
$$\left\|V_i \xi - \wt{\xi}\otimes \xi_{i,\rm aux}\right\| < \frac{\epsilon}{2\|u\|}.$$
We have 
\begin{eqnarray*}
& & \left|f_{\wt{S}}(u) - f_{S}(u)\right|
= 
\left|\left\langle (\tilde\varphi_A\cdot \tilde\varphi_B)(u)\tilde\xi,\tilde\xi \right\rangle
- \left\langle (\varphi_A\cdot \varphi_B)(u)\xi,\xi \right\rangle\right|\\
& = & 
\left|\left\langle (\tilde\varphi_A\cdot\tilde\varphi_B)(u)\tilde\xi \otimes\xi_{i,\rm aux}, 
\tilde\xi \otimes\xi_{i,\rm aux}\right\rangle
- 
\left\langle V_i(\varphi_A\cdot \varphi_B)(u)\xi,V_i\xi \right\rangle\right|
\\
& \leq &
\left|\left\langle (\tilde\varphi_A\cdot\tilde\varphi_B)(u)\tilde\xi \otimes\xi_{i,\rm aux}, 
\tilde\xi \otimes\xi_{i,\rm aux}\right\rangle
- 
\left\langle V_i(\varphi_A\cdot \varphi_B)(u)\xi,\tilde\xi \otimes\xi_{i,\rm aux} \right\rangle\right|\\
& + & 
\left|\left\langle V_i(\varphi_A\cdot \varphi_B)(u)\xi,\tilde\xi \otimes\xi_{i,\rm aux} \right\rangle
-
\left\langle V_i(\varphi_A\cdot \varphi_B)(u)\xi,V_i\xi \right\rangle
\right|\\
& \leq & 
\frac{\epsilon}{2} \left\|\tilde\xi \otimes\xi_{i,\rm aux}\right\| 
+ \frac{\epsilon}{2\|u\|} \left\|V_i(\varphi_A\cdot \varphi_B)(u)\xi\right\| \leq \epsilon.
\end{eqnarray*}
As $\epsilon$ is an arbitrary positive real, we have that 
$f_{\wt{S}}(u) = f_{S}(u)$ for every $u\in \cl S_A\otimes_{\rm c}\cl S_B$.
\end{proof}

\begin{defn}\label{d_st}
Let $\cl C$ be a class of quantum commuting models
over the pair $(\cl S_A,\cl S_B)$, and 
$\cl S$ be a subset of states of the C*-algebra
$C_u^*(\cl S_A)\otimes_{\max}C_u^*(\cl S_B)$.

\begin{itemize}
\item[(i)]
We say that a state 
$f : \cl S_A\otimes_{\rm c} \cl S_B\to\mathbb C$ is a 
\emph{self-test} for the class $\cl C$ if there exists
$\wt{S}\in\cl C$ such that $f=f_{\wt{S}}$ and, 
whenever $S\in\cl C$ is such that $f_S = f$, we have that 
$S\preceq \wt{S}$. 
In this case, we say that $\wt{S}$ is an \emph{ideal model} for $f$.

\item[(ii)]
We say that a state 
$f : \cl S_A\otimes_{\rm c} \cl S_B\to\mathbb C$ is a 
\emph{weak self-test} for the class $\cl C$ if there exists
$\wt{S}\in\cl C$ such that $f=f_{\wt{S}}$ and, 
whenever $S\in\cl C$ is such that $f_S = f$, we have that 
$S\preceq_{\rm a} \wt{S}$. 
In this case, we say that $\wt{S}$ is an \emph{weak ideal model} for $f$.

\item[(iii)]
We say that a state $f : \cl S_A\otimes_{\rm c}\cl S_B\to\mathbb C$ is 
an \emph{abstract self-test} for $\cl S$ 
if there exists a unique state $g\in \cl S$ such that 
$g|_{\cl S_A\otimes_{\rm c}\cl S_B} = f$.
\end{itemize}
\end{defn}

It is clear that, in the notation of Definition \ref{d_st},  
the state $f : \cl S_A\otimes_{\rm c} \cl S_B\to\bb{C}$ is a 
self-test for the class $\cl C$ if and only if 
it is a self-test for the class 
$\cl C_f := \{S\in \cl C : f_S = f\}$.

\begin{remark} 
The setup of Definition \ref{d_st} will be applied in 
several different contexts in Section \ref{s_app}. 
The primary motivation behind it is the 
self-testing paradigm for no-signalling correlations of quantum type \cite{mps, pszz}. 
Letting $X,Y,A,B$ be finite sets, 
$\cl{S}_{X,A}$ (resp. $\cl S_{Y,B}$) 
be the universal operator system of $|X|$ (resp. $|Y|$) 
POVM's, each of cardinality $|A|$ (resp. $|B|$, see (\ref{eq_defSXA0})), Definition \ref{d_st} (i) provides a commuting operator generalisation of the 
usually employed notion of self-testing for quantum spacial systems (see e.g. \cite{supic-bowles}). We refer the reader to Subsection \ref{ss_POVMNS} for an extended discussion.

Note that \emph{robust self-testing} for 
no-signalling correlations arising from finite dimensional 
models (see e.g. \cite[Definition 3.3]{zhao}) is a priori stronger than 
the corresponding notion of 
weak self-testing from Definition \ref{d_st} (ii).
\end{remark}

We recall that, if $H$ is a Hilbert space, $\cl M \subseteq \cl B(H)$ 
is a von Neumann algebra and $\omega : \cl M\to \bb{C}$
is a normal positive functional, the \emph{support} of $\omega$ 
\cite[Definition 7.2.4]{kadison-ringrose}
can be defined as
the smallest projection $p\in \cl M$ such that $\omega(x) = \omega(pxp)$ for all $x\in \cl M$. 
For $\xi\in H$, write $\omega_\xi$ for the vector state, given by 
$\omega_\xi(x) = \langle x\xi,\xi\rangle$, $x\in \cl B(H)$.
Note that, if $p$ is the support of the functional $\omega_{\xi}|_{\cl M}$ then 
\begin{equation}\label{eq_supre}
p\xi = \xi.
\end{equation} 
Indeed, 
$$\|\xi-p\xi\|^2=\langle \xi,\xi\rangle-\langle \xi,p\xi\rangle-\langle p\xi,\xi\rangle+\langle p\xi,p\xi\rangle=\omega_\xi(1)-\omega_\xi(p)=0,$$
and (\ref{eq_supre}) follows. 

Given a quantum commuting model $S=(_{\cl A}\hspace{-0.04cm}H_{\cl B^o}, \varphi_A, \varphi_B, \xi)$, 
where $\cl A, \cl B^o\subseteq \cl B(H)$,
we let $\epsilon_A \in \cl A$ (resp. $\epsilon_B\in \cl B^o$) 
be the support projection of the state $\omega_{\xi}|_{\cl A}$
(resp. $\omega_{\xi}|_{\cl B^o}$) of $\cl A$ (resp. $\cl B^o$). Following \cite{pszz}, we say that $S$ is \emph{centrally supported} if 
$$\varphi_A(s)\epsilon_A = \epsilon_A\varphi_A(s) \ 
\mbox{ and } \ \varphi_B(t)\epsilon_B = \epsilon_B\varphi_B(t),
\ \ s\in \cl S_A, t\in \cl S_B.$$

\begin{lemma}\label{uptoeps}
    Let $\cl A\subseteq\cl B(H)$ be a von Neumann algebra and 
    $\xi\in H$ be a unit vector. 
    Assume that $\xi$ is separating for $\cl A$.
    Then, given $x\in\cl A$ and $\epsilon > 0$, there exists $y\in\cl A'$ such that $\|x\xi-y\xi\|<\epsilon$. 
\end{lemma}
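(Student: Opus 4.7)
The plan is to exploit the standard duality principle that a vector which is separating for a von Neumann algebra must be cyclic for its commutant. Once this is established, the conclusion is immediate: since $\cl A' \xi$ is dense in $H$ and $x\xi \in H$, there exists $y \in \cl A'$ with $\|y\xi - x\xi\| < \epsilon$.

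To verify the density of $\cl A' \xi$ in $H$, I will let $p \in \cl B(H)$ denote the orthogonal projection onto the closed subspace $\overline{\cl A' \xi}$. Since $\cl A'\xi$ is manifestly invariant under $\cl A'$ (because $\cl A'$ is an algebra containing the identity), so is its closure, and hence $p$ commutes with every element of $\cl A'$, i.e., $p \in (\cl A')' = \cl A$.

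Next, because $1 \in \cl A'$, we have $\xi = 1 \cdot \xi \in \cl A'\xi \subseteq \overline{\cl A'\xi}$, so $p\xi = \xi$ and therefore $(1-p)\xi = 0$. Since $1-p \in \cl A$ and $\xi$ is separating for $\cl A$, this forces $1 - p = 0$, that is, $p = I_H$. Consequently $\overline{\cl A'\xi} = H$, which completes the argument.

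No step here should present any real obstacle; the whole argument is the textbook proof that separating for $\cl A$ implies cyclic for $\cl A'$, followed by the trivial approximation step. The only point worth being careful about is verifying that the projection $p$ genuinely lies in $\cl A$, which follows at once from the commutant theorem $(\cl A')' = \cl A$ applied to the von Neumann algebra $\cl A$.
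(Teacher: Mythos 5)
Your proof is correct and takes the same approach as the paper: the paper simply cites the standard fact that a separating vector for $\cl A$ is cyclic for $\cl A'$ and calls the conclusion immediate, whereas you additionally spell out the textbook argument (projection onto $\overline{\cl A'\xi}$ lies in $\cl A$, annihilates $\xi$, hence is $I$) for that standard fact.
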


\begin{proof}
Since $\xi$ is separating for $\cl A$, it is cyclic for its commutant 
$\cl A'$. The statement is now immediate.
\end{proof}

In what follows, for $\xi$, $\eta\in H$ and $\epsilon > 0$, 
we write $\xi\sim^\epsilon \eta$ if $\|\xi-\eta\|<\epsilon$.

\begin{lemma}\label{sep_cycl}
 Let $\cl A \subseteq \cl B(H)$ be a von Neumann algebra and  $\xi\in H$
 be a unit vector. Let  $\epsilon_A\in \cl A$ be the support 
 of the state $\omega_\xi|_{\cl A}$ of $\cl A$, 
 and $p_A\in \cl B(H)$ be the projection onto 
 $\overline{\cl A\xi}$.  Then $\xi \in \epsilon_Ap_AH$, and 
 $\xi$ is cyclic and separating for the von Neumann subalgebra 
 $\epsilon_Ap_A\cl A\epsilon_Ap_A$ of $\cl B(\epsilon_Ap_AH)$. 
\end{lemma}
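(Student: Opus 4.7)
The plan is to first establish the commutativity and key identities, then handle cyclicity and separating-ness in turn, with the main leverage being faithfulness of $\omega_\xi$ on the compression $\epsilon_A\cl{A}\epsilon_A$.

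\textbf{Preliminary observations.} Since $\overline{\cl{A}\xi}$ is invariant under $\cl{A}$, the projection $p_A$ lies in $\cl{A}'$, and in particular commutes with $\epsilon_A\in\cl{A}$; hence $\epsilon_Ap_A$ is a projection. Trivially $p_A\xi=\xi$ (take $a=1$), and $\epsilon_A\xi=\xi$ by the paper's identity (\ref{eq_supre}). Combining these gives $\epsilon_Ap_A\xi=\xi\in \epsilon_Ap_AH$. Since $p_A\in\cl{A}'$, we have $\epsilon_Ap_A\cl{A}\epsilon_Ap_A=p_A(\epsilon_A\cl{A}\epsilon_A)p_A$, which is a weak* closed $*$-algebra with unit $\epsilon_Ap_A$, so it is a von Neumann subalgebra of $\cl{B}(\epsilon_Ap_AH)$.

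\textbf{Cyclicity.} Write $\cl{M}:=\epsilon_Ap_A\cl{A}\epsilon_Ap_A$. Using $\epsilon_Ap_A\xi=\xi$, we have $\cl{M}\xi=\epsilon_Ap_A\cl{A}\xi$. For any $a\in\cl{A}$, the element $a\xi$ lies in $\overline{\cl{A}\xi}=p_AH$, so $p_Aa\xi=a\xi$, and hence $\cl{M}\xi=\epsilon_A\cl{A}\xi$. Given $\eta\in\epsilon_Ap_AH$, the vector $p_A\eta\in p_AH=\overline{\cl{A}\xi}$, so there exist $a_n\in\cl{A}$ with $a_n\xi\to p_A\eta$; applying the bounded operator $\epsilon_A$ gives $\epsilon_Aa_n\xi\to \epsilon_Ap_A\eta=\eta$, so $\eta\in\overline{\cl{M}\xi}$. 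Thus $\xi$ is cyclic for $\cl{M}$ on $\epsilon_Ap_AH$.

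\textbf{Separating.} Let $m=\epsilon_Ap_Aa\epsilon_Ap_A\in\cl{M}$ and suppose $m\xi=0$. Using $\epsilon_Ap_A\xi=\xi$ and again $p_Aa\xi=a\xi$, we obtain $\epsilon_Aa\xi=0$. Set $c:=\epsilon_Aa\epsilon_A\in\epsilon_A\cl{A}\epsilon_A$. Then $c\xi=\epsilon_Aa\xi=0$, so
\[
\omega_\xi(c^*c)=\|c\xi\|^2=0.
\]
This is the crux: by definition of the support, $\omega_\xi$ restricted to $\epsilon_A\cl{A}\epsilon_A$ is faithful, so the positive element $c^*c\in\epsilon_A\cl{A}\epsilon_A$ must vanish, giving $c=0$, i.e.\ $\epsilon_Aa\epsilon_A=0$. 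Since $p_A$ commutes with $\epsilon_A$ and with $a$, we conclude
\[
m=\epsilon_Ap_Aa\epsilon_Ap_A=p_A(\epsilon_Aa\epsilon_A)p_A=0.
\]
Hence $m=0$ on $\epsilon_Ap_AH$, proving that $\xi$ is separating for $\cl{M}$.

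The only step requiring real content is the separating property, whose key input is the standard fact that the support projection characterises the compression on which the state becomes faithful; every other step is a direct manipulation of projections, using only that $p_A\in\cl{A}'$ and $\cl{A}\xi$ is dense in $p_AH$.
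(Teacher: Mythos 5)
Your proof is correct and follows essentially the same approach as the paper: both establish the separating property by reducing to faithfulness of $\omega_\xi$ on the corner $\epsilon_A\cl{A}\epsilon_A$ (the defining property of the support projection), applied to $(\epsilon_A a\epsilon_A)^*(\epsilon_A a\epsilon_A)$, and both prove cyclicity by reducing to the density of $\cl{A}\xi$ in $p_AH$. The only superficial difference is that you argue cyclicity by a direct approximation while the paper uses the dual orthogonality formulation.
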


\begin{proof}
Note that, since $p_A\in\cl A'$ and $\epsilon_A\in\cl A$, 
we have that $\epsilon_Ap_A\cl A\epsilon_Ap_A$ is indeed 
a von Neumann algebra acting on $\epsilon_Ap_AH$. 
By (\ref{eq_supre}), $\epsilon_Ap_A\xi=\xi$ and hence 
$\omega_{\epsilon_Ap_A\xi}(\epsilon_Ap_Aa\epsilon_Ap_A)=\omega_\xi(a)$, 
$a\in\cl A$. 

Assume that $a\in\cl A$, $\epsilon_Ap_Aa\epsilon_Ap_A\geq 0$ and $\omega_\xi(p_A\epsilon_Aa\epsilon_Ap_A)=0$. Then 
$$0 = p_A\epsilon_Aa\epsilon_Ap_A\xi=\epsilon_Aa\epsilon_Ap_A\xi=\epsilon_Aa\epsilon_A\xi;$$ 
thus $\epsilon_Aa^*\epsilon_Aa\epsilon_A\xi=0$ and hence
$\omega_\xi(\epsilon_Aa^*\epsilon_Aa\epsilon_A)=0$. As $\omega_\xi$ is faithful 
on $\epsilon_A\cl A\epsilon_A$, we get $\epsilon_Aa^*\epsilon_Aa\epsilon_A=0$ and hence $\epsilon_Aa\epsilon_A=0$. 
This implies that $\omega_{\xi}$ is faithful on $\epsilon_Ap_A\cl A\epsilon_Ap_A$ and therefore, $\xi$ is separating for $\epsilon_Ap_A\cl A\epsilon_Ap_A$.

Let $\eta\in H$ be such that
$\langle \epsilon_Ap_Aa\epsilon_Ap_A\xi, \epsilon_Ap_A\eta\rangle=0$
for all $a\in\cl A$. Then 
$\langle a\xi, p_A\epsilon_A\eta\rangle=0$ and, 
since $p_A\epsilon_A\eta \in p_AH$, while 
$p_AH = \overline{\cl A\xi}$, 
we have $p_A\epsilon_A\eta=0$, showing that $\xi$ is cyclic. 
\end{proof}

\begin{lemma}\label{support_com}
Let $H$ be a Hilbert space, $\cl A\subseteq \cl B(H)$ be a von Neumann algebra, 
$\xi\in H$, and $\epsilon_A$ be the support projection of $\omega_\xi|_{\cl A}$. 
Suppose that for every selfadjoint $a\in\cl A$ and 
every $\epsilon > 0$, there exists
$\hat a\in\cl A'$ such that $a\xi\sim^\epsilon\hat a\xi$. Then $[\epsilon_A,a] = 0$
for every $a\in \cl A$.
    \end{lemma}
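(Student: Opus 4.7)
The plan is to first identify the support projection $\epsilon_A$ concretely, namely as the orthogonal projection onto $\overline{\cl A'\xi}$, and then exploit this description to translate the approximation hypothesis into an algebraic vanishing condition. If $q$ denotes the projection onto $\overline{\cl A'\xi}$, then $q\in\cl A''=\cl A$ since $\overline{\cl A'\xi}$ is $\cl A'$-invariant, and clearly $q\xi=\xi$. Conversely, any projection $p\in\cl A$ with $p\xi=\xi$ satisfies $pb\xi=bp\xi=b\xi$ for all $b\in\cl A'$, so $p$ acts as the identity on $\overline{\cl A'\xi}=qH$, whence $p\geq q$. Combined with (\ref{eq_supre}) and the minimality defining $\epsilon_A$, this gives $\epsilon_A=q$.

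With this identification at hand, the hypothesis reads precisely: for every selfadjoint $a\in\cl A$, the vector $a\xi$ is a norm limit of elements of $\cl A'\xi$ and therefore lies in $\overline{\cl A'\xi}=\epsilon_A H$. Equivalently, $(1-\epsilon_A)a\xi=0$, and by $\bb{C}$-linearity (splitting $a\in\cl A$ into its real and imaginary parts) this extends to all $a\in\cl A$.

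Next, I would fix $a\in\cl A$ and show that $(1-\epsilon_A)a\epsilon_A=0$. For every $b\in\cl A'$, since $1-\epsilon_A\in\cl A$ commutes with $b$,
$$(1-\epsilon_A)a\,b\xi \;=\; b(1-\epsilon_A)a\xi \;=\; 0.$$
Thus the bounded operator $(1-\epsilon_A)a$ vanishes on the dense subspace $\cl A'\xi$ of $\epsilon_A H=\overline{\cl A'\xi}$ and, by continuity, on all of $\epsilon_A H$. This yields $(1-\epsilon_A)a\epsilon_A=0$, equivalently $a\epsilon_A=\epsilon_A a\epsilon_A$.

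Finally, applying the same argument to $a^*\in\cl A$ (whose selfadjoint/antiselfadjoint parts are selfadjoint elements of $\cl A$, so the hypothesis applies) gives $a^*\epsilon_A=\epsilon_A a^*\epsilon_A$; taking adjoints produces $\epsilon_A a=\epsilon_A a\epsilon_A$. Combining the two identities yields $\epsilon_A a=a\epsilon_A$, as required. The only genuine step with content is the initial identification $\epsilon_A=$ projection onto $\overline{\cl A'\xi}$; once that is in place, the remainder is a straightforward bimodule-style computation exploiting that $\epsilon_A$ and its complement belong to $\cl A$ and hence commute with all elements of $\cl A'$.
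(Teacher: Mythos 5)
Your proof is correct, and it takes a genuinely different route from the paper's. You begin by proving that $\epsilon_A$ is the projection onto $\overline{\cl A'\xi}$ (a standard fact, which you verify carefully), which converts the approximation hypothesis at the outset into the \emph{exact} statement $(1-\epsilon_A)a\xi = 0$ for selfadjoint $a$, then for all $a$ by linearity. You then exploit the commutant: since $(1-\epsilon_A)a\in\cl A$ commutes with every $b\in\cl A'$, the vector equation $(1-\epsilon_A)a\xi=0$ propagates across the dense set $\cl A'\xi \subseteq \epsilon_A H$, yielding $(1-\epsilon_A)a\epsilon_A=0$, and the adjoint trick closes the argument. The paper instead keeps the $\sim^\epsilon$ approximations alive, squeezes out $\|(1-\epsilon_A)a\xi\|=0$ from a two-step $\epsilon$-chain, reads this as $\omega_\xi\bigl(a(1-\epsilon_A)a\bigr)=0$ for selfadjoint $a$, and then converts it to $(1-\epsilon_A)a\epsilon_A=0$ by invoking the faithfulness of $\omega_\xi$ on $\epsilon_A\cl A\epsilon_A$ (the abstract defining property of the support projection). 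The two proofs are thus dual views of the same fact: yours uses the spatial description $\epsilon_A H = \overline{\cl A'\xi}$ and an $\cl A'$-module argument, while the paper's uses the algebraic faithfulness property of the compression. Your version makes the geometry of the support explicit, which may be more transparent; the paper's is slightly shorter since it never needs to pass from the vector $\xi$ to the subspace $\cl A'\xi$. One small wording quibble: in your module step you should say that $(1-\epsilon_A)a\in\cl A$ (not merely $1-\epsilon_A$) commutes with $b\in\cl A'$, since that is the operator you are moving past $b$ — the argument is fine, only the stated reason is slightly incomplete.
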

    
    \begin{proof}
        By (\ref{eq_supre}), $\epsilon_A\xi=\xi$.
        Thus 
        $$\epsilon_Aa\xi\sim^{\epsilon}\epsilon_A\hat a\xi=\hat a\epsilon_A\xi=\hat a\xi\sim^{\epsilon}a\xi.$$
        Therefore $\omega_\xi(a(1-\epsilon_A)a)=0$ and, as 
        $\epsilon_A$ is the support projection of $\omega_{\xi}|_{\cl A}$ 
        and $a(1-\epsilon_A)a$ is positive, we obtain 
        $\epsilon_Aa(1-\epsilon_A)a\epsilon_A=0$, and hence $(1-\epsilon_A)a\epsilon_A=0$,
        that is, 
        $a\epsilon_A=\epsilon_A a\epsilon_A$. As $a$ is selfadjoint, 
        this implies $\epsilon_A a = a\epsilon_A$.
    \end{proof}

\begin{lemma}\label{p_apextends}
 Let $S = (_{\cl A}H_{\cl B}, \varphi_A, \varphi_B, \xi)$ and
 $\wt{S} = (_{\wt{\cl A}}\wt{H}_{\wt{\cl B}}, \wt{\varphi}_A, 
 \wt{\varphi}_B, \wt{\xi})$ be quantum commuting models
 over $(\cl S_A,\cl S_B)$. 
 Suppose that $S\preceq_{\rm a} \wt{S}$ and that $\wt{S}$ is centrally supported. 
 Then the states 
 $\omega_\xi\circ\pi_{\varphi_A\cdot\varphi_B}$
 and $\omega_{\wt{\xi}}\circ\pi_{\wt{\varphi}_A\cdot\wt{\varphi}_B}$ of $C^*_u(\cl S_A)\otimes_{\max} C^*_u(\cl S_B)$ coincide. 
\end{lemma}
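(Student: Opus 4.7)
The plan is to show that the two states, which already agree on the operator subsystem $\cl S_A\otimes_{\rm c}\cl S_B$ by Proposition \ref{p_fS=f}, in fact agree on the dense unital *-subalgebra of $C^*_u(\cl S_A)\otimes_{\max}C^*_u(\cl S_B)$ spanned by elementary tensors $u\otimes v$ with $u=s_1\cdots s_n$, $s_i\in \cl S_A$, and $v=t_1\cdots t_m$, $t_j\in \cl S_B$; continuity of the states then forces equality everywhere. To exploit the approximate local dilation, let $V_i$ be the local isometries witnessing $S\preceq_{\rm a}\wt{S}$, and consider the two factorisations of $V_i$ afforded by the diagram (\ref{eq_diag!}). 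Applying the argument of Remark \ref{r_preciselo}(ii) to the $\cl A$-local vertical map in one factorisation and, symmetrically, to the $\cl B$-local horizontal map in the other, I obtain *-homomorphisms
\begin{equation*}
A_i:\cl A\to\wt{\cl A}\bar\otimes \cl A_{i,\rm aux},\qquad B_i:\cl B^o\to \wt{\cl B}^o\bar\otimes \cl B_{i,\rm aux}^o
\end{equation*}
(acting on $\wt H\otimes H_{i,\rm aux}$) satisfying $V_ia=A_i(a)V_i$ for $a\in\cl A$ and $V_ib=B_i(b)V_i$ for $b\in\cl B^o$. Since their ranges commute,
\begin{equation*}
V_i\pi_{\varphi_A}(u)\pi_{\varphi_B}(v)\xi = A_i(\pi_{\varphi_A}(u))\,B_i(\pi_{\varphi_B}(v))\,V_i\xi.
\end{equation*}

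The core of the argument is the following claim, proved by induction on $n$: for every $u=s_1\cdots s_n$ with $s_i\in\cl S_A$,
\begin{equation*}
\bigl(A_i(\pi_{\varphi_A}(u))-\pi_{\wt\varphi_A}(u)\otimes I\bigr)(\wt\xi\otimes\xi_{i,\rm aux})\longrightarrow 0
\end{equation*}
as $i\to\infty$, and similarly for $B_i$ and $\pi_{\wt\varphi_B}$. The base case $n=1$ is immediate from (\ref{eq_TABa}) with $t=1$, $V_i\xi-\wt\xi\otimes\xi_{i,\rm aux}\to 0$, and the identity $A_i(\varphi_A(s))V_i\xi=V_i\varphi_A(s)\xi$. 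For the inductive step, writing $u=s\cdot u'$ and invoking multiplicativity of $A_i$ together with the inductive hypothesis reduces matters to showing that
\begin{equation*}
\bigl(A_i(\varphi_A(s))-\wt\varphi_A(s)\otimes I\bigr)\bigl(\pi_{\wt\varphi_A}(u')\wt\xi\otimes\xi_{i,\rm aux}\bigr)\longrightarrow 0.
\end{equation*}
The main obstacle here is that both $A_i(\varphi_A(s))-\wt\varphi_A(s)\otimes I$ and $\pi_{\wt\varphi_A}(u')\otimes I$ lie in $\wt{\cl A}\bar\otimes \cl A_{i,\rm aux}$ and therefore do not commute in general, so the base case cannot be applied directly. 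This is exactly where the central support hypothesis on $\wt S$ is essential: by central support and Lemma \ref{sep_cycl}, $\wt\xi$ is separating for $\wt\epsilon_A\wt p_A\wt{\cl A}\wt\epsilon_A\wt p_A$ on $\wt\epsilon_A\wt p_A\wt H$, so Lemma \ref{uptoeps} produces, for any $\delta>0$, an element $z\in\wt{\cl A}'$ with $\|z\wt\xi-\pi_{\wt\varphi_A}(u')\wt\xi\|<\delta$.

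Since $z\in\wt{\cl A}'$, the operator $z\otimes I$ lies in the commutant of $\wt{\cl A}\bar\otimes\cl A_{i,\rm aux}$ and in particular commutes with $A_i(\varphi_A(s))-\wt\varphi_A(s)\otimes I$. Substituting $z\wt\xi$ for $\pi_{\wt\varphi_A}(u')\wt\xi$ at a cost of at most $2\|s\|\delta$ and then moving $z\otimes I$ to the outside bounds the target quantity by
\begin{equation*}
2\|s\|\delta + \|z\|\cdot\bigl\|\bigl(A_i(\varphi_A(s))-\wt\varphi_A(s)\otimes I\bigr)(\wt\xi\otimes\xi_{i,\rm aux})\bigr\|,
\end{equation*}
whose second term tends to zero as $i\to\infty$ by the base case; since $\delta$ was arbitrary, this closes the induction. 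Combining the claim for $A_i$ with its analogue for $B_i$, and using that $A_i(\pi_{\varphi_A}(u))\in\wt{\cl A}\bar\otimes\cl A_{i,\rm aux}$ commutes with $\pi_{\wt\varphi_B}(v)\otimes I$, one obtains
\begin{equation*}
A_i(\pi_{\varphi_A}(u))B_i(\pi_{\varphi_B}(v))(\wt\xi\otimes\xi_{i,\rm aux})\longrightarrow \pi_{\wt\varphi_A}(u)\pi_{\wt\varphi_B}(v)\wt\xi\otimes\xi_{i,\rm aux}.
\end{equation*}
Taking inner products with $V_i\xi$ and passing to the limit using $V_i\xi\to\wt\xi\otimes\xi_{i,\rm aux}$ then yields $\langle\pi_{\varphi_A}(u)\pi_{\varphi_B}(v)\xi,\xi\rangle=\langle\pi_{\wt\varphi_A}(u)\pi_{\wt\varphi_B}(v)\wt\xi,\wt\xi\rangle$, as desired.
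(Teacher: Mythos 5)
Your proposal follows the same strategy as the paper's proof: prove by induction on monomial length that the approximate dilation propagates to all of $C^*_u(\cl S_A)\otimes_{\max}C^*_u(\cl S_B)$, using the $*$-homomorphisms coming from $\cl A$-locality (your $A_i$, the paper's $\rho_{A,i}$) and invoking central support together with Lemmas \ref{uptoeps} and \ref{sep_cycl} to produce a commuting approximant of $\pi_{\wt\varphi_A}(u')\wt\xi$. These are exactly the ingredients in the paper.

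There is, however, an imprecision that becomes a real issue in the induction step. You correctly note that $\wt\xi$ is separating only for the \emph{compressed} algebra $\wt q_A\wt{\cl A}\wt q_A$ on $\wt q_A\wt H$, where $\wt q_A=\wt\epsilon_A\wt p_A$, but you then assert that Lemma \ref{uptoeps} produces $z\in\wt{\cl A}'$. What the lemma actually yields is $z\in(\wt q_A\wt{\cl A}\wt q_A)'$ acting on $\wt q_A\wt H$, which is not the same thing, since $\wt\epsilon_A$ need not lie in the center of $\pi_{\wt H}(\wt{\cl A})$ (central support only guarantees commutation with the subalgebra $\wt\pi_A(C^*_u(\cl S_A))$). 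Consequently $z\otimes I$ (or its extension $z\wt q_A\otimes I$) does not automatically commute with $A_i(\varphi_A(s))\in\pi_{\wt H}(\wt{\cl A})\bar\otimes\pi_{H_{i,\rm aux}}(\cl A_{i,\rm aux})$, because $\wt q_A\otimes I$ does not commute with this algebra, and the step where you ``move $z\otimes I$ to the outside'' does not go through as written. This is precisely why the paper's version of your inductive claim carries the extra prefix $\wt\epsilon_A\otimes I_{H_{i,\rm aux}}$, and why the paper's computation strategically inserts $\wt q_A\otimes 1$ projections (relying on $\wt p_A\in\pi_{\wt H}(\wt{\cl A})'$ and $\wt\xi = \wt q_A\wt\xi$ to make those projections free on the relevant vectors) so that $\tilde a\otimes 1_{i,\rm aux}$ only ever has to commute with elements of $\wt q_A\pi_{\wt H}(\wt{\cl A})\wt q_A\bar\otimes\pi_{H_{i,\rm aux}}(\cl A_{i,\rm aux})$. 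To repair your argument you would need to prove the weaker, projection-decorated form of the claim and bookkeep the $\wt q_A$'s as the paper does; your cleaner statement without those projections is not available.
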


\begin{proof}
We write $\nph = \nph_A\cdot\nph_B$ and 
$\wt{\nph} = \wt{\nph}_A\cdot \wt{\nph}_B$ for brevity. 
It suffices to show that  $(\omega_\xi\circ\pi_{\varphi})(a\otimes b) = (\omega_{\wt{\xi}}\circ\pi_{\wt{\varphi}})(a\otimes b)$ holds for all 
elements $a\in C_u^*(S_A)$ (resp. $b\in C_u^*(S_B)$) of the form 
$a = s_1\ldots s_n$ (resp. $b = t_1\ldots t_n$), where $s_i\in \cl S_A$
(resp. $t_i\in\cl S_B$), $i = 1,\dots,n$. 
Write $\pi_A(a) = \pi_\varphi(a\otimes 1)$, $\wt\pi_A(a) = \pi_{\wt\varphi}(a\otimes 1)$, 
$\pi_B(b) = \pi_\varphi(1\otimes b)$ and $\wt\pi_B(b) = \pi_{\wt\varphi}(1\otimes b)$, $a\in C_u^*(\cl S_A)$, $b\in C_u^*(\cl S_B)$. 

Let $(V_i)_{i\in \bb{N}}$ be a sequence of local isometries, 
where $V_i : H\to \wt{H}\otimes H_{i,\rm aux}$, $i\in \bb{N}$,
implementing the preorder assumption $S\preceq_{\rm a} \wt{S}$
(see Definition \ref{d_appdil} (ii)), and write
$V_i = V_{i,2,2}\circ V_{i,1,2} = V_{i,2,1}\circ V_{i,1,1}$, where 
$V_{i,1,2}$, $V_{i,2,1}$ are $\cl{B}$-local (and $V_{i,1,1}$, $V_{i,2,2}$  are $\cl{A}$-local).

We have that 
\begin{equation}\label{eq_Vi22}
V_{i,1,2}\pi_A(a) =\sigma_{A,i}(\pi_A(a))V_{i,1,2}\ \mbox{ and } \ 
V_{i,2,2}\sigma_{A,i}(\pi_A(a))= \rho_{A,i}(\pi_A(a))V_{i,2,2},
\end{equation}
for $a\in C_u^*(\cl S_A)$ and 
for some normal *-representations $\sigma_{A,i}$ and $\rho_{A,i}$ of $\cl A$, the latter mapping into $\pi_{\wt{H}}(\wt{\cl A})\bar\otimes\pi_{H_{i,\rm aux}}(\cl A_{i,\rm aux})$.
Let $\wt{\epsilon}_{A}$ be the support projection of 
the restriction of $\omega_{\wt{\xi}}$ to $\pi_{\wt{H}}({\wt{\cl A}})$
and $\wt{p}_A$ be the projection onto $\overline{\pi_{\wt{H}}(\wt{\cl A})\wt{\xi}}$. 
Set $\wt{q}_A = \wt{\epsilon}_A\wt{p}_A$. As $\wt{S}$ is centrally supported, 
\begin{equation}\label{eq_cen!}
\wt{\epsilon}_A\wt{\pi}_A(a) = \wt{\pi}_A(a)\wt{\epsilon}_A, 
\ \ \ a\in C_u^*(\cl S_A), 
\end{equation}
and hence 
\begin{equation}\label{eq_wtq}
\wt{q}_A\wt{\pi}_A(a) = \wt{\pi}_A(a)\wt{q}_A, \ \ \ a\in C_u^*(\cl S_A).
\end{equation}
We claim that for every $n\in\N$, contractions $s_1,\dots,s_n\in\cl S_A$, $\epsilon>0$, there exists $i_{0}$ such that, 
setting $a=s_1\cdots s_n$, if $i\geq i_{0}$ then
\begin{eqnarray}\label{eq_aimind}
\wt{\pi}_A(a)\wt{\xi}\otimes\xi_{i,{\rm aux}}
& \sim^{\epsilon} & (\wt{\epsilon}_A
\otimes I_{H_{i,{\rm aux}}})\rho_{A,i}(\pi_A(a))(\wt{\xi}\otimes
\xi_{i,{\rm aux}}).
\end{eqnarray}

Let $n=1$, $s\in\cl S_A$ be contractive and $\epsilon > 0$. By the relation $S\preceq_{\rm a} \wt{S}$ and (\ref{eq_Vi22}), there exists $i_0$ such that, if $i\geq i_0$ then 
\begin{eqnarray*}
\wt{\pi}_A(s)\wt{\xi}\otimes\xi_{i,\rm aux}
& \sim^{\epsilon/2} &
V_{i,2,2}V_{i,1,2}\pi_A(s)\xi = V_{i,2,2}\sigma_{A,i}(\pi_A(s))V_{i,1,2}\xi\\
& = & \rho_{A,i}(\pi_A(s))V_i\xi\\
& \sim^{\epsilon/2}&
\rho_{A,i}(\pi_A(s))(\wt{\xi}\otimes\xi_{i,\rm aux}).
\end{eqnarray*} 
Claim (\ref{eq_aimind}) now follows from (\ref{eq_supre}) and (\ref{eq_cen!}).

Assume the claim holds for $n>1$. Consider $a = s_1\ldots s_ns_{n+1}$, where each $s_{i}$ is a contraction in $\cl S_A$, and let $\epsilon>0$. By hypothesis (and the case $n=1$), there is an $i_0$ such that for every $i\geq i_0$,
\begin{eqnarray}\label{eq_aimind2}
\wt{\pi}_A(a')\wt{\xi}\otimes\xi_{i,{\rm aux}}
& \sim^{\epsilon/5} & (\wt{\epsilon}_A\otimes 1_{H_{i,{\rm aux}}})\rho_{A,i}(\pi_A(a'))(\wt{\xi}\otimes
\xi_{i,{\rm aux}}),
\end{eqnarray}
where $a'=s_1\ldots s_n$ or $a'=s_{n+1}$. By Lemmas \ref{uptoeps} and \ref{sep_cycl}, 
 there exists 
 $\tilde a\in(\wt{q}_A\pi_{\wt{H}}(\wt{\cl A})\wt{q}_A)'\subseteq 
 \cl B(\wt{q}_A\wt{H})$ such that  
$$\wt{\pi}_A(s_{n+1})\wt{\xi}=\wt{q}_A\wt{\pi}_A(s_{n+1})\wt{q}_A\wt{\xi}\sim^{\epsilon/5}\tilde a\wt{\xi}.$$
Using (\ref{eq_Vi22}), (\ref{eq_cen!}), (\ref{eq_wtq}), 
(\ref{eq_aimind2}) and 
the fact that 
$$\rho_{A,i}(\pi_A(a'))\in\pi_{\wt{H}}(\wt{\cl A})\bar\otimes\pi_{H_{i,\rm aux}}(\cl A_{i,\rm aux}),$$ 
for $i\geq i_0$, we have
\begin{eqnarray*}
 &&
 \hspace{-0.7cm} \wt{\pi}_A(a')\wt{\pi}_A(s_{n+1})\wt{\xi}\otimes\xi_{i,\rm aux}
 \sim^{\epsilon/5}   
 \wt{\pi}_A(a')\tilde a\wt{\xi}\otimes\xi_{i,\rm aux}\\
 && =
 \wt{q}_A\wt{\pi}_A(a')\wt{q}_A\tilde a\wt{\xi}\otimes\xi_{i,\rm aux}
 =
 (\tilde a\otimes 1_{i,\rm aux})(\wt{\pi}_A(a')\wt{\xi}\otimes\xi_{i,\rm aux})\\
 && \sim^{\epsilon/5}
(\tilde a\otimes 1_{i,\rm aux})(\wt{\epsilon}_A\otimes 1_{H_{i,\rm aux}})\rho_{A,i}(\pi_A(a'))(\wt{\xi}\otimes\xi_{i,\rm aux})\\
 && =
 (\wt{q}_A\otimes 1_{H_{i,{\rm aux}}})\rho_{A,i}(\pi_A(a'))
 (\tilde a\wt{\xi}\otimes\xi_{i,{\rm aux}})\\
 && \sim^{\epsilon/5} 
 (\wt{q}_A\otimes 1_{H_{i,\rm aux}})\rho_{A,i}(\pi_A(a'))(\wt{\pi}_A(s_{n+1})\wt{\xi}\otimes\xi_{i,{\rm aux}})\\
 && \sim^{\epsilon/5}
 (\wt{q}_A\otimes 1_{H_{i,\rm aux}})\rho_{A,i}(\pi_A(a'))V_{i,2,2}V_{i,1,2}\pi_A(s_{n+1})\xi\\
 && =
 (\wt{q}_A\otimes 1_{H_{i,\rm aux}})\rho_{A,i}(\pi_A(a'))(V_{i,2,2}\sigma_{A,i}(\pi_A(s_{n+1}))V_{i,1,2}\xi\\
 && =
 (\wt{q}_A\otimes 1_{H_{i,{\rm aux}}})\rho_{A,i}(\pi_A(a's_{n+1}))V_{i,2,2}V_{i,1,2}\xi\\
 && =
 (\wt{q}_A\otimes 1_{H_{i,\rm aux}})\rho_{A,i}(\pi_A(a))V_i\xi\\
 && \sim^{\epsilon/5}
 (\wt{\epsilon}_A\otimes 1_{H_{i,{\rm aux}}})\rho_{A,i}(\pi_A(a))(\wt{\xi}\otimes\xi_{i, {\rm aux}}).
\end{eqnarray*}
Thus, 
$$\wt{\pi}_A(a)\wt{\xi}\otimes\xi_{i,{\rm aux}} 
\sim^{\epsilon} (\wt{\epsilon}_A\otimes 1_{H_{i,{\rm aux}}})
\rho_{A,i}(\pi_A(a))(\wt{\xi}\otimes\xi_{i,{\rm aux}}).$$
A similar induction argument shows that, 
if $\wt{\epsilon}_B$ is the support projection of the restriction of  
the functional $\omega_{\wt{\xi}}$ on $\pi_{\wt{H}}(\cl B^o)$, then for all monomials $b\in C_u^*(\cl S_B)$ whose individual terms are contractions in $\cl S_B$, and $\epsilon>0$, there is an $i_0$ such that for $i\geq i_0$,
\begin{equation}\label{eq_forB}
\wt{\pi}_B(b)\wt{\xi}\otimes\xi_{i,{\rm aux}}
\sim^{\epsilon} 
(\wt{\epsilon}_B\otimes 1_{H_{i,{\rm aux}}})\rho_{B,i}(\pi_B(b))(\wt{\xi}\otimes\xi_{i,{\rm aux}}),
\end{equation}
where $\rho_{B,i}$ is the normal $*$-representation of $\cl B^o$ constructed from the locality of $V_i$, analogous to $\rho_{A,i}$.
We have that 
$\wt{\epsilon}_A\in \pi_{\wt{H}}(\wt{\cl A})$, $\wt{\epsilon}_B\in \pi_{\wt{H}}(\wt{\cl B}^o)$ and $[\pi_{\wt{H}}(x), \pi_{\wt{H}}(y^o)]=0$, $x\in \wt{\cl A}$, $y\in\wt{\cl B}$.
Using (\ref{eq_supre}), (\ref{eq_cen!}),  (\ref{eq_aimind}), (\ref{eq_forB}), the relations
$$[\wt{\pi}_A(a)\otimes 1_{H_{i,\rm aux}},\rho_{B,i}(\pi_{B}(b))] = 0$$ 
and the fact that $V_{i,2,2}V_{i,1,2} = V_{i,2,1}V_{i,1,1}$, 
for monomials $a\in C_u^*(\cl S_A)$ and $b\in C_u^*(\cl S_B)$, whose individual terms are contractions in $\cl S_A$ and $\cl S_B$, respectively and $\epsilon>0$, there is an $i_0$ such that for $i\geq i_0$, we obtain 
\begin{eqnarray*}
    &&
    (\omega_{\wt{\xi}}\circ\pi_{\wt{\varphi}})(a\otimes b)
    =
    \langle \wt{\pi}_A(a)\wt{\pi}_B(b)\wt{\xi},\wt{\xi}\rangle
    =
    \langle \wt{\pi}_A(a)\wt{\pi}_B(b)\wt{\xi}\otimes\xi_{i,{\rm aux}},\wt{\xi}\otimes\xi_{i,{\rm aux}}\rangle\\
    && \sim^{\epsilon/4}
    \langle (\wt{\pi}_A(a)\wt{\epsilon}_B\otimes 1_{H_{i,{\rm aux}}})\rho_{B,i}(\pi_B(b))(\wt{\xi}\otimes\xi_{i,{\rm aux}}), \wt{\xi}\otimes\xi_{i,{\rm aux}}\rangle\\
    && =
    \langle \rho_{B,i}(\pi_B(b))(\wt{\pi}_A(a)\wt{\xi}\otimes\xi_{i,{\rm aux}}), \wt{\xi}\otimes\xi_{i,{\rm aux}}\rangle\\
    && \sim^{\epsilon/4}
    \langle \rho_{B,i}(\pi_B(b))
    (\wt{\epsilon}_A\otimes 1_{H_{i,{\rm aux}}})\rho_{A,i}(\pi_A(a))
    (\wt{\xi}\otimes\xi_{i,{\rm aux}}), 
    \wt{\xi}\otimes\xi_{i,{\rm aux}}\rangle\\
    && \sim^{\epsilon/4}
    \langle \rho_{B,i}(\pi_B(b))\rho_{A,i}(\pi_A(a))V_{i,2,2}V_{i,1,2}\xi, \wt{\xi}\otimes\xi_{i,{\rm aux}}\rangle\\
    && =
    \langle \rho_{B,i}(\pi_B(b))V_{i,2,2}V_{i,1,2}\pi_A(a)\xi, \wt{\xi}\otimes\xi_{i,{\rm aux}}\rangle\\
    && =
    \langle \rho_{B,i}(\pi_B(b))V_{i,2,1}V_{i,1,1}\pi_A(a)\xi, \wt{\xi}\otimes\xi_{i,{\rm aux}}\rangle\\
    && =
    \langle V_{i,2,1}V_{i,1,1} \pi_B(b)\pi_A(a)\xi, \wt{\xi}\otimes\xi_{i,{\rm aux}}\rangle\\
    && = 
    \langle \pi_B(b)\pi_A(a)\xi, V_i^*(\wt{\xi}\otimes\xi_{i,{\rm aux}})\rangle\\
    && \sim^{\epsilon/4}
    \langle \pi_B(b)\pi_A(a)\xi, \xi\rangle
    = 
    (\omega_\xi\circ\pi_\varphi)(a\otimes b).
    \end{eqnarray*}
Since $\epsilon$ is an arbitrary positive real, we conclude that 
$$(\omega_{\wt{\xi}}\circ\pi_{\wt{\varphi}})(a\otimes b) = (\omega_\xi\circ\pi_\varphi)(a\otimes b).$$
By linearity, density and continuity it follows that $\omega_\xi\circ\pi_{\varphi_A\cdot\varphi_B}
 = \omega_{\wt{\xi}}\circ\pi_{\wt{\varphi}_A\cdot\wt{\varphi}_B}$.
\end{proof}


\begin{proposition}\label{p_Haagn}
Let $S$ and $\wt{S}$ be models of a functional $f : \cl S_A\otimes_{\rm c}\cl S_B\to\mathbb{C}$
such that $S\preceq_{\rm a}\wt{S}$. If $S$ is a centrally supported Haag model then 
the model $\wt{S}$ is centrally supported. 
\end{proposition}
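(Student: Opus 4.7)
The goal is to establish the central support relations $[\wt{\epsilon}_A,\wt{\varphi}_A(s)]=0$ and $[\wt{\epsilon}_B,\wt{\varphi}_B(t)]=0$ for all $s\in\cl S_A$, $t\in\cl S_B$. The argument of Lemma \ref{support_com} applies to any fixed self-adjoint element individually: for a self-adjoint $a$ in a von Neumann algebra $\cl M\subseteq \cl B(H)$ with support projection $p_\cl M$ of $\omega_\xi|_\cl M$, the assumed approximability of $a\xi$ by vectors from $\cl M'\xi$ already forces $[p_\cl M,a]=0$. It therefore suffices to show: for every self-adjoint $s\in\cl S_A$ and $\epsilon>0$, there exists $\hat a\in \wt{\cl A}'$ with $\|\wt{\varphi}_A(s)\wt{\xi}-\hat a\wt{\xi}\|<\epsilon$, and symmetrically for $\cl S_B$ with $\hat a\in (\wt{\cl B}^o)'$.

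The centrally supported Haag hypothesis on $S$ supplies the corresponding approximation inside $S$. From $\epsilon_A\xi=\xi$ and the commutation $\epsilon_A\varphi_A(s)=\varphi_A(s)\epsilon_A$, we obtain $\varphi_A(s)\xi=\epsilon_A\varphi_A(s)\xi\in\epsilon_A H$. The standard identification $\epsilon_A H=\overline{\cl A'\xi}$, combined with the Haag identity $\cl A'=\cl B^o$, gives $\varphi_A(s)\xi\in\overline{\cl B^o\xi}$. Hence, for any $\epsilon>0$, there exists $b^o\in\cl B^o$ with $\|\varphi_A(s)\xi-b^o\xi\|<\epsilon$.

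To transfer this to $\wt{S}$, let $V_i:\mbox{}_\cl{A}H_\cl{B}\to \mbox{}_{\wt{\cl A}}\wt{H}_{\wt{\cl B}}\otimes \mbox{}_{\cl A_{i,\rm aux}}(H_{i,\rm aux})_{\cl B_{i,\rm aux}}$ be local isometries realising $S\preceq_{\rm a}\wt{S}$, and factor $V_i=V_{i,2,1}\circ V_{i,1,1}$ as in (\ref{eq_diag!}). The $\cl A$-locality of $V_{i,1,1}$ yields the $\cl B^o$-intertwining relation $V_{i,1,1}\pi_H(b^o)=\pi_{\wt{L}_i}(b^o)V_{i,1,1}$, and the $\cl B$-locality of $V_{i,2,1}$ produces an element
\[\wt b^o_i:=V_{i,2,1}\,\pi_{\wt{L}_i}(b^o)\,V_{i,2,1}^*\in \wt{\cl B}^o\bar\otimes \cl B_{i,\rm aux}^o\]
with $\|\wt b^o_i\|\le\|b^o\|$; using $V_{i,2,1}^*V_{i,2,1}=I$, these combine into the key identity $V_i(b^o\xi)=\wt b^o_i V_i\xi$. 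The convergences $V_i\varphi_A(s)\xi\to \wt{\varphi}_A(s)\wt{\xi}\otimes\xi_{i,\rm aux}$ and $V_i\xi\to \wt{\xi}\otimes\xi_{i,\rm aux}$ from Remark \ref{r_someap}, together with the isometric property of $V_i$, then yield
\[\bigl\|\wt b^o_i(\wt{\xi}\otimes\xi_{i,\rm aux})-\wt{\varphi}_A(s)\wt{\xi}\otimes\xi_{i,\rm aux}\bigr\|\le \epsilon+o(1).\]
Applying the normal slice map $\id\otimes\omega_{\xi_{i,\rm aux}}:\wt{\cl B}^o\bar\otimes \cl B_{i,\rm aux}^o\to \wt{\cl B}^o$ -- equivalently, pairing with $\xi_{i,\rm aux}$ in the second tensor factor -- produces $\hat b_i:=(\id\otimes\omega_{\xi_{i,\rm aux}})(\wt b^o_i)\in \wt{\cl B}^o\subseteq \wt{\cl A}'$ satisfying $\|\hat b_i\wt{\xi}-\wt{\varphi}_A(s)\wt{\xi}\|\le \epsilon+o(1)$; taking $i$ large provides the required approximant.

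The $\cl B$-side is handled by the mirror argument, using instead $V_i=V_{i,2,2}\circ V_{i,1,2}$, the Haag identity $(\cl B^o)'=\cl A$, and the inclusion $\wt{\cl A}\subseteq(\wt{\cl B}^o)'$. The main technical point is the auxiliary tensor factor: the naturally transported elements live in $\wt{\cl B}^o\bar\otimes \cl B_{i,\rm aux}^o$ rather than in $\wt{\cl B}^o$ itself, so the slice-map step is essential for descending to a single element of $\wt{\cl A}'$ acting on $\wt{H}$ alone.
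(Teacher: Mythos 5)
Your proof is correct and follows the same conceptual strategy as the paper's: approximate $\varphi_A(s)\xi$ by elements of $\cl B^o\xi$ using the centrally supported Haag hypothesis, transport this approximation through the local isometries, and invoke Lemma \ref{support_com}. You make three choices that streamline the details. First, the direct identification $\epsilon_A H = \overline{\cl A'\xi}$ (a standard characterisation of the support projection), paired with Haag duality $\cl A' = \cl B^o$, replaces the paper's cut-down argument through Lemmas \ref{uptoeps} and \ref{sep_cycl}. Second, you use the factorisation $V_i = V_{i,2,1}\circ V_{i,1,1}$ instead of the paper's $V_i = V_{i,2,2}\circ V_{i,1,2}$, so that condition (i') of the $\cl A$-local map $V_{i,1,1}$ intertwines $\pi_H(b^o)$ with $\pi_{\wt L_i}(b^o)$ and the key identity $V_i(b^o\xi) = \wt b^o_i V_i\xi$ holds exactly; the paper's choice produces a residual range projection $V_{i,2,2}V_{i,2,2}^*$ which must then be controlled at additional $\epsilon$-cost. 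Third, and most usefully, you slice down to $\wt H$ via $\id\otimes\omega_{\xi_{i,\rm aux}}$ \emph{before} invoking Lemma \ref{support_com}, so the lemma is applied once on the fixed triple $(\wt H,\wt\xi,\pi_{\wt H}(\wt{\cl A}))$ with $\epsilon$ ranging over $(0,\infty)$. The paper instead applies Lemma \ref{support_com} on the ampliated system $\wt H\otimes H_{i,\rm aux}$, whose auxiliary factor depends on the $\epsilon$-dependent index $i$; this quantifier entanglement is avoided by your earlier slicing, which also dispenses with the need to identify the support projection of $\omega_{\wt\xi\otimes\xi_{i,\rm aux}}$ restricted to the tensor-product algebra.
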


\begin{proof}
Let $S = (_{\cl A}H_{\cl B}, \varphi_A, \varphi_B, \xi)$ and 
$\wt{S} = (_{\wt{\cl A}}\wt{H}_{\wt{\cl B}}, \wt{\varphi}_A, 
\wt{\varphi}_B, \wt{\xi})$, 
    and suppose that 
    $$V_i = V_{i,2,2}V_{i,1,2} = V_{i,2,1}V_{i,1,1}$$ 
    is a local isometry from $H$ to 
    $\wt{ H}\otimes H_{i,\rm aux}$, $i\in \bb{N}$, such that the 
    sequence $(V_i)_{i\in \bb{N}}$ implements the relation $S\preceq_{\rm a}\wt{S}$; here 
    $V_{i,1,2} : H\to L_i$ and $V_{i,2,2} : L_i\to\wt{ H}\otimes H_{i,\rm aux}$, for some 
    Hilbert space $L_i$, $i\in \bb{N}$.
    Let $\epsilon>0$ and $s\in\cl S_A$ be self-adjoint.
Since $S$ is centrally supported, $\epsilon_A p_A\in \vphi_A(\cl S_A)'$, 
and thus 
$$\vphi_A(s)\xi=\epsilon_A p_A\vphi_A(s)\epsilon_A p_A\xi.$$
Noting the equality $(\epsilon_A p_A\pi_H(\cl A)\epsilon_A p_A)' = 
\epsilon_A p_A\pi_H(\cl A)'\epsilon_A p_A$, 
by Lemmas \ref{uptoeps} and \ref{sep_cycl},  
    there exists 
    $x\in  \pi_H(\cl A)'$, $x\neq 0$, such that $$\varphi_A(s)\xi\sim^\epsilon \epsilon_A p_Ax\epsilon_A p_A\xi=p_A x\xi.$$
    Choose $i\in \bb{N}$ such that 
$$\left\|V_i \varphi_A(s) \xi - 
\tilde\varphi_A(s)\tilde\xi \otimes\xi_{i,\rm aux}\right\| 
< \min\left\{\epsilon, \frac{\epsilon}{\|x\|}\right\}$$
and 
$$\left\|V_i \xi - 
\tilde\xi \otimes\xi_{i,\rm aux}\right\| 
< \frac{\epsilon}{\|x\|}.$$
We have 
\begin{eqnarray*}
\wt{\varphi}_A(s)\wt{\xi}\otimes\xi_{i,\rm aux} 
& \sim^{\epsilon} & 
V_{i,2,2}V_{i,1,2}\varphi_A(s)\xi
\sim^{\epsilon}
V_{i,2,2}V_{i,1,2}p_Ax\xi\\
& \sim^{\epsilon} & 
V_{i,2,2}V_{i,1,2}p_AxV_{i,1,2}^*V_{i,2,2}^*\wt{\xi}\otimes\xi_{i,\rm aux}.  
    \end{eqnarray*}
    By Haag duality, $p_Ax\in \pi_H(\cl A)'=\pi_H(\cl B^o)$; let $b^o\in \cl B^o$ be such that 
    $p_Ax = \pi_H(b^o)$, and note that  
$$V_{i,2,2}V_{i,1,2}\pi_H(b^o)V_{i,1,2}^*V_{i,2,2}^*
\in 
    (\pi_H(\wt{\cl B}^o)\bar\otimes \pi_{H_{i,{\rm aux}}}(\cl B_{i,{\rm aux}}^o))
    V_{i,2,2}V_{i,2,2}^*.$$
Let $y\in \pi_{\wt H}(\wt{\cl B}^o)\bar\otimes \pi_{H_{i,{\rm aux}}}(\cl B_{i,{\rm aux}}^o)$ 
be such that 
$$V_{i,2,2}V_{i,1,2}\pi_H(b^o)V_{i,1,2}^*V_{i,2,2}^* = yV_{i,2,2}V_{i,2,2}^*.$$ 
We have
$$\wt{\varphi}_A(s)\wt{\xi}\otimes\xi_{i,{\rm aux}}\sim^{3\epsilon}yV_{i,2,2}V_{i,2,2}^*(\wt{\xi}\otimes
\xi_{i,{\rm aux}}) \sim^{2 \epsilon} y(\wt{\xi}\otimes\xi_{i,{\rm aux}}).$$
If $\epsilon_{i,{\rm aux}}^A$ is the support projection of 
$\omega_{\xi_{i,{\rm aux}}}|_{\cl A_{i,{\rm aux}}}$, by Lemma \ref{support_com}, we obtain
$[\wt{\varphi}_A(s)\otimes 1_{H_{i,{\rm aux}}}, \wt{\epsilon}_A\otimes\epsilon_{i,{\rm aux}}^A]=0$ and  hence
$[\wt{\varphi}_A(s), \wt{\epsilon}_A]=0$. Similar arguments applied to $\wt{\varphi}_B$ show  that $\wt{S}$ is centrally supported.
\end{proof}

\begin{theorem}\label{th_stisabs}
Let $\cl S_A$ and $\cl S_B$ be operator systems and 
$\frak{M}$ be a family of quantum commuting models over $(\cl S_A,\cl S_B)$. 
Let $\cl S = \{\tilde{f}_S : S\in \frak{M}\}$ and 
$f : \cl S_A\otimes_{\rm c} \cl S_B\to \bb{C}$ 
is the restriction of an element of $\cl S$.
Assume that $\frak{M}$ contains a centrally supported Haag model of $f$.
If $f$ a weak self-test for $\frak{M}$ then $f$ is an abstract self-test for 
$\cl S$. 
\end{theorem}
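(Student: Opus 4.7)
The plan is to chain Proposition \ref{p_Haagn} with Lemma \ref{p_apextends}, using the assumed centrally supported Haag model in $\frak{M}$ as a bridge that promotes the weak ideal model to a centrally supported one. Let $S_0 \in \frak{M}$ be the centrally supported Haag model of $f$ posited by the hypothesis, and let $\wt{S} \in \frak{M}$ be a weak ideal model, so that $f_{\wt{S}} = f$. Since $f_{S_0} = f = f_{\wt{S}}$, the weak self-test property forces $S_0 \preceq_{\rm a} \wt{S}$, and Proposition \ref{p_Haagn} then guarantees that $\wt{S}$ is itself centrally supported. This is the only place the Haag-plus-central-support hypothesis enters the argument.

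With $\wt{S}$ centrally supported, the uniqueness of extension is essentially automatic. Let $g \in \cl S$ be any state whose restriction to $\cl S_A \otimes_{\rm c} \cl S_B$ equals $f$, and write $g = \tilde{f}_S$ for some $S \in \frak{M}$. On elementary tensors $s \otimes t$ the state $\tilde{f}_S$ equals $\langle \varphi_A(s)\varphi_B(t)\xi,\xi\rangle = f_S(s \otimes t)$, using $\pi_A \circ \iota_A = \varphi_A$ and $\pi_B \circ \iota_B = \varphi_B$; hence $f_S = f$. The weak self-test property now yields $S \preceq_{\rm a} \wt{S}$, and Lemma \ref{p_apextends} --- applicable thanks to the central support of $\wt{S}$ --- gives $\tilde{f}_S = \tilde{f}_{\wt{S}}$ as states on $C_u^*(\cl S_A) \otimes_{\max} C_u^*(\cl S_B)$. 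Hence $g = \tilde{f}_{\wt{S}}$ is uniquely determined, which is precisely the abstract self-test property.

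I do not expect a substantial obstacle, as the result is a packaging of the two deeper preceding statements. The one delicate point worth flagging is the asymmetric role of the Haag-plus-central-support assumption: it is used purely to transfer central support from $S_0$ to the weak ideal model $\wt{S}$ via Proposition \ref{p_Haagn}, and once this transfer has been made, the same $\wt{S}$ serves as a centrally supported target for every $S \in \frak{M}$ with $f_S = f$, permitting the uniform application of Lemma \ref{p_apextends} that drives the argument.
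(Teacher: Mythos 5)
Your proof is correct and follows essentially the same route as the paper's: use Proposition \ref{p_Haagn} to transfer central support from the given Haag model to the weak ideal model $\wt{S}$, then apply Lemma \ref{p_apextends} uniformly to every model of $f$ in $\frak{M}$. The paper's version is more terse (two sentences), but the logical content and intermediate steps are the same as yours, including the correct observation that the Haag hypothesis enters only through the single application of Proposition \ref{p_Haagn}.
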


\begin{proof}
Let $\wt{S}$ be a weak ideal model for $\frak M$. 
If $S \in \frak{M}$ is a centrally supported Haag model of $f$ then, 
by Proposition \ref{p_Haagn}, $\wt{S}$ is centrally supported. The statement now follows from Lemma \ref{p_apextends}.
\end{proof}

In the finite dimensional tensor product setting of Example \ref{ex_findimset}, 
the authors of \cite{pszz} study the notion of a full rank model and 
its relevance to self-testing. In the sequel, we 
comment on a natural commuting operator version of this notion. 
Let 
$S = (_{\cl A}H_{\cl B}, \varphi_A, \varphi_B, \xi)$ be a quantum commuting model of 
$(\cl S_A,\cl S_B)$, and let $\epsilon_A$ (resp. $\epsilon_B$) be the support of 
the functional $\omega_{\xi}|_{\cl A}$ (resp. $\omega_{\xi}|_{\cl B}$). 
Write $r = \epsilon_A\epsilon_B$, and let 
$\nph_{A,r} : \cl S_A\to \cl B(rH)$ (resp. $\nph_{B,r} : \cl S_B\to \cl B(rH)$)
be the unital completely positive map, given by 
$\nph_{A,r}(a) = r\nph_A(a)r$ (resp. $\nph_{B,r}(b) = r\nph_B(b)r$).
We note that, if
$a\in \cl S_A$ and $b\in \cl S_B$ then 
\begin{eqnarray*}
\nph_{A,r}(a)\nph_{B,r}(b) 
& = & 
\epsilon_A\epsilon_B\nph_A(a) \epsilon_A\epsilon_B \nph_B(b)\epsilon_A\epsilon_B\\
& = & 
\epsilon_A\epsilon_B\nph_A(a) \epsilon_B\epsilon_A \nph_B(b)\epsilon_A\epsilon_B
= 
\epsilon_A\epsilon_B\nph_A(a) \nph_B(b)\epsilon_A\epsilon_B\\
& = & 
\epsilon_A\epsilon_B \nph_B(b)\nph_A(a) \epsilon_A\epsilon_B
= 
\epsilon_B \epsilon_A\nph_B(b)\nph_A(a) \epsilon_B\epsilon_A\\
& = & 
\epsilon_A \epsilon_B\nph_B(b)\epsilon_A \epsilon_B\nph_A(a) \epsilon_A\epsilon_B
= 
\nph_{B,r}(b)\nph_{A,r}(a).
\end{eqnarray*}
Taking into account (\ref{eq_supre}), we see that 
$S_r := (_{r\cl A r}(rH)_{r\cl B r}, \varphi_{A,r}, \varphi_{B,r}, \xi)$
is a quantum commuting model. It is straightforward to check that 
$f_{S_r} = f_S$. We call the model $S_r$ the \emph{reduced model} of $S$.

A quantum commuting model 
$S = (_{\cl A}H_{\cl B}, \varphi_A, \varphi_B, \xi)$ is said to be of \emph{full rank}
if $\epsilon_A = \epsilon_B = I$. 

\begin{corollary}\label{c_red}
Let $\frak{M}$ be a family of quantum commuting models over $(\cl S_A,\cl S_B)$,
closed under passing to reduced models, that contains a Haag model. 
Then every weak self-test for $\frak{M}$ is an abstract self-test.
\end{corollary}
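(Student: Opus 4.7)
The plan is to reduce Corollary \ref{c_red} to Theorem \ref{th_stisabs} by producing, from the hypotheses, a centrally supported Haag model of $f$ lying in $\frak{M}$. So let $f$ be a weak self-test for $\frak{M}$. By hypothesis, $\frak{M}$ contains a Haag model of $f$, say $S_0=(_{\cl A}H_{\cl B},\nph_A,\nph_B,\xi)\in\frak{M}$, so that $\cl B^o=\cl A'$ and $f_{S_0}=f$. Form the reduced model $S_{0,r}$ with $r=\epsilon_A\epsilon_B$; by closure under reduction $S_{0,r}\in\frak{M}$, and the computation preceding the corollary gives $f_{S_{0,r}}=f_{S_0}=f$.

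I would next establish two properties of $S_{0,r}$. The first is that $S_{0,r}$ is centrally supported: it suffices to show $\omega_\xi$ is faithful on both $r\cl A r$ and $r\cl B^o r$, because then the support projections of these restrictions coincide with the identity $r$ and the central support condition becomes vacuous. For $r\cl A r$: since $\epsilon_B\in\cl A'$ commutes with $\cl A$ and $\epsilon_B\xi=\xi=\epsilon_A\xi$, we have $rar=(\epsilon_A a\epsilon_A)\epsilon_B$ and $\omega_\xi(rar)=\omega_\xi(\epsilon_A a\epsilon_A)$, and the latter is a faithful positive functional on $\epsilon_A\cl A\epsilon_A$ by the definition of the support $\epsilon_A$. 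A symmetric argument handles $r\cl B^o r$.

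The second property is that $S_{0,r}$ remains a Haag model, i.e., $(r\cl A r)'\cap\cl B(rH)=r\cl B^o r$. This follows from a double application of the classical reduction theorem for von Neumann algebras: reducing $\cl A$ first by the projection $\epsilon_A\in\cl A$ gives a von Neumann algebra $\epsilon_A\cl A\epsilon_A$ on $\epsilon_A H$ whose commutant there is $\epsilon_A\cl A'\epsilon_A$; then, because $\epsilon_B\in\cl A'$ restricts to a projection in the commutant of $\epsilon_A\cl A\epsilon_A$, a further reduction by $\epsilon_B$ produces the claimed equality $r\cl A'r=r\cl B^o r$ as commutant on $rH$. Hence $S_{0,r}$ is a centrally supported Haag model of $f$ lying in $\frak{M}$, and Theorem \ref{th_stisabs} delivers the conclusion that $f$ is an abstract self-test for $\cl S$. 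The only mildly subtle step is the commutant identification, which rests cleanly on the factorization $r=\epsilon_A\epsilon_B$ with the two factors in $\cl A$ and $\cl A'$, respectively; everything else is either the weak self-test hypothesis or the reduction closure of $\frak{M}$.
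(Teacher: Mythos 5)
Your proposal is correct and follows the same route as the paper: pass from a Haag model to its reduction, verify that the reduction is full rank (hence trivially centrally supported) and still a Haag model, and then apply Theorem \ref{th_stisabs}. The paper states without proof that "the reduction $S_r$ is a full rank Haag model," so your contribution is to spell out the details — the faithfulness of $\omega_\xi$ on $r\cl A r$ and $r\cl B^o r$ and the commutant computation $(r\cl A r)'=r\cl A' r$ via a two-step reduction — which the authors suppress.
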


\begin{proof}
Let $S$ be a Haag model in $\frak{M}$. The reduction $S_r$ is a full rank Haag model.
Since every full rank model is trivially centrally supported, the claim 
follows from Theorem \ref{th_stisabs}. 
\end{proof}

\section{Abstract self-tests}\label{s_abstractselft}

In this section, we examine the property of being an 
abstract self-test and 
show that in some cases it is equivalent to the property of 
being a self-test. 
We fix some notation that will be used subsequently. 
Let $\cl S_A$ and $\cl S_B$ be operator systems, 
$\cl S \subseteq S(C^*_u(\cl S_A)\otimes_{\max} C^*_u(\cl S_B))$ 
and, writing $\tilde{\cl S} = \{f|_{\cl S_A\otimes_{\rm c} \cl S_B} : f\in \cl S\}$, let  
$$\cl C = \{S : \mbox{ model over } (\cl S_A,\cl S_B)
\mbox{ and there exists } f\in \tilde{\cl S} \mbox{ s.t. } f = f_S\}.$$
For technical simplicity, we impose the restriction that all models we consider 
have associated Hilbert spaces that are separable.

The specific algebras $\cl A$ and $\cl B$ participating in a model
$(\mbox{}_{\cl A}H_{\cl B}, \nph_A,\nph_B,\xi)$ will not play a role in the 
next result, so we will temporarily omit them from the notation. 
The models $(H,\nph_A,\nph_B,\xi)$ and $(K,\psi_A,\psi_B,\eta)$
will be called \emph{unitarily equivalent} if there exists a unitary 
operator $U : H\to K$ such that 
$$U\xi = \eta \ \ \mbox{ and } \ \ U\nph_A(a)\nph_B(b)U^* = \psi_A(a)\psi_B(b), 
\ \  a\in \cl S_A, b\in \cl S_B.$$

Given an operator system $\cl T$, a Hilbert space $H$, and a 
unital completely positive map $\nph : \cl T\to \cl B(H)$,
for $\kappa\in \bb{N}\cup \{\infty\}$, we let 
$\ell^2(\kappa) = \oplus_{i=1}^{\kappa} \bb{C}$ 
(as usual, we set $\ell^2 = \ell^2(\infty)$), 
write
$\nph\otimes 1_{\kappa} : \cl T\to \cl B(H\otimes \ell^2(\kappa))$
for the map, given by $(\nph\otimes 1_{\kappa})(u) = \nph(u)\otimes I_{\ell^2(\kappa)}$ and let
$\nph^{(\infty)} = \nph\otimes 1_{\infty}$.

The following theorem is a version of \cite[Theorem 7.5]{pszz} in 
the case of arbitrary bipartite quantum systems. 
For a convex set $C$, we denote by ${\rm Ext}(C)$ the 
set of extreme points of $C$.

\begin{theorem}\label{th_charallqc}
Let
$f\in {\rm Ext}(S(C^*_u(\cl S_A)\otimes_{\max} C^*_u(\cl S_B)))$. The following are equivalent: 
\begin{itemize}
\item[(i)] $f$ is an abstract self-test for $\cl S$;

\item[(ii)] there exists a model 
$\tilde{S} = (\tilde{H},\tilde{\nph}_A, \tilde{\nph}_B,\tilde{\xi})$
of $f$ such that, for every model $S$ of $f$, there exists a unit vector 
$\xi_{\rm aux}\in \ell^2$ such that $S$ is unitarily equivalent to the model
$(\tilde{H}\otimes\ell^2,\tilde{\nph}_A^{(\infty)}, \tilde{\nph}_B^{(\infty)},
\tilde{\xi}\otimes \xi_{\rm aux})$.
\end{itemize}
\end{theorem}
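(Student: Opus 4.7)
The plan is to prove both directions by leveraging the GNS construction for the C*-algebra $C^*_u(\cl S_A)\otimes_{\max} C^*_u(\cl S_B)$. Since $f\in {\rm Ext}(S(C^*_u(\cl S_A)\otimes_{\max} C^*_u(\cl S_B)))$ is a pure state, its GNS representation $(\tilde\pi, \tilde H, \tilde\xi)$ is irreducible. I would take as candidate ideal model $\tilde S = (\tilde H, \tilde\nph_A, \tilde\nph_B, \tilde\xi)$, where $\tilde\nph_A(a) = \tilde\pi(a\otimes 1)$ and $\tilde\nph_B(b) = \tilde\pi(1\otimes b)$; these have commuting ranges and $\tilde f_{\tilde S} = f$.

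For the implication (ii) $\Rightarrow$ (i), I would take a state $g\in\cl S$ extending $f|_{\cl S_A\otimes_{\rm c}\cl S_B}$. By definition of $\cl S$, $g = \tilde f_S$ for some $S$ in the underlying family, and the equality $f_S = f|_{\cl S_A\otimes_{\rm c}\cl S_B}$ makes $S$ a model of $f$. Applying (ii), $S$ is unitarily equivalent to the ampliated model $(\tilde H\otimes\ell^2, \tilde\nph_A^{(\infty)}, \tilde\nph_B^{(\infty)}, \tilde\xi\otimes\xi_{\rm aux})$. A direct computation then gives $\tilde f_S = \tilde f_{\tilde S}$ regardless of the choice of $\xi_{\rm aux}$, since the tensor factor on the auxiliary space only contributes $\|\xi_{\rm aux}\|^2 = 1$ to the vector state. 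Hence $g$ is uniquely determined, proving the abstract self-test property.

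For the implication (i) $\Rightarrow$ (ii), let $S = (H, \nph_A, \nph_B, \xi)$ be any model of $f$. The abstract self-test hypothesis forces $\tilde f_S = f$, so $\xi$ realises the pure state $f$ as a vector state of $\pi_S := \pi_{\nph_A}\cdot\pi_{\nph_B}$. Standard GNS uniqueness then yields a unitary from $\tilde H$ onto the cyclic subspace $\overline{\pi_S(C^*_u(\cl S_A)\otimes_{\max}C^*_u(\cl S_B))\xi}$, intertwining the representations and sending $\tilde\xi$ to $\xi$. To obtain the full ampliated form, I would next analyse the orthogonal complement via multiplicity theory for the irreducible $\tilde\pi$: because all subrepresentations of $\pi_S$ must be $\tilde\pi$-isotypical (see the obstacle below), $\pi_S\cong \tilde\pi\otimes I_{\ell^2(\kappa)}$ for some separable multiplicity $\kappa$, which can be absorbed into $\ell^2 = \ell^2(\infty)$ by padding with zero copies. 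Transporting $\xi$ through this unitary yields $\tilde\xi\otimes \xi_{\rm aux}$ for an appropriate unit vector, and the identification of $\nph_A(s)\nph_B(t)$ with $\tilde\nph_A^{(\infty)}(s)\tilde\nph_B^{(\infty)}(t)$ is immediate from the intertwining.

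The main obstacle is the isotypicality step in (i) $\Rightarrow$ (ii): certifying that $\pi_S$ contains no subrepresentations inequivalent to $\tilde\pi$, so that it really is an infinite ampliation of $\tilde\pi$. I expect this to require a sharper use of the abstract self-testing hypothesis: if there were a nonzero invariant subspace on which $\pi_S$ decomposed into classes different from that of $\tilde\pi$, then a convex combination with the vector state of $\xi$ would yield an alternative extension of $f|_{\cl S_A\otimes_{\rm c}\cl S_B}$ to a state in $\cl S$, contradicting uniqueness. An $\epsilon$-approximation argument in the spirit of Lemma~\ref{p_apextends}, perhaps combined with central support considerations on the von Neumann algebra generated by $\pi_S(C^*_u(\cl S_A)\otimes_{\max}C^*_u(\cl S_B))$, should convert this structural observation into the required unitary equivalence.
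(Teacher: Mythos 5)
Your argument for the implication (ii) $\Rightarrow$ (i) is essentially the paper's: pass from the extension $g$ to a model, apply (ii), and observe that the auxiliary factor contributes nothing to the vector state because $\|\xi_{\rm aux}\|=1$. (One small point: you should not assume $g = \tilde f_S$ ``by definition of $\cl S$''; in this theorem $\cl S$ is an arbitrary subset of states, so instead one passes to the GNS triple of $g$ and uses the restrictions of the GNS representation to $\cl S_A$ and $\cl S_B$ to build the model. This is harmless.)

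For (i) $\Rightarrow$ (ii), you correctly see that GNS uniqueness identifies the cyclic part $\overline{\pi_S(\cl A\otimes_{\max}\cl B)\xi}$ with $\tilde H$, and you correctly identify the remaining obstacle, namely showing that the whole of $\pi_S$ is an ampliation of $\tilde\pi$. However, your proposed resolution does not work. You suggest that an invariant subspace carrying an inequivalent subrepresentation would give, via a convex combination with $\omega_\xi$, ``an alternative extension of $f|_{\cl S_A\otimes_{\rm c}\cl S_B}$.'' But a vector state coming from a vector in an invariant complement generally has no reason to restrict to $f$ on $\cl S_A\otimes_{\rm c}\cl S_B$ (indeed, generically it will not), so you obtain neither an extension of $f$ nor a contradiction of the uniqueness hypothesis. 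An appeal to Lemma~\ref{p_apextends} would also not help, as that lemma addresses a different question (approximate dilations of centrally supported Haag models) and is not applicable here.

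The missing idea is to exploit the extremality of the \emph{unique extension} $\tilde f$ directly. Decompose $H = \oplus_{i} H_i$ into $\pi$-invariant subspaces such that the component $\xi_i$ of $\xi$ in $H_i$ is a cyclic vector for $\pi_i := \pi|_{H_i}$. Writing $\lambda_i = \|\xi_i\|^2$ and $\eta_i = \xi_i / \sqrt{\lambda_i}$, this exhibits
$\tilde f = \sum_i \lambda_i\,\omega_{\eta_i}\circ\pi_i$
as a convex combination of states on $\cl A\otimes_{\max}\cl B$. Extremality of $\tilde f$ then forces $\omega_{\eta_i}\circ\pi_i = \tilde f$ for every $i$, so that each $(\pi_i, H_i, \eta_i)$ is a GNS triple for $\tilde f$ and hence unitarily equivalent to $(\tilde\pi, \tilde H, \tilde\xi)$ via some $U_i$. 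Assembling $U = \oplus_i U_i : H\to\tilde H\otimes\ell^2$ and $\xi_{\rm aux} = (\sqrt{\lambda_i})_i$ then gives the required unitary equivalence of models, with $U\xi = \tilde\xi\otimes\xi_{\rm aux}$. This is a different, and more efficient, route than multiplicity theory for $\tilde\pi$: you never need to classify the subrepresentations of $\pi_S$ abstractly, because the decomposition of the vector state $\omega_\xi$ together with extremality of $\tilde f$ does all the work.
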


\begin{proof}
(i)$\Rightarrow$(ii) 
We set $\cl A = C^*_u(\cl S_A)$ and $\cl B = C^*_u(\cl S_B)$ for brevity. 
Let $\tilde{f}\in\mc{S}$ be the unique extension of $f$ to a state on 
$\cl A\otimes_{\max}\cl B$, and note that 
$\tilde{f}\in {\rm Ext}(S(\cl A\otimes_{\max}\cl B))$.
Indeed, 
if $g_1, g_2 \in S(\cl A\otimes_{\max}\cl B)$ are such that 
$\tilde{f} = \lambda g_1 + (1-\lambda)g_2$ for some 
$\lambda\in (0,1)$, then 
$$f = \lambda 
g_1|_{\cl S_A\otimes_{\rm c}\cl S_B} + 
(1-\lambda)g_2|_{\cl S_A\otimes_{\rm c}\cl S_B}$$
and, by the extremallity of $f$, 
we have that 
$$f = g_1|_{\cl S_A\otimes_{\rm c}\cl S_B} = 
g_2|_{\cl S_A\otimes_{\rm c}\cl S_B}.$$
Since $f$ is an abstract self-test for $\cl S$, 
we have that $g_1 = g_2 = \tilde{f}$.

Let $(\tilde{H},\tilde{\pi},\tilde{\xi})$ be the GNS triple associated with 
$\tilde{f}$; thus, $\tilde{H}$ is a Hilbert space, 
$\tilde{\pi} : \cl A\otimes_{\max}\cl B\to \cl B(\tilde{H})$ is a 
unital *-representation and $\tilde{\xi}\in \tilde{H}$ is a unit vector, 
cyclic for $\tilde{\pi}$, such that 
\begin{equation}\label{eq_GNS0}
\tilde{f}(u) = \langle \tilde{\pi}(u)\tilde{\xi},\tilde{\xi}\rangle, \ \ \ 
u\in \cl A\otimes_{\max}\cl B.
\end{equation}
Using the inclusion $\cl S_A\otimes_{\rm c}\cl S_B\subseteq \cl A\otimes_{\max}\cl B$, 
let $\tilde{\nph}_A = \tilde{\pi}|_{\cl S_A}$, 
$\tilde{\nph}_B = \tilde{\pi}|_{\cl S_B}$, and 
$\tilde{S} = (\tilde{H},\tilde{\nph}_A, \tilde{\nph}_B,\tilde{\xi})$;
by (\ref{eq_GNS0}), $\tilde{S}$ is a model of $f$. 

Let $S = (H,\nph_A, \nph_B,\xi)$ be a model of $f$, and 
write $\pi_A$ (resp. $\pi_B$) for the unique extension of 
$\nph_A$ (resp. $\nph_B$) to a unital *-homomorphism from 
$\cl A$ (resp. $\cl B$) into $\cl B(H)$. Since the ranges of 
$\nph_A$ and $\nph_B$ commute, so do the ranges of $\pi_A$ and $\pi_B$; 
let $\pi = \pi_A\cdot \pi_B$, viewed as a unital *-representation of 
$\cl A\otimes_{\max}\cl B$ on $H$. 
Since $f$ is an abstract self-test for $\cl S$, we have that 
$\tilde{f}(u) = \langle \pi(u)\xi,\xi\rangle$,
$u\in \cl A\otimes_{\max}\cl B$. 
Write $H = \oplus_{i\in \bb{N}} H_i$ and $\xi = (\xi_i)_{i\in \bb{N}}$, 
where $\xi_i\in H_i$, $i\in \bb{N}$, 
$H_i$ is invariant for $\pi$, and $\xi_i$ is a cyclic vector 
for the representation $\pi_i := \pi|_{H_i}$. 
Let $\lambda_i = \|\xi_i\|^2$, $i\in \bb{N}$; thus, 
$\sum_{i=1}^{\infty} \lambda_i = 1$.  
Setting $\eta_i = \frac{1}{\sqrt{\lambda_i}}\xi_i$, we have that 
$$\tilde{f}(u) = \sum_{i=1}^{\infty} \lambda_i \langle \pi_i(u)\eta_i,\eta_i\rangle, 
\ \ u\in \cl A\otimes_{\max}\cl B.$$
Since $\tilde{f}$ is an extreme point of $S(\cl A\otimes_{\max}\cl B)$, 
we have that 
$$\tilde{f}(u) = \langle \pi_i(u)\eta_i,\eta_i\rangle, 
\ \ u\in \cl A\otimes_{\max}\cl B, i\in \bb{N}.$$
Since the vector $\eta_i$ is cyclic for $\pi_i$, there exists a 
unitary operator $U_i : H_i \to \tilde{H}$, such that 
\begin{equation}\label{eq_Ui}
U_i \pi_i(u) U_i^* = \tilde{\pi}(u) 
\mbox{ and } U_i \eta_i = \tilde{\xi}, \ \ \ i\in \bb{N}. 
\end{equation}

Set $\xi_{\rm aux} = (\sqrt{\lambda_i})_{i\in \bb{N}}$, viewed as a (unit) vector 
in $\ell^2$. 
Write $U = \oplus_{i\in \bb{N}} U_i$; thus, $U : H\to \oplus_{i\in \bb{N}}\tilde{H}$
is a unitary operator. Identifying $\oplus_{i\in \bb{N}}\tilde{H}$ with 
$\tilde{H}\otimes\ell^2$ canonically, 
we have that 
$$U\xi = U\left((\sqrt{\lambda_i}\eta_i)_{i\in \bb{N}}\right) = \left(\sqrt{\lambda_i}U_i\eta_i\right)_{i\in \bb{N}}
= \left(\sqrt{\lambda_i}\tilde{\xi}\right)_{i\in \bb{N}} = \tilde{\xi}\otimes\xi_{\rm aux}.$$
Equation (\ref{eq_Ui}) now implies that 
$S$ is unitarily equivalent, via $U$, to the model
$(\tilde{H}\otimes\ell^2,\tilde{\nph}_A^{(\infty)}, \tilde{\nph}_B^{(\infty)},
\tilde{\xi}\otimes \xi_{\rm aux})$.

(ii)$\Rightarrow$(i) 
In the notation of (ii), let 
$\tilde{f} : \cl A\otimes_{\max}\cl B\to \bb{C}$ be the state, 
given by (\ref{eq_GNS0}). 
Let $g \in \cl S$ be an extension of $f$, 
$(H,\pi,\xi)$ be the GNS triple associated with $g$, 
and $\nph_A$ (resp. $\nph_B$) be the restrictions of $\pi$ to 
$\cl S_A$ (resp. $\cl S_B$). 
By assumption, there exists a unit vector $\xi_{\rm aux}\in \ell^2$
and a unitary operator $U : H\to \tilde{H}\otimes\ell^2$, such that 
$U\xi = \tilde{\xi}\otimes\xi_{\rm aux}$, and 
\begin{equation}\label{eq_abI}
U\nph_A(a)\nph_B(b)U^* = \tilde{\pi}(a\otimes b)\otimes I, \ \ \ 
a\in \cl S_A, b\in \cl S_B.
\end{equation}
Let $\pi_A$ (resp. $\pi_B$) be the unique extension of $\nph_A$ (resp. $\nph_B$)
to a unital *-representation of $\cl A$ (resp. $\cl B$). 
We have that 
$$\pi(a\otimes b) = (\pi_A\cdot \pi_B)(a\otimes b), \ \ \ a\in \cl S_A, b\in \cl S_B;$$
since the elementary tensors of the form $a\otimes b$, $a\in \cl S_A$, $b\in \cl S_B$, 
generate $\cl A\otimes_{\max}\cl B$ as a C*-algebra, we have that 
$\pi = \pi_A\cdot \pi_B$. 
Now (\ref{eq_abI}) implies that 
$$U\pi(u)U^* = \tilde{\pi}(u)\otimes I, \ \ \ u\in \cl A\otimes_{\max}\cl B.$$
It follows that, if $u\in \cl A\otimes_{\max}\cl B$, then 
\begin{eqnarray*}
g(u) 
& = & 
\langle \pi(u)\xi,\xi\rangle 
= \langle U\pi(u)U^*U\xi,U\xi\rangle\\
& = & \langle (\tilde{\pi}(u)\otimes I)(\tilde{\xi}\otimes\xi_{\rm aux}),
\tilde{\xi}\otimes\xi_{\rm aux}\rangle
=
\langle \tilde{\pi}(u)\tilde{\xi},\tilde{\xi}\rangle;
\end{eqnarray*}
Thus, $g = \tilde{f}$, and the proof is complete. 
\end{proof}

\begin{remark}\label{r_locality0}
\rm
In the notation of Theorem \ref{th_charallqc}, 
suppose that a state $f\in \tilde{\cl S}$ is an abstract self-test for $\cl S$. 
By Theorem \ref{th_charallqc}, $f$ has a model $\tilde{S}$, such that every other model of $f$
is unitarily equivalent to an ampliation of $\tilde{S}$. 
Let 
\begin{equation}\label{eq_forthin}
\frak M = 
\left\{S=(_{\cl A}H_{\cl B}, \varphi_A, \varphi_B,\xi) : f_S = f, 
\cl A=(\varphi_A(S_A))'', \cl B = \left((\varphi_B(S_B))''\right)^o\right\}.
\end{equation}
The proof of Theorem \ref{th_charallqc}
shows that the unitary operator $U$, constructed therein, 
implements the dilation relation 
$_{\cl A}H_{\cl B}\leq _{\tilde{\cl A}\otimes I}(\tilde H\otimes\ell^2)_{\tilde{\cl B}\otimes I}$. 
Thus, the problem in reversing the implication established in Theorem \ref{th_stisabs} resides in allowing the use of
models, whose algebras of observables are more general than the 
ones indicated in (\ref{eq_forthin}). Indeed, often the interest lies in 
\emph{Haag models}, where $\cl B^o = \cl A'$.
We next exhibit cases where such a reversal can be achieved more generally. 
Our next result, 
Theorem \ref{th_rev},
is an extension of \cite[Theorem 4.12]{pszz}.
\end{remark}

Let $\cl A$ be a $C^*$-algebra. 
Recall that a representation $\pi : \cl A\to \cl B(K)$
is called a type I representation if 
the bicommutant $\pi(\cl A)''$ in 
$\cl B(K)$ is a von Neumann algebra of type I.  
Let $\{ H(\gamma): \gamma\in \Gamma\}$ be a measurable field of Hilbert spaces on a Borel space $\Gamma$ equipped with a $\sigma$-finite Borel measure $\mu$. Suppose that, to each $\gamma\in \Gamma$, there corresponds a representation $\pi_\gamma$ of $\cl A$ on $H(\gamma)$ such that for every $a\in \cl A$, the operator field $\gamma\mapsto\pi_\gamma(a)\in\cl B(H(\gamma))$ is measurable, in which case we say that the field $\{\pi_\gamma\}_{\gamma\in\Gamma}$ is a measurable field of representations. Set $H =\int_\Gamma^{\oplus}H(\gamma)d\mu(\gamma)$ and 
$\pi(a) = \int_\Gamma^{\oplus}\pi_\gamma(a)d\mu(\gamma)\in \cl B(H)$. Then $\pi$ is a representation of $\cl A$ on $H$, called the direct integral of $\{\pi_\gamma\}_{\gamma\in\Gamma}$ and written 
$\pi = \int_\Gamma^{\oplus}\pi_\gamma d\mu(\gamma)$.
We refer the reader to \cite[\S IV.8]{takesaki} for further details concerning measurable fields of Hilbert spaces and operators. 
 By \cite{dixmier}, if $\pi$ is a representation of type I on a separable Hilbert space, there is a standard $\sigma$-finite measure space $(\Gamma,\mu)$ a measurable field of irreducible representations $\{\pi_\gamma\}_{\gamma\in\Gamma}$, measurable function $\gamma\mapsto n(\gamma)\in \mathbb N_{\infty}=\{1,2,\ldots,\infty\}$ such that 
$$\pi\simeq\int^\oplus_\Gamma\pi_\gamma\otimes 1_{n(\gamma)}d\mu(\gamma).$$

For the remainder of this section, 
$\cl C$ will denote the class of quantum commuting models defined by the canonical bimodules $_{\cl B(H_A)}H_{\cl B(H_B)^o}= H_A\otimes H_B$ over
$(\cl B(H_A), \cl B(H_B)^o)$ and 
unital completely positive maps
$\varphi = \varphi_A\otimes \varphi_B$, where 
$\varphi_A :\cl S_A \to \cl B(H_A)$ and 
$\varphi_B :\cl S_B \to \cl B(H_B)$ 
extend to type I representations $\pi_A$ and $\pi_B$ of $\cl A:=C_u^*(\cl S_A)$ and $\cl B:=C_u^*(\cl S_B)$ on $H_A$ and $H_B$, respectively. 
We note that, if $S\in \cl C$ then $f_S$ is in fact a state 
(not only on the commuting tensor product $\cl S_A\otimes_{\rm c}\cl S_B$
but also) on the minimal tensor product 
$\cl S_A\otimes_{\min}\cl S_B$.
We say that a model is irreducible if the corresponding 
representations $\pi_A$ and $\pi_B$ are irreducible.
We note that, if a model $S\in \cl C$ is irreducible then 
$\pi_A(\cl S_A)'' = \cl B(H_A)$ and $\pi_B(\cl S_B)'' = \cl B(H_B)$.

\begin{theorem}\label{th_rev}
Let $\cl S = \{\tilde{f}_S: S\in\cl C\}$, 
let $\frak{M}\in \cl C$, and let $f = f_{\frak{M}}$. 
Assume that
$f\in {\rm Ext}(S(\cl S_A\otimes_{\rm c}\cl S_B))$. 
Suppose that $f$ has unique extension to a state on 
$C^*_u(\cl S_A)\otimes_{\min} C^*_u(\cl S_B)$.
Then $f$ is a self-test for $\cl C$ and there exists an irreducible ideal model $S\in\cl C$.
In particular, if $f$ is an abstract self-test for $\cl S$ then 
$f$ is a self-test for $\cl C$ that admits an irreducible ideal model $S\in\cl C$.
\end{theorem}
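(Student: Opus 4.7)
The plan is to derive purity of the unique extension from the hypotheses, then invoke the type I structure theorem to produce an irreducible ideal model, and finally to dilate arbitrary $\cl C$-models via type I disintegration. Write $\cl A = C_u^*(\cl S_A)$ and $\cl B = C_u^*(\cl S_B)$.

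First I would show that the unique extension $\tilde f$ of $f$ to $\cl A \otimes_{\min} \cl B$ is pure: any convex decomposition $\tilde f = \lambda g_1 + (1-\lambda) g_2$ restricts to one of $f$ on $\cl S_A \otimes_{\rm c} \cl S_B$, so by extremality of $f$ together with uniqueness of extension, $g_1 = g_2 = \tilde f$. Let $(\tilde \pi, \tilde H, \tilde \xi)$ be the GNS triple of $\tilde f$, so $\tilde \pi$ is irreducible. Writing $\cl M = \tilde \pi(\cl A \otimes 1)''$ and $\cl N = \tilde \pi(1 \otimes \cl B)''$, irreducibility gives $\cl M \vee \cl N = \cl B(\tilde H)$, which forces the centres of $\cl M$ and $\cl N$ to be trivial; hence $\cl M$ and $\cl N$ are commuting factors. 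They are of type I as subrepresentations of type I representations, and $\cl N \subseteq \cl M'$ combined with $\cl M \vee \cl N = \cl B(\tilde H)$ forces $\cl N = \cl M'$. The structure theorem for type I factors now yields a unitary identification $\tilde H = \tilde H_A \otimes \tilde H_B$ with $\cl M = \cl B(\tilde H_A) \otimes I$ and $\cl N = I \otimes \cl B(\tilde H_B)$, together with irreducible representations $\tilde \pi_A' : \cl A \to \cl B(\tilde H_A)$ and $\tilde \pi_B' : \cl B \to \cl B(\tilde H_B)$ for which $\tilde \pi(a \otimes 1) = \tilde \pi_A'(a) \otimes I$ and $\tilde \pi(1 \otimes b) = I \otimes \tilde \pi_B'(b)$. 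Setting $\tilde S := (\tilde H_A \otimes \tilde H_B, \tilde \pi_A'|_{\cl S_A}, \tilde \pi_B'|_{\cl S_B}, \tilde \xi)$ produces an irreducible model in $\cl C$ with $f_{\tilde S} = f$.

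For the dilation, let $S = (H_A \otimes H_B, \pi_A \otimes \pi_B, \xi) \in \cl C$ satisfy $f_S = f$. The state $\omega_\xi \circ (\pi_A \otimes \pi_B)$ extends $f$ to $\cl A \otimes_{\min} \cl B$, hence equals $\tilde f$ by uniqueness. Disintegrating $\pi_A$ and $\pi_B$ into irreducibles with multiplicity via the type I hypothesis, the cyclic subspace of $\xi$ in $H_A \otimes H_B$ carries the irreducible GNS of $\tilde f$, which is the tensor product representation $\tilde \pi_A' \otimes \tilde \pi_B'$; by uniqueness of spectral type, $\xi$ must lie in the $\tilde \pi_A'$-isotypic subspace of $H_A$ tensored with the $\tilde \pi_B'$-isotypic subspace of $H_B$, a subspace unitarily identifiable with $(\tilde H_A \otimes K_A) \otimes (\tilde H_B \otimes K_B) \cong (\tilde H_A \otimes \tilde H_B) \otimes (K_A \otimes K_B)$ for multiplicity spaces $K_A, K_B$; purity then forces $\xi = \tilde \xi \otimes \xi_{\rm aux}$ under this identification, for some unit $\xi_{\rm aux} \in K_A \otimes K_B$. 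Extending the isotypic inclusions to isometries $V_A : H_A \to \tilde H_A \otimes K_A^+$ and $V_B : H_B \to \tilde H_B \otimes K_B^+$ (enlarging the auxiliary spaces to absorb orthogonal complements), the split local isometry $V_A \otimes V_B$, whose locality is guaranteed by Proposition \ref{p_split} when the target is viewed as a quantum spatial bipartite system, realises $S \preceq \tilde S$.

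For the ``in particular'' clause, an abstract self-test for $\cl S$ combined with the assumed extremality of $f$ forces uniqueness of extension of $f$ to $\cl A \otimes_{\min} \cl B$: any pure extension has an irreducible GNS which, by the tensor factorisation above, lies in $\cl C$ and hence in $\cl S$, so coincides with the unique $\cl S$-extension, and a Choquet-style decomposition then shows every state extending $f$ decomposes into such pure extensions and hence equals $\tilde f$. The main technical obstacle lies in the dilation step: extracting split isometries $V_A, V_B$ from the pointwise pure-state concentration in a way compatible with Definition \ref{d:Alocal} and sending $\xi$ to $\tilde \xi \otimes \xi_{\rm aux}$. Proposition \ref{p_split} is indispensable here, reducing the problem to the split form and decoupling Alice's and Bob's operations; combined with the freedom to enlarge the auxiliary Hilbert spaces, this makes the construction tractable.
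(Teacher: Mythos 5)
Your argument for the main claim is correct but takes a genuinely different route from the paper. The paper proceeds by disintegrating $\pi_A$ and $\pi_B$ as direct integrals of irreducibles, decomposing $\xi$ over the fibres, using extremality to show the fibre states agree with $f$ almost everywhere, and invoking the measurable-selection lemma from \cite{dixmier_von_neumann} to build the intertwining unitaries fibre-by-fibre; this requires dealing explicitly with analytic sets and a fair amount of measure theory. Your approach replaces all of this by a \emph{pointwise} structural argument: purity of the unique extension $\tilde f$ plus the subrepresentation argument gives that $\tilde\pi(\cl A\otimes 1)''$ is a type I factor, whence the tensor factorisation of the GNS; then, for an arbitrary model $S$, the central projections $p_A\in \pi_A(\cl A)''$ and $p_B\in \pi_B(\cl B)''$ onto the $\tilde\pi_A'$- and $\tilde\pi_B'$-isotypic components commute with $\pi_A\otimes\pi_B$, so purity of $\tilde f$ first forces $\xi$ into the isotypic tensor factor and then, via the Schmidt decomposition, into product form. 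This is cleaner and more conceptual, avoiding measurable selection entirely. What the paper's measure-theoretic route buys is robustness: it works by comparing only states that arise from models in $\cl C$, which is what makes the ``in particular'' clause fall out of the proof.

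That is precisely where your argument has a genuine gap. For the ``in particular'' clause you assert that any pure extension of $f$ to $C_u^*(\cl S_A)\otimes_{\min}C_u^*(\cl S_B)$ has a GNS representation that, ``by the tensor factorisation above,'' factorises and hence lies in $\cl C$. But the tensor factorisation you established relies essentially on the fact that the GNS of $\tilde f$ is a subrepresentation of $\pi_A\otimes\pi_B$ for the fixed model $\frak{M}\in\cl C$, which is what gives that $\tilde\pi(\cl A\otimes 1)''$ is type I. For an arbitrary pure extension $g$, there is no such model in hand: the irreducible GNS of $g$ restricted to $\cl A\otimes 1$ generates a factor, but there is no reason for that factor to be type I (and when it is not, there is no tensor splitting and no associated spatial model). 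So the claim that every pure extension lies in $\cl S$ is unjustified. Notice also that your purity argument at the very start of the proof already invokes the full uniqueness-of-extension hypothesis (you decompose $\tilde f$ by arbitrary states $g_1,g_2$, not states in $\cl S$), so the main claim is fine as proved, but the ``in particular'' clause cannot be obtained by your route. The paper's route does give it, because each comparison of states in the paper's proof involves only states of the form $\tilde f_S$ with $S\in\cl C$ (the fibres $\pi_x^A\otimes\pi_y^B$ are irreducible, hence type I), so the unique-extension hypothesis is only ever applied to elements of $\cl S$, and may be replaced by ``abstract self-test for $\cl S$'' throughout.

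Two smaller points worth flagging. First, the existence and multiple-of-$\tilde\pi_A'$ structure of the isotypic component $H_A^{\rm iso}$ does need the type I hypothesis on $\pi_A$: one should argue that $\pi_A\otimes 1_{H_B}$ containing a copy of $\tilde\pi_A'$ implies $\pi_A$ does (a short polar-decomposition argument), and then that the isotypic summand of a type I representation is an ampliation of the irreducible. You gesture at ``uniqueness of spectral type'' here; spelling these out makes the step complete. Second, the appeal to Proposition \ref{p_split} in the last step is not really what is needed: that proposition says every local isometry between quantum spatial systems is split, whereas what you need is the converse — that the split isometry $V_A\otimes V_B$ is local — which is Example \ref{ex_findimset} (and Remark \ref{r_compo}(ii)), not Proposition \ref{p_split}.
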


\begin{proof}
Write 
$\frak{M} = (_{\cl B(H_A)}(H_A\otimes H_B)_{\cl B(H_B)^o}, 
\nph_A, \nph_B,\xi),$
so that 
$$f(u) = \langle(\varphi_A\otimes\varphi_B)(u)\xi,\xi\rangle, 
\ \ \ u\in \cl S_A\otimes_{\min} \cl S_B.$$ 
By assumption, the *-representations $\pi_A$ and $\pi_B$ extending 
$\nph_A$ and $\nph_B$, respectively, are type I. 
Consider the direct integral decompositions of $\pi_A$ and $\pi_B$ into irreducible representations: 
  $$\pi_A=\int_X (\pi_x^A\otimes I_{m(x)})d\mu(x)
  \ \mbox{ and } \ \pi_B=\int_Y (\pi_y^A\otimes I_{n(y)})d\nu(y),$$ 
  acting on
  $$H_A=\int_X^\oplus H_A(x)\otimes\ell^2(m(x))d\mu(x) 
  \mbox{ and } H_B=\int_Y^\oplus H_B(y)\otimes\ell^2(n(y))d\nu(y),$$ respectively. 
  Let $\{e_x^i\}_{i=1}^{m(x)}$ and $\{e_y^j\}_{j=1}^{n(y)}$ be the standard bases in $\ell^2(m(x))$ and $\ell^2(n(y))$, respectively, 
  and write $\delta$ for the counting measure on $\mathbb N_\infty\times \mathbb N_{\infty}$
  (we set $\mathbb N_k=\{1,2,\ldots, k\}$, so that 
  $\bb{N}_{\infty} = \bb{N}\cup\{\infty\}$). 
  As $\xi\in H_A\otimes H_B$, there exist measurable 
  families $(\xi_{x,y}^{i,j})_{x,y}^{i,j}$, where 
  $\xi_{x,y}^{i,j} \in H_A(x)\otimes H_B(y)$, 
  $\|\xi_{x,y}^{i,j}\|=1$, and 
  $(\lambda_{x,y}^{i,j})_{x,y}^{i,j}\subseteq \mathbb C$, such that $\lambda_{x,y}^{i,j}=0$ if  $(i,j)\in (\mathbb N_{m(x)}\times\mathbb N_{n(y)})^c$,  
  $$\int_{X\times Y}\int_{ \mathbb N\times\mathbb N}|\lambda_{x,y}^{i,j}|^2d\delta(i,j)d(\mu\times\nu)(x,y)=1,$$ 
  and 
  $$\xi = \int_{X\times Y} \int_{\mathbb N\times\mathbb N}\xi_{x,y}^{i,j}\otimes\lambda_{x,y}^{i,j}(e_x^i\otimes e_{y}^j)d\delta(i,j)d(\mu\times\nu)(x,y).$$ 
  
  Write 
  $$S_{x,y}^{i,j} = 
  (H_A(x)\otimes H_B(y), \pi_x^A|_{\cl S_{A}}, \pi_y^B|_{\cl S_{B}}, \xi_{x,y}^{i,j}), \ \ \ x,y\in X, i,j\in \bb{N}_{\infty}.$$
  As $\pi_A$ and $\pi_B$ are type I representations, so are $\pi_x^A$ and $\pi_y^B$ for almost all $x$ and $y$ \cite[Proposition 8.4.8]{dixmier}.
  Thus, for almost all $x$, $y$, the model 
  $S_{x,y}^{i,j}$ belongs to the class $\cl C$. 
  Moreover,
  $$f(u) = \int_{X\times Y}\int_{\mathbb N\times\mathbb N}\langle(\pi_x^A\otimes\pi_y^B)(u)\xi_{x,y}^{i,j},\xi_{x,y}^{i,j}\rangle|\lambda_{x,y}^{i,j}|^2 d\delta(i,j)d(\mu\times\nu)(x,y). $$
  Consider the probability measure $\alpha$ on 
  $X\times Y\times\mathbb N_\infty\times\mathbb N_\infty$, given by $$\alpha(E)=\int_E|\lambda_{x,y}^{i,j}|^2d\delta(i,j)d(\mu\times\nu)(x,y),$$ 
  where $E \subseteq X\times Y\times\mathbb N_\infty\times\mathbb N_\infty$ is measurable.
  Suppose that $\alpha(E) \ne 0$ and $\alpha(E^c)\ne 0$, set
$$f_1(u)= \frac{1}{\alpha(E)}\int_E\langle(\pi_x^A\otimes\pi_y^B)(u)\xi_{x,y}^{i,j},\xi_{x,y}^{i,j}\rangle|\lambda_{x,y}^{i,j}|^2
 d\delta(i,j)d(\mu\times\nu)(x,y),$$ 
considered as a state on $C^*_u(\cl S_A)\otimes_{\min} C^*_u(\cl S_B)$, 
and let $f_2$ be defined similarly, using the set $E^c$ in the place of $E$.
Then  
$$f(u) = \alpha(E)f_1(u)+\alpha({E^c}) f_2(u), \ \ \ 
u\in \cl S_A\otimes_{\min} \cl S_B.$$ 
Since $f$ is an extreme point,  
$$f(u) = f_1(u) = f_2(u), \ \ \ u\in \cl S_A\otimes_{\min} \cl S_B.$$
It follows that 
\begin{eqnarray*}
  && 
  \int_{E}\langle(\pi_x^A\otimes\pi_y^B)(u)\xi_{x,y}^{i,j},\xi_{x,y}^{i,j}\rangle|\lambda_{x,y}^{i,j}|^2d\delta(i,j)
  d(\mu\times\nu)(x,y)\\
  &&
  \hspace{4cm} =\int_Ef(u)|\lambda_{x,y}^{i,j}|^2d\delta(i,j)
  d(\mu\times\nu)(x,y).
  \end{eqnarray*}
  As the equality trivially holds for $E$ of full measure, we obtain that
\begin{equation}\label{eq_f(u)=}
f(u) = \left\langle(\pi_x^A\otimes\pi_y^B)(u)\xi_{x,y}^{i,j},\xi_{x,y}^{i,j}\right\rangle 
  \ \ \ \alpha\mbox{-almost everywhere},
\end{equation}  
  showing in particular that there is an irreducible model in $\cl C$ 
  that gives rise to $f$. 

  Fix irreducible representations 
  $\tilde \pi^A$ and $\tilde \pi^B$ of 
  $C^*_u(\cl S_A)$ and $C^*_u(\cl S_B)$, acting on Hilbert spaces
$\tilde H_A$ and $\tilde H_B$, respectively, 
and a unit vector $\tilde \xi\in\tilde H_A\otimes\tilde H_B$, 
such that 
$f(u) = \langle (\tilde\pi^A\otimes\tilde \pi^B)(u)\tilde\xi,\tilde \xi\rangle$, $u\in \cl S_A\otimes_{\min}\cl S_B$,   
and  let 
$$N = \{(x,y,i,j)\in X\times Y\times \mathbb N_\infty\times \mathbb N_\infty 
 : (\ref{eq_f(u)=}) \mbox{ holds}\}.$$ 
Then $\alpha(N^c)=0$.
  Since $f$ has a unique extension to 
  $C^*_u(\cl S_A)\otimes_{\min} C^*_u(\cl S_B)$, 
  we have that 
  $$\left\langle (\tilde \pi^A\otimes\tilde \pi^B)(u)\tilde \xi,\tilde\xi\right\rangle 
  = \left\langle (\pi_{x}^A\otimes\pi_{y}^B)(u)\xi_{x,y}^{i,j},\xi_{x,y}^{i,j}\right\rangle,  
  \ \ \ (x,y,i,j)\in N,$$ 
  for every $u \in C^*_u(\cl S_A)\otimes_{\min} C^*_u(\cl S_B)$. 
  As $\pi_x^A$ and $\pi_y^B$ are irreducible, so is $\pi_x^A\otimes\pi_y^B$ and hence $(\pi_x^A\otimes\pi_y^B, \xi_{x,y}^{i,j})$ is a GNS representation for the state $f$, 
  $(x,y,i,j)\in N$. We have, in particular,  
  $\xi_{x,y}^{i,j}=\alpha_{x,y}^{i,j,i',j'}\xi_{x,y}^{i',j'}$ for some $\alpha_{x,y}^{i,j,i',j'}\in\mathbb T$. 
  
  Let $\leq$ be the lexicographic order on $\mathbb N_\infty\times\mathbb N_\infty$, that is, 
  $(n_1,n_2)<(m_1,m_2)$ if $n_1<m_1$ or $n_1=m_1$ and $n_2<m_2$. 
  Define
  $\tau: X\times Y\to \mathbb N_\infty\times\mathbb N_\infty\cup\{\infty\}$ by 
  letting $\tau(\omega)=\min\{(i,j): (\omega,i,j)\in N\}$, for $\omega=(x,y)\in X\times Y$, where we have set 
  $\min\emptyset = \infty$.  Let $\pi_{X\times Y}: (X\times Y)\times (\mathbb N_\infty\times\mathbb N_\infty)\to X\times Y$ be the projection map. Clearly, $(\omega,\tau(\omega))\in N$ for every $\omega\in\pi_{X\times Y}(N)$. 
  We claim that $\tau$ is measurable. 
  Indeed, 
  writing $N_{i,j}$ for the slice of $N$ along $(i,j)\in \mathbb N_\infty\times\mathbb N_\infty$, 
  note that $N\cap ((X\times Y)\times\{(i,j)\})=N_{i,j}\times\{(i,j)\}$ and hence $N_{i,j}$ is measurable. This shows that 
  $\tau^{-1}(\{(i,j)\})=N_{i,j}\setminus\cup_{(i',j')< (i,j)} N_{i',j'}$ is measurable.  
  In addition, the set 
  $\pi_{X\times Y}(N)=\cup_{i,j}N_{i,j}$ is measurable. 
Set $\zeta_{x,y} = \xi_{x,y}^{\tau(x,y)}\in H_A(x)\otimes H_B(y)$. 
For $(x,y)\in X\times X$, let 
$$\tilde{f}_{x,y} = \sum_{(i,j) : (x,y,i,j)\in N} \alpha_{x,y}^{(i,j),\tau(x,y)}\lambda_{x,y}^{i,j}(e_x^i\otimes e_y^j).$$
Then 
\begin{eqnarray*}
  \xi
  & = & 
  \int_N\xi_{x,y}^{i,j}\otimes\lambda_{x,y}^{i,j}(e_x^i\otimes e_y^j)d\delta(i,j)d(\mu\times\nu)(x,y)\\
  & = & 
  \int_{\pi_{X\times Y}(N)} 
\zeta_{x,y}\otimes \tilde{f}_{x,y}
 d(\mu\times\nu)(x,y).
 \end{eqnarray*}  
Let $r_{x,y} = \|\tilde{f}_{x,y}\|$. 
If $\tilde{f}_{x,y}\neq 0$, set 
$f_{x,y} = \frac{\tilde{f}_{x,y}}{\|\tilde{f}_{x,y}\|}$; otherwise, 
let $f_{x,y} = 0$. 
We thus have that 
$\int_{X\times Y}|r_{x,y}|^2 d(\mu\times\nu)(x,y)=1$ and 
  $\{f_{x,y}\}_{(x,y)\in X\times Y}$
  is a measurable field of unit vectors in $\ell^2(m(x))\otimes\ell^2(n(y))$
  with 
  $$r_{x,y}f_{x,y}=\int \alpha_{x,y}^{(i,j),\tau(x,y)}\lambda_{x,y}^{i,j}(e_x^i\otimes e_y^j)d\delta(i,j).$$
  
  Consider now  the set $\Lambda=\{(x,y)\in\pi_{X\times Y}(N): r_{x,y}\ne 0\}$. Then for $(x,y)\in\Lambda$, $(\pi_x^A\otimes\pi_y^B, \zeta_{x,y})$ is a GNS representation of $f$. 
Since $\pi_A$ and $\pi_B$ are irreducible, we obtain that $\pi_x^A\otimes\pi_y^B\sim\tilde\pi^A\otimes\tilde\pi^B$  
  and hence $\pi_x^A\sim \tilde\pi^A$ and $\pi_y^B\sim\tilde\pi^B$ 
  whenever $(x,y)\in \Lambda$ (we use the symbol $\sim$ to denote unitary 
  equivalence). Let $\Lambda_A=\pi_X(\Lambda)$ and $\Lambda_B=\pi_Y(\Lambda)$, where $\pi_X$ and $\pi_Y$ are the corresponding projections in the Cartesian product $X\times Y$. We have that $\Lambda_A$ and $\Lambda_B$ are analytic sets and there exist subsets $M_A\subseteq \Lambda_A$ and $M_B\subseteq \Lambda_B$, such that $\mu(M_A)=\nu(M_B)=0$ and $\wt\Lambda_A:=\Lambda_A\setminus M_A$ and $\wt\Lambda_B:=\Lambda_B\setminus M_B$ are measurable (see \cite[Appendix]{takesaki}). 
  By \cite[p. 166, Lemme 2]{dixmier_von_neumann}, 
  there exist measurable $U_x: H_A(x)\to \tilde H_A$ and $U_y: H_B(y)\to \tilde H_B$ such that
$$U_x\pi_x^A(a)U_x^*=\tilde\pi^A(a) \ \mbox{ and } \ 
U_y\pi_y^B(b)U_y^*=\tilde\pi^B(b), \ \ 
a\in C^*_u(\cl S_A), b\in C^*_u(\cl S_B).$$ 
Then $(U_x\otimes U_y)\zeta_{x,y}=\beta_{x,y}\tilde\xi$ for $\beta_{x,y}\in\mathbb T$.
Hence $\xi=\tilde\xi\otimes\psi_{\rm aux}$, where 
$$\psi_{\rm aux} = \int_{\wt\Lambda_A \times \wt\Lambda_B} r_{x,y}\beta_{x,y}f_{x,y}d(\mu\times \nu)(x,y).$$

For $x\in\wt\Lambda_A$ let $V_x: H_A(x)\otimes\ell^2(m(x))\to\tilde H_A\otimes\ell^2(m(x))$ be given by $V_x = U_x\otimes 1_{m(x)}$; if $x\not\in \wt\Lambda_A$ let $V_x: H_A(x)\otimes\ell^2(m(x))\to \tilde H_A\otimes (H_A(x)\otimes\ell^2(m(x)))$ be given by 
$V_x(v) = w\otimes v$ for a fixed $w\in\tilde H_A$, and set $V_A=\int_X^\oplus V_x d\mu(x)$. 
Define an isometry $V_B$ in a similar way. 
Let 
$$H_A^{\rm aux}=\int_{\wt\Lambda_A}^{\oplus}\ell^2(m(x))d\mu(x)\oplus \int_{\wt\Lambda_A^c}^\oplus H_A(x)\otimes\ell^2(m(x))d\mu(x)$$ 
and 
$$H_B^{\rm aux}=\int_{\wt\Lambda_B}^{\oplus}\ell^2(n(y))d\nu(y)\oplus \int_{\wt\Lambda_B^c}^{\oplus}H_B(y)\otimes \ell^2(n(y))d\nu(y).$$ 
Then
$\psi_{\rm aux}\in H_A^{\rm aux}\otimes H_B^{\rm aux}$ and
$$(V_A\otimes V_B)(\pi_A\otimes\pi_B)(a\otimes b)\xi=(\tilde\pi_A\otimes\tilde\pi_B)(a\otimes b)(\tilde\xi\otimes \psi_{\rm aux})$$ for $a\in C^*_u(\cl S_A)$, $b\in C^*_u(\cl S_B).$    
\end{proof}


\section{Applications and examples}\label{s_app}

In this section, we apply the general operator system framework developed in the previous sections to several special cases, including those of QNS 
correlations, quantum graph homomorphisms, synchronous correlations
and positive definite functions defined on groups. 
The special cases we consider are based at pairs $(\cl S_A,\cl S_B)$ 
of finitely generated operator systems, say 
$$\cl S_A = {\rm span}\{e_1,\dots,e_k\} \ \mbox{ and } \ 
\cl S_B = {\rm span}\{f_1,\dots,f_l\},$$
so that the pair $(\nph_A,\nph_B)$ of unital completely positive maps, 
where $\nph_A : \cl S_A\to \cl B(H)$ and $\nph_B : \cl S_B\to \cl B(H)$, 
is determined by the mutually commuting 
families $(E_i)_{i=1}^k$ and $(F_j)_{j=1}^l$ of 
operators on $H$ vis the assignments $\nph_A(e_i) = E_i$, $i\in [k]$ and 
$\nph_B(f_j) = F_j$, $j\in [l]$. Thus, 
we will consider a commuting operator model over $(\cl S_A,\cl S_B)$ 
as a tuple
$S = (H,(E_i)_{i=1}^k,(F_j)_{j=1}^l,\xi)$, where $\xi\in H$ is a unit vector. 
The tuple $S$ gives rise to the 
\emph{correlation} $p_S : [k]\times [l]\to \bb{C}$, given by 
\begin{equation}\label{eq_pij}
p_S(i,j) = \langle E_i F_j\xi,\xi\rangle, \ \ \ i\in [k],j\in [l];
\end{equation}
we say that $S$ is a \emph{model} of $p_S$. 
The correlations of the form $p_S$ correspond precisely to 
states $s : \cl S_A\otimes_{\rm c}\cl S_B\to \bb{C}$ via 
the assignment $s(e_i\otimes f_j) = p_S(i,j)$, $i\in [k]$, $j\in [l]$. 

A tuple $\tilde{S} = (\tilde{H},(\tilde{E}_i)_{i=1}^k,(\tilde{F}_j)_{j=1}^l,\tilde{\xi})$ 
is an \emph{ideal model} of a correlation $p : [k]\times [l]\to \bb{C}$ if 
$p = p_{\tilde{S}}$ and, whenever 
$S = (H,(E_i)_{i=1}^k,(F_j)_{j=1}^l,\xi)$ is a model of $p$ then there exists 
a Hilbert space $H_{\rm aux}$, a unit vector $\xi_{\rm aux}\in H_{\rm aux}$ 
and a local isometry $V : H\to \tilde{H}\otimes H_{\rm aux}$ such that 
$$V E_i F_j \xi = \tilde{E}_i \tilde{F}_j \tilde{\xi} \otimes \xi_{\rm aux}, 
\ \ \ i\in [k], j\in [l].$$

The framework of self-testing described above will be referred to as 
\emph{finitary}; we will refer to the pair $(\cl S_A,\cl S_B)$ as a 
\emph{finitary context}.
Quantum models of correlations $p : [k]\times [l]\to \bb{C}$
are similarly described in the finitary framework by replacing 
the operator products $E_iF_j$ in (\ref{eq_pij}) by tensor 
products of finite dimensionally acting families $(E_i)_{i=1}^k$ and 
$(F_j)_{j=1}^l$.


\subsection{Self-testing for QNS correlations}\label{ss_QNS}

Let $X$ and $A$ be finite sets and $H$ be a Hilbert space. 
A \emph{quantum channel} from $M_X$ into $M_A$ is a completely positive trace preserving map
$\Phi : M_X\to M_A$.
A \emph{stochastic operator matrix (SOM)} over $(X,A)$ acting on $H$
is a positive block operator operator matrix $E = (E_{x,x',a,a'})_{x,x',a,a'}$, 
where $E_{x,x',a,a'}\in \cl B(H)$ for all $x,x'\in X$ and all $a,a'\in A$, 
such that ${\rm Tr}_A E = I_X\otimes I_H$
(as usual, ${\rm Tr}_A$ denotes the partial trace along 
$M_A$). 
It was shown in \cite{tt-QNS} that there exists a unital C*-algebra 
$\cl C_{X,A}$, generated by elements $e_{x,x',a,a'}$, $x,x'\in X$, $a,a'\in A$,
such that the matrix $(e_{x,x',a,a'})_{x,x',a,a'}$ is positive
as an element of $M_X\otimes M_A\otimes \cl C_{X,A}$, 
$$ \sum_{a\in A} e_{x,x',a,a} = \delta_{x,x'} 1, \ \ \ x,x'\in X,$$
and possessing the universal property that for every stochastic operator matrix 
$E = (E_{x,x',a,a'})_{x,x',a,a'}$, acting on a Hilbert space $H$, there exists a unique 
*-homomorphism $\pi_E : \cl C_{X,A}\to \cl B(H)$, such that $\pi_E(e_{x,x',a,a'}) = E_{x,x',a,a'}$, 
$x,x'\in X$, $a,a'\in A$.
Let 
$$ \cl T_{X,A} = {\rm span}\{e_{x,x',a,a'} : x,x'\in X, a,a'\in A\},$$
viewed as an operator subsystem of $\cl C_{X,A}$. 
By
\cite[Corollaries 5.3 and 5.4]{tt-QNS}, $C^*_u(\cl T_{X,A}) = \cl C_{X,A}$ and
the stochastic operator matrices $(E_{x,x',a,a'})_{x,x',a,a'}$
are in one-to-one correspondence with
the unital completely positive maps 
$\phi_E : \cl T_{X,A}\to \cl B(H)$
via the assignment $\phi_E(e_{x,x',a,a'}) =  E_{x,x',a,a'}$. 

Letting $Y$ and $B$ be further finite sets, 
the pair $(\cl T_{X,A},\cl T_{Y,B})$ of operator systems 
determines a finitary framework for self-testing. 
A commuting operator model 
$S = (H,(E_{x,x',a,a'}), (F_{y,y',b,b'}), \xi)$
(referred to later as a \emph{SOM qc-model})
gives rise to a correlation $p_S$ via (\ref{eq_pij}) which, 
in its own turn, determines a
linear map $\Gamma_S : M_{XY}\to M_{AB}$, given by 
\begin{equation}\label{eq_EFp}
\Gamma(\epsilon_{x,x'} \otimes \epsilon_{y,y'}) = \sum_{a,a'\in A} \sum_{b,b'\in B}
\left\langle E_{x,x',a,a'}F_{y,y',b,b'}\xi,\xi \right\rangle \epsilon_{a,a'} \otimes \epsilon_{b,b'}, 
\end{equation}
for all $x,x' \in X$ and all $y,y' \in Y$.
The map $\Gamma = \Gamma_S$ is a 
\emph{quantum no-signalling (QNS) correlation} over $(X,Y,A,B)$ 
\cite{dw} in that 
\begin{equation}\label{eq_NSQne1}
\Tr\hspace{-0.07cm}\mbox{}_A\Gamma(\rho_X\otimes \rho_Y) = 0 \ \mbox{ whenever } \Tr(\rho_X) = 0
\end{equation}
and
\begin{equation}\label{eq_NSQne2}
\Tr\hspace{-0.07cm}\mbox{}_B\Gamma(\rho_X\otimes \rho_Y) = 0 \ \mbox{ whenever } \Tr(\rho_Y) = 0.
\end{equation}
QNS correlations $\Gamma : M_{XY}\to M_{AB}$ admitting a 
representation of the form (\ref{eq_EFp}) are said to be of 
\emph{quantum commuting type} \cite{tt-QNS}.
One defines QNS correlations of \emph{quantum type}
by replacing the operator product in (\ref{eq_EFp}) by 
tensor products of finite dimensionally acting SOM's. 
We write $\cl Q_{\rm qc}$ (resp. $\cl Q_{\rm q}$) for the (convex) set of all quantum commuting (resp. quantum) QNS correlations, and note the inclusion 
$\cl Q_{\rm q}\subseteq \cl Q_{\rm qc}$.
We refer to $\Gamma_S$ being an self-test 
(resp. abstract self-test) if the corresponding 
correlation $p_S$ is a self-test (resp. an abstract self-test). 
We distinguish between 
\emph{qc-self-tests} (self-tests among QNS correlations of 
quantum commuting type) and \emph{q-self-tests}
(self-tests among QNS correlations of quantum type).


Given a set $\cl M$ of quantum commuting models (resp. 
quantum models) for the finitary context 
$(\cl T_{X,A},\cl T_{Y,B})$, let $\tilde{\cl M} = \{\Gamma_S : S\in \cl M\}$. 
It is clear from the preceding discussion that 
if  
$\cl S_{\cl M} = \{s_{\Gamma_S} : S\in \cl M\}$ then a quantum commuting QNS correlation $\Gamma\in \tilde{\cl M}$
is a (abstract) qc-self-test for $\cl M$ if and only if 
its corresponding state 
$s_{\Gamma} : \cl T_{X,A}\otimes_{\rm c}\cl T_{Y,B}\to \bb{C}$ is an (abstract) self-test for 
$\tilde{\cl M}$. 
The abstract q-self-tests have a convenient characterisation, as follows.

\begin{proposition}\label{p_cqcste}
Let $\cl F$ be the set of quantum commuting models whose 
underlying Hilbert space is finite dimensional. 
Then $\Gamma\in \tilde{\cl F}$
is an abstract self-test for $\cl Q_{\rm q}$ if and only if 
$s_{\Gamma}$ is an abstract self-test for $\cl S_{\cl F}$.
\end{proposition}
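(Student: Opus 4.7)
The plan is to reduce the equivalence to the set-theoretic identity
\[
\{\tilde f_T : T \text{ is a quantum model}\} = \{\tilde f_S : S\in\cl F\}
\]
of subsets of $S(\cl C_{X,A}\otimes_{\max}\cl C_{Y,B})$. Once this identity is established, Definition \ref{d_st}(iii) makes the two uniqueness statements formally the same, since both concern extensions of $s_\Gamma$ from $\cl T_{X,A}\otimes_{\rm c}\cl T_{Y,B}$ to the max tensor product. The inclusion from left to right is immediate, because every quantum model is a finite-dimensional quantum commuting model and the canonical extensions $\tilde f_T$ agree in both frameworks. The work lies in the reverse inclusion, which is essentially the finite-dimensional coincidence of quantum and quantum commuting strategies, lifted to the level of states on $\cl C_{X,A}\otimes_{\max}\cl C_{Y,B}$.

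For the reverse inclusion, I would fix $S=(H,E,F,\xi)\in\cl F$ and let $\cl A,\cl B\subseteq \cl B(H)$ be the finite-dimensional C*-algebras generated by $\pi_E(\cl C_{X,A})$ and $\pi_F(\cl C_{Y,B})$; by construction these commute. The structure theorem for finite-dimensional unital *-subalgebras of $\cl B(H)$ combined with the double commutant theorem yields an orthogonal decomposition $H=\bigoplus_i \bb C^{a_i}\otimes\bb C^{m_i}$ under which $\cl A=\bigoplus_i M_{a_i}\otimes I_{m_i}$, so that $\cl A'=\bigoplus_i I_{a_i}\otimes M_{m_i}$ and hence $\cl B=\bigoplus_i I_{a_i}\otimes \cl B_i$ for unital *-subalgebras $\cl B_i\subseteq M_{m_i}$. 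Consequently $\pi_E=\bigoplus_i(\pi_E^i\otimes I_{m_i})$ and $\pi_F=\bigoplus_i(I_{a_i}\otimes \pi_F^i)$, for representations $\pi_E^i:\cl C_{X,A}\to M_{a_i}$ and $\pi_F^i:\cl C_{Y,B}\to M_{m_i}$. Writing $\xi=\sum_i \xi_i$ with $\xi_i\in \bb C^{a_i}\otimes \bb C^{m_i}$ and $\lambda_i=\|\xi_i\|^2$, orthogonality of the summands yields $\tilde f_S=\sum_{\lambda_i>0}\lambda_i g_i$, where each $g_i$ is the canonical state of the quantum model $T_i=(\bb C^{a_i}\otimes \bb C^{m_i},\pi_E^i,\pi_F^i,\xi_i/\sqrt{\lambda_i})$.

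To realise this convex combination as the state of a single quantum model, I would set $\tilde H_A:=\bigoplus_i \bb C^{a_i}$, $\tilde H_B:=\bigoplus_i \bb C^{m_i}$, $\tilde E:=\bigoplus_i \pi_E^i$, $\tilde F:=\bigoplus_i \pi_F^i$, and $\tilde\xi:=\sum_i \xi_i\in \bigoplus_i(\bb C^{a_i}\otimes \bb C^{m_i})\hookrightarrow \tilde H_A\otimes \tilde H_B$. The direct sums $\tilde E$ and $\tilde F$ correspond to SOMs, because positivity of the block operator matrix and the partial-trace relation $\Tr_A \tilde E=I_X\otimes I_{\tilde H_A}$ are both preserved under direct sums. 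Since $\tilde\xi$ is supported on the diagonal subspace $\bigoplus_i(\bb C^{a_i}\otimes \bb C^{m_i})\subseteq \tilde H_A\otimes \tilde H_B$, the off-diagonal contributions to $\langle(\tilde E(u)\otimes\tilde F(v))\tilde\xi,\tilde\xi\rangle$ vanish, giving $\sum_i \lambda_i g_i(u\otimes v)=\tilde f_S(u\otimes v)$. Hence $\tilde f_S$ is the canonical state of a quantum model, which completes the identification and hence the proposition.

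The main obstacle is bookkeeping rather than conceptual. One must verify that the block decomposition supplied by the structure theorem is compatible with the SOM conditions, and that the diagonal support of $\tilde\xi$ kills the cross-terms in the direct-sum quantum model. Modulo these verifications the statement is a transparent consequence of the fact that in finite dimensions every bipartition into commuting unital C*-algebras arises, block by block, from a tensor splitting of the Hilbert space.
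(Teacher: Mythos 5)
Your proposal is correct and takes essentially the same route as the paper: decompose $H$ and the finite-dimensional commuting algebras $\cl A$ and $\cl B$ via the structure theorem into blocks $\bb C^{a_i}\otimes\bb C^{m_i}$, factor $\pi_E$ and $\pi_F$ block by block into tensor splits, and reassemble the blocks into a single quantum model on a direct-sum tensor product with a vector supported on the diagonal, thereby showing $\cl S_{\cl F}=\{\tilde f_T : T \text{ a quantum model}\}$. One small slip: from $\cl B\subseteq\cl A'=\bigoplus_i I_{a_i}\otimes M_{m_i}$ one gets the block decomposition of the \emph{representation} $\pi_F=\bigoplus_i(I_{a_i}\otimes\pi_F^i)$, which is all that is needed, but not that the algebra $\cl B$ itself has the form $\bigoplus_i I_{a_i}\otimes\cl B_i$.
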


\begin{proof}
Let $S = (H,(E_{x,x',a,a'}), (F_{y,y',b,b'}), \xi)$ 
be a quantum commuting model with $H$ finite dimensional. 
We will show that, up to unitary equivalence, there exists a 
model of the form
\begin{equation}\label{eq_S'mo}
S' = (H_A\otimes H_B, (E'_{x,x',a,a'}\otimes I_{H_B}), 
I_{H_A}\otimes F'_{y,y',b,b'},\xi'),
\end{equation}
such that $S\preceq S'$. 

Let $\pi_A : \cl C_{X,A}\to \cl B(H)$ and 
$\pi_B : \cl C_{Y,B}\to \cl B(H)$ be the *-representations, 
determined by the SOM's $(E_{x,x',a,a'})$ and $(F_{y,y',b,b'})$, 
respectively. 
Let $\cl A = \pi_A(\cl C_{X,A})$ and $\cl B = \pi_B(\cl C_{Y,B})$. 
Since $\cl A$ is a unital finite dimensional C*-subalgebra of $\cl B(H)$, 
up to unitary equivalence, 
$H = \oplus_{i=1}^N H_i\otimes K_i$, for some (finite dimensional) Hilbert 
spaces $H_i$ and $K_i$, $i\in [N]$, and 
$\pi_A(u) = \oplus_{i=1}^N \pi_i(u)\otimes I_{k_i}$, where $k_i = {\rm dim}(K_i)$, 
and $\pi_i$ are inequivalent irreducible representations, $i\in [N]$. 
Since $\cl B$ commutes with $\cl A$, 
we have that $\pi_B(v) = \oplus_{i=1}^N I_{n_i}\otimes \rho_i(v)$, where 
$n_i = {\rm dim}(H_i)$ and $\rho_i : \cl C_{Y,B}\to \cl B(K_i)$ is a 
unital *-representation, $i\in [N]$. 
Set 
$$\hat{E}^{(i)}_{x,x',a,a'} = \pi_i(e_{x,x',a,a'}) \ 
\mbox{ and } \ \hat{F}^{(i)}_{y,y',b,b'} = \rho_i(f_{y,y',b,b'});
$$
thus, $\hat{E}^{(i)} := \left(\hat{E}^{(i)}_{x,x',a,a'}\right)$ 
and $\hat{F}^{(i)} := \left(\hat{F}^{(i)}_{y,y',b,b'}\right)$ are stochastic 
operator matrices, giving rise to QNS correlations, say, $\Gamma_i$, 
$i\in [N]$. 
Assuming that $\xi = (\xi_i)_{i=1}^N$ and writing 
$$S^{(i)} = (H_i\otimes K_i, \hat{E}^{(i)}, \hat{F}^{(i)}, \xi_i/\|\xi_i\|),
\ \ \ i\in [N],$$
it follows that $s_{\Gamma}$ is a convex combination of 
the family $\{s_{\Gamma_i}\}_{i=1}^N$ and hence is in $\cl Q_{\rm q}$. 
It is clear that $S\preceq S'$. 
\end{proof}

\subsection{POVM self-testing}\label{ss_POVMNS}

In this subsection we show how POVM self-testing 
considered in \cite{pszz} fits into the general framework 
of Section \ref{s_models}. 
We start by introducing the relevant finitary context. 

Let $X$ and $A$ be finite sets. 
The C*-algebra $\cl A_{\rm POVM}$ was introduced in \cite{pszz} as the 
universal C*-algebra of a family of POVM's over the set $A$ with $|X|$ elements, that is,
the unital C*-algebra generated by positive elements $\tilde{e}_{x,a}$, $x\in X$, $a\in A$, 
satisfying the relations $\sum_{a\in A} \tilde{e}_{x,a} = 1$, $x\in X$, such that 
whenever $(P_{x,a})_{a\in A}$ is a POVM acting on the Hilbert space $H$, $x\in X$,
there exists a unique *-representation $\pi : \cl A_{\rm POVM}\to \cl B(H)$ such that 
$\pi(\tilde{e}_{x,a}) = P_{x,a}$, $x\in X$, $a\in A$. 
In the next proposition, we identify a concrete description of $\cl A_{\rm POVM}$. 
Recall the C*-algebra $\cl C_{X,A}$ from Subsection \ref{ss_QNS} and 
let $\tilde{\cl A}_{X,A}$ be its C*-subalgebra, generated by the elements 
$e_{x,x,a,a}$, $x\in X$, $a\in A$. 
Let 
$$\tilde{\cl S}_{X,A} = {\rm span}\{e_{x,x,a,a} : x\in X, a\in A\},$$
viewed as an operator subsystem of $\tilde{\cl A}_{X,A}$. 

Further, let $\cl A_{X,A}$ be the universal C*-algebra, generated by projections $e_{x,a}$, $x\in X$, $a\in A$, 
satisfying the relations $\sum_{a\in A} e_{x,a} = 1$, $x\in X$;
the C*-algebra $\cl A_{X,A}$ satisfies the analogous universal 
property to the one described in the previous paragrph for 
$\tilde{\cl A}_{X,A}$ but with $(P_{x,a})_{a\in A}$ being PVM's as 
opposed to POVM's. 
Let 
\begin{equation}\label{eq_defSXA0}
\cl S_{X,A} = {\rm span}\{e_{x,a} : x\in X, a\in A\}, 
\end{equation}
viewed as an operator subsystem of $\cl A_{X,A}$ (see e.g. \cite{lmprsstw}).

\begin{proposition}\label{p_APOVM}
\begin{itemize}
\item[(i)]
There exists a *-isomorphism $\rho : \cl A_{\rm POVM} \to \tilde{\cl A}_{X,A}$, such that 
$\rho(\tilde{e}_{x,a}) = e_{x,x,a,a}$, $x\in X$, $a\in A$. 

\item[(ii)]
The map $e_{x,a}\mapsto \tilde{e}_{x,a}$ defines a unital 
complete order isomorphism $\cl S_{X,A}\cong \tilde{\cl S}_{X,A}$.

\item[(iii)]
Up to a canonical *-isomorphism, $C^*_u(\cl S_{X,A}) = \tilde{\cl A}_{X,A}$.
\end{itemize}
\end{proposition}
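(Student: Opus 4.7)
The plan is to handle the three items in order: (i) via the C*-algebraic universal properties of $\cl A_{\rm POVM}$ and $\cl C_{X,A}$, (ii) by upgrading to the operator system level using the known fact that $\cl S_{X,A}$ is the universal operator system for POVMs, and (iii) as an essentially immediate consequence of the first two.

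For (i), I will construct mutually inverse $*$-homomorphisms. Inside $\tilde{\cl A}_{X,A}$, the diagonal family $(e_{x,x,a,a})_{a\in A}$ is a POVM for each $x\in X$: positivity of each $e_{x,x,a,a}$ is inherited from the positivity of the full block matrix $(e_{x,x',a,a'})_{x,x',a,a'}$ as a principal diagonal entry, while $\sum_a e_{x,x,a,a}=1$ is the $x=x'$ case of the trace identity used to define $\cl C_{X,A}$. The universal property of $\cl A_{\rm POVM}$ then yields a $*$-homomorphism $\rho:\cl A_{\rm POVM}\to \tilde{\cl A}_{X,A}$ with $\rho(\tilde e_{x,a})=e_{x,x,a,a}$. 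Conversely, the block-diagonal matrix $E_{x,x',a,a'}:=\delta_{x,x'}\delta_{a,a'}\tilde e_{x,a}$ is a stochastic operator matrix in $\cl A_{\rm POVM}$ (positive as a block-diagonal matrix with positive diagonal entries, and satisfying $\sum_a E_{x,x',a,a}=\delta_{x,x'}\cdot 1$ by the POVM normalisation), so the universal property of $\cl C_{X,A}$ produces $\sigma:\cl C_{X,A}\to \cl A_{\rm POVM}$ whose restriction to $\tilde{\cl A}_{X,A}$ inverts $\rho$ on generators.

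For (ii), I will invoke the result that $\cl S_{X,A}$ is the universal operator system for $|X|$ POVMs of cardinality $|A|$ (see \cite{lmprsstw} and the references therein): unital completely positive maps $\cl S_{X,A}\to \cl B(H)$ correspond bijectively to families $((P_{x,a})_{a\in A})_{x\in X}$ of POVMs on $H$ via $e_{x,a}\mapsto P_{x,a}$. Applied to $((e_{x,x,a,a})_{a\in A})_{x\in X}$ in $\tilde{\cl A}_{X,A}$, this produces a ucp map $\psi:\cl S_{X,A}\to \tilde{\cl A}_{X,A}$ landing in $\tilde{\cl S}_{X,A}$. For the converse direction, the PVM $(e_{x,a})_{a\in A}$ in $\cl A_{X,A}$ is in particular a POVM, so the universal property of $\cl A_{\rm POVM}$ yields a $*$-homomorphism $\cl A_{\rm POVM}\to \cl A_{X,A}$ sending $\tilde e_{x,a}\mapsto e_{x,a}$; composing with $\rho^{-1}$ from (i) and restricting to $\tilde{\cl S}_{X,A}$ delivers a ucp map $\tilde{\cl S}_{X,A}\to \cl S_{X,A}$ with $e_{x,x,a,a}\mapsto e_{x,a}$. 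These ucp maps are mutual set-theoretic inverses on the spanning generators, hence unital complete order isomorphisms.

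For (iii), it suffices to verify that the inclusion $\cl S_{X,A}\cong \tilde{\cl S}_{X,A}\hookrightarrow \tilde{\cl A}_{X,A}$ has the universal cover property. Generation of $\tilde{\cl A}_{X,A}$ by $\tilde{\cl S}_{X,A}$ is immediate from its definition. For the extension property, any ucp map $\phi:\cl S_{X,A}\to \cl B(H)$ arises from a family of POVMs by the result used in (ii); this family extends, via the universal property of $\cl A_{\rm POVM}$ together with the isomorphism $\rho$ of (i), to a $*$-representation of $\tilde{\cl A}_{X,A}$ that restricts to $\phi$ on $\tilde{\cl S}_{X,A}$, with uniqueness automatic from the generation property. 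The main obstacle throughout is the operator system universal property of $\cl S_{X,A}$ for POVMs invoked in (ii) and (iii); the C*-algebraic identification (i) is a straightforward application of the two given universal properties, but upgrading to the operator system level requires this genuinely operator-system-theoretic input from the literature.
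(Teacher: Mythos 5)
Your proof is correct and follows essentially the same route as the paper: (i) via the interplay of the two universal properties, (ii) by invoking the known fact that $\cl S_{X,A}$ is the universal operator system for POVM families (the paper cites \cite{pt} for the Arveson--Stinespring argument you attribute to \cite{lmprsstw}), and (iii) by combining these. Your presentation of (i) as a pair of mutually inverse $*$-homomorphisms is a cosmetic variant of the paper's verification that $\tilde{\cl A}_{X,A}$ satisfies the defining universal property of $\cl A_{\rm POVM}$; the one technicality you should note (or silently handle by picking a faithful representation) is that the universal properties of $\cl A_{\rm POVM}$ and $\cl C_{X,A}$ are stated for Hilbert-space-valued data, whereas your target objects $E_{x,x',a,a'}=\delta_{x,x'}\delta_{a,a'}\tilde e_{x,a}$ and $e_{x,a}$ live in abstract C*-algebras, so one passes through a faithful representation or uses the standard C*-algebraic extension of those universal properties.
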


\begin{proof}
(i) 
We show that the C*-algebra $\tilde{\cl A}_{X,A}$ satisfies the universal property 
of $\cl A_{\rm POVM}$. 
Clearly, $\{e_{x,x,a,a}\}_{a\in A}$ is a POVM in $\tilde{\cl A}_{X,A}$, $x\in X$. 
Suppose that $(P_{x,a})_{a\in A}$, $x\in X$, are POVM's acting on the Hilbert space $H$. 
Let $E_{x,x',a,a'} := \delta_{x,x'} P_{x,a}$, $x,x'\in X$, $a,a'\in A$; then 
$E := (E_{x,x',a,a'})_{x,x',a,a'}$ is a stochastic operator matrix and, by the universal property 
of $\cl C_{X,A}$, there exists a unital *-homomorhism $\pi : \cl C_{X,A}\to \cl B(H)$, such that 
$\pi(e_{x,x',a,a'}) = E_{x,x',a,a'}$, $x,x'\in X$, $a,a'\in A$. 
The restriction $\rho = \pi|_{\tilde{\cl A}_{X,A}}$ of $\pi$ to $\tilde{\cl A}_{X,A}$
is a *-representation with the property that $\rho(e_{x,x,a,a}) = P_{x,a}$, $x\in X$, $a\in A$. 
Since the elements $e_{x,x,a,a}$ generate $\tilde{\cl A}_{X,A}$, such a representation is unique.

(ii) 
By (i), the families $E = \{(E_{x,a})_{a\in A} : x\in X\}$ of POVM's
acting on a Hilbert space $H$ are in bijective corresponence
with the unital completely positive maps 
$\phi_E : \tilde{\cl S}_{X,A} \to \cl B(H)$ 
via the assignment $\phi_E(\tilde{e}_{x,a}) = E_{x,a}$. 
A combination of Arveson's Extension Theorem and 
Stinespring's Dilation Theorem shows that the same 
universal property holds for $\cl S_{X,A}$ (see e.g.
\cite[p. 680]{pt}). The conclusion follows.

(iii) 
Let $H$ be a Hilbert space and $\phi : \cl S_{X,A}\to \cl B(H)$ be a 
completely positive map. Then $(\phi(e_{x,a}))_{a\in A}$ is a POVM, $x\in X$.
By (i), there exists a unital *-homomoprhism $\pi : \tilde{\cl A}_{X,A}\to \cl B(H)$, such that 
$\pi(e_{x,a}) = \phi(e_{x,a})$, $x\in X$, $a\in A$. 
The proof is complete. 
\end{proof}

It was shown in \cite[Lemma 2.8]{pt} that 
$$\cl S_{X,A}\otimes_{\rm c} \cl S_{Y,B} \subseteq \cl A_{X,A}\otimes_{\max} \cl A_{Y,B}$$
as an operator subsystem. 
The next corollary complements this fact.

\begin{corollary}\label{c_cincl}
Up to a canonical complete order embedding, 
$\cl S_{X,A}\otimes_{\rm c} \cl S_{Y,B} \subseteq 
\cl T_{X,A}\otimes_{\rm c} \cl T_{Y,B}$.
\end{corollary}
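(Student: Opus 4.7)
The plan is to exploit the universal property of the commuting tensor product together with the identification $\cl S_{X,A}\cong \tilde{\cl S}_{X,A}\subseteq \cl T_{X,A}$ from Proposition \ref{p_APOVM}~(ii). Concretely, there are unital complete order embeddings
$$\iota_A:\cl S_{X,A}\hookrightarrow \cl T_{X,A}, \ e_{x,a}\mapsto e_{x,x,a,a}, \quad \iota_B:\cl S_{Y,B}\hookrightarrow \cl T_{Y,B}, \ e_{y,b}\mapsto e_{y,y,b,b}.$$
Since $\iota_A$ and $\iota_B$ are UCP and their images sit inside commuting copies (as UCP maps into the two tensor legs have commuting ranges by construction), the universal property of $\otimes_{\rm c}$ gives a canonical UCP map $\iota_A\cdot\iota_B:\cl S_{X,A}\otimes_{\rm c}\cl S_{Y,B}\to \cl T_{X,A}\otimes_{\rm c}\cl T_{Y,B}$. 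It remains to show that this map is a complete order embedding, i.e., that if $z\in M_n(\cl S_{X,A}\otimes \cl S_{Y,B})$ satisfies $(\iota_A\cdot\iota_B)^{(n)}(z)\geq 0$, then $z\geq 0$ in $M_n(\cl S_{X,A}\otimes_{\rm c}\cl S_{Y,B})$.

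By the universal characterisation of $\otimes_{\rm c}$, it suffices to verify that $(\phi\cdot\psi)^{(n)}(z)\geq 0$ for every Hilbert space $H$ and every pair $(\phi,\psi)$ of UCP maps with commuting ranges, where $\phi:\cl S_{X,A}\to \cl B(H)$ and $\psi:\cl S_{Y,B}\to \cl B(H)$. The strategy is to lift each such pair to a UCP pair $(\tilde\phi,\tilde\psi)$ on $(\cl T_{X,A},\cl T_{Y,B})$ with commuting ranges such that $\tilde\phi\circ\iota_A = \phi$ and $\tilde\psi\circ\iota_B = \psi$. Once this is achieved, the universal property produces $\tilde\phi\cdot\tilde\psi:\cl T_{X,A}\otimes_{\rm c}\cl T_{Y,B}\to\cl B(H)$ with $(\tilde\phi\cdot\tilde\psi)\circ(\iota_A\cdot\iota_B)=\phi\cdot\psi$, and the positivity of $(\iota_A\cdot\iota_B)^{(n)}(z)$ transports to the positivity of $(\phi\cdot\psi)^{(n)}(z)$.

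For the lift, set $P_{x,a}:=\phi(e_{x,a})$ and $Q_{y,b}:=\psi(e_{y,b})$; by Proposition \ref{p_APOVM}~(ii), these are POVMs on $H$, and by assumption $[P_{x,a},Q_{y,b}]=0$ for all indices. I will then define the \emph{diagonal} stochastic operator matrices
$$E_{x,x',a,a'}:=\delta_{x,x'}\delta_{a,a'}P_{x,a}, \qquad F_{y,y',b,b'}:=\delta_{y,y'}\delta_{b,b'}Q_{y,b}.$$
A direct check shows $E=\sum_{x,a}\epsilon_{x,x}\otimes\epsilon_{a,a}\otimes P_{x,a}\geq 0$ in $M_X\otimes M_A\otimes\cl B(H)$ with $\tr_A E = I_X\otimes I_H$, so $E$ is a SOM, and likewise for $F$; the universal property of $\cl C_{X,A}$ therefore yields $\tilde\phi:\cl T_{X,A}\to\cl B(H)$ with $\tilde\phi(e_{x,x',a,a'})=E_{x,x',a,a'}$, and analogously $\tilde\psi$. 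Clearly $\tilde\phi\circ\iota_A=\phi$ and $\tilde\psi\circ\iota_B=\psi$.

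The key point — and the only non-bookkeeping step — is verifying that $\tilde\phi$ and $\tilde\psi$ have commuting ranges. This reduces to the elementary computation
$$E_{x,x',a,a'}F_{y,y',b,b'}=\delta_{x,x'}\delta_{a,a'}\delta_{y,y'}\delta_{b,b'}P_{x,a}Q_{y,b}=F_{y,y',b,b'}E_{x,x',a,a'},$$
which uses only the commutation $[P_{x,a},Q_{y,b}]=0$. Thus the images of $\tilde\phi$ and $\tilde\psi$ are contained in mutually commuting subalgebras of $\cl B(H)$, completing the lift and hence the proof. The main subtlety to watch is precisely that the diagonal SOM extension preserves commutativity of ranges; had one attempted a more symmetric extension such as $E_{x,x',a,a'}=P_{x,a}^{1/2}P_{x',a'}^{1/2}$, the commuting-ranges property would no longer be automatic, so the choice of the purely diagonal extension is essential.
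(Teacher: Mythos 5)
Your proof is correct and is essentially the paper's proof in disguise: the lift $\tilde\phi$ you construct via the diagonal SOM is exactly $\phi\circ\gamma_{X,A}$, where $\gamma_{X,A}:\cl T_{X,A}\to\cl S_{X,A}$, $e_{x,x',a,a'}\mapsto\delta_{x,x'}\delta_{a,a'}e_{x,a}$, is the UCP ``conditional expectation'' the paper introduces once and for all and then precomposes with an arbitrary $\phi$ (the identity $\gamma_{X,A}\circ\iota_{X,A}=\id$ playing the role of your relation $\tilde\phi\circ\iota_A=\phi$). The only nit is notational: the map on the tensor product should be written $\iota_A\otimes\iota_B$ rather than $\iota_A\cdot\iota_B$, since $\iota_A$ and $\iota_B$ map into different operator systems; their ranges only commute after embedding in $C^*_u(\cl T_{X,A})\otimes_{\max}C^*_u(\cl T_{Y,B})$, which is indeed what you implicitly use.
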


\begin{proof}
Using Proposition \ref{p_APOVM} (ii), 
let $\iota_{X,A} : \cl S_{X,A}\to \cl T_{X,A}$ be the inclusion map, and 
$\gamma_{X,A} : \cl T_{X,A}\to \cl S_{X,A}$ be the unital completely positive map, 
given by 
$$\gamma_{X,A}(e_{x,x',a,a'}) = \delta_{x,x'} \delta_{a,a'} e_{x,a}, \ \ \ 
x\in X, a\in A;$$
note that 
$\gamma_{X,A}\circ\iota_{X,A} = \id\hspace{-0.05cm}\mbox{}_{\cl S_{X,A}}$.
We have that 
$\iota := \iota_{X,A}\otimes \iota_{Y,B}$ is a unital 
completely positive map from 
$\cl S_{X,A}\otimes_{\rm c} \cl S_{Y,B}$ into 
$\cl T_{X,A}\otimes_{\rm c} \cl T_{Y,B}$. We show that $\iota$ is a complete order isomorphism 
onto its range. Suppose that $u\in M_n(\cl S_{X,A}\otimes \cl S_{Y,B})$
is such that $\iota^{(n)}(u)\in M_n(\cl T_{X,A}\otimes_{\rm c} \cl T_{Y,B})^+$.
Let $H$ be a Hilbert space, and 
$\phi : \cl S_{X,A}\to \cl B(H)$ and $\psi : \cl S_{Y,B}\to \cl B(H)$
be unital completely positive maps with commuting ranges. 
Then the maps $\phi\circ \gamma_{X,A} : \cl T_{X,A}\to \cl B(H)$ and 
$\psi\circ \gamma_{Y,B} : \cl T_{Y,B}\to \cl B(H)$ are unital and completely positive, 
and have commuting ranges. It follows that 
$$
(\phi\cdot\psi)^{(n)}(u) 
= \left((\psi\circ \gamma_{Y,B})\cdot (\psi\circ \gamma_{Y,B})\right)^{(n)}
\left(\iota^{(n)}(u)\right) \in M_n(\cl B(H))^+.
$$
The proof is complete. 
\end{proof}

Fix further finite sets $Y$ and $B$. 
A quantum commuting model for the 
finatary context $(\cl S_{X,A},\cl S_{Y,B})$ 
is thus a tuple 
$S = (H, (E_{x,a})_{x,a}, (F_{y,b})_{y,b}, \xi)$,
where $H$ is a Hilbert space, $\xi\in H$ is a unit vector, 
and $(E_{x,a})_{a\in A}$ (resp. $(F_{y,b})_{b\in B}$) is a
POVM on $H$ for every $x\in X$ (resp. $y\in Y$) such that 
$E_{x,a}F_{y,b} = F_{y,b}E_{x,a}$ for all $x,y,a,b$; 
such a model will be referred to as a \emph{POVM qc-model}. 
The model $S$ gives rise, via (\ref{eq_pij}), to 
the correlation $p_S$
\emph{of quantum commuting type}, given by 
$$p_S(a,b|x,y) = \langle E_{x,a}F_{y,b}\xi,\xi\rangle,
\ \ x\in X, y\in Y,a\in A,b\in B;$$
we note that $p = p_S$ is a \emph{no-signalling correlation}
over the quadruple $(X,Y,A,B)$ in that 
$(p(a,b|x,y))_{a,b}$ is a probability distribution over $A\times B$ for every 
$(x,y)\in X\times Y$, and
$$\sum_{b\in B} p(a,b|x,y) = \sum_{b\in B} p(a,b|x,y'), \ \ x\in X, y,y'\in Y, a\in A,$$
and 
$$\sum_{a\in A} p(a,b|x,y) = \sum_{a\in A} p(a,b|x',y), \ \ x,x'\in X,  y\in Y, b\in B$$
(see e.g. \cite{lmprsstw, psstw}). 
\emph{POVM q-models} of no-signalling correlations are defined analogously, using tensor products of POVM's acting on finite dimensional 
Hilbert spaces. 
We denote the (convex) set of all NS correlations by $\cl C_{\rm ns}$.
With a correlation $p \in \cl C_{\rm ns}$, we associate the classical information channel 
$\cl N_p : \cl D_{XY}\to \cl D_{AB}$, given by 
\begin{equation}\label{eq_Gammap}
\cl N_p(\epsilon_{x,x}\otimes\epsilon_{y,y}) = \sum_{a\in A}\sum_{b\in B} p(a,b|x,y)
\epsilon_{a,a}\otimes\epsilon_{b,b},
\end{equation}
and the quantum information channel 
$\Gamma_p : M_{XY}\to M_{AB}$, given by 
$\Gamma_p = \iota_{AB} \circ \cl N_p\circ \Delta_{XY}$,
where $\iota_{AB} : \cl D_{AB}\to M_{AB}$ is the 
inclusion map and $\Delta_{XY} : M_{XY}\to \cl D_{XY}$
is the canonical conditional expectation;
it can be easily verified that $\Gamma_p$ is a QNS correlation. 
We let 
$\cl C_{\rm qc} = \{p_S : S\mbox{ a POVM qc-model}\}$ 
(resp. 
$\cl C_{\rm q} =$ $\{p_S : S\mbox{ a POVM}$ $\mbox{q-model}\}$)
be the set of \emph{quantum commuting}
(resp. \emph{quantum}) NS correlations, and
note the inclusions
$\cl C_{\rm q}\subseteq \cl C_{\rm qc}\subseteq \cl C_{\rm ns}$.



In view of Corollary \ref{c_cincl}, a state 
$f : \cl T_{X,A}\otimes_{\rm c}\cl T_{Y,B}\to \bb{C}$
gives rise, via restriction, to a state 
$f_{\rm cl} : \cl S_{X,A}\otimes_{\rm c} \cl S_{Y,B}\to \bb{C}$. 
In the reverse direction, a state
$g : \cl S_{X,A}\otimes_{\rm c} \cl S_{Y,B}\to \bb{C}$
gives rise to the state 
$g_{\rm q} : \cl T_{X,A}\otimes_{\rm c}\cl T_{Y,B}\to \bb{C}$
by letting $g_{\rm q} = g\circ (\gamma_{X,A}\otimes_{\rm c} \gamma_{Y,B})$. 

A model $S = (H,\psi_A,\psi_B,\xi)$ of the state
$f : \cl T_{X,A}\otimes_{\rm c}\cl T_{Y,B}\to \bb{C}$
for the bipartite quantum system $(\cl T_{X,A},\cl T_{Y,B})$ gives rise to 
the model 
$S_{\rm cl} = (H,\nph_A,\nph_B,\xi)$ of the state
$f_{\rm cl} : \cl S_{X,A}\otimes_{\rm c} \cl S_{Y,B}\to \bb{C}$
for the bipartite quantum system $(\cl S_{X,A},\cl S_{Y,B})$, by 
letting $\nph_A = \psi_A\circ \iota_{X,A}$ and $\nph_B = \psi_B\circ \iota_{Y,B}$. 
In the reverse direction, a model 
$N = (H,\nph_A,\nph_B,\xi)$ for a state 
$g : \cl S_{X,A}\otimes_{\rm c} \cl S_{Y,B}\to \bb{C}$, 
gives rise to the model 
$N_{\rm q} = (H,\psi_A,\psi_B,\xi)$ for the state
$g_{\rm q}$ of $\cl T_{X,A}\otimes_{\rm c} \cl T_{Y,B}$.
We note that $(N_{\rm q})_{\rm cl} = N$.

\begin{proposition}\label{p_impliesst}
Let $\frak{M}$ be a set of SOM qc-models for the finitary context
$(\cl T_{X,A},\cl T_{Y,B})$, set
$\frak{M}_{\rm cl} = \{S_{\rm cl} : S\in \frak{M}\}$ and assume that 
$(\frak{M}_{\rm cl})_{\rm q}\subseteq \frak{M}$. 
If a state $f$ of $\cl T_{X,A}\otimes_{\rm c} \cl T_{Y,B}$ is a self-test for $\frak{M}$
with the property that 
$f = f\circ \iota_{X,A}\circ\gamma_{X,A}$
then the state $f_{\rm cl}$ of $\cl S_{X,A}\otimes_{\rm c} \cl S_{Y,B}$ 
is a self-test for $\frak{M}_{\rm cl}$.
\end{proposition}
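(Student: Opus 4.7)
The plan is to take $\tilde S_{\rm cl}$ as the candidate ideal model for $f_{\rm cl}$, where $\tilde S = (\tilde H, \tilde\psi_A, \tilde\psi_B, \tilde\xi)\in \frak M$ is an ideal model of $f$, and to transport any candidate dilation for $(f_{\rm cl},\frak M_{\rm cl})$ back to one for $(f,\frak M)$ through the lift $N\mapsto N_{\rm q}$ afforded by the assumption $(\frak M_{\rm cl})_{\rm q}\subseteq \frak M$. First I would check $f_{\tilde S_{\rm cl}} = f_{\rm cl}$: on a generator $e_{x,a}\otimes e_{y,b}$ of $\cl S_{X,A}\otimes_{\rm c}\cl S_{Y,B}$ this unfolds to $f_{\tilde S}(\iota_{X,A}(e_{x,a})\otimes \iota_{Y,B}(e_{y,b})) = f_{\tilde S}(e_{x,x,a,a}\otimes e_{y,y,b,b}) = f_{\rm cl}(e_{x,a}\otimes e_{y,b})$, using the embedding of Corollary \ref{c_cincl}.

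Next, I would fix an arbitrary $N = (H, \nph_A, \nph_B, \xi)\in \frak M_{\rm cl}$ with $f_N = f_{\rm cl}$ and form $N_{\rm q}\in \frak M$. The key correlation identification $f_{N_{\rm q}} = f$ I would verify on the generators $e_{x,x',a,a'}\otimes e_{y,y',b,b'}$ of $\cl T_{X,A}\otimes_{\rm c}\cl T_{Y,B}$: since $\gamma_{X,A}\otimes\gamma_{Y,B}$ sends such a generator to $\delta_{x,x'}\delta_{a,a'}\delta_{y,y'}\delta_{b,b'}\, e_{x,a}\otimes e_{y,b}$, the functional $f_{N_{\rm q}}$ annihilates off-diagonal generators and agrees with $f_N = f_{\rm cl}$ on diagonal ones; the standing hypothesis $f = f\circ(\iota_{X,A}\gamma_{X,A}\otimes \iota_{Y,B}\gamma_{Y,B})$ supplies the same reduction for $f$, so the two functionals coincide. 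The self-test property of $f$ in $\frak M$ then yields a local isometry $V : H\to \tilde H\otimes H_{\rm aux}$ realising $N_{\rm q}\preceq \tilde S$; substituting $u = \iota_{X,A}(s)$, $v = \iota_{Y,B}(t)$ into the defining identity for this dilation and applying $\gamma_{X,A}\iota_{X,A} = \id_{\cl S_{X,A}}$ and $\gamma_{Y,B}\iota_{Y,B} = \id_{\cl S_{Y,B}}$ collapses it to $V\nph_A(s)\nph_B(t)\xi = (\tilde\psi_A\iota_{X,A})(s)(\tilde\psi_B\iota_{Y,B})(t)\tilde\xi\otimes\xi_{\rm aux}$, which is exactly the dilation identity for $N\preceq \tilde S_{\rm cl}$.

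The one place to exercise care is the bookkeeping around the ambient von Neumann algebras $\cl A,\cl B$ used to define a local isometry: one needs the same $V$ to qualify as a local isometry for both $N_{\rm q}\preceq \tilde S$ and $N\preceq \tilde S_{\rm cl}$. Because $\gamma_{X,A}$ is surjective onto $\cl S_{X,A}$, the algebras generated by $\nph_A\circ\gamma_{X,A}$ and $\nph_A$ coincide, and likewise on the $B$ and ideal sides, so a single consistent choice of observable algebras serves both pictures. I expect this to be the only genuine subtlety; after it is addressed, the rest of the argument is a routine matter of chasing restrictions through the dual pair $(\iota_{X,A},\gamma_{X,A})$.
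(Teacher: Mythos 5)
Your proposal is correct and takes essentially the same route as the paper's proof: lift an arbitrary model $N\in\frak M_{\rm cl}$ of $f_{\rm cl}$ to $N_{\rm q}\in\frak M$, use the hypothesis $f=f\circ\iota_{X,A}\circ\gamma_{X,A}$ to see that $N_{\rm q}$ is a model of $f$, invoke the self-test to obtain $N_{\rm q}\preceq\tilde S$, and then observe that the same local isometry witnesses $N\preceq\tilde S_{\rm cl}$. The extra care you took in verifying $f_{N_{\rm q}}=f$ on generators and in checking that the observable algebras match is sound and simply makes explicit what the paper leaves to the reader.
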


\begin{proof}
Let $\tilde{S} = (\tilde{H},\tilde{\psi}_A, \tilde{\psi}_B,\tilde{\xi})$ be an ideal model 
for the state $f$ of $\cl T_{X,A}\otimes_{\rm c} \cl T_{Y,B}$ in the set 
$\frak{M}$. 
We show that $\tilde{S}_{\rm cl}$ is an ideal model for $f_{\rm cl}$. To this end,
suppose that $S = (H,\nph_A, \nph_B,\xi)$ is a POVM qc-model of $f_{\rm cl}$ that belongs to 
$\frak{M}_{\rm cl}$. 
Since $f = f\circ \iota_{X,A}\circ\gamma_{X,A}$, we have that 
$S_{\rm q}$ is a model for $f$, and since
$(\frak{M}_{\rm cl})_{\rm q}\subseteq \frak{M}$, it belongs to $\frak{M}$. 
Thus, $S_{\rm q}\preceq \tilde{S}$. 
If $V$ is a local isometry that implement the latter relation, 
we have that $V$ implements the relation
$(S_{\rm q})_{\rm cl}\preceq \tilde{S}_{\rm cl}$, that is, 
$S\preceq \tilde{S}_{\rm cl}$. 
\end{proof}

\begin{remark}\label{r_POVMST}
\rm 
A special case of POVM self-testing is PVM self-testing \cite{pszz}. 
In the latter setup, 
the pool of models of a given no-signalling correlation 
of quantum commuting type is restricted in that the 
participating measurements are PVM's. 
We note that PVM self-testing also fits in our general framework.
Indeed, here the families of states $s : C^*_u(\cl S_{X,A})\otimes_{\max} C^*_u(\cl S_{Y,B})$ to be selftested is restricted to 
ones that factor from the quotient map
$$C^*_u(\cl S_{X,A})\otimes_{\max} C^*_u(\cl S_{Y,B})\mapsto 
\cl A_{X,A}\otimes_{\max} \cl A_{Y,B},$$
whose existence is guaranteed by Proposition \ref{p_APOVM}.
\end{remark}


In view of Proposition \ref{p_impliesst}, it is natural to ask 
if, in general, every PVM self-test canonically gives rise to a 
SOM self-test. In the next example, we show that 
the standard self-test within the family 
$\cl C_{\rm q}$, arising from the CHSH game, does not canonically give 
rise to a self-test within the family $\cl Q_{\rm q}$. 

\begin{example}\label{CHSH_Cq_Qq}
\rm 
Recall the CHSH game \cite{chsh}; here $X=Y=A=B={\mathbb Z}_2=\{0,1\}$ 
and a quadruple $(x,y,a,b) \in X\times Y \times A\times B$ 
belongs to the support of the rule function precisely when
$xy=(a+b)\;(\text{mod } 2)$. 
Recall the Pauli matrices 
$$\sigma_x=\left(\begin{array}{cc}0&1\\1&0\end{array}\right) 
\ \mbox{ and } \ \sigma_z=\left(\begin{array}{cc}1&0\\0&-1\end{array}\right),$$
let 
$$A_0=\frac{\sigma_x+\sigma_z}{\sqrt{2}}, \ \ A_1=\frac{\sigma_x-\sigma_z}{\sqrt{2}}, \ \
B_0=\sigma_x \ \mbox{ and } \ B_1=\sigma_z,$$  
and write $(E_{x,a})_{a\in A}$ (resp. $(F_{y,b})_{b\in B}$) 
for the spectral resolution of the operator $A_x$ (resp. $B_y$). 
Write $\Omega_2=\frac{1}{\sqrt{2}}(e_0\otimes e_0+e_1\otimes e_1)$ for the maximally entangled vector in $\bb{C}^2\otimes \bb{C}^2$
(here 
$\{e_0,e_1\}$ denotes the standard basis of $\mathbb C^2$). 
It is well-known that the model 
\begin{equation}\label{eq_optimmoSt}
\wt{S} = (\mathbb C^2\otimes\mathbb C^2, (E_{x,a})_{x\in X,a\in A}, (F_{y,b})_{y\in Y,b\in B}, \Omega_2)
\end{equation}
yields a correlation $p_{\wt{S}}\in \cl C_{\rm q}$ that is 
an optimal strategy of quantum type for the CHSH game and a self-test for the class 
$\cl C_{\rm q}$ with ideal model $\tilde S$ (see e.g. \cite{supic-bowles}
and \cite{mps}).  

Let $\tilde{E} =(\delta_{x,x'}\delta_{a,a'}E_{x,a})_{x,x',a,a'}$ and 
$\tilde{F} = (\delta_{y,y'}\delta_{b,b'}F_{y,b})_{y,y',b,b'}$ be the canonical extensions of the families $\{(E_{x,a})_{a\in A} : x \in X\}$ 
(resp. $\{(F_{y,b})_{b\in B} : y\in Y\}$ to stochastic operator matrices and 
$\wt{S}_{\rm q} = (\mathbb C^2\otimes\mathbb C^2, \tilde{E}, \tilde{F}, \Omega_2)$; 
thus, $\wt{S}_{\rm q}$ is a model over $(\cl T_{X,A},\cl T_{Y,B})$, 
giving rise to a canonical element $f_{\wt{S}}$ of $\cl Q_{\rm q}$, 
namely, 
$$f_{\wt{S}}(e_{x,x',a,a'}\otimes e_{y,y',b,b'})=\delta_{x,x'}\delta_{a,a'}\delta_{y,y'}\delta_{b,b'}\langle (E_{x,a}\otimes E_{y,b})\Omega_2,\Omega_2\rangle.$$

We claim that $f_{\wt{S}}$ is not a self-test for $\cl Q_{\rm q}$. 
Indeed, let 
$\xi_{x,0}$, $\xi_{x,1}$,  $\eta_{y,0}$ and $\eta_{y,1}$ be unit vectors in the range of $E_{x,0}$, $E_{x,1}$, $F_{y,0}$ and $F_{y,1}$, respectively, 
and let
$V_x$, $W_y:\mathbb C^2\otimes\mathbb C^2\to\mathbb C^2\otimes\mathbb C^2$, 
be isometries, satisfying
$$V_x(\xi_{x,0}\otimes e_{0}) = \xi_{x,1}\otimes e_{1} \ \mbox{ and } \ 
W_y(\eta_{y,0}\otimes e_0) = \eta_{y,1}\otimes e_1,$$ 
and
$$V_x(\xi_{x,k}\otimes e_m) = 0  \ \mbox{ and } \   W_y(\eta_{y,k}\otimes e_m) = 0,
\ \mbox{ if } (k,m)\ne (0,0).$$
Further, let 
$$(E_{x,x',a,b})_{a,b}=\delta_{x,x'}\left(\begin{array}{cc}E_{x,0}\otimes I_2&V_x^*\\V_x&E_{x,1}\otimes I_2\end{array}\right), \ \ x,x'\in X,$$
$$(F_{y,y',a,b})_{a,b}=\delta_{y,y'}\left(\begin{array}{cc}F_{y,0}\otimes I_2&W_y^*\\W_y&F_{y,1}\otimes I_2\end{array}\right), \ \ y,y'\in Y,$$
 and observe that 
 $E := (E_{x,x',a,a'})_{x,x',a,a'}$ and $F := (F_{y,y',b,b'})_{y,y',b,b'}$ are stochastic operator matrices.
Set $\xi = \Omega_2\otimes e_0\otimes e_0$ and
$S = \left(\bb{C}^2\otimes \bb{C}^2, E,F, \right)$.
We have that 
 $$\langle (E_{x,x',a,a'}\otimes F_{y,y',b,b'})\xi,\xi\rangle=\delta_{x,x'}\delta_{y,'y}\delta_{a,a'}\delta_{b,b'}\langle (E_{x,a}\otimes F_{y,b})\Omega_2,\Omega_2\rangle,$$
that is, $S$ is a model of $f_{\wt{S}}$.

Suppose that $f_{\wt{S}}$ were a self-test for $\cl Q_{\rm q}$ 
with an ideal model 
$S^{\rm id} = \left(\bb{C}^2\otimes \bb{C}^2, (E_{x,x',a,a'}^{\rm id}, (F_{y,y',b,b'}^{\rm id}),\xi^{\rm id}\right)$.
Then, necessarily,
$$ (E_{x,x',a,a'}^{\rm id}\otimes F_{y,y',b,b'}^{\rm id})\xi^{\rm id} = \delta_{x,x'}\delta_{y,y'}\delta_{a,a'}\delta_{b,b'}(E_{x,x',a,a'}^{\rm ideal}\otimes F_{y,y',b,b'}^{\rm id})\xi^{\rm id}.$$
 As $(E_{x,x,a,a'}\otimes F_{y,y,b,b'})\xi\ne 0$ for $a\ne a'$, $b\ne b'$,  there are no isometries $V_A$, $V_B$ such that $(V_A\otimes V_B)(E_{x,x,a,a'}\otimes F_{y,y',b,b'})\xi=(E_{x,x',a,a'}^{\rm id}\otimes F_{y,y',b,b'}^{\rm id})\xi^{\rm id}\otimes\xi_{\rm aux}$ for a unit vector $\xi_{\rm aux}$, giving the statement.
\end{example}

\subsection{The CHSH game: quantum commuting self-tests}
\label{ss_CSHS-qcst}

It is known that the quantum value $\omega_{\rm q}({\rm CHSH})$ 
of the CHSH game coincides with its quantum commuting value $\omega_{\rm qc}({\rm CHSH})$ and that these are equal to $\frac{1}{2}+\frac{1}{2\sqrt{2}}$, with an optimal quantum 
strategy underlying the model $\tilde S$ given by (\ref{eq_optimmoSt}). 
As pointed out in Example \ref{CHSH_Cq_Qq}, 
the corresponding correlation $p_{\tilde S}$ is a self-test for the quantum PVM models with $\tilde S$ being an ideal model. In this subsection, we extend this by showing that $p_{\tilde S}$
determines an abstract self-test, as well as a self-test,  
for the class $\cl C_{\rm qc}$ of  quantum commuting POVM models.  
While Theorem \ref{th_stisabs} can be used to deduce the
former fact from the latter, we include a direct argument, showing
how the algebraic relations lying at the core of 
the fact that $p_{\tilde{S}}$ is a quantum abstract self-test can be extended to the commuting operator framework.  We follow a well-established route for the quantum case, 
generalising some of the constructions that appear in \cite{supic-bowles}.

\subsubsection{The correlation $p_{\tilde S}$ determines an abstract 
quantum commuting self-test.}

Let $X=Y=A=B=\mathbb Z_2$, 
$H$ be a Hilbert space, $\xi\in H$ be a unit vector, 
and $\{A_x\}_{x\in X}$ $\{B_y\}_{y\in Y}$ be families of selfadjoint unitary 
operators on $H$. 
Set $\cl A=\{A_x:x\in X\}''$
and $\cl B=\{B_y:y\in Y\}''$.
Let
$\{E_{x,a}\}_{a\in A}$ (resp. $\{F_{y,b}\}_{b\in B}$) be the 
spectral family of $A_x$ (resp. $B_y$), so that the relations
$A_x=E_{x,0}-E_{x,1}$ (resp. $B_y=F_{y,0}-F_{y,1}$) are 
satisfied and, writing 
$$S=(\mbox{}_{\cl A}H_{\cl B^{o}}, (E_{x,a})_{x\in X,a\in A},(F_{y,b})_{y\in Y,b\in B},\xi),$$ assume that 
$p:=p_S = p_{\tilde{S}}$, that is,
\begin{equation}\label{eq_newmo}
p_{\tilde S}(a,b|x,y)=\langle E_{x,a}F_{y,b}\xi,\xi\rangle,
\ \ \ x,y,a,b\in \bb{Z}_2.
\end{equation}
Let
$\beta_{S}$  be the \emph{bias operator} of the model $S$, 
defined by letting
$$\beta_{S} = A_0 B_0+A_0B_1+A_1B_0-A_1 B_1;$$
using a straightforward calculation, (\ref{eq_newmo}) implies
$$\frac{1}{8}\langle\beta_{S}\xi,\xi\rangle+\frac{1}{2}=
\frac{1}{4}\sum_{xy = a+b} p_{\tilde S}(a,b|x,y)
=\frac{1}{2}+\frac{1}{2\sqrt{2}}.$$ 
Set 
\begin{equation}\label{eq_ZA=XA=}
Z_A=\frac{A_0+A_1}{\sqrt{2}}\text { and }X_A=\frac{A_0-A_1}{\sqrt{2}}
\end{equation}
and observe that 
\begin{equation}\label{eq_ZAXA+}
Z_AX_A+X_AZ_A=0.
\end{equation}
    Moreover, 
    $$2\sqrt{2}-\beta_{S} = \frac{1}{2}\left(\left[\frac{A_0+A_1}{\sqrt{2}}-B_0\right]^2+\left[\frac{A_0-A_1}{\sqrt{2}}- B_1\right]^2\right).$$
As $\langle \beta_{S}\xi,\xi\rangle=2\sqrt{2}$,  
we conclude that 
\begin{equation}\label{eq_ZAB0}
Z_A\xi=B_0\xi \text{ and } X_A\xi=B_1\xi,
\end{equation}
        giving, by (\ref{eq_ZAXA+}),
        \begin{eqnarray*}
            &&(B_0B_1+B_1B_0)\xi=\frac{(A_0-A_1)B_0+(A_0+A_1)B_1}{\sqrt{2}}\xi\\&&=\frac{((A_0-A_1)(A_0+A_1)+(A_0+A_1)(A_0-A_1)))}{\sqrt{2}}\xi=0.
        \end{eqnarray*}
By symmetry, we also have the relation  $(A_0A_1+A_1A_0)\xi=0$.
        

Let $\frak A=C^*(\mathbb Z_2\ast\mathbb Z_2)\otimes C^*(\mathbb Z_2\ast\mathbb Z_2)$ as a C*-algebraic tensor product 
(since the C*-algebra $C^*(\mathbb Z_2\ast\mathbb Z_2)$ is nuclear, the C*-tensor product is unabiguiusly defined); we have that 
$\frak A$ is the universal $C^*$-algebra, generated by selfadjoint unitaries $a_x$, $b_y$, satisfying the 
property $a_xb_y = b_y a_x$, $x,y\in \bb{Z}_2$. 
Thus, the families $(A_x)_{x\in X}$ and $(B_y)_{y\in Y}$ determine a (unique) *-representation $\pi$ of $\frak A$ 
via the relations 
$$\pi(a_x\otimes b_y) = A_xB_y, \ \ \ x,y\in\mathbb Z_2.$$
Moreover, any representation $\pi$ of $\frak A$ is determined by commuting pairs of selfadjoint unitaries $(A_x)_{x\in \mathbb Z_2}$, $(B_y)_{y\in \mathbb Z_2}$,  by letting $\pi(a_x\otimes 1)=A_x$ and  $\pi(1\otimes b_y)=B_y$. 

The irreducible representation of $\frak A$ are given by $\pi_A\otimes\pi_B$, where $\pi_A$ and $\pi_B$ are irreducible representations of $C^*(\mathbb Z_2\ast\mathbb Z_2)$; the latter are unitarily equivalent to one of the following (see e.g. \cite{os}): 

\begin{enumerate}
\item A continuum of two dimensional *-representations:
$$\pi_{\varphi}(a_0)=\left(\begin{array}{cc} \cos\varphi & \sin\varphi \\
\sin\varphi &-\cos\varphi \end{array}\right) \mbox{ and } 
\pi_{\varphi}(a_1)=\left(\begin{array}{cc} \cos\varphi&-\sin\varphi\\-\sin\varphi&-\cos\varphi\end{array}\right),$$
where $\varphi\in(0,\pi/2)$; 

\item Four one-dimensional *-representations: 
$$\pi_{i,j}(a_0)=(-1)^i, \quad \pi_{i,j}(a_1)=(-1)^j, 
\ \mbox{ where }  i,j\in\{0,1\}.$$
\end{enumerate}
We note the identities 
\begin{equation}\label{eq_a0a1}
\pi_{\varphi}(a_0a_1+a_1a_0) = 2(\cos^2\varphi - \sin^2\varphi) \text{ and }\pi_{i,j}(a_0a_1+a_1a_0)=2(-1)^{i+j}.
\end{equation}

Write $\widehat {\frak A}$ for the dual space of classes of irreducible representations of $\frak A$; thus, 
$$\widehat{\frak A} = 
\{\pi_{\varphi}\otimes\pi_{\psi}, \pi_{\varphi}\otimes\pi_{i,j}, \pi_{i,j}\otimes\pi_{\psi}, \pi_{i,j}\otimes\pi_{s,t} : $$ 
$$ \hspace{4cm} \varphi,\psi\in(0,\pi/2), i,j,s,t\in\{0,1\}\}.$$
As the pairs $(A_0,A_1)$ and $(B_0,B_1)$ determine a 
$*$-representation of $\frak A$ on $H$, by the structure result 
\cite[Theorem 8.6.6]{dixmier},
we obtain the existence of a measure $\mu$ on the 
Borel space $\widehat{\frak A}$ such that, up to unitary equivalence, 
$H=\int_{\widehat{\frak A}}^\oplus H_\pi\otimes \ell^2(m(\pi))d\mu(\pi)$, 
where $H_{\pi} = \bb{C}^2$ is the Hilbert space on which 
the irreducible representation $\pi$ acts, 
$$A_x=\int_{\widehat{\frak A}}^\oplus\pi(a_x\otimes 1)\otimes I_{m(\pi)}d\mu(\pi)
\ \mbox{ and } \ B_y=\int_{\widehat{\frak A}}^\oplus\pi(1\otimes b_y)\otimes I_{m(\pi)}d\mu(\pi).$$

Write $\xi=\int_{\widehat{\frak A}}^\oplus\xi(\pi)d\mu(\pi)$, where, 
$\xi(\pi)\in H_\pi\otimes\ell^2(m(\pi))$ for each $\pi\in \widehat{\frak A}$, and
$\pi\mapsto \xi(\pi)$ is a measurable field of vectors over 
$\widehat{\frak A}$.
Since $(A_0A_1+A_1A_0)\xi=0$  and  $(B_0B_1+B_1B_0)\xi=0$, 
identities (\ref{eq_a0a1}) imply that, if 
$\tau:=\pi_{\pi/4}\otimes\pi_{\pi/4}$ then 
the measure $\mu$ has $\{\tau\}$ as an atom, 
the function 
$\pi\mapsto \xi(\pi)$ is supported in the singleton
$\{\tau\}$, and  $H(\tau):=H_\tau\otimes \ell^2(m(\tau))$ is a direct summand in $H$, invariant with respect to $(A_0,A_1)$ and $(B_0,B_1)$. Furthermore, 
up to unitary equivalence, 
$$\tau(a_i) = \pi_{\pi/4}(a_i)\otimes I_2, \ 
\tau(b_0) = I_2\otimes\sigma_x \mbox{ and } 
\tau(b_1) = I_2\otimes\sigma_z,$$ 
so that, up to unitary equivalence, 
\begin{eqnarray*}
&&A_0|_{H(\tau)}= \frac{\sigma_x+\sigma_z}{\sqrt{2}}\otimes I_2\otimes I_{m(\tau)}, \ \ A_1|_{H(\tau)}= \frac{\sigma_z-\sigma_x}{\sqrt{2}}\otimes I_2\otimes I_{m(\tau)}\\
&&B_0|_{H(\tau)}=I_2\otimes\sigma_z\otimes I_{m(\tau)}
\ \mbox{ and } \ B_1|_{H(\tau)}= I_2\otimes\sigma_x\otimes I_{m(\tau)}.
\end{eqnarray*}

Recall that $\cl S_{X,A}$ is the operator subsystem of 
the C*-algebra $\cl A_{X,A}$,
with canonical generators thee projections $e_{x,a}$, $x\in X$, $a\in A$
(see Subsection \ref{ss_POVMNS}); 
in the case under consideration, 
$\cl A_{X,A} = \cl A_{Y,B} \simeq C^*(\mathbb Z_2\ast\mathbb Z_2)$. 
Let $f:\cl S_{X,A}\otimes_{\rm c}\cl S_{Y,B}\to\mathbb C$ be a state such that $f(e_{x,a}\otimes e_{y,b})=p_{\tilde S}(a,b|x,y)$, $x,y,a,b\in\mathbb Z_2$,  and let $g$ be a state on $\cl A_{X,A}\otimes_{\rm max} \cl A_{Y,B}$,
given by $g(u) = \langle\pi(u)\xi,\xi\rangle$ such that $g|_{\cl S_{X,A}\otimes_{\rm c}\cl S_{Y,B}}=f$.
By the previous paragraph,
$$g(u) = \langle (\tau(u)\otimes I_{m(\tau)})\xi(\tau),\xi(\tau)\rangle,  \ \ \ 
u\in \cl A_{X,A}\otimes_{\rm max} \cl A_{Y,B}.$$
We will show that $\xi(\tau) = \Omega_2\otimes\xi_{\rm aux}$, where $\Omega_2$ is the maximal entangled vector and $\xi_{\rm aux}\in \ell^2(m(\tau))$. A straightforward calculation shows that 
the identity
$$f(a_0\otimes b_0+a_0\otimes b_1+a_1\otimes b_0-a_1\otimes b_1)=2\sqrt{2}$$
implies that
        $$\frac{1}{2}
\left\langle((\sigma_x\otimes\sigma_x+\sigma_z\otimes\sigma_z)\otimes I_{m(\tau)})\xi(\tau),\xi(\tau)\right\rangle = 1.$$
The largest eigenvalue of the operator $\frac{1}{2}(\sigma_x\otimes\sigma_x+\sigma_z\otimes\sigma_z)$ 
is equal to 1 and has corresponding eigenspace the one-dimensional subspace spanned by $\Omega_2$.  
This implies that $\xi(\tau)=\Omega_2\otimes \xi_{\rm aux}$ for some $\xi_{\rm aux}\in \ell^2(m(\tau))$. Thus,
$$g(u) = \langle\tau(u)\Omega_2,\Omega_2\rangle, \ \ \ 
u\in \cl A_{X,A}\otimes_{\rm max} \cl A_{Y,B}.$$
This also finishes the proof that $f$ (equivalently $p$) is an abstract self-test for the class of states that factor through $\cl A_{X,A}\otimes_{\max}\cl A_{Y,B}$.  

\begin{remark} \label{optimal} 
We note that in proving that 
$p_{\tilde{\cl S}}$ is an abstract self-test 
we only used that the fact that $p_{\tilde{\cl  S}}$ is an optimal quantum commuting strategy for CHSH game. Since any $p\in\cl C_{\rm qc}$ admits a PVM model, in this way we show in particular the uniqueness of the optimal correlation in the class $\cl C_{\rm qc}$.
\end{remark}

\subsubsection{The correlation $p_{\tilde S}$ determines a 
quantum commuting self-test.}

We show that $p$ is a self-test for 
the class of quantum commuting PVM models. We start with some preparations. 
Let $P_A$ be the kernel of the operator $Z_A$ 
defined in (\ref{eq_ZA=XA=}) and set $\hat Z_A=(Z_A+P_A)|Z_A+P_A|^{-1}$ (a regularization of $Z_A$ in terms of \cite[A.2]{supic-bowles}), 
where $|T| := (T^*T)^{1/2}$ is the absolute value of an operator $T$. Note that $Z_A+P_A$ is an injective selfadjoint operator and hence $\hat Z_A$ is a unitary operator. We similarly let 
$\hat X_A=(X_A+Q_A)|X_A+Q_A|^{-1}$, where $Q_A$ be the projection onto 
the kernel of the operator $X_A$ defined in (\ref{eq_ZA=XA=}). 
Write 
$$z_A(\pi)=\left\{\begin{array}{ll}\sqrt{2}\cos\varphi\left(\begin{array}{cc}1&0\\0&-1\end{array}\right)\otimes I,& 
\pi = \pi_{\varphi}\otimes\pi_{\psi}\text{ or }\pi_{\varphi}\otimes\pi_{i,j}\\
((-1)^{i}+(-1)^j)/\sqrt{2}\otimes I,& \pi=\pi_{i,j}\otimes\pi_{\psi}\text{ or }\pi_{i,j}\otimes\pi_{s,t},
\end{array}\right.$$
$$ x_A(\pi)=\left\{\begin{array}{ll}\sqrt{2}\sin\varphi\left(\begin{array}{cc}0&1\\1&0\end{array}\right)\otimes I,
& \pi = \pi_{\varphi}\otimes\pi_{\psi}\text{ or }
\pi_{\varphi}\otimes\pi_{i,j}\\
((-1)^{i}-(-1)^j)/\sqrt{2}\otimes I,& \pi=\pi_{i,j}\otimes\pi_{\psi}\text{ or }\pi_{i,j}\otimes\pi_{s,t},
\end{array}\right.$$
$$\hat z_A(\pi)=\left\{\begin{array}{ll}\left(\begin{array}{cc}1&0\\0&-1\end{array}\right)\otimes I,& \pi=\pi_{\varphi}\otimes\pi_{\psi}\text{ or }\pi_{\varphi}\otimes\pi_{i,j}\\
\pm 1\otimes I,& \pi=\pi_{i,j}\otimes\pi_{\psi}\text{ or }\pi_{i,j}\otimes\pi_{s,t},
\end{array}\right.$$
and
$$\hat x_A(\pi)=\left\{\begin{array}{ll}\left(\begin{array}{cc}0&1\\1&0\end{array}\right)\otimes I,& \pi=\pi_{\varphi}\otimes\pi_{\psi}\text{ or }\pi_{\varphi}\otimes\pi_{i,j}\\
\pm 1\otimes I,& \pi=\pi_{i,j}\otimes\pi_{\psi}\text{ or }\pi_{i,j}\otimes\pi_{s,t},
\end{array}\right.$$
so that 
$$Z_A=\int_{\widehat{\frak A}}^\oplus z_A(\pi)\otimes I_{m(\pi)}d\mu(\pi),\quad  X_A=\int_{\widehat{\frak A}}^\oplus x_A(\pi)\otimes I_{m(\pi)}d\mu(\pi)$$
and 
$$\hat Z_A=\int_{\widehat{\frak A}}^\oplus\hat z_A(\pi)\otimes I_{m(\pi)}d\mu(\pi),\quad  \hat X_A=\int_{\widehat{\frak A}}^\oplus\hat x_A(\pi)\otimes I_{m(\pi)}d\mu(\pi).$$

In particular, as $\pi\mapsto\xi(\pi)$ is supported in 
$\{\tau\}$, by (\ref{eq_ZAB0}) we obtain
\begin{equation}\label{vectorequ}
\hat Z_A\xi= Z_A\xi=B_0\xi,\quad  \hat X_A\xi= X_A\xi=B_1\xi, \quad Z_A^2\xi=X_A^2\xi=\xi,
\end{equation}
and 
\begin{equation}\label{vectorequ2}
(\hat Z_A\hat X_A+\hat X_A\hat Z_A)\xi=0, \quad (B_0B_1+B_1B_0)\xi=0.
\end{equation}

We write $\{e_k^*\}_{k=0,1}$ for the dual basis of $\bb{C}^2$, 
considered as linear functionals on $\bb{C}^2$, given by $e_k^*(e_i)=\delta_{i,k}$. 
It will further be convenient we use the notation $e_k$ 
for the linear map $\mathbb C\to\mathbb C^2$, $1\mapsto e_k$, 
and note that $e_k$ is the adjoint $e_k^*$, $k = 0,1$. 
Let $V_{1,2} : H\mapsto \mathbb C^2\otimes H$ be the 
operator, given by 
\begin{equation}\label{eq_V_12de}
V_{1,2} = \frac{1}{2}(e_0\otimes (I+\hat Z_A) 
+ e_1\otimes \hat X_A(I-\hat Z_A)).
\end{equation}
Observe that, as $(1+\hat Z_A)/2$ is a projection and $\hat X_A$ is a unitary,  
for $\eta\in H$ we have 
\begin{eqnarray*}
\|V_{1,2}\eta\|^2
& = &
\frac{1}{4}\left(\left\|(I+\hat Z_A)\eta\right\|^2
+ \left\|\hat X_A(I-\hat Z_A)\eta\right\|^2\right)\\
& = &
\left\|\frac{1}{2}(I+\hat Z_A)\eta \right\|^2
+ \left\|\frac{1}{2}(I-\hat Z_A)\eta\right\|^2
=
\|\eta\|^2,
\end{eqnarray*}
that is, $V_{1,2}$ is an isometry.
Equipping  $\mathbb C^2\otimes H$ with the natural
$M_2(\mathbb C)\otimes \cl A$-$(M_2(\mathbb C)\otimes \cl B)^{o}$-bimodule action, we claim that $V_{1,2}$ is 
$\cl A$-$(M_2(\mathbb C)\otimes\cl A)$-local.

Set $P_{i,A} = \hat X_A^i(I + (-1)^i\hat Z_A)/2$. 
Then $P_{i,A}\in\cl A$. For $\eta\in \bb{C}^2\otimes H$ and $\psi\in H$, we have
$$\langle V_{1,2}^*\eta,\psi\rangle
= \langle \eta,V_{1,2}\psi\rangle
= \left\langle \eta,\sum_{i=0}^1(e_i\otimes P_{i,A})\psi\right\rangle
= \left\langle \sum_{i=0}^1(e_i^*\otimes P_{i,A}^*)\eta,\psi\right\rangle,$$
giving $V_{1,2}^*=\sum_{i=0}^1e_i^*\otimes P_{i,A}^*$. 
Thus, 
$$V_{1,2} T V_{1,2}^*=\sum_{i,j=0}^1e_je_i^*\otimes P_{j,A} T P_{i,A}^*, \ \ \ T\in \cl A,$$
showing that 
$V_{1,2}\cl AV_{1,2}^*\subseteq M_2(\mathbb C)\otimes\cl A$.
Using the fact that $\cl A\subseteq\cl B'$, 
we also have that 
$$(I\otimes R)V_{1,2} = V_{1,2}R, \ \ \ R\in \cl B.$$

Let now $V_{2,2}: \mathbb C^2\otimes H\to \mathbb C^2\otimes\mathbb C^2\otimes H$ be given by
\begin{equation}\label{eq_V_22de}
V_{2,2}=\sum_{i=0}^1 I\otimes e_i\otimes P_{i,B}, 
\end{equation}
where $P_{i,B}$ is defined similarly, replacing  $Z_A$ 
(resp. $X_A$) by $Z_B := B_0$ 
(resp. $X_B := B_1$). 
Using the previous arguments, we can see that $V_{2,2}$ is an $M_2(\mathbb C)\otimes\cl B$-$(M_2(\mathbb C)\otimes M_2(\mathbb C)\otimes\cl B$-local isometry and 
$$V_{2,2}V_{1,2}=\sum_{i,j=0}^1 e_i\otimes e_j\otimes P_{j,B}P_{i,A}.$$
Make natural modifications in 
(\ref{eq_V_12de}) and (\ref{eq_V_22de}), 
we further define local isometries $V_{1,1}:H\to\mathbb C^2\otimes H$
and $V_{2,1} : \mathbb C^2\otimes H \to \mathbb C^2\otimes \bb{C}^2 \otimes H$
by letting
$V_{1,1} = \sum_{j=0}^1 e_j\otimes P_{j,B}$ and 
$V_{2,1}=\sum_{i=0}^1 e_i\otimes I\otimes P_{i,A}$, 
noting that $V_{2,1}V_{1,1}=V_{2,2}V_{1,2}$.

We show that 
there exists a unit vector $\xi_{\rm aux}\in H$ such that
$V_{2,2}V_{1,2}\xi=\Omega_2\otimes\xi_{\rm aux}$, and that 
\begin{equation}\label{eq_V22tau}
V_{2,2}V_{1,2}\pi(u)\xi = \tau(u)\Omega_2\otimes\xi_{\rm aux}, 
\ \ \ u\in \cl S_{X,A}\otimes\cl S_{Y,B}.
\end{equation}
By (\ref{vectorequ}), $\hat Z_A\xi=Z_A\xi=Z_B\xi$, giving  
$$(I \mp \hat Z_B)(I \pm \hat Z_A)\xi = 0,$$
and hence 
\begin{equation}\label{eq_neqij}
P_{j,B}P_{i,A}\xi=0 \ \mbox{ if } i\ne j. 
\end{equation}
Next we observe that, by (\ref{vectorequ}) and (\ref{vectorequ2}), 
$$\hat X_A\hat Z_A\xi=-\hat Z_A\hat X_A\xi, 
\ X_BZ_B\xi=-Z_BX_B\xi \mbox{ and }  X_B\xi=X_A\xi=\hat X_A\xi,$$ and therefore, using the fact that $\cl A\subset \cl B'$, we have 
\begin{eqnarray*}
P_{1,B}P_{1,A}\xi
& = &
X_B\hat X_A(I-Z_B)(I - \hat Z_A)\xi
= (I + Z_B)(I + \hat Z_A)X_B\hat X_A\xi\\
& = &
(I + Z_B)(I + \hat Z_A)\xi
= P_{0,B}P_{0,A}\xi.
\end{eqnarray*}
Setting $\xi_{\rm aux}=\sqrt{2}P_{0,B}P_{0,A}\xi$, using (\ref{eq_neqij}) we thus have
$$V_{2,2}V_{1,2}\xi=(e_0\otimes e_0+e_1\otimes e_1)\otimes P_{0,B}P_{0,A}\xi = \Omega_2\otimes \xi_{\rm aux}.$$
Using the fact that $\hat Z_A\xi=Z_A\xi$ and $Z_A^2\xi=\xi$
(see relations (\ref{vectorequ})), we get
\begin{eqnarray*}
P_{j,B}P_{i,A}Z_A\xi
& = &
X_B^j\hat X_A^i(I + (-1)^jZ_B)(I + (-1)^i\hat Z_A)Z_A\xi\\
& = &
X_B^j\hat X_A^i(I + (-1)^jZ_B)(\hat Z_A+(-1)^iI)\xi
= (-1)^iP_{j,B}P_{i,A}\xi
\end{eqnarray*}
and hence, taking into account (\ref{eq_neqij}), we obtain 
$$V_{2,2}V_{1,2}Z_A\xi=(e_0\otimes e_0-e_1\otimes e_1)\otimes P_{0,B}P_{0,A}\xi
= (\sigma_z\otimes 1)\Omega_2\otimes\xi_{\rm aux}.$$
Since $\cl A\subseteq \cl B'$, we furthremore have
\begin{eqnarray*}
P_{j,B}P_{i,A}X_A\xi
& = &
X_B^j\hat X_A^i(I+(-1)^jZ_B)(I+(-1)^i\hat Z_A)X_A\xi\\
& = &
X_B^j\hat X_A^{i+1}(I+(-1)^jZ_B)(I+(-1)^{i+1})\hat Z_A)\xi\\
& = & 
P_{j,B}P_{i+1,A}\xi   
\end{eqnarray*}
and hence 
$$V_{2,2}V_{1,2}X_A\xi=(e_1\otimes e_0+e_0\otimes e_1)\otimes P_{0,B}P_{0,A}\xi=(\sigma_x\otimes I)\Omega_2\otimes\xi_{\rm aux}.$$
In a similar way we show that
$$V_{2,2}V_{1,2}Z_B\xi = (e_0\otimes e_0-e_1\otimes e_1)\otimes P_{0,B}P_{0,A}\xi=(I\otimes\sigma_z)\Omega_2\otimes\xi_{\rm aux}$$
and 
$$V_{2,2}V_{1,2}X_B\xi = (e_0\otimes e_1+e_1\otimes e_0)\otimes P_{0,B}P_{0,A}\xi=(I\otimes\sigma_x)\Omega_2\otimes\xi_{\rm aux}.$$
Equation (\ref{eq_V22tau}) is therefore established.

\smallskip

Finally, we can remove the condition of being quantum commuting PVM model.
It relies on the following statement which is similar to \cite[Proposition 5.5]{pszz}.
\begin{proposition}\label{PVM-POVM}
Let $X$ and $Y$ be finite sets and 
suppose that the quantum commuting correlation 
$p=\{p(a,b|x,y): x\in X, y\in Y, a,b\in\mathbb Z_2\}$ is an extreme point 
in $\cl C_{\rm qc}$. If $S$ is a quantum commuting model for $p$ then there exists a projective quantum commuting model $\wt{S}$ such that $S\preceq\wt{S}$.
\end{proposition}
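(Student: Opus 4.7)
The plan is to adapt \cite[Proposition 5.5]{pszz} to the commuting operator setting. Given a POVM qc-model $S=({}_\cl A H_\cl B,(E_{x,a}),(F_{y,b}),\xi)$ of $p$, I encode the two-outcome POVMs via the selfadjoint contractions $A_x:=2E_{x,0}-I$ and $B_y:=2F_{y,0}-I$ lying in $\cl M:=\{E_{x,a}\}''$ and $\cl N:=\{F_{y,b}\}''$ respectively, noting that $(E_{x,a})_{a\in\mathbb Z_2}$ is a PVM if and only if $A_x^2=I$, and analogously for $(F_{y,b})_b$.

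First, I construct a candidate PVM model $\wt S$ via the standard two-outcome Naimark dilation. Set
\[
\wt A_x:=\begin{pmatrix} A_x & \sqrt{I-A_x^2}\\ \sqrt{I-A_x^2} & -A_x\end{pmatrix}\in \cl M\bar\otimes M_2,\quad
\wt B_y:=\begin{pmatrix} B_y & \sqrt{I-B_y^2}\\ \sqrt{I-B_y^2} & -B_y\end{pmatrix}\in\cl N\bar\otimes M_2,
\]
which are selfadjoint unitaries whose spectral projections form PVMs on $H\otimes\mathbb C^2$. Extending them to $\wt H:=H\otimes\mathbb C^2_A\otimes\mathbb C^2_B$ by tensoring with identity on the complementary ancilla yields commuting PVMs, and with $\wt\xi:=\xi\otimes e_0\otimes e_0$ one obtains a PVM qc-model $\wt S$ with Alice algebra $\cl M\bar\otimes M_2\otimes I$ and Bob algebra $\cl N\bar\otimes I\otimes M_2$. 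The natural candidate isometry $V:\eta\mapsto\eta\otimes e_0\otimes e_0$ (from $H$ to $\wt H$ with trivial auxiliary) factors either as a Bob-local embedding $H\to H\otimes\mathbb C^2_B$ followed by an Alice-local embedding $H\otimes\mathbb C^2_B\to\wt H$, or symmetrically through $H\otimes\mathbb C^2_A$, yielding the commuting square required by Definition \ref{d:localiso}.

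To conclude $S\preceq\wt S$ it remains to verify the vector equation $V(\nph_A(s)\nph_B(t)\xi)=\wt\nph_A(s)\wt\nph_B(t)\wt\xi$ for $s\in\cl S_{X,A}$, $t\in\cl S_{Y,B}$. Expanding the Naimark block structure shows that, using $[A_x,B_y]=0$, this is equivalent to $\xi\in\bigcap_x\ker(I-A_x^2)\cap\bigcap_y\ker(I-B_y^2)$, i.e., $\xi$ must lie in the joint PVM-common subspace of the $A_x$ and $B_y$. Extremality of $p$ enters via a central decomposition: any projection $P\in\cl Z(\cl M\vee\cl N)$ with $0\ne P\xi\ne\xi$ produces a nontrivial convex decomposition $p=\|P\xi\|^2 p_1+\|P^\perp\xi\|^2 p_2$ within $\cl C_{\rm qc}$, contradicting extremality; a direct integral along $\cl Z(\cl M\vee\cl N)$ then reduces the problem to a factor fiber of $\cl M\vee\cl N$. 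The main obstacle is closing the argument within such a fiber: the spectral projections of individual $A_x$ at $\{\pm 1\}$ lie in $\cl M$ but typically not in $\cl Z(\cl M\vee\cl N)$, so the central-decomposition trick no longer applies directly. A finer extremality analysis---most plausibly, comparing $S$ with the GNS representation of an extreme extension of $p$ from $\cl S_{X,A}\otimes_{\rm c}\cl S_{Y,B}$ to $\cl A_{X,A}\otimes_{\max}\cl A_{Y,B}$ and exploiting uniqueness up to unitary equivalence in the spirit of Theorem \ref{th_charallqc}---is likely required to force $\xi$ into the PVM-common subspace, after which $V$ witnesses $S\preceq\wt S$.
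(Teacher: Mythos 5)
Your proposal has a genuine gap, and it sits exactly where the real work of the proposition lies. You correctly reduce the problem to showing
$\xi\in\bigcap_x\ker(I-A_x^2)\cap\bigcap_y\ker(I-B_y^2)$, which is equivalent to $E_{x,0}\xi=P_{x,0}\xi$ and $F_{y,0}\xi=Q_{y,0}\xi$, where $P_{x,0}=E_x(\{1\})$ (resp. $Q_{y,0}$) is the spectral projection of $E_{x,0}$ (resp. $F_{y,0}$) at eigenvalue~$1$. But your proposed mechanism for establishing this---a central decomposition over $\mathcal Z(\mathcal M\vee\mathcal N)$---does not give access to the spectral projections of individual $A_x$ or $B_y$, and you acknowledge this yourself. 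The sketch of a ``finer extremality analysis'' via Theorem~\ref{th_charallqc} is not an argument; it is a restatement of the difficulty. As it stands the claim is unproven.

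The missing idea is a much more local use of extremality: one perturbs a \emph{single} POVM element $E_{x,0}$ using its own spectral measure. Write $E_{x,0}=\int_{[0,1]}\lambda\,dE_x(\lambda)$. For $0<c<1/2$ the identity $E_{x,0}=c\,E_x([c,1])+(1-c)\,T_{x,c}$ with $T_{x,c}=\tfrac{1}{1-c}\bigl(E_{x,0}-c\,E_x([c,1])\bigr)$ exhibits $E_{x,0}$ as a convex combination of two other valid POVM effects ($0\le E_x([c,1])\le I$ and $0\le T_{x,c}\le I$, the latter using $c<1/2$), each of which still commutes with all $F_{y,b}$. Replacing $(E_{x,0},E_{x,1})$ accordingly (keeping all other data fixed) yields $p=c\,p_1+(1-c)\,p_2$ with $p_1,p_2\in\mathcal C_{\rm qc}$, and extremality forces $\langle E_x([c,1])\xi,\xi\rangle=\langle E_{x,0}\xi,\xi\rangle$. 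Letting $c\downarrow 0$ gives $\langle E_x((0,1])\xi,\xi\rangle=\langle E_{x,0}\xi,\xi\rangle$, whence $\int_{(0,1)}(\lambda-1)\,d\langle E_x(\lambda)\xi,\xi\rangle=0$, forcing $E_x((0,1))\xi=0$, i.e.\ $E_{x,0}\xi=P_{x,0}\xi$. The same argument applies to each $F_{y,0}$. Note that this decomposition lives entirely inside $\mathcal M$; you never need to work in (or have control over) the center of $\mathcal M\vee\mathcal N$, which is precisely why the obstacle you identified evaporates.

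Finally, once $E_{x,0}\xi=P_{x,0}\xi$ and $F_{y,0}\xi=Q_{y,0}\xi$ are established, the Naimark dilation to $H\otimes\mathbb C^2_A\otimes\mathbb C^2_B$ is unnecessary: the projections $P_{x,a}$ and $Q_{y,b}$ already live on $H$ (the $P_{x,a}$ in $\mathcal M$, the $Q_{y,b}$ in $\mathcal N$, hence mutually commuting), and $\wt S=(H,(P_{x,a}),(Q_{y,b}),\xi)$ is a projective qc-model with $S\preceq\wt S$ witnessed by the identity on $H$ and a trivial auxiliary system. Your dilated model would also work, but it adds an ancilla for nothing.
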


\begin{proof}
    Let $S=(H, (E_{x,a})_{x\in X, a\in \mathbb Z_2}, (F_{y,b})_{y\in Y, b\in \mathbb Z_2}, \xi)$ be a quantum commuting model for $p$, that is, $p(a,b|x,y)=\langle E_{x,a}F_{y,b}\xi,\xi\rangle$ and $(E_{x,a})_{a\in\mathbb Z_2}$ and $(F_{y,b})_{b\in\mathbb Z_2}$ are POVMs for each $x\in X$ and $y\in Y$. Let $E_{x,0}=\int_{[0,1]}\lambda dE_x(\lambda)$ be the spectral decomposition of $E_{x,0}$, $x\in X$. Fix $x\in X$ and $0<c<1/2$, 
    and set $\Omega_c = [c,1]$.
    Write 
    $$E_{x,0} = cE_x(\Omega_c)+(1-c)T_{x,c},$$
    where $T_{x,c} = \frac{1}{1-c}\left(\int_{[0,1]}\lambda E_{x}(\lambda)-cE_x(\Omega_c)\right)$. 
    Then 
    $$T_{x,c}=\frac{1}{1-c}\left(\int_{\Omega_c}(\lambda-c)dE_x(\lambda)+\int_{\Omega_c^c}\lambda dE_x(\lambda)\right)\geq 0,$$
    and 
    $$I-T_{x,c} = \frac{1}{1-c}\left(\int_{\Omega_c}(1-\lambda)dE_x(\lambda)+\int_{\Omega_c^c}(1-c-\lambda) dE_x(\lambda)\right)\geq 0.$$
    It gives $p=cp_1+(1-c)p_2$, where $p_1$ and $p_2$ are 
    the quantum commuting correlations corresponding to the POVM,s where ($E_{x,0}$, $I-E_{x,0} $) is replaced by 
    ($E_x(\Omega_c)$, $1-E_x(\Omega_c)$) and ($T_{x,c}$, $I-T_{x,c}$), respectively.
    As $p$ is extreme, $p_1=p_2$ and hence $\langle E_x([c,1])\xi,\xi\rangle=\langle E_{x,0}\xi,\xi\rangle$. Letting $c\to 0$, we obtain that $\langle E_x((0,1]) \xi,\xi\rangle=\langle E_{x,0}\xi,\xi\rangle$. Let $P_{x,0}=E_x(\{1\})$ and note that if $P_{x,0}\ne 0$, it is the projection onto non-zero eigenspace of $E_{x,0}$  corresponding to the eigenvalue 1.  We have
    $$0=\langle(E_{x,0}-E_x((0,1])\xi,\xi\rangle=\int_{(0,1)}(\lambda-1)d\langle E_x(\lambda)\xi,\xi\rangle,$$
    showing that $E_x((0,1))\xi=0$ and hence $E_{x,0}\xi=P_{x,0}\xi$. Note that $P_{x,0}$ commute with each $F_{y,b}$. Similarly, we find projections  $Q_{y,0}$ such that $F_{y,0}\xi=Q_{y,0}\xi$ and commute with each $P_{x,0}$. 
    Set $$\wt{S}=(H, (P_{x,a})_{x\in X,a\in\mathbb Z_2}, (Q_{y,b})_{y\in Y,b\in\mathbb Z_2},\xi).$$ Then $S\preceq\wt{S}$. 
\end{proof}

\begin{corollary}
The correlation $p_{\tilde S}$
(where $\tilde{S}$ is the model given in (\ref{eq_optimmoSt})) 
is a self-test for the class of all quantum commuting models.
\end{corollary}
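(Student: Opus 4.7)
The plan is to leverage the three ingredients already established: the self-testing result for projective quantum commuting models proved in the previous subsection, the reduction from POVM models to PVM models recorded in Proposition \ref{PVM-POVM}, and the transitivity of the local dilation preorder from Proposition \ref{p_preorder}.

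First I would verify that $p_{\tilde S}$ is an extreme point of $\cl C_{\rm qc}$. Indeed, Remark \ref{optimal} identifies $p_{\tilde S}$ as the unique optimal strategy for the CHSH game within the class $\cl C_{\rm qc}$. Since the CHSH winning probability is a linear (affine) functional on $\cl C_{\rm qc}$, any proper convex decomposition $p_{\tilde S} = \lambda p_1 + (1-\lambda) p_2$ with $p_1, p_2 \in \cl C_{\rm qc}$ and $\lambda \in (0,1)$ would force both $p_1$ and $p_2$ to be optimal, hence equal to $p_{\tilde S}$ by uniqueness. Thus the hypothesis of Proposition \ref{PVM-POVM} is satisfied.

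Next, let $S$ be an arbitrary quantum commuting model of $p_{\tilde S}$. By Proposition \ref{PVM-POVM}, there exists a projective quantum commuting model $S'$ (a PVM model) of $p_{\tilde S}$ such that $S \preceq S'$. Since $S'$ is a PVM model for $p_{\tilde S}$, the self-testing result established above for the class of quantum commuting PVM models provides a local isometry implementing $S' \preceq \tilde S$, with $\tilde S$ the ideal model from (\ref{eq_optimmoSt}). By Proposition \ref{p_preorder}, the relation $\preceq$ is transitive, so $S \preceq \tilde S$.

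The main (essentially only) subtlety is the first step: identifying uniqueness of the optimal value with extremality. This is routine in the present setup since the CHSH bias is affine in the correlation. After that the argument is a direct chain of implications, and no further obstacle arises.
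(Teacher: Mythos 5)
Your proof is correct and follows essentially the same route as the paper's: establish extremality of $p_{\tilde S}$ in $\cl C_{\rm qc}$ from uniqueness of the optimal CHSH strategy (Remark \ref{optimal}), then invoke Proposition \ref{PVM-POVM} to reduce to the PVM case and chain with transitivity of $\preceq$. The paper states the last two steps more tersely ("The statement now follows from Proposition \ref{PVM-POVM}"), while you helpfully spell out the appeal to the earlier PVM self-test result and Proposition \ref{p_preorder}, but the underlying argument is identical.
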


\begin{proof}
    As $p_{\tilde S}$ is the unique 
    optimal quantum commuting strategy for the CHSH game in the class $\cl C_{\rm qc}$ (see Remark \ref{optimal}) 
    assuming that $p_{\tilde S} = cp_1+(1-c)p_2$, $c\in(0,1)$, for
    some quantum commuting correlations $p_1$ and $p_2$, we obtain that $$\omega_{\rm qc}({\rm CHSH},p_1) =\omega_{\rm qc}({\rm CHSH},p_2)=\omega_{\rm qc}({\rm CHSH}, p_{\tilde S}).$$ 
    As the optimal strategy is unique, $p_1=p_2=p_{\tilde S}$ which  show that $p_{\tilde{\cl  S}}$ is extreme in $\cl C_{\rm qc}$. The statement now follow from Proposition \ref{PVM-POVM}.
    \end{proof}

\subsection{Clifford correlations}\label{ss_clifford}

Let $X$ be a finite set. 
We note that the C*-algebra $\cl A_{X,\bb{Z}_2}$ is identical with the 
universal unital C*-algebra generated by $|X|$ projections $e_x$, $x\in X$, via the
isomorphism given by letting 
$e_{x,0} = e_x$ and $e_{x,1} = 1 - e_x$, $x\in X$. 
Let 
$$\cl J_{\rm C} = \left\langle \left\{e_{x,0} e_{y,0} + e_{y,0} e_{x,0}  - 
e_{x,0} - e_{y,0} + \frac{1}{2}\cdot 1 
: x,y\in X, x\neq y \right\}\right\rangle$$
as a closed ideal of $\cl A_{X,\bb{Z}_2}$.
Recall, further, that the \emph{Clifford algebra} $\frak{C}_X$ over $X$ 
is the (unital) C*-algebra, generated by a family 
$\{u_x\}_{x\in X}$ of selfadjoint unitaries, satisfying the anticommutation relations 
$$u_x u_y + u_y u_x = 0, \ \ \ x,y\in X, \ x\neq y.$$

For the following fact, see, for example, \cite{tsirelson}. 

\begin{theorem}\label{th_Clifone}
If $|X|$ is even then, up to unitary equivalence, $\frak{C}_X$ has a unique 
irreducible representation $\pi_{\rm C}$, which is also 
faithful, and whose image is $M_{2^{\frac{n}{2}}}$.  
\end{theorem}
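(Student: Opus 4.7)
The plan is to show that $\frak{C}_X$ is actually isomorphic to $M_{2^{n/2}}$ as a C*-algebra, where $n = |X|$. Once this is established, all three assertions (uniqueness, faithfulness, matrix image) follow at once from the fact that a full matrix algebra is simple and admits, up to unitary equivalence, a unique irreducible representation, namely the identity.

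First, I would obtain an upper bound on the dimension. Using the anticommutation relations $u_xu_y = -u_yu_x$ ($x\neq y$) together with $u_x^2 = 1$, any word in the generators can be reordered (picking up a sign) and reduced to one of the $2^n$ products $u_{x_1}u_{x_2}\cdots u_{x_k}$, where $x_1 < x_2 < \cdots < x_k$ for a fixed total order on $X$. Hence $\dim\frak{C}_X \leq 2^n$.

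Next, to see that equality holds and to produce the candidate irreducible representation, I would write $n = 2m$, $X = \{1,\ldots,2m\}$, and define $\pi_{\rm C}:\frak{C}_X \to M_{2^m}$ on generators via the standard Jordan--Wigner type formula
\begin{align*}
\pi_{\rm C}(u_{2k-1}) &= \sigma_z^{\otimes(k-1)}\otimes\sigma_x\otimes I^{\otimes(m-k)},\\
\pi_{\rm C}(u_{2k}) &= \sigma_z^{\otimes(k-1)}\otimes\sigma_y\otimes I^{\otimes(m-k)},
\end{align*}
for $k = 1,\ldots,m$. A direct check using $\sigma_x\sigma_y+\sigma_y\sigma_x = 0$, $\sigma_x\sigma_z+\sigma_z\sigma_x=0$, $\sigma_y\sigma_z+\sigma_z\sigma_y=0$, $\sigma_x^2 = \sigma_y^2=\sigma_z^2 = I$ confirms that the images are selfadjoint unitaries satisfying the Clifford anticommutation relations, so by the universal property $\pi_{\rm C}$ is a well-defined unital $*$-representation. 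The $2^{2m}$ ordered products $\pi_{\rm C}(u_{x_1})\cdots\pi_{\rm C}(u_{x_k})$ (for $x_1<\cdots<x_k$) are, up to scalars of modulus one, precisely the standard Pauli string basis of $M_{2^m}$, hence linearly independent, and since $\dim M_{2^m} = 2^{2m} = 2^n$ they form a basis; this shows $\pi_{\rm C}$ is surjective.

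Combining the two steps, $\dim\frak{C}_X = 2^n = \dim M_{2^m}$, and $\pi_{\rm C}$ is a surjective $*$-homomorphism between spaces of the same finite dimension, hence a $*$-isomorphism. Thus $\frak{C}_X\cong M_{2^{n/2}}$, which is simple, so $\pi_{\rm C}$ is faithful and irreducible. Finally, any irreducible $*$-representation of a simple finite dimensional C*-algebra is unitarily equivalent to the identity representation, giving uniqueness. The only delicate point is the linear independence of the Pauli strings in Step~3, but this is immediate from the fact that they are traceless and mutually orthogonal with respect to the Hilbert--Schmidt inner product (except for the identity itself).
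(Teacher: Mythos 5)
The paper does not actually prove Theorem \ref{th_Clifone}; it merely cites \cite{tsirelson}, treating it as a known structural fact about complex Clifford algebras. Your proposal supplies a self-contained and correct proof along the standard route: cap the dimension at $2^n$ by word reduction under the relations $u_x^2=1$, $u_xu_y=-u_yu_x$, exhibit the Jordan--Wigner representation into $M_{2^{n/2}}$, verify it hits (up to phases) all Pauli strings --- an orthogonal basis under the Hilbert--Schmidt inner product --- to get surjectivity, and then conclude by dimension count that $\pi_{\rm C}$ is a $*$-isomorphism onto a full matrix algebra, from which simplicity gives faithfulness and the uniqueness of the irreducible representation is immediate. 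Two minor points worth being explicit about if you flesh this out: (a) since the generators are unitaries, all representations are uniformly bounded, so the universal C*-algebra $\frak{C}_X$ exists and equals the (automatically closed) finite-dimensional $*$-algebra your upper bound controls; (b) the injectivity of the map from subsets $S\subseteq X$ to Pauli strings can be seen by peeling off tensor factors from the last position backwards, recovering at each step which of $2l-1,2l$ lie in $S$. Neither point is a gap --- your argument is sound --- they simply merit a sentence each.
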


\begin{proposition}\label{p_cliff}
We have that $\frak{C}_X  = \cl A_{X,\bb{Z}_2}/\cl J_{\rm C}$,
up to a canonical *-isomorphism.
\end{proposition}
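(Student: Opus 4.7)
The plan is to establish the isomorphism via the well-known bijective correspondence between projections $p$ and selfadjoint unitaries $u$ given by $u = 2p - 1$, and to observe that this correspondence translates the defining relators of $\cl J_{\rm C}$ into the Clifford anticommutation relations. Concretely, for a projection $e$ and $e'$ in any unital C*-algebra, setting $u = 2e - 1$ and $u' = 2e' - 1$, a direct expansion gives
\[
uu' + u'u \;=\; 4(ee' + e'e) - 4e - 4e' + 2,
\]
so $uu' + u'u = 0$ holds if and only if $ee' + e'e - e - e' + \tfrac{1}{2}\cdot 1 = 0$.

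First, I would construct the forward map using the universal property of $\cl A_{X,\bb{Z}_2}$: since each $p_x := (1 + u_x)/2$ is a projection in $\frak C_X$, there is a unique unital $*$-homomorphism $\phi : \cl A_{X,\bb{Z}_2} \to \frak C_X$ with $\phi(e_x) = p_x$. Applying the displayed identity (with $e = p_x$, $e' = p_y$) shows that $\phi$ vanishes on each of the generators of $\cl J_{\rm C}$, so it descends to a unital $*$-homomorphism $\bar\phi : \cl A_{X,\bb{Z}_2}/\cl J_{\rm C} \to \frak C_X$.

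Next, I would construct the inverse using the universal property of the Clifford C*-algebra. Writing $\bar e_x$ for the image of $e_x$ in the quotient, set $v_x := 2\bar e_x - 1$. Each $v_x$ is a selfadjoint unitary, and the defining relators of $\cl J_{\rm C}$ together with the identity above yield $v_x v_y + v_y v_x = 0$ for $x \neq y$. Hence there is a unique unital $*$-homomorphism $\psi : \frak C_X \to \cl A_{X,\bb{Z}_2}/\cl J_{\rm C}$ with $\psi(u_x) = v_x$. Checking on generators, $\bar\phi \circ \psi$ and $\psi \circ \bar\phi$ both act as the identity (since $\bar\phi(\bar e_x) = (1+u_x)/2$ and $\psi((1+u_x)/2) = \bar e_x$), so by continuity and the fact that the $\bar e_x$'s (resp.\ the $u_x$'s) generate the target algebras, $\bar\phi$ and $\psi$ are mutually inverse.

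The only step deserving care is the invocation of the universal property of $\frak C_X$: one needs to know that the universal unital C*-algebra generated by selfadjoint unitaries satisfying the anticommutation relations exists and coincides with $\frak C_X$. For finite $X$, this is immediate from Theorem~\ref{th_Clifone} (and its analogue for $|X|$ odd), since $\frak C_X$ is then finite-dimensional, so the relations automatically admit a universal C*-envelope, which is $\frak C_X$ itself.
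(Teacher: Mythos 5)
Your proof is correct and follows essentially the same route as the paper's: both exploit the affine bijection $u = 2e - 1$ between projections and selfadjoint unitaries, the resulting polynomial identity relating the Clifford anticommutator to the generators of $\cl J_{\rm C}$, and the universal properties on the two sides. The paper states this more compactly (observing that $\cl A_{X,\bb{Z}_2}$ is simultaneously the universal C*-algebra of $|X|$ selfadjoint unitaries), while you spell out the two mutually inverse $*$-homomorphisms explicitly, including the remark—correct, and worth flagging—that finite-dimensionality of $\frak C_X$ guarantees it realizes the universal C*-algebra of the anticommutation relations.
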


\begin{proof}
Let $p_x$ be the eigenprojection of $u_x$ corresponding to the eigenvalue $1$; 
thus, $u_x = 2p_x - 1$, $x\in X$. 
For $x\neq y$, we have that 
\begin{eqnarray*}
u_x u_y + u_y u_x
& = & 
(2p_x - 1)(2p_y - 1)  + (2p_y - 1)(2p_x - 1)\\
& = & 
4 p_x p_y - 2p_y - 2p_x + 1 + 4 p_y p_x - 2p_x - 2p_y + 1\\
& = & 
4 p_x p_y + 4 p_y p_x - 4p_x - 4p_y + 2.
\end{eqnarray*}
The claim follows from the fact that $\cl A_{X,\bb{Z}_2}$ is the universal C*-algebra
of $|X|$ selfadjoint unitaries. 
\end{proof}

We introduce two classes of 
no-signalling correlations, which will prove to 
be amenable to self-testing. 
For the first subclass, write $q_{\rm C} : \cl A_{X,\bb{Z}_2}\to \frak{C}_X$ for the quotient map arising from Proposition \ref{p_cliff}. 
Let $\frak{S}_{X} = q_{\rm C}(\cl S_{X,[2]})$; thus, 
$\frak{S}_{X} = {\rm span}\{1, u_x : x\in X\}$, 
an operator subsystem of $\frak{C}_X$. 
An NS correlation $p$ over $(X,X,\bb{Z}_2,\bb{Z}_2)$ will be called a 
\emph{Clifford correlation} if there exists a state 
$s : \frak{S}_{X}\otimes_{\min} \frak{S}_{X} \to \bb{C}$ 
such that 
$$p_s(a,b|x,y) := s(q_{\rm C}(e_{x,a})\otimes q_{\rm C}(f_{y,b})), \ \ \ x,y\in X, a,b\in \bb{Z}_2$$
(for clarity, we denote by $f_{y,b}$, $y\in X$, $b\in \bb{Z}_2$, the canonical generators of the second copy of 
$\cl A_{X,\bb{Z}_2}$). 
Let $\cl S_{\rm C}$ be the set of all states on 
$\cl A_{X,\bb{Z}_2}\otimes_{\min}\cl A_{X,\bb{Z}_2}$ 
that factor through $\frak{C}_X\otimes_{\min} \frak{C}_X$.

For the second class, let $\Sigma = \{e_{x,a}, f_{y,b} : x,y\in X, a,b\in \bb{Z}_2\}$, 
$\Sigma^*$ be the set of all finite words on the alphabet $\Sigma$, reduced 
under the relations $e_{x,a}^2 = e_{x,a}$, $f_{y,b}^2 = f_{y,b}$ and 
$e_{x,a}f_{y,b} = f_{y,b}e_{x,a}$, 
equipped with a canonical involution
(each element $w\in \Sigma^*$ can be equivalently 
considered as an element of 
$\cl A_{X,\bb{Z}_2}\otimes \cl A_{X,\bb{Z}_2}$, 
where $e_{x,a}$ (resp. $f_{y,b}$) are 
identified with the canonical generators of 
$\cl A_{X,\bb{Z}_2}$
(resp. $\cl A_{Y,\bb{Z}_2}$)). 
We note that the empty word $\varepsilon$ is considered to be an element of $\Sigma^*$, and let 
$\Sigma' = \Sigma\cup \{\varepsilon\}$. 
We identify a quantum commuting correlation $p = ((p(a,b|x,y))_{x,y,a,b}$ with the matrix 
$M^{(p)} = [m_{\alpha,\beta}]_{\alpha,\beta\in \Sigma'}$, where 
$$m_{\alpha,\beta} = 
\begin{cases}
1 & \text{if } \alpha = \beta = \varepsilon\\
p(a|x) & \text{if } \alpha = \varepsilon \ \& \ \beta = e_{x,a}, 
\mbox{ or }  \alpha = e_{x,a} \ \& \ \beta = \varepsilon\\
p(b|y) & \text{if } \alpha = \varepsilon \ \& \ \beta = f_{y,b}, 
\mbox{ or }  \alpha = f_{y,b} \ \& \ \beta = \varepsilon\\
p(a,b|x,y) & \text{if } \alpha = e_{x,a} \ \& \ \beta = f_{y,b}, 
\mbox{ or }  \alpha = f_{y,b} \ \& \ \beta = e_{x,a}.\\
\end{cases}
$$
Every linear functional $s : \cl A_{X,\bb{Z}_2}\otimes \cl A_{X,\bb{Z}_2}\to \bb{C}$ gives rise to the 
matrix $M^{(s)} = [s(\beta^*\alpha)]_{\alpha,\beta\in \Sigma^*}$; we call the 
matrices arising in this way \emph{admissible}.
We denote the set of all admissible positive semi-definite matrices over 
$\Sigma^*\times\Sigma^*$ 
(that is, admissible matrices $M$ whose every finite minor is positive semi-definite) by $\bb{A}$. 
According to the NPA hierarchy \cite{npa}, 
a no-signalling correlation $p = ((p(a,b|x,y))_{x,y,a,b}$ is of quantum commuting type 
if and only if the matrix $M^{(p)}$ over $\Sigma'\times \Sigma'$ can be 
completed to a matrix 
$M = (M_{\alpha,\beta})_{\alpha,\beta}$ that lies in the set $\bb{A}$. 

Let $w_{x,y} = e_{x,0}e_{y,0}$; we view $w_{x,y}$ as 
elements of $\Sigma^*$. 
We write 
$$\bb{A}_{\rm C} = \left\{[m_{\alpha,\beta}]_{\alpha,\beta\in \Sigma^*} \hspace{-0.1cm} \in \hspace{-0.04cm}  \bb{A} : 
m_{e_{x,0},e_{y,0}} - m_{w_{x,y},w_{y,x}} = \frac{1}{8}, \ \ x,y\in X\right\}.$$


\begin{theorem}\label{p_syncste}
Let $X$ be a finite set of even cardinality,  
$s : \frak{S}_{X} \otimes_{\min} \frak{S}_{X}
\to \bb{C}$ be a state, and let 
$\tilde{s} = 
s\circ (q_{\rm C}\otimes q_{\rm C})$. 
The following hold true:

\begin{itemize}
\item[(i)]
if $p_{\tilde{s}}$ is a synchronous Clifford correlation then 
$\tilde{s}$ is an abstract self-test for $\cl S_{\rm C}$;

\item[(ii)] if $p_{\tilde{s}}$ is a synchronous correlation
that admits a positive completion to an element of $\bb{A}_{\rm C}$ then 
$\tilde{s}$ is an abstract self-test for 
$S_{\rm sp}(\cl A_{X,\bb{Z}_2}\otimes_{\min} \cl A_{X,\bb{Z}_2})$.
\end{itemize}
\end{theorem}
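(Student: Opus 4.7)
My plan is to exploit the fact that $|X|$ being even makes $\frak{C}_X$ a matrix algebra (Theorem \ref{th_Clifone}), hence nuclear and equipped with a unique tracial state $\tau$. Synchronicity will force the GNS vector of any extension to intertwine Alice and Bob actions, which in turn allows me to reduce everything to the Bob marginal; the latter will be tracial on the ambient algebra and will factor through the Clifford quotient, after which uniqueness follows from uniqueness of $\tau$.

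For (i), let $g \in \cl S_{\rm C}$ extend $\tilde s$, write $g = g'\circ(q_{\rm C}\otimes q_{\rm C})$ with $g'$ a state on $\frak{C}_X\otimes_{\min}\frak{C}_X$, and take its GNS triple $(H,\pi,\xi)$. Writing $p_x=(1+u_x)/2$, the synchronicity of $p_{\tilde s}$ forces $g'(p_x\otimes(1-p_x))=g'((1-p_x)\otimes p_x)=0$; since $p_x\otimes(1-p_x)$ is a projection, I obtain $\pi(u_x\otimes 1)\xi=\pi(1\otimes u_x)\xi$ for every $x$. Iterating this identity and using the commutation of the two Clifford copies gives $\pi(w\otimes 1)\xi=\pi(1\otimes\bar w)\xi$ for every Clifford monomial $w=u_{x_1}\cdots u_{x_k}$ with reverse $\bar w=u_{x_k}\cdots u_{x_1}$. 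This yields the reduction
\[
g'(a\otimes b)=\psi(b\bar a),\qquad \psi(c):=g'(1\otimes c).
\]
Because each generator is self-adjoint, reversal coincides with the involution on monomials, and a direct calculation (inserting $\pi(1\otimes c)\xi=\pi(\bar c\otimes 1)\xi$ and commuting the two copies) shows $\psi(ab)=\psi(ba)$. Hence $\psi$ is tracial on $\frak{C}_X$; by Theorem \ref{th_Clifone}, $\psi=\tau$, and $g'$ (hence $g$) is uniquely determined.

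For (ii), I repeat the synchronicity step for an arbitrary state $g$ on $\cl A_{X,\bb{Z}_2}\otimes_{\min}\cl A_{X,\bb{Z}_2}$ extending $\tilde s$, now taking $p_x$ to be the canonical generators of $\cl A_{X,\bb{Z}_2}$. Reversal still equals $*$ on monomials, so the same reduction gives $g(a\otimes b)=\psi(ba^*)$ with $\psi(c):=g(1\otimes c)$ a tracial state on $\cl A_{X,\bb{Z}_2}$. Setting $c_{x,y}:=p_xp_y+p_yp_x-p_x-p_y+1/2$ (the Clifford defining relation), I next claim $\psi(c_{x,y}^2)=0$ for all $x\neq y$. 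Expanding $c_{x,y}^2$ and applying the trace identities $\psi(p_xp_yp_x)=\psi(p_xp_y)=\psi(p_yp_x)$ (coming from $p_x^2=p_x$ plus tracicity) reduces this to checking $\psi((p_xp_y)^2)=\psi(p_xp_y)-1/8$, which is exactly the $\bb{A}_{\rm C}$ condition transported from Alice to Bob via the sync-reduction. Since $\psi$ is tracial, its GNS vector is separating, so $c_{x,y}\xi_\psi=0$ forces $\pi_\psi(c_{x,y})=0$; consequently $\psi$ annihilates $\cl J_{\rm C}$ and factors through $\frak{C}_X$. The uniqueness of the trace on $\frak{C}_X$ then pins down $\psi$, and hence $g$.

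The main obstacle is the precise bookkeeping in (ii): verifying that the single equality defining $\bb{A}_{\rm C}$, once pushed through the sync-reduction and the trace identities, cancels with the quadratic and cubic moments of $c_{x,y}^2$ to give exactly zero, and that this alone (together with tracicity) suffices to obtain the Clifford factorisation. A secondary technical point is that the sync-reduction and tracicity arguments, which are phrased for Clifford monomials in (i), extend by linearity and $\ast$-algebraic manipulations to arbitrary elements of $\cl A_{X,\bb{Z}_2}$ in (ii); this is straightforward but requires care, since $\cl A_{X,\bb{Z}_2}$ is no longer a matrix algebra and its universal nature must be used via the GNS construction rather than direct computation.
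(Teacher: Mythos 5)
Your proof of (i) is correct and takes a genuinely different route from the paper's. The paper first shows that $\tilde{s}$ annihilates the positive element $|X|1-\sum_{x}u_x\ten u_x$, forces the GNS vector into its kernel, and invokes Tsirelson's \cite[Lemma 1.3]{tsirelson} to obtain irreducibility of the GNS representation and then the uniqueness of the vector; the even-cardinality hypothesis is used via Theorem~\ref{th_Clifone} to identify the unique irreducible representation of $\frak{C}_X$. You instead run the synchronicity argument directly at the level of the GNS vector ($\pi(u_x\ten 1)\xi=\pi(1\ten u_x)\xi$, iterated to a word-reversal identity), collapse $g'$ onto its Bob marginal $\psi$, prove $\psi$ tracial, and close by uniqueness of the trace on the matrix algebra $\frak{C}_X$. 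This is essentially the \cite{psstw} ``synchronous~$\Leftrightarrow$~tracial state'' machinery specialised to the Clifford quotient, and it is a valid alternative; it has the mild advantage of working verbatim for any simple C*-algebra with a unique trace, whereas the paper's route additionally extracts irreducibility of the GNS representation, which is not needed for the abstract self-testing conclusion but might be of independent interest. One small point to be careful with: in your reduction $g'(a\ten b)=\psi(b\bar a)$, the bar is the monomial-reversal map extended by linearity, which is \emph{not} the adjoint in general ($\bar a\neq a^*$ unless $a$ has real coefficients); it is fine as stated because you only ever use it on monomials and then extend by linearity, but it is worth flagging.

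For (ii), your argument parallels the paper's: synchronicity produces a tracial marginal $\psi$ on $\cl A_{X,\bb{Z}_2}$, the $\bb{A}_{\rm C}$ moment constraint is translated to $\psi(c_{x,y}^2)=0$, and then the trace annihilates $\cl J_{\rm C}$ and descends to the unique trace on $\frak{C}_X$. The algebra you record, $\psi((p_xp_y)^2)=\psi(p_xp_y)-1/8$, matches the paper's $\tau(c^2)=2\tau(r_xr_yr_xr_y)-2\tau(r_xr_y)+1/4=0$. There is one step you assert without detailed justification -- and the paper does the same -- namely that the $\bb{A}_{\rm C}$ constraint, which the hypothesis only asserts for \emph{some} positive completion of $p_{\tilde s}$, transfers to the marginal trace of \emph{every} extension $g$. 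The Alice marginal trace of an arbitrary $g$ extending $\tilde s$ is only pinned down at level~1 of the NPA hierarchy by $p_{\tilde s}$; its value on fourth-order monomials such as $r_xr_yr_xr_y$ is a priori a free datum of the completion $M^{(g)}$, so without a further argument (or without reading $S_{\rm sp}$ as a class of states already constrained to have completion in $\bb{A}_{\rm C}$) one cannot conclude $\psi(c_{x,y}^2)=0$ for every $g$. Since this is inherited from the source proof and from the paper's somewhat opaque notation $S_{\rm sp}$, I would not charge it against you, but your write-up should not present ``exactly the $\bb{A}_{\rm C}$ condition transported via the sync-reduction'' as though it were automatic -- spell out why the level-2 entries of $M^{(g)}$, rather than merely of some completion, satisfy the constraint.

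Finally, two technical remarks on your claim that $\psi$ annihilates $\cl J_{\rm C}$. You argue via the separating GNS vector, which is correct: since $\psi$ is tracial, $\xi_\psi$ is separating for $\pi_\psi(\cl A_{X,\bb{Z}_2})''$, and $c_{x,y}\xi_\psi=0$ then forces $\pi_\psi(c_{x,y})=0$. One can reach the same conclusion more elementarily (and matching the paper's implicit argument) by Cauchy--Schwarz: $\psi(c^2)=0$ gives $\psi(ac)=0$ for all $a$, and traciality gives $\psi(acb)=\psi(bac)=0$, so $\psi$ vanishes on the two-sided ideal.
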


\begin{proof}
(i) Suppose that $p_s$ is a synchronous Clifford correlation, and let, as in (i), 
$\phi : \frak{C}_X\otimes_{\min} \frak{C}_X \to \bb{C}$ be an extension of $s$.
Let $\tilde{u}_x = e_{x,0} - e_{x,1}$, $x\in X$. 
By the synchronicity of $p_s$, we have 
$$\tilde{s}(e_{x,1}\otimes e_{x,0}) = \tilde{s}(e_{x,0}\otimes e_{x,1}) = 0, \ \ \ x\in X.$$
Note that 
\begin{eqnarray*}
|X|1 
& = & 
\sum_{x\in X} (e_{x,0} + e_{x,1})\otimes (e_{x,0} + e_{x,1})\\
& = & 
\sum_{x\in X} e_{x,0} \otimes e_{x,0} + e_{x,1} \otimes e_{x,0} + 
e_{x,0} \otimes e_{x,1} + e_{x,1} \otimes e_{x,1},
\end{eqnarray*}
while
\begin{eqnarray*}
\sum_{x\in X} \tilde{u}_x\otimes \tilde{u}_x 
& = & 
\sum_{x\in X} (e_{x,0} - e_{x,1})\otimes (e_{x,0} - e_{x,1})\\
& = & 
\sum_{x\in X}  e_{x,0} \otimes e_{x,0} - e_{x,0} \otimes e_{x,1} 
- e_{x,1} \otimes e_{x,0} + e_{x,1} \otimes e_{x,1}.\\
\end{eqnarray*}
Thus, 
$$|X|1 - \sum_{x\in X} \tilde{u}_x\otimes \tilde{u}_x = 
2\sum_{x\in X} e_{x,0} \otimes e_{x,1} 
+ e_{x,1} \otimes e_{x,0}.$$
It follows that
\begin{equation}\label{eq_uxux}
|X|1 - \sum_{x\in X} u_x\otimes u_x \geq 0 
\end{equation}
and that 
\begin{equation}\label{eq_uxtildes}
\tilde{s}(|X|1 - \sum_{x\in X} \tilde{u}_x\otimes \tilde{u}_x) = 0.
\end{equation}
Write 
\begin{equation}\label{eq_repoftiph}
\phi(u) = \langle \pi(u)\xi,\xi\rangle, \ \ \ 
u\in \frak{C}_X \otimes_{\min} \frak{C}_X,
\end{equation}
arising from the GNS representation of $\phi$. 
By (\ref{eq_uxtildes}), 
$$\left\langle \pi\left(|X|1 - \sum_{x\in X} u_x\otimes u_x\right)\xi,\xi\right\rangle = 0$$
which, together with (\ref{eq_uxux}) implies that 
$\pi\left(|X|1 - \sum_{x\in X} u_x\otimes u_x\right)\xi = 0$. 
By \cite[Lemma 1.3]{tsirelson}, 
$\pi$ is irreducible, and since $|X|$ is even, 
Theorem \ref{th_Clifone} implies that, up to unitary equivalence, 
$\pi = \pi_{\rm C}$. 
Furthermore, from the proof of \cite[Lemma 1.3]{tsirelson}, 
the vector $\xi$ is determined uniquely up to a scalar multiple. 
The relation (\ref{eq_repoftiph}) shows that 
$\tilde{\phi}$ is 
the unique extension of $\tilde{s}$.

(ii) 
Let $\Theta : \cl A_{X,\bb{Z}_2}\otimes\cl A_{X,\bb{Z}_2}^o \to \cl A_{X,\bb{Z}_2}$ be the map, given by 
$\Theta(u\otimes v^o) = uv$, and 
let $\partial : \cl A_{X,\bb{Z}_2}\to \cl A_{X,\bb{Z}_2}^o$ be the *-isomorphism, given by 
$\partial(e_{x,a}) =  e_{x,a}^o$ \cite{kps}. 
By the synchronicity of $p_{\tilde{s}}$ and \cite{psstw}, there exists a tracial state $\tau : \cl A_{X,\bb{Z}_2}\to \bb{C}$, such that
$$\tilde{s}(w) = \tau\left((\Theta\circ \partial)(u)\right), \ \ \ w\in \cl S_{X,\bb{Z}_2}\otimes \cl S_{X,\bb{Z}_2}.$$
Write $r_x = e_{x,0}$ for brevity, and let 
$c = r_{x} r_{y} + r_{y} r_{x}  - r_{x} - r_{y} + \frac{1}{2}$, as an element of 
$\cl A_{X,\bb{Z}_2}$. 
A direct calculation shows that 
$$c^2 = r_xr_yr_xr_y + r_yr_xr_yr_x - r_xr_yr_x - r_yr_xr_y + \frac{1}{4}\cdot 1.$$
It follows that 
$$\tau(c^2) = 2 \tau(r_xr_yr_xr_y)  - 2 \tau(r_xr_y) + \frac{1}{4}\cdot 1.$$
Since $p_{\tilde{s}} \in \bb{A}_{\rm C}$, we have that $\tau(c^2) = 0$, and hence 
$\tau$ annihilates the ideal $\cl J_{\rm C}$ generated by $c$. 
By Proposition \ref{p_cliff}, 
$\tau$ induces a trace $\tilde{\tau} : \frak{C}_X\to \bb{C}$, 
and 
\begin{equation}\label{eq_s(u)}
s(u) = \tilde{\tau}\left(q_{\rm C}((\Theta\circ \partial)(u))\right), \ \ \ u\in \cl S_{X,\bb{Z}_2}\otimes \cl S_{X,\bb{Z}_2}.
\end{equation}
Let $\tilde{\phi} : \cl A_{X,\bb{Z}_2}\otimes_{\max} 
\cl A_{X,\bb{Z}_2}\to \bb{C}$ be an extension of $\tilde{s}$ to 
an element of $S(\cl A_{X,\bb{Z}_2}\otimes_{\max} 
\cl A_{X,\bb{Z}_2})$. Then 
$\tilde{\phi}$ annihilates $\cl J_{\rm C}\otimes \cl A_{X,\bb{Z}_2} 
+ \cl A_{X,\bb{Z}_2}\otimes \cl J_{\rm C}$ and, 
by the projectivity of the maximal tensor product, 
induces a state 
$\phi : \frak{C}_X\otimes_{\max} \frak{C}_X\to \bb{C}$. 
By the nuclearity of $\frak{C}_X$, we may consider $\phi$ as 
a state on $\frak{C}_X\otimes_{\min} \frak{C}_X$.
By (\ref{eq_s(u)}), $\phi$ extends $s$, and hence $p_{\tilde{s}}$ is a 
Clifford correlation. 
The claim now follows from (ii). 
\end{proof}


\subsection{Self-testing for probabilistic assignments}
\label{ss_probassicont}

Recall \cite{acin-etal} that a contextuality scenario is a hypergraph $\bb{G}$ 
with vertex set $X$ (whose elements are thought of as outcomes of an experiment)
and hyperedge set $E$ (whose elements are thought of as subsets of jointly measurable outcomes) with the property that 
every vertex is contained in at least one edge; we write $\bb{G} = (X,E)$. 
In this section, we show how our general setup offers a broad 
context for self-testing for contextuality scenarios, hosting, in 
particular, the Main Theorem of \cite{brv-etal}. 

A \textit{probabilistic assignment} of a contextuality scenario $\bb{G} = (X,E)$ 
is a function $p : V \rightarrow [0,1]$ such that
$$    \sum_{x \in e}p(x)=1, \; \text{ for every } \; e \in E.$$
The set of all probabilistic assignments of $\bb{G}$ will be denoted by $\cl G(\bb{G})$; 
we have that $ \Gg(\bb{G})$ is a convex, possibly  empty, subset of $ \bb{R}^{V}$.
We stress that the term \emph{probabilistic model} in the place of 
probabilistic assignment was used in \cite{acin-etal}, but we have chosen 
to employ the latter term because of 
clash of terminology with the established use of the 
term \emph{model} in self-testing \cite{mps, pszz}. 
A contextuality scenario that admits a probabilistic assignment will be called \textit{non-trivial};
all contextuality scenarios herein will be assumed to be non-trivial.

A \textit{positive operator representation} \textit{(POR)} of 
a contextuality scenario $\G=(V,E)$ on a 
Hilbert space $H$ is a collection $ (A_{x})_{x\in V} \subseteq \cl B(H)$ of 
positive operators, such that 
$$     \sum_{x\in e} A_{x} = I_{H}, \; \text{ for every } \; e\in E.$$
The POR $ (A_{x})_{x\in V}$ is called a 
\textit{projective representation} \textit{(PR)} of $\G=(V,E)$ if 
$A_x$ is a projection, $x\in V$.

It was shown in \cite{act} that, given a 
non-trivial contextuality scenario $\G = (V,E)$, 
there exists an operator system $\cl S_{\bb{G}}$ , generated by 
positive elements $e_x$, $x\in X$, such that the unital completely positive maps
$\phi : \cl S_{\bb{G}}\to \cl B(H)$ correspond to POR's $(A_x)_{x\in V}$ of $\G$ via the 
assignment $\phi(e_x) = A_x$, $x\in V$; in particular, the probabilistic models of $\G$
correspond in a one-to-one fashion to the states of $\cl S_{\G}$. 
In addition, in \cite{act}, the universal C*-cover $C^*_u(\cl S_{\bb{G}})$ 
was explicitly identified.

Let $C^*(\bb{G})$ be the universal unital C*-algebra, generated by 
projections $p_x$, $x\in V$, subject to the relations 
$\sum_{x\in e} p_x = 1$, $e\in E$, and let 
$$\cl T_{\bb{G}} = {\rm span}\{p_x : x\in V\},$$
considered as an operator subsystem of $C^*(\bb{G})$. 
By the definition of $C^*(\bb{G})$, 
the *-representations $\pi : C^*(\bb{G})\to \cl B(H)$ correspond to 
projective representations of $\bb{G}$ on the Hilbert space $H$, 
via the assignment $\pi(p_x) = P_x$, $x\in V$. 
Further \cite{act}, the unital completely positive maps 
$\phi : \cl T_{\bb{G}}\to \cl B(H)$ correspond, via the assignment $\phi(e_x) = A_x$, $x\in V$, precisely to POR's $(A_x)_{x\in V}$ of $\G$ that are 
\emph{dilatable} in that there exists a Hilbert space $K$, an isometry 
$V : H\to K$ and a PR $(P_x)_{x\in V}$ of $\bb{G}$ on $K$ such that 
$A_x = V^*P_x V$, $x\in V$.

Let $\G=(V,E)$ and $ \Hbb =(W,F)$ be contextuality scenarios.  
The pairs $(\cl S_{\bb{G}},\cl S_{\bb{H}})$ and 
$(\cl T_{\bb{G}},\cl T_{\bb{H}})$ 
determine finitary frameworks for self-testing. 
We point out that, due to existence of a canonical 
unital completely positive map 
$\iota_{\bb{G}} : \cl S_{\bb{G}} \to \cl T_{\bb{G}}$
(resp. $\iota_{\bb{G}} : \cl S_{\bb{H}} \to \cl T_{\bb{H}}$) \cite{act}, 
self-testing over the pair $(\cl S_{\bb{G}},\cl S_{\bb{H}})$ is more 
general than over that over $(\cl T_{\bb{G}},\cl T_{\bb{H}})$. 

A commuting operator model 
$S = (H,(A_x)_{x\in V},(B_y)_{y\in W}, \xi)$ of $(\cl S_{\bb{G}},\cl S_{\bb{H}})$
gives rise, via (\ref{eq_pij}), to a correlation $p_S$  which
is a probabilistic assignment 
for the \emph{product} contextuality scenario
$ \G \times \Hbb = (V\times W,E \dot{\times} F)$, 
where $E \dot{\times} F = \{e\times f : e\in E, f\in F\}$.
Further, $p = p_S$ is \emph{no-signalling} in that 
$$    \sum_{x\in e}p(x,y) = \sum_{x\in e'}p(x,y), \; \;  y \in W, \;  e,e' \in E,$$
and
$$    \sum_{y\in f}p(x,y) = \sum_{y\in f'}p(x,y), \; \;  x \in V, \;  f,f' \in F$$
(see \cite{acin-etal}). 
We let $\Gg_{\rm ns}(\G,\Hbb)$ be 
the set of all no-signalling probabilistic assignments of $\GH$
(and write $\Gg_{\rm ns} = \Gg_{\rm ns}(\G,\Hbb)$ in case 
no confusion arises).
We note that the finitary framework $(\cl S_{\bb{G}},\cl S_{\bb{H}})$
gives rise to self-testing that includes as a special case the 
POVM self-testing (see Section \ref{ss_POVMNS}); indeed, 
if $X$ and $A$ are finite sets, 
the \emph{Bell scenario} $\bb{B}_{X,A} = (V,E)$, defined by letting
$V = X\times A$ and 
$E = \{\{x\}\times A : x\in X\}$ has the property that 
$\cl S_{\bb{B}_{X,A}} = \cl T_{\bb{B}_{X,A}} = \cl S_{X,A}$ \cite{act}. 

We now assume that 
$\bb{H}_{\circ} = (W,F)$, where $|W| = 1$ and $F = \{W\}$; in this case,
$\cl S_{\bb{H}_{\circ}} = \cl T_{\bb{H}_{\circ}} = \bb{C}$ and, hence, 
$\cl T_{\bb{G}} \otimes_{\rm c} \cl T_{\bb{H}_{\circ}} = \cl T_{\bb{G}}$. 
We will see that, even in 
this case, self-testing over the pair $(\cl T_{\bb{G}},\cl T_{\bb{H}_{\circ}})$
may be non-trivial. 
We note that the projective models for a state $s\in S(\cl T_{\bb{G}})$ now have the simpler form 
$M = (H,(P_x)_{x\in V},\xi)$, where $H$ is a Hilbert space, $(P_x)_{x\in V}$ is a 
PR of $\bb{G}$, and $\xi\in H$ is a unit vector; the corresponding state 
$s_M\in S(\cl T_{\bb{G}})$
has the form $s_M(p_x) = \langle P_x\xi,\xi\rangle$, $x\in V$. 

Let $n$ be an odd number with $n\geq 3$ and $\bb{G}_n = (V_n,E_n)$, where 
$$V_n = \{x_i, x_{i,i+1} : i \in \bb{Z}_n\} \ \mbox{ and } \ 
E_n = \{\{x_i,x_{i+1},x_{i,i+1}\} : i\in \bb{Z}_n\}.$$
Let $\frak{M}^{(1)}$ be the set of all PR models 
$(A_x)_{x\in V_n}$ for which ${\rm rank} (A_{x_i}) = 1$
for all $i\in \bb{Z}_n$, 
$$\frak{M}^{(1)}_{\max} = \left\{M\in \frak{M}^{(1)} : 
\sum_{i\in \bb{Z}_n} s_M(p_{x_i}) = 
\max_{M'\in \frak{M}^{(1)}} \sum_{i\in \bb{Z}_n} s_{M'}(p_{x_i})\right\},
$$
and 
$\cl S^{(1)} = \{s_M : M\in \frak{M}^{(1)}_{\max}\}$.
The following result is mainly a rephrasing of 
\cite[Main Theorem]{brv-etal} 
in the setup of the present paper.

\begin{proposition}\label{p_consel}
Let $n$ be an odd number with $n \geq 3$. 
Then $\cl S^{(1)}$ admits a self-test. 
\end{proposition}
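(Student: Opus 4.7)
The plan is to invoke the rigidity theorem of \cite[Main Theorem]{brv-etal}---the uniqueness of the optimal odd-cycle (KCBS-type) assignment---and reinterpret it as the self-testing condition of Definition \ref{d_st}(i). First, I would identify the explicit ideal model on $\tilde H = \bb C^3$. For $\bb G_n$ with $n \geq 3$ odd, the maximum $\max_{M \in \frak M^{(1)}} \sum_{i\in \bb Z_n} s_M(p_{x_i})$ is attained by a geometric configuration of rank-one projections $\tilde P_{x_i} = |\tilde\eta_i\rangle\langle\tilde\eta_i|$, where $\tilde\eta_0,\dots,\tilde\eta_{n-1} \in \bb C^3$ are unit vectors satisfying $\tilde\eta_i \perp \tilde\eta_{i+1}$ (as forced by the hyperedge $\tilde P_{x_i}+\tilde P_{x_{i+1}}+\tilde P_{x_{i,i+1}}=I$ together with the rank-one hypothesis and the fact that the three summands are projections), chosen so as to maximise $\sum_i |\langle \tilde\eta_i,\tilde\xi\rangle|^2$ for a fixed unit vector $\tilde\xi \in \bb C^3$, with the remaining projections given by $\tilde P_{x_{i,i+1}} = I - \tilde P_{x_i} - \tilde P_{x_{i+1}}$. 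Setting $\tilde M = (\tilde H, (\tilde P_x)_{x\in V_n}, \tilde\xi)$ and $\tilde f := s_{\tilde M}$, one checks that $\tilde M \in \frak M^{(1)}_{\max}$ and $\tilde f \in \cl S^{(1)}$.

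Next, given an arbitrary $M = (H,(P_x)_{x\in V_n},\xi) \in \frak M^{(1)}_{\max}$ with $s_M = \tilde f$, I would write $P_{x_i} = |\eta_i\rangle\langle \eta_i|$ for unit vectors $\eta_i \in H$, deduce from the hyperedge relations that $\eta_i \perp \eta_{i+1}$, and observe that the inner-product data $|\langle \eta_i,\xi\rangle|^2 = \tilde f(p_{x_i}) = |\langle \tilde\eta_i,\tilde\xi\rangle|^2$ matches the ideal model. The Main Theorem of \cite{brv-etal}, which furnishes rigidity of the optimal configuration up to ampliation, then produces an isometry $U : H \to \tilde H \otimes H_{\mathrm{aux}}$ and a unit vector $\xi_{\mathrm{aux}} \in H_{\mathrm{aux}}$ such that $U \eta_i = \tilde\eta_i \otimes \xi_{\mathrm{aux}}$ and $U\xi = \tilde\xi \otimes \xi_{\mathrm{aux}}$, so that $U P_x \xi = (\tilde P_x \tilde\xi) \otimes \xi_{\mathrm{aux}}$ for every $x \in V_n$, and $U$ intertwines the representations of the generated von Neumann algebras.

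Finally, I would verify that $U$ qualifies as a local isometry in the sense of Definition \ref{d:localiso}. Because $\cl S_{\bb H_\circ} = \bb C$, Bob's side of the bipartite system is trivial: the $\cl B$-action on every bimodule involved is scalar, condition (ii'') of Definition \ref{d:Alocal} is vacuous, and the four-term diagram of Definition \ref{d:localiso} collapses, so that any $\cl A$-intertwining isometry factors through identity horizontal arrows to yield a local isometry in our sense. This gives $M \preceq \tilde M$, and therefore $\tilde f$ is a self-test for the class $\frak M^{(1)}_{\max}$, which is the statement of the proposition.

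The main obstacle is the faithful translation of \cite[Main Theorem]{brv-etal}---stated operationally in the language of quantum realisations of contextuality scenarios---into the precise local-dilation format of Definition \ref{d_appdil}(i); once that translation is carried out (the rank-one assumption on the $P_{x_i}$ being essential to pin down the $\eta_i$ up to ampliation), the triviality of $\cl S_{\bb H_\circ}$ renders locality automatic and the proposition follows.
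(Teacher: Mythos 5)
Your proposal is correct and follows essentially the same route as the paper: both identify the models in $\frak{M}^{(1)}$ with the KCBS-type configurations of rank-one projections (via the hyperedge relation $P_i + P_{i+1} + P_{i,i+1} = I_H$, equivalently the constraint $P_i + P_{i+1}\leq I_H$) and then invoke \cite[Main Theorem]{brv-etal}. The paper's proof records only this bijection and cites \cite{brv-etal}, whereas you additionally spell out the ideal model on $\bb C^3$ and the check that the dilation isometry is local in the sense of Definition \ref{d:localiso} because Bob's side is trivial (so the diagram factors through identity horizontal arrows) -- details the paper leaves implicit.
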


\begin{proof}
Each model in $\frak{M}^{(1)}$ has the form 
$(H, (P_i)_{i\in \bb{Z}_n}, (P_{i,i+1})_{i\in \bb{Z}_n},\xi)$, where ${\rm rank}(P_i) = 1$ for each $i\in \bb{Z}_n$, and
$$P_i + P_{i+1} + P_{i,i+1} = I_H, \ \ \ i\in \bb{Z}_n.$$
In particular, we have that 
\begin{equation}\label{eq_ineqI3}
P_i + P_{i+1}\leq I_H, \ \ \ i\in \bb{Z}_n.
\end{equation}
Conversely, if we are given rank one projections $(P_i)_{i\in \bb{Z}_n}$ on $H$ 
for which the inequalities (\ref{eq_ineqI3}) are fulfilled then, setting 
$P_{i,i+1} := I_H - P_i - P_{i+1}$, we obtain a model 
in $\frak{M}^{(1)}$. 
The claim now follows from 
\cite[Main Theorem]{brv-etal}. 
\end{proof}

\subsection{A self-test for full graph colourings}\label{s:qg}

In this subsection we consider self-testing for 
classical-to-quantum no-signalling (CQNS) correlations, 
exhibiting an example for the classical-to-quantum game 
of complete graph colouring. 

Let $d\in \bb{N}$ with $d\geq 2$, and
$$\cl B_{d^2,d} := \underbrace{M_d\ast_1 \cdots\ast_1 M_d}_{d^2 \hspace{0.05cm} \mbox{\tiny times}},$$ 
where the free product is amalgamated over the units.
Let $\{\epsilon_{x,a,a'} : a,a'\in [d]\}$ be the standard matrix units of 
the $x$-th copy of $M_d$ in $\cl B_{d^2,d}$. 
Further, let 
\begin{equation}\label{eq_Rd2d}
\cl R_{d^2,d} := {\rm span}\{\{\epsilon_{x,a,a'} : x\in [d]^2, a,a'\in [d]\},
\end{equation}
viewed as an operator subsystem of $\cl B_{d^2,d}$, 
thus obtaining a finitary context $(\cl R_{d^2,d},\cl R_{d^2,d})$. 
Each state $s : \cl R_{d^2,d}\otimes_{\rm c} \cl R_{d^2,d} \to \bb{C}$
gives rise to a trace preserving completely positive 
map $\Gamma_s : \cl D_d\otimes \cl D_d\to M_d\otimes M_d$, 
defined by 
$$\Gamma_s(\epsilon_{x,x}\otimes\epsilon_{y,y}) 
= \sum_{a,a'=1}^d \sum_{b,b'=1}^d s(e_{x,a,a'}\otimes e_{y,b'b}))\epsilon_{a,a'}\otimes\epsilon_{b,b'}, 
\ \ \ x,y\in [d^2],$$
which is a \emph{classical-to-quantum no-signalling (CQNS) correlation}
in that it 
satisfies a natural version of the no-signalling conditions 
(\ref{eq_NSQne1}) and (\ref{eq_NSQne2}) 
and which is further of \emph{quantum commuting type}, denoted $\cl{CQ}_{\rm qc}$
(see \cite[Section 7]{tt-QNS} for the precise definitions).

Let $\cl K_d$ be the complete classical graph on $d$ vertices, and 
$\cl Q_d$ be the complete quantum graph on $d$ vertices, that is, 
$\cl Q_d = \{\Omega_d\}^\perp$, as a 
subspace of the Hilbert space $\mathbb C^d\otimes\mathbb C^d$, 
where $\Omega_d = \frac{1}{\sqrt{d}} \sum_{a=1}^d e_a\otimes e_a$
is the maximally entangled unit vector in dimension $d$. 
The \emph{graph homomorphism game} $\cl K_{d^2} \mapsto \cl Q_d$
was defined in \cite{tt-QNS}. Its \emph{game algebra} is the 
universal C*-algebra ${\rm Hom}(\cl K_{d^2}, \cl Q_d)$, 
generated by elements $e_{x,a,a'}$, $x\in[d^2]$, $a,a'\in[d]$,
satisfying the relations
\begin{itemize}
\item[(i)] $e_{x,a,a'}e_{x,b',b}=\delta_{a',b'}e_{x,a,b}$, $\sum_{a=1}^d e_{x,a,a}=1$ for all $x\in[d^2]$;
\item[(ii)] $\sum_{a,b=1}^d e_{x,a,b}e_{y,b,a}=0$ if $x\ne y$.
\end{itemize}
We note that ${\rm Hom}(\cl K_{d^2}, \cl Q_d)$ is a
quotient of $\cl B_{d^2,d}$, realised via the map 
$\epsilon_{x,a,a'}\mapsto e_{x,a,a'}$. 

\begin{remark}\label{r_unipa}
\rm 
Suppose that $H$ is a finite dimensional Hilbert space and let 
$\pi : {\rm Hom}(\cl K_{d^2}, \cl Q_d)\to \cl B(H)$ be a 
unital *-representation.  
As the subalgebra generated by $e_{x,a,a'}$ is isomorphic to $M_d$, 
for each $x\in[d^2]$ there exists a (finite dimensional)
Hilbert space $K_x$ and unitaries 
$V_x : H\to  \mathbb C^d\otimes K_x$ such that 
$$\pi(e_{x,a,a'}) = V_x^*(\epsilon_{a,a'}\otimes I_{K_x})V_x,
\ \ \ x\in [d^2], a,a'\in [d].$$
Since $\pi$ is unital, ${\rm dim}(K_x)$ is constant across $x\in [d^2]$. 
Applying a further unitary, we may assume that 
there exists a Hilbert space $K$ with $K_x = K$ for all $x\in [d^2]$. 

By relation (ii) (see the paragraph before the formulation 
of Remark \ref{r_unipa}), 
$\sum_{a,b=1}^d \pi(e_{x,a,b})\pi(e_{y,b,a})=0$ whenever $x\ne y$.
It follows that 
$$\sum_{a,b=1}^d (\epsilon_{a,b}\otimes 1)V_xV_y^*(\epsilon_{b,a}\otimes 1) = 0, \ \ \ x,y\in [d^2], \ x\neq y,$$
and hence, taking partial trace $\Tr_{M_d}$ along $M_d$, we obtain
\begin{equation}\label{trace}
({\rm tr}_d\otimes{\rm id}_{\cl B(K)})(V_xV_y^*) = 0, \ \ \ 
x,y\in [d^2], \ x\neq y.
\end{equation}
\end{remark}

\begin{remark}\label{r_krantnd}
\rm 
Suppose that $\cl A$ is a finite dimensional C*-algebra
and $\pi : {\rm Hom}(\cl K_{d^2}, \cl Q_d)\to \cl A$ is a 
unital *-homomorphism. Then, up to a unital *-isomorphism, 
$\cl A = \oplus_{i=1}^k M_{d n_i}$ for some $k\in \bb{N}$
and some $n_i\in \bb{N}$, $i\in [k]$.
Indeed, since ${\rm Hom}(\cl K_{d^2}, \cl Q_d)$ is a quotient
of $\cl B_{d^2,d}$, the *-homomorpshism $\pi$ gives rise to a 
unital *-homomorphism $\tilde{\pi} : \cl B_{d^2,d}\to \cl A$, 
and therefore to a unital *-homomorphism $\tilde{\pi}_0 : M_d\to \cl A$.
Assume, without loss of generality, that 
$\cl A = \oplus_{i=1}^k M_{m_i}$ for some $k\in \bb{N}$
and some $m_i\in \bb{N}$, $i\in [k]$.
The projection ${\rm proj}_i : \cl A \to M_{m_i}$ is a unital 
*-homomorphism and hence ${\rm proj}_i \circ \tilde{\pi}_0 : 
M_d\to M_{m_i}$ is a unital *-homomorphism. 
It follows that $d | m_i$, $i\in [k]$.
\end{remark}

Following \cite{tt-QNS}, a \emph{perfect quantum commuting strategy} 
(resp. \emph{perfect quantum strategy})
for the 
graph homomorphism game $\cl K_{d^2}\rightarrow \cl Q_d$ 
is a CQNS correlation of quantum commuting (resp.  quantum) type
$\Gamma : \cl D_{d^2} \otimes \cl D_{d^2}\to M_d\otimes M_d$, such that 
$\Gamma(\epsilon_{x,x}\otimes \epsilon_{y,y})$ is supported in 
$\cl Q_d$ whenever $x\neq y$, and 
$\Gamma(\epsilon_{x,x}\otimes \epsilon_{x,x}) = \Omega_d\Omega_d^*$ for every 
$x\in [d^2]$. 
By \cite{bhtt-JFA, tt-QNS}, a quantum CQNS correlation 
$\Gamma : \cl D_{d^2} \otimes \cl D_{d^2}\to M_d\otimes M_d$ 
is a perfect quantum commuting strategy for $\cl K_{d^2}\rightarrow \cl Q_d$ 
if and only if there exist a 
tracial  C*-algebra  $(\cl A, \tau)$ 
and a unital *-homomorphism 
$\pi: {\rm Hom}(K_{d^2}, \cl Q_d)\to \cl A$, such that 
$$\Gamma(\epsilon_{x,x}\otimes\epsilon_{y,y}) 
= \sum_{a,a'=1}^d \sum_{b,b'=1}^d \tau(\pi(e_{x,a,a'}e_{y,b'b}))\epsilon_{a,a'}\otimes\epsilon_{b,b'}, 
\ \ \ x,y\in [d^2];$$
we write $\Gamma = \Gamma^{\pi,\tau}$.

Recall that a \emph{unitary error basis} in $M_d$ 
is a basis of $M_d$ with respect to the trace inner product that consists of 
unitaries, in other words, a collection $\{u_i\}_{i=1}^{d^2}$ of unitaries in $M_d$, such that 
${\rm tr}_d(u_i u_j) = \delta_{i,j}$, $i,j\in [d^2]$. 
In the rest of this subsection, we restrict to the case $d = 2$. 
Recall the set 
\begin{equation}\label{eq_Pauliag}
\cl P=\left\{\left(\begin{array}{cc} 1&0\\0&1\end{array}\right), \left(\begin{array}{cc} 0&1\\1&0\end{array}\right), \left(\begin{array}{cc} 1&0\\0&-1\end{array}\right), \left(\begin{array}{cc} 0&-i\\i&0\end{array}\right)\right\}
\end{equation}
of Pauli matrices in $M_2$; 
it will be convenient to temporarily denote them by  
$U_x$, $x\in[4]$, in the order they appear in (\ref{eq_Pauliag}). 
We note that, by \cite{kr}, 
if $\cl E$ is a unitary error basis in $M_2$ then
there exist unitary matrices $R$, $T\in M_2$ and constants $c_V$, $V\in \cl P$, such that
$\cl E = \{c_VRVT: V\in\cl P\}$.

Let 
$\pi_{\cl K_4} : \cl B_{d^2,d} \to M_2$ be the 
*-homomorphism, given by 
$\pi_{\cl K_4}(e_{x,a,a'}) = U_x^*\epsilon_{a,a'}U_x$, and 
$\Gamma_{\cl K_4} : \cl D_4\otimes \cl D_4\to M_2\otimes M_2$ 
be the CQNS correlation, given by  
$$\Gamma_{\cl K_4}(\epsilon_{x,x}\otimes\epsilon_{y,y})
= \sum_{a,a'=1}^2 \sum_{b,b'=1}^2 {\rm tr_2}
(\pi_{\cl K_4}(e_{x,a,a'})\pi_{\cl K_4}(e_{y,b',b}))\epsilon_{a,a'}\otimes\epsilon_{b,b'},$$
where $x,y\in [4]$, 
and note that $\Gamma_{\cl K_4}$ is a perfect quantum strategy 
for the game $\cl K_{4}\rightarrow \cl Q_2$.

\begin{theorem}\label{repr}
Let $(\cl A,\tau)$ be a tracial von Neumann algebra with a faithful trace $\tau$, 
and $\pi :  {\rm Hom}(\cl K_{4}, \cl Q_2)\to \cl A$ be 
a unital *-homomorphism, such that 
$\Gamma^{\pi,\tau} = \Gamma_{\cl K_4}$. 
Then there exist a tracial von Neumann algebra $(\cl N,\tau_{\cl N})$, a trace preserving $*$-isomorphism $\rho: \cl A\to M_2\otimes\cl N$ and a unitary $V\in M_2\otimes \cl N$ such that 

\begin{equation}\label{eq_multimp}
V^*\rho(\pi(e_{x,a,a'}))V = U_x^*\epsilon_{a,a'}U_x\otimes 1_{\cl N}, 
\ \ x\in [4], \ a,a'\in [2]. 
\end{equation}
\end{theorem}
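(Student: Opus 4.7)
The plan is to bootstrap the rigidity of $\pi$ from the correspondence between the $(x,y)$-blocks of $\Gamma^{\pi,\tau}$ and $\Gamma_{\cl K_4}$. First I would restrict $\pi$ to the $x=1$ copy of $M_2$ sitting inside ${\rm Hom}(\cl K_4,\cl Q_2)$ via relation (i). Matching $\Gamma^{\pi,\tau}$ with $\Gamma_{\cl K_4}$ at $x=y=1$ (where $U_1 = I$) forces $\tau(\pi(e_{1,a,b})) = \tfrac{1}{2}\delta_{a,b}$, and hence $\tau \circ \pi_1 = {\rm tr}_2$ on this copy of $M_2$. By the commutation theorem for a unitally embedded matrix subalgebra of a finite von Neumann algebra, $\cl A \cong M_2 \otimes \cl N$ where $\cl N := \pi_1(M_2)' \cap \cl A$, with $\tau = {\rm tr}_2 \otimes \tau_{\cl N}$ for a faithful normal trace $\tau_{\cl N}$ on $\cl N$. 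Take $\rho$ to be this identification, so that $\rho(\pi(e_{1,a,a'})) = \epsilon_{a,a'}\otimes 1_{\cl N}$. A symmetric computation gives $\tau \circ \pi_x = {\rm tr}_2$ on the $x$-th copy of $M_2$, so the projections $\pi_x(\epsilon_{i,i})$ and $\pi_1(\epsilon_{i,i})$ are Murray--von Neumann equivalent in $M_2 \otimes \cl N$ (after a direct-integral reduction to factors if $\cl A$ has non-trivial centre). This yields unitaries $V_x \in M_2 \otimes \cl N$ with $\rho(\pi(e_{x,a,a'})) = V_x^*(\epsilon_{a,a'}\otimes 1_{\cl N})V_x$; take $V_1 = 1$ and note that each $V_x$ is determined only up to a left gauge $V_x \mapsto (1 \otimes z)V_x$ for $z \in \cl N$ unitary.

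Next I would extract the rigidity statement. Setting $T := V_x V_y^* = \sum_{c,d}\epsilon_{c,d}\otimes t_{c,d}$ and $W := U_x U_y^*$, a direct computation of both sides of
\[
\tau(\pi(e_{x,a,a'})\pi(e_{y,b',b})) = {\rm tr}_2\bigl(U_x^*\epsilon_{a,a'}U_x U_y^*\epsilon_{b',b}U_y\bigr)
\]
reduces to the entry-wise identity
\[
\tau_{\cl N}\bigl(t_{a',b'}t_{a,b}^{\,*}\bigr) = W_{a',b'}\,\overline{W_{a,b}}, \qquad a,a',b,b' \in [2].
\]
To exploit this, I would introduce $S := (W^* \otimes 1)T$, another unitary in $M_2 \otimes \cl N$, whose entries satisfy
\[
\tau_{\cl N}(S_{a',b'}S_{a,b}^{\,*}) = \delta_{a',b'}\,\delta_{a,b}
\]
by substituting the identity above and collapsing the inner sums via $W^*W = I$. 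Setting $a'=a,\,b'=b$ gives $\tau_{\cl N}(|S_{a,b}|^2) = \delta_{a,b}$; faithfulness of $\tau_{\cl N}$ forces $S_{a,b}=0$ for $a \ne b$, so $S = \epsilon_{1,1} \otimes s_1 + \epsilon_{2,2} \otimes s_2$ with $s_1, s_2$ unitaries in $\cl N$. The cross relation $\tau_{\cl N}(s_1 s_2^*)=1$ combined with the equality case for unitaries in a faithful finite trace ($\tau(u) = 1 \Rightarrow \tau(|1-u|^2) = 0 \Rightarrow u = 1$) yields $s_1 = s_2 =: z$. Hence $S = 1_{M_2} \otimes z$ and $T = W \otimes z$.

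Specialising this to $y = 1$ (so $W = U_x$ and $V_1 = 1$) gives $V_x = U_x \otimes z_x$ for some unitary $z_x \in \cl N$. Using the gauge freedom from the first paragraph, replace $V_x$ by $(1 \otimes z_x^*)V_x = U_x \otimes 1_{\cl N}$. Then
\[
\rho(\pi(e_{x,a,a'})) = V_x^*(\epsilon_{a,a'}\otimes 1)V_x = U_x^*\epsilon_{a,a'}U_x \otimes 1_{\cl N},
\]
so the theorem holds with $V = 1_{M_2 \otimes \cl N}$. The main obstacle is the rigidity step: the algebraic identity $T = W \otimes z$ has to be squeezed out of purely scalar-valued trace correlations, and the argument hinges on combining faithfulness of $\tau_{\cl N}$ (to kill off-diagonal entries of $S$) with the equality case of Cauchy--Schwarz for unitaries (to collapse the diagonal to a single element). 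A secondary technical point is ensuring the unitary implementers $V_x$ exist when $\cl A$ has non-trivial centre; this is handled by the standard direct-integral decomposition into factors, where Murray--von Neumann equivalence of equal-trace projections is automatic.
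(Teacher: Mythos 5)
Your proof is correct, and the rigidity step is handled by a genuinely different, cleaner mechanism than the paper's. Both arguments begin the same way: decompose $\cl A\cong M_2\otimes\cl N$ using the unital embedding of $M_2$ determined by the $x=1$ matrix units (the paper builds $\cl N$ as the corner $q\cl A q$ with $q=\pi(e_{1,1,1})$; your relative commutant $\pi_1(M_2)'\cap\cl A$ is the same thing in disguise), then produce unitaries $V_x$ conjugating $\{\pi(e_{x,a,a'})\}$ onto $\{\epsilon_{a,a'}\otimes 1_{\cl N}\}$ — the paper invokes Haagerup--Musat's matrix-unit conjugacy lemma, you use the equivalent Murray--von Neumann argument; the unitality of each embedding $\pi_x\colon M_2\to\cl A$ guarantees $\tau(\pi(e_{x,1,1})z)=\tau(z)/2$ for all central projections $z$, so the diagonal projections are equivalent even when $\cl A$ is not a factor, and the direct-integral precaution you mention is actually unnecessary. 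Where the proofs diverge is the rigidity step. The paper fixes $y=1$ and grinds through the $2\times 2$ block entries of $V_3V_1^*$, $V_2V_1^*$, $V_4V_1^*$ separately, exploiting the specific diagonal/anti-diagonal shape of the Pauli matrices and the vanishing partial trace from relation (ii). You instead set $T=V_xV_y^*$, $W=U_xU_y^*$, derive the single scalar identity $\tau_{\cl N}(t_{a',b'}t_{a,b}^*)=W_{a',b'}\overline{W_{a,b}}$ from $\Gamma^{\pi,\tau}=\Gamma_{\cl K_4}$, and then observe that the auxiliary unitary $S=(W^*\otimes 1)T$ satisfies $\tau_{\cl N}(S_{a',b'}S_{a,b}^*)=\delta_{a',b'}\delta_{a,b}$ because $W^*W=I$ collapses the double sum. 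Faithfulness of $\tau_{\cl N}$ then kills the off-diagonal blocks, and the equality case $\tau(u)=1\Rightarrow u=1$ for unitaries merges the diagonal blocks, giving $S=1\otimes z$, i.e.\ $V_xV_y^*=U_xU_y^*\otimes z$. This replaces four case-by-case Pauli computations with one structural argument that uses only unitarity of $W$, and it would carry over verbatim to $\cl K_{d^2}\to\cl Q_d$ for general $d$ or to any other unitary error basis, whereas the paper's calculation is hard-wired to $d=2$. The gauge step absorbing $z_x$ into $V_x$ is also sound, since left multiplication by $1\otimes z$ commutes with $\epsilon_{a,a'}\otimes 1$. One small stylistic note: the identity $\tau\circ\pi_x=\mathrm{tr}_2$ holds automatically for any unital $*$-homomorphism of $M_2$ into a tracial von Neumann algebra (uniqueness of the tracial state on a matrix factor), so you do not actually need to read it off from the $x=y$ diagonal of the correlation.
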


\begin{proof}
Fix $x\in [4]$. Then $\{\pi(e_{x,a,a'})\}_{a,a'}$ is a system of matrix units in $\cl A$. Let $q=\pi(e_{x,1,1})$. Then for $m\in\cl A$ we have that
$m_{i,j}:=\pi(e_{x,1,i})m\pi(e_{x,j,1})$ is in $q\cl A q$ and the map
$$\rho:\cl A\to M_2\otimes q\cl A q, \ \ m\mapsto\sum_{i,j=1}^2\epsilon_{i,j}\otimes m_{i,j}$$
is a normal $*$-isomorphism.  Let $\cl N=q\cl A q$ and $\tau_{\cl N}$ be the restriction of $\tau$ to $q\cl A q$.
Then 
\begin{eqnarray*}
    (\Tr_2\otimes\tau_{\cl N})(\rho(m))=\sum_{i=1}^2\tau_{\cl N}(m_{i,i})=\sum_{i=1}^2\tau(m\pi(e_{x,i,1})\pi(e_{x,1,i}))=\tau(m).
\end{eqnarray*}
Thus $\rho$ is a trace preserving $*$-isomorphism and hence $\Gamma^{\pi,\tau}=\Gamma^{\rho\circ\pi,\Tr_2\otimes\tau_{\cl N}}$. For simplicity of notation identify now $\cl A$ and $M_2\otimes\cl N$. 
As $\{\pi(e_{x,a,a'})\}_{a,a'}$, $x\in [4]$, and $\{\epsilon_{a,a'}\otimes 1_{\cl N}\}_{a,a'}$ are systems of matrix units in $M_2\otimes\cl N$, by \cite[Lemma 2.1]{haagerup-musat}, there exists a unitary $V_x\in M_2\otimes\cl N$, $x\in [4]$, such that $$\pi(e_{x,a,a'})=V_x^*(\epsilon_{a,a'}\otimes 1_{\cl N})V_x.$$

A direct calculation shows that 
$$\Tr_2(U_1^*\epsilon_{a,a'}U_1U_3^*\epsilon_{b',b}U_3)=\left\{\begin{array}{ll}
1 & \text{if } a=b=a'=b',\\
-1 & \text{if } a=b\ne a'=b', \\
0& \text{otherwise.}
\end{array}\right.$$
By (\ref{trace}),  
\begin{equation}\label{trace2}
(\tr_2\otimes{\rm id}_{\cl N})(V_{x}V_{y}^*)=0, 
\ \ \ x\in [4], a,a'\in [2].
\end{equation}
Since $\Gamma^{\pi,\tau} = \Gamma_{\cl K_4}$, we have that 
\begin{equation}\label{eq_oneofaha}
(\Tr_{2}\otimes\tau_{\cl N})((\epsilon_{a,a}\otimes I_n)V_{1}V_{3}^*(\epsilon_{a,a}\otimes I_n)V_{3}V_{1}^*)=1, 
\ \ a\in [2], 
\end{equation}
and 
$$(\Tr_2\otimes\tau_{\cl N})((\epsilon_{a,a'}\otimes I_n)V_{1}V_{3}^*(\epsilon_{a',a}\otimes I_n)V_{3}V_{1}^*)=-1, 
\  a,a'\in [2], a\neq a'.$$
Writing $V_{3}V_{1}^*$ as a $2\times 2$-block matrix $$V_{3}V_{1}^*=\left(\begin{array}{cc}A_{3}&B_{3}\\C_{3}&D_{3}\end{array}\right),$$
with $A_3$, $B_3$, $C_3$ and $D_3$ in $\cl N$, 
by (\ref{eq_oneofaha}) 
we have that
$$\tau_{\cl N}(A_{3}^*A_{3})=\tau_{\cl N}(D_{3}^*D_{3})=1.$$ 
As $V_{3}V_{1}^*$ is unitary, 
$A_{3}^*A_{3}+C_{3}^*C_{3}=1$, giving 
$\tau_{\cl N}(C_{3}^*C_{3})=0$ and hence, since $\tau_{\cl N}$ is faithful, $C_{3}=0$. Similarly, $B_{3}=0$.  
From (\ref{trace2}) we get $A_{3}+D_{3}=0$, that is  $V_{3}V_{1}^*=\left(\begin{array}{cc}1&0\\0&-1\end{array}\right)\otimes u$, where $u$ is unitary in $\cl N$. 
Next, writing $V_{x}V_{1}^*$, $x=2,4$, in a block form
$$V_{x}V_{1}^*=\left(\begin{array}{cc}A_{x}&B_{x}\\C_{x}&D_{x}\end{array}\right),$$
applying (\ref{trace2}) to $V_{x}V_{1}^*$ and $V_{x}V_{3}^*=V_{x}V_{1}^*(V_{3}V_{1}^*)^*$, we get $A_{x}+D_{x}=0$ and 
$A_{x}u^*-D_{x}u^*=0$ and hence $A_{x}=D_{x}=0$ for $x=2,4$.

Next we observe that 
$$\Tr_2(U_1^*\epsilon_{a,a'}U_1U_2^*\epsilon_{b',b}U_2)=\left\{\begin{array}{ll}1, & a=a'\ne b=b',\\
1, & a=b'\ne a'=b\\
0,& \text{otherwise,}\end{array}\right.$$
implying $$\tau_{\cl N}(C_{2}^*B_{2})=\tau_{\cl N}(B_{2}^*C_{2})=1$$ and $$\tau_{\cl N}(B_{2}^*B_{2})=\tau_{\cl N}(C_{2}^*C_{2})=1.$$ By Cauchy-Schwartz, we obtain that $C_{2} = \theta B_{2}$ for a unimodular constant $\theta$; 
as $\tau_{\cl N}(C_{2}^*B_{2})=e^{-i\varphi}\tau_{\cl N}(B_{2}^*B_{2})$, we have that $\theta = 1$ and
$$V_{2}V_{1}^*=\left(\begin{array}{cc}0&1\\1&0\end{array}\right)\otimes v$$
for a unitary $v\in\cl N$. 
Condition (\ref{trace2}) applied to $V_{2}V_{4}^*=V_{2}V_{1}^*(V_{4}V_{1}^*)^*$ gives $vB_{4}^*+vC_{4}^*=0$ and hence $B_{4}=-C_{4}$. Therefore, 

$$V_{4}V_{1}^*=\left(\begin{array}{cc}0&-i\\i&0\end{array}\right)\otimes w$$
for a unitary $w\in\cl N$. 
With this at hand, we conclude that, if  
$V = V_{1}$, then 
$$V\pi(e_{x,a,a'})V^*=(V_{x}V_{1}^*)^*(\epsilon_{a,a'}\otimes 1_{\cl N})V_{x}V_{1}^*=U_x^*\epsilon_{a,a'}U_x\otimes 1_{\cl N},$$
$x\in [4]$, $a,a' \in [2]$, 
completing the proof. 
\end{proof}

It follows from \cite[Lemma 9.2]{tt-QNS} that there 
exists a *-isomorphism $\partial: \cl B_{4,2}\to\cl B_{4,2}^{\rm op}$, such that 
$\partial(e_{x,a,a'}) = e_{x,a',a}^{\rm op}$, 
$x\in [4]$, $a,a'\in [2]$. 
For
$w = e_{x_1,a_1,a_1'}\cdots e_{x_k,a_k,a_k'}$, for some $x_i\in [4]$, $a_i,a_i'\in [2]$, 
$i = 1,\dots,k$,
set 
$$\bar{w}:= \partial^{-1}(w^{\rm op}) = e_{x_k,a_k',a_k}\cdots e_{x_1,a_1',a_1}.$$

Recalling the definition (\ref{eq_Rd2d}), 
let $\cl S$ be the set of states of 
$C^*_u(\cl R_{4,2})\otimes_{\rm max} C^*_u(\cl R_{4,2})$
that factor through ${\rm Hom}(\cl K_{4}, \cl Q_2)\otimes_{\rm max} {\rm Hom}(\cl K_{4}, \cl Q_2)$. 
Let $\tilde{s}_{\cl K_4}\in\cl S$ be the state given by 
$$\tilde{s}_{\cl K_4}(u\otimes v)
= {\rm tr}_2(\pi_{\cl K_4}(q_{\cl K_4}(q_u(u)\overline{q_u(v)}))), \ \ \ u,v\in C^*_u(\cl R_{4,2}),$$
where $q_u : C^*_u(\cl R_{4,2})\to \cl B_{4,2}$ and 
$q_{\cl K_4} : \cl B_{4,2}\to {\rm Hom}(\cl K_{4}, \cl Q_2)$
are the quotient maps, 
and let $s_{\cl K_4}$ be the restriction of 
$\tilde{s}_{\cl K_4}$ to $\cl R_{4,2}\otimes_{\rm c} \cl R_{4,2}$.

\begin{corollary}\label{c:s_k}
    The state $s_{\cl K_4}$ is an abstract self-test for 
    the family $\cl S$. 
\end{corollary}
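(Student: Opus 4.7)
Let $g\in\cl S$ be any extension of $s_{\cl K_4}$; then by definition $g$ factors through a state $\tilde g$ on $\mathrm{Hom}(\cl K_4,\cl Q_2)\otimes_{\max}\mathrm{Hom}(\cl K_4,\cl Q_2)$, and the goal is to show $\tilde g=\tilde s_{\cl K_4}$. Writing $\cl A_1:=\mathrm{Hom}(\cl K_4,\cl Q_2)$ and applying the GNS construction to $\tilde g$, one obtains a triple $(H,\pi,\xi)$ with $\pi=\pi_A\cdot\pi_B$ a product of commuting unital $*$-representations $\pi_A,\pi_B$ of $\cl A_1$ on $H$. The hypothesis that $\tilde g$ restricts to $s_{\cl K_4}$ on $\cl R_{4,2}\otimes_{\rm c}\cl R_{4,2}$ is equivalent to
$$\langle\pi_A(e_{x,a,a'})\pi_B(e_{y,b',b})\xi,\xi\rangle={\rm tr}_2(\pi_{\cl K_4}(e_{x,a,a'}e_{y,b,b'}))$$
for all $x,y\in[4]$ and $a,a',b,b'\in[2]$; equivalently, the CQNS correlation of $(\pi_A,\pi_B,\xi)$ equals $\Gamma_{\cl K_4}$. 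Setting $v=1$ yields $\omega_\xi\circ\pi_A=\tau$ on $\cl R_{4,2}$, where $\tau:={\rm tr}_2\circ\pi_{\cl K_4}$, and analogously on the $B$-side.

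The crux is to argue that $\omega_\xi$ is tracial on the von Neumann algebra $\cl M_A:=\pi_A(\cl A_1)''$. Granting this, let $p_A\in\cl M_A$ be the support projection of $\omega_\xi|_{\cl M_A}$; then $(p_A\cl M_Ap_A,\omega_\xi|_{p_A\cl M_Ap_A})$ is a tracial von Neumann algebra with faithful trace, into which $u\mapsto p_A\pi_A(u)p_A$ induces a unital $*$-homomorphism whose associated correlation is $\Gamma_{\cl K_4}$. Theorem~\ref{repr} then yields a tracial von Neumann algebra $(\cl N_A,\tau_{\cl N_A})$, a trace preserving $*$-isomorphism $\rho_A:p_A\cl M_Ap_A\to M_2\otimes\cl N_A$, and a unitary $V_A\in M_2\otimes\cl N_A$ with
$$V_A^*\rho_A(p_A\pi_A(e_{x,a,a'})p_A)V_A=\pi_{\cl K_4}(e_{x,a,a'})\otimes 1_{\cl N_A}.$$
Running the symmetric argument for $\pi_B$ and combining with the commutation of $\pi_A(\cl A_1)$ and $\pi_B(\cl A_1)$ together with the cyclicity of $\xi$ for their joint action, one identifies (up to unitary conjugation) $H=\bb{C}^2\otimes\bb{C}^2\otimes K$, $\pi_A=\pi_{\cl K_4}\otimes I_2\otimes I_K$, $\pi_B=I_2\otimes\overline{\pi_{\cl K_4}}\otimes I_K$, and $\xi=\Omega_2\otimes\eta$ for some unit vector $\eta\in K$. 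A direct computation of the vector state under this tensor product representation gives $\tilde g=\tilde s_{\cl K_4}$.

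The hard part is establishing tracialness of $\omega_\xi$ on $\cl M_A$: the agreement $\omega_\xi\circ\pi_A=\tau$ on the operator system $\cl R_{4,2}$ does not by itself guarantee the trace condition on $\cl M_A$, since $\cl R_{4,2}$ is not a $*$-subalgebra of $\cl A_1$. A promising route is via the anti-isomorphism $\partial:\cl A_1\to\cl A_1^{\mathrm{op}}$ of \cite[Lemma~9.2]{tt-QNS}: under $\mathrm{id}\otimes\partial$ the state $\tilde s_{\cl K_4}$ transforms to the canonical multiplication state $u\otimes v^{\mathrm{op}}\mapsto\tau(uv)$ on $\cl A_1\otimes_{\max}\cl A_1^{\mathrm{op}}$, whose uniqueness given the prescribed values on $\cl R_{4,2}\otimes_{\rm c}\cl R_{4,2}^{\mathrm{op}}$ follows from the extremality of $\tilde s_{\cl K_4}$ in the state space of $\cl A_1\otimes_{\max}\cl A_1$ (which is itself a consequence of the irreducibility of the representation $\pi_{\cl K_4}\otimes\overline{\pi_{\cl K_4}}$ on $\bb{C}^2\otimes\bb{C}^2$, so that Theorem~\ref{th_charallqc} can be combined with Theorem~\ref{repr} to obtain the desired tracial structure).
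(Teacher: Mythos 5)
You correctly set up the GNS reduction, correctly identify that Theorem~\ref{repr} is the right tool, and — crucially — correctly flag the gap: knowing that $\omega_\xi\circ\pi_A$ agrees with ${\rm tr}_2\circ\pi_{\cl K_4}$ on the operator system $\cl R_{4,2}$ does \emph{not} by itself make $\omega_\xi$ tracial on $\pi_A(\cl A_1)''$, and without that tracial structure you cannot feed the problem into Theorem~\ref{repr}. Unfortunately, the route you propose to close that gap does not work. Extremality of $\tilde s_{\cl K_4}$ in $S(\cl A_1\otimes_{\max}\cl A_1)$ (and irreducibility of $\pi_{\cl K_4}\otimes\overline{\pi_{\cl K_4}}$) tells you that $\tilde s_{\cl K_4}$ is a pure state, but it gives you \emph{no} control over an arbitrary competing extension $\tilde g$: two distinct extreme states can certainly restrict to the same state on $\cl R_{4,2}\otimes_{\rm c}\cl R_{4,2}$, so ``uniqueness given the prescribed values follows from extremality'' is precisely the conclusion you are trying to prove, not a premise you can use. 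Likewise, Theorem~\ref{th_charallqc} requires that you already know unique extension to conclude the ampliation structure — it cannot produce the tracial form of $\tilde g$ as an input.

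The missing ingredient, which is what the paper's proof supplies, is the characterisation of perfect quantum commuting strategies for quantum graph homomorphism games by tracial representations (\cite[Theorem 3.2]{bhtt-JFA}). Because $\Gamma_\phi=\Gamma_{\cl K_4}$ is a perfect strategy for $\cl K_4\to\cl Q_2$, that theorem produces, for \emph{any} extension $\phi$ factoring through the game algebras, a tracial von Neumann algebra $(\cl A,\tau)$ and a single $*$-homomorphism $\pi:\cl B_{4,2}\to\cl A$ with $\phi(u\otimes v)=\tau(\pi(u\bar v))$. This simultaneously (a) collapses the two-sided GNS picture $(\pi_A,\pi_B,\xi)$ to one representation with a trace, and (b) delivers exactly the hypotheses of Theorem~\ref{repr}, after which the computation $\tilde\phi(w)=\tau(\tilde\pi(w))=\tau_2(\pi_{\cl K_4}(w))$ finishes the proof. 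So your outline identifies the right target and the right endgame, but the crucial structural theorem that bridges ``$\Gamma_\phi$ is a perfect strategy'' to ``the state is tracial'' is absent, and what you offer in its place is circular.
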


\begin{proof}
It suffices to show that $s_{\cl K_4}$ has a 
unique extension to a state 
$\phi : \cl B_{4,2}\otimes_{\max} \cl B_{4,2} \to \bb{C}$. Fix such an extension $\phi$ and note 
that the canonical correlation $\Gamma_{\phi} : \cl D_4\otimes \cl D_4\to M_2\otimes M_2$ associated with $\phi$
coincides with $\Gamma_{s_{\cl K_4}}$. 
By \cite[Theorem 3.2]{bhtt-JFA}, there exists a 
tracial von Neumann 
algebra $(\cl A,\tau)$ and a *-representation 
$\pi : \cl B_{4,2}\to \cl A$,  
such that 
$$\phi(u\otimes v) = \tau(\pi(u\bar{v})), \ \ \ 
u,v\in \cl B_{4,2}.$$
Further, $\pi$ canonically descends to a *-representation 
$\tilde{\pi} : {\rm Hom}(\cl K_{4}, \cl Q_2)\to \cl A$, 
such that, if 
$\tilde{\phi} : {\rm Hom}(\cl K_{4}, \cl Q_2)\otimes_{\max} {\rm Hom}(\cl K_{4}, \cl Q_2)\to \bb{C}$ is the 
canonical functional, arising from $\phi$ in view of the definition of the class $\cl S$, we have that 
$$\tilde{\phi}(u\otimes v) = \tau(\tilde{\pi}(u\bar{v})), \ \ \ 
u,v\in {\rm Hom}(\cl K_{4}, \cl Q_2).$$
Let $\rho : \cl A\to M_2\otimes \cl N$
be the *-homomorphism, arising from 
Theorem \ref{repr} (here $(\cl N,\tau_{\cl N})$
is a tracial von Neumann algebra). 
Then, for $w\in {\rm Hom}(\cl K_{4}, \cl Q_2)$, 
we have that 
$$\tilde{\phi}(w) = \tau(\tilde{\pi}(w)) = 
(\tau_2\otimes \tau_{\cl N})(\pi_{\cl K_4}(w)\otimes 1_{\cl N})
= \tau_2(\pi_{\cl K_4}(w)).$$
The proof is complete.
\end{proof}

Let $\cl C$ be the class of quantum  models 
$(H_A\otimes H_B, \varphi_A, \varphi_B,\xi)$ of CQNS correlations, where
$H_A$ and $H_B$ are finite-dimensional Hilbert spaces, 
$H_A\otimes H_B = \mbox{}_{\cl B(H_A)}H_{\cl B(H_B)^o}$ is a bipartite 
system defined by the canonical bimodule structure of $H_A\otimes H_B$, 
and $\varphi_A$ and $\varphi_B$ extend to representations $\pi_A$ and $\pi_B$ of $\cl A:=\cl B_{4,2}$ and $\cl B:=\cl B_{4,2}$, respectively.

\begin{proposition}\label{p:s_k}
The state $s_{\cl K_4}$ is a selftest for $\cl C$.
\end{proposition}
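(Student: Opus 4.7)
The plan is to mirror the proof of Corollary \ref{c:s_k}, applying Theorem \ref{repr} on each side separately to produce a split (and hence local, via Proposition \ref{p_split}) isometry to the ideal model
$\tilde S = (\bb{C}^2\otimes\bb{C}^2,\; \pi_{\cl K_4}\otimes I_{\bb{C}^2},\; I_{\bb{C}^2}\otimes\pi_{\cl K_4},\; \Omega_2)$.
One first checks directly that $f_{\tilde S}=s_{\cl K_4}$ by combining the identity $\langle(A\otimes B)\Omega_2,\Omega_2\rangle=\tfrac12\Tr(AB^t)$ with the easily verified equality $\pi_{\cl K_4}(e_{y,b,b'})^t=\pi_{\cl K_4}(\bar e_{y,b,b'})=\pi_{\cl K_4}(e_{y,b',b})$; this hinges on the symmetry properties of the Pauli basis (in particular on $\sigma_y^t=-\sigma_y$ combined with $\sigma_y^*=\sigma_y$).

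Given $S=(H_A\otimes H_B,\varphi_A,\varphi_B,\xi)\in\cl C$ with $f_S=s_{\cl K_4}$, I let $\pi_A,\pi_B$ denote the $*$-representations of $\cl B_{4,2}$ extending $\varphi_A,\varphi_B$, and set $\tau_A(a):=\langle(a\otimes I_{H_B})\xi,\xi\rangle$ on $\pi_A(\cl B_{4,2})$. By Corollary \ref{c:s_k} the state $\omega_\xi\circ(\pi_A\otimes\pi_B)$ on $\cl B_{4,2}\otimes_{\max}\cl B_{4,2}$ coincides with $\tilde s_{\cl K_4}$; evaluating at $\cdot\otimes 1$ gives $\tau_A\circ\pi_A=\Tr_2\circ\pi_{\cl K_4}$, which is tracial on $\cl B_{4,2}$. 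Hence $\tau_A$ is a tracial state on the finite-dimensional C*-algebra $\pi_A(\cl B_{4,2})$, whose central support $Q_A$ satisfies $(Q_A\otimes I)\xi=\xi$. For $u\in\ker q_{\cl K_4}$ one has $\tau_A(\pi_A(u^*u))=\Tr_2(\pi_{\cl K_4}(u^*u))=0$, so $\pi_A(u)Q_A=0$, and the compression $\pi_A^c:=\pi_A(\cdot)Q_A$ factors as a unital $*$-homomorphism of $\mathrm{Hom}(\cl K_4,\cl Q_2)$ into $(\pi_A(\cl B_{4,2})Q_A,\tau_A)$. A short computation using the invariance $\Tr_2\circ\pi_{\cl K_4}\circ\overline{\,\cdot\,}=\Tr_2\circ\pi_{\cl K_4}$ and traciality verifies $\Gamma^{\pi_A^c,\tau_A}=\Gamma_{\cl K_4}$, so Theorem \ref{repr} applies and produces a Hilbert-space identification $W_A\colon Q_AH_A\to\bb{C}^2\otimes K_A$ conjugating $\pi_A^c$ to $\pi_{\cl K_4}\otimes I_{K_A}$; the symmetric construction on the B-side yields $W_B\colon Q_BH_B\to\bb{C}^2\otimes K_B$.

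With these identifications in place, $\xi$ lies in $(\bb{C}^2\otimes\bb{C}^2)\otimes(K_A\otimes K_B)$, and the restriction of $\omega_\xi$ to the subalgebra acting on the first two tensor factors equals the pure state $\omega_{\Omega_2}$; a standard Schmidt argument then forces $\xi=\Omega_2\otimes\xi_{\mathrm{aux}}$ for some unit vector $\xi_{\mathrm{aux}}\in K_A\otimes K_B$. I next extend $W_A,W_B$ isometrically to $\bar V_A\colon H_A\to\bb{C}^2\otimes\bar K_A$ and $\bar V_B\colon H_B\to\bb{C}^2\otimes\bar K_B$ by adjoining orthogonal auxiliary summands to accommodate $Q_A^\perp H_A$ and $Q_B^\perp H_B$. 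The tensor $\bar V_A\otimes\bar V_B$ is split, hence a local isometry by Proposition \ref{p_split}, and by construction satisfies
\[
(\bar V_A\otimes\bar V_B)\,\varphi_A(u)\varphi_B(v)\,\xi
= \bigl((\pi_{\cl K_4}(u)\otimes\pi_{\cl K_4}(v))\Omega_2\bigr)\otimes\xi_{\mathrm{aux}},
\]
witnessing $S\preceq\tilde S$ and establishing the claim.

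The main obstacle is the passage from the abstract $*$-isomorphism $\rho_A\colon Q_A\pi_A(\cl B_{4,2})\to M_2\otimes\cl N_A$ of Theorem \ref{repr} to the explicit Hilbert-space identification $W_A$: this requires decomposing the representation of $\pi_A(\cl B_{4,2})$ on $Q_AH_A$ via the block decomposition $\bigoplus_i M_{2n_i}$ of Remark \ref{r_krantnd}, absorbing the unitary $V_A\in M_2\otimes\cl N_A$, and collapsing the multiplicity data into a single auxiliary Hilbert space $K_A$. A secondary technicality is arranging the extension $W_A\leadsto\bar V_A$ so that the resulting tensor-product isometry remains split, which amounts to padding $K_A$ with an orthogonal copy of $Q_A^\perp H_A$ and defining $\bar V_A$ to act there as any fixed isometric embedding.
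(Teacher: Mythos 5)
Your argument is correct in outline and takes a genuinely different route from the paper. The paper's proof proceeds by verifying the hypotheses of the general Theorem~\ref{th_rev}: it first checks that $s_{\cl K_4}$ is extreme in $(\cl R_{4,2}\otimes_{\rm min}\cl R_{4,2})^{\rm d}$ (using Theorem~\ref{repr} to show any convex decomposition is trivial), then invokes the unique-extension property from Corollary~\ref{c:s_k}, and finally delegates the construction of the dilating isometry to the machinery of Theorem~\ref{th_rev} (direct integrals, measurable fields of representations, and so on). You instead construct the local isometry by hand in the concrete finite-dimensional setting: you use Corollary~\ref{c:s_k} to identify $\omega_\xi\circ(\pi_A\otimes\pi_B)$ with $\tilde s_{\cl K_4}$, deduce that the marginal $\tau_A$ is a trace, pass to the central support $Q_A$ so that the compression $\pi_A^c$ factors through $\mathrm{Hom}(\cl K_4,\cl Q_2)$, apply Theorem~\ref{repr} to produce an identification $W_A$, and then use a Schmidt-rank argument to factorize $\xi$. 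This bypasses the extremality check entirely (it is simply unnecessary in your approach), and is in effect a direct, elementary re-derivation of the finite-dimensional case of Theorem~\ref{th_rev} for this specific state. The trade-off is that your argument does the representation-theoretic bookkeeping explicitly, whereas the paper's is shorter once the general theorem is available.

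A few remarks worth addressing when fleshing this out. First, the key technical step you flag as the ``main obstacle'' can be resolved cleanly: letting $\rho'=V^*\rho(\cdot)V$ be the corrected $*$-isomorphism from Theorem~\ref{repr}, the composition $(\rho')^{-1}\colon M_2\otimes\cl N_A\to\cl B(Q_AH_A)$ is a unital $*$-representation whose restriction to $M_2\otimes 1$ is a multiple of the defining representation of $M_2$; the intertwiner for that multiple is exactly the desired $W_A\colon Q_AH_A\to\bb{C}^2\otimes K_A$, and it automatically conjugates $\pi_A^c(u)$ to $\pi_{\cl K_4}(u)\otimes I_{K_A}$. Second, you cite Proposition~\ref{p_split} for the fact that a split isometry is local, but that proposition states the converse; the correct reference is Example~\ref{ex_findimset}. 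Third, in the final display you should make explicit the reordering of tensor factors $(\bb{C}^2\otimes\bar K_A)\otimes(\bb{C}^2\otimes\bar K_B)\cong(\bb{C}^2\otimes\bb{C}^2)\otimes(\bar K_A\otimes\bar K_B)$ needed to express the target in the form $\tilde H\otimes H_{\rm aux}$ required by Definition~\ref{d_appdil}. None of these affect the soundness of the approach.
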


\begin{proof}
   We will apply Theorem \ref{th_rev}, see also \cite[Theorem 4.12]{pszz}. First we observe that $s_{\cl K_4}$ is an extreme point in 
   the dual $(\cl R_{4,2}\otimes_{\rm min}\cl R_{4,2})^{\rm d}$
   of $\cl R_{4,2}\otimes_{\rm min}\cl R_{4,2}$. In fact, if $s_{\cl K_4} = \sum_{i=1}^n\alpha_is_i$, where $\alpha_i\geq 0$, $\sum_{i=1}^k\alpha_i=1$, then the corresponding correlations $\Gamma_{s_i}$ are perfect strategies for quantum colouring of $\cl K_4$. 
Indeed, note that, if $J = \Omega_2\Omega_2^*$ is the maximally entangled state  in $\mathbb C^2\otimes\mathbb C^2$, then
$$0=\Tr(\Gamma_{s_{\cl K_4}}(\epsilon_{x,x}\otimes\epsilon_{x,x})J^\perp)=\sum_{i=1}^n\alpha_i\Tr(\Gamma_{s_i}(\epsilon_{x,x}\otimes\epsilon_{x,x})J^\perp)$$
   and hence $\Tr(\Gamma_{s_i}(\epsilon_{x,x}\otimes\epsilon_{x,x})J^\perp)$,
   being non-negative, must be zero. Similar arguments show that 
   $\tr(\Gamma_{s_i}(\epsilon_{x,x}\otimes\epsilon_{y,y})J)=0$ for all $i$ and all $x\ne y$.

   Therefore $s(e_{x,a,a'}\otimes e_{y,b,b'})=\sum_{i=1}^n\alpha_i\tau_i(\pi_i(e_{x,a,a'})\pi_i(e_{y,b',b}))$, where $\pi_i: {\rm Hom}(\cl K_{4}, \cl Q_2)\to \cl A_i$ are 
   $*$-homomorphisms into finite dimensional algebras $\cl A_i$ with trace $\tau_i$. 
   By Proposition \ref{repr},
$$\tau_i(\pi_i(e_{x,a,a'})\pi_i(e_{y,b',b})) = {\rm tr}_2(U_x^*\epsilon_{a,a'}U_xU_y^*\epsilon_{b',b}U_y)$$ for all $i$ and hence $s$ is extreme. 
The state $s_{\cl K_4}$ has also a unique extension to $\cl B_{4,2}\otimes\cl B_{4,2}$ and hence by Theorem \ref{th_rev} it is a self-test. 
\end{proof}


\subsection{Schur products}\label{s:schur}

Let $G$ be a finite group, let $\pi:G\to\cl{U}(H_\pi)$ and $\rho : G\to\cl{U}(H_\rho)$ be irreducible representations of $G$, and let $\psi\in H_\pi\ten H_\rho$ be a unit vector. Then
$$u(s,t):=\la (\pi(s)\ten\rho(t))\psi,\psi\ra, \ \ \ s,t\in G,$$
is a normalised positive definite function on $G\times G$ and the associated 
\emph{Schur multiplier} $\Theta(u):\cl{B}(\ell^2(G))\ten \cl{B}(\ell^2(G))\to \cl{B}(\ell^2(G))\ten \cl{B}(\ell^2(G))$ is the unital quantum channel satisfying
\begin{equation}\label{e:schur}\Theta(u)(e_{s,s'}\ten e_{t,t'})=u(s^{-1}s',t^{-1}t')e_{s,s'}\ten e_{t,t'};
\end{equation}
if further clarity is needed, we write $\Theta(u) = \Theta_{\pi,\rho,\psi}$. 
Letting $\widetilde{E}_{s,s',g,g'}=\delta_{s,g}\delta_{s',g'}\pi(s^{-1}s')$ and $\widetilde{F}_{t,t',h,h'}=\delta_{t,h}\delta_{t',h'}\rho(t^{-1}t')$, 
it is straightforward to verify that 
$\widetilde{E} := (\widetilde{E}_{s,s',g,g'})_{s,s',g,g'}$ and 
$\widetilde{F} := (\widetilde{F}_{t,t',h,h'})_{t,t',h,h'}$ are unitaries and hence 
the quadruple 
$S_{\pi,\rho,\psi} = (H_\pi\ten H_\rho, \widetilde{E},\widetilde{F},\psi)$ is a $\cl{Q}_{\rm q}$-model for $\Theta(u)$.

A \textit{unistochastic operator matrix (USOM)} 
is a stochastic operator matrix $E=(E_{x,x',a,a'})\in M_X\ten M_X\ten \cl{B}(H)$ for which there exists a unitary $U:H^X\to H^X$ such that $E_{x,x',a,a'}=U_{a,x}^*U_{a',x'}$. 
A $\cl{Q}_{\rm q}$-model will be called {\it unitary} if it is defined via USOM's $(E_{s,s',g,g'})_{s,s',g,g'}$ and $(F_{t,t',h,h'})_{s,s',g,g'}$ (see Section \ref{uni}). In particular, we have that $S := S_{\pi,\rho,\psi}$ is a unitary model.  
If $H$ and $K$ are Hilbert spaces, we say that a vector $\psi\in H\otimes K$ is 
\emph{marginally cyclic} if 
$$\overline{(\cl B(H)\otimes 1)\psi} = \cl B(H\otimes K) = \overline{(1\otimes \cl B(K))\psi}.$$
Let 
$$\frak{M}(u) = \{S_{\pi'\rho',\psi'} : 
\mbox{ full rank unitary model s.t. }
\Theta_{\pi'\rho',\psi'} = \Theta(u)\}.$$

\begin{theorem}\label{p:prop} 
If $\psi$ is marginally cyclic in $H_{\pi}\otimes H_{\rho}$ and 
\begin{equation}\label{e:ext}\mathrm{span}\{(\pi(s)\ten\rho(t))\psi\psi^*(\pi(s^{-1})\ten\rho(t^{-1})) : s,t\in G\}=\cl{B}(H_\pi\ten H_\rho)\end{equation}
then $(H_\pi\ten H_\rho, \widetilde{E},\widetilde{F},\psi)$ is a self-test for $\frak{M}(u)$.
\end{theorem}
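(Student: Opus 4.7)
The plan is to show directly that every $S'=S_{\pi',\rho',\psi'}\in\frak{M}(u)$ satisfies $S'\preceq S_{\pi,\rho,\psi}$, by producing a split local isometry in the sense of Example~\ref{ex_findimset}. The backbone is a purity argument: since $\pi$ and $\rho$ are irreducible, $(\pi\otimes\rho)(C^*(G)\otimes C^*(G))=\cl{B}(H_\pi\otimes H_\rho)$, and every non-zero vector in $H_\pi\otimes H_\rho$ --- in particular $\psi$ --- is automatically cyclic for the irreducible representation $\pi\otimes\rho$ of $C^*(G\times G)$. Thus $u$ determines a pure state $\phi$ on $C^*(G\times G)$ whose (unique up to unitary equivalence) irreducible GNS triple is $(\pi\otimes\rho,\psi)$.

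Next I would decompose $\pi'=\oplus_i\pi_i\otimes 1_{m_i}$ and $\rho'=\oplus_j\rho_j\otimes 1_{n_j}$ into irreducibles of $G$, giving the block form $\pi'\otimes\rho'=\oplus_{i,j}(\pi_i\otimes\rho_j)\otimes 1_{m_in_j}$ on $\oplus_{i,j}(H_{\pi_i}\otimes H_{\rho_j})\otimes(\bb{C}^{m_i}\otimes\bb{C}^{n_j})$. Purity of $\phi$ confines the cyclic subspace of $\psi'$ to the single $(i,j)=(\pi,\rho)$ block, and writing $\psi'=\sum_k v_k\otimes e_k$ in $H_\pi\otimes H_\rho\otimes(\bb{C}^{m_\pi}\otimes\bb{C}^{n_\rho})$, the rank-one identity $\sum_k v_k v_k^*=\psi\psi^*$ forces $v_k\in\bb{C}\psi$, so $\psi'=\psi\otimes\xi_{\mathrm{aux}}$ for some unit vector $\xi_{\mathrm{aux}}\in\bb{C}^{m_\pi}\otimes\bb{C}^{n_\rho}$. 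The full rank hypothesis then eliminates the remaining blocks: if $m_i>0$ for some $i\neq\pi$, the $\pi_i$-isotypic projection lies in $\pi'(G)''\otimes I$ (Alice's observable algebra) and annihilates $\psi'$, contradicting the separating property. The marginal cyclicity of $\psi$ is invoked here to secure that $\psi$ is separating for $\cl{B}(H_\pi)\otimes I$ in $H_\pi\otimes H_\rho$ (and dually), so that the separating-vector argument extends cleanly to the ampliated system; an analogous argument on the Bob side yields $n_j=0$ for $j\neq\rho$, while condition~$(\ref{e:ext})$ ensures that the $G\times G$-orbit of $\psi\psi^*$ spans $\cl{B}(H_\pi\otimes H_\rho)$, ruling out residual entanglement in the identification that would obstruct a clean tensor split.

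With $H_{\pi'}\cong H_\pi\otimes\bb{C}^{m_\pi}$, $H_{\rho'}\cong H_\rho\otimes\bb{C}^{n_\rho}$, $\pi'\cong\pi\otimes 1$ and $\rho'\cong\rho\otimes 1$, the split isometry $V=V_A\otimes V_B$ with $V_A,V_B$ the canonical identifications realises a local isometry $V:H_{\pi'}\otimes H_{\rho'}\to(H_\pi\otimes H_\rho)\otimes H_{\mathrm{aux}}$ with auxiliary spatial bipartite system $\mbox{}_{\cl{B}(\bb{C}^{m_\pi})}(H_{\mathrm{aux}})_{\cl{B}(\bb{C}^{n_\rho})^o}$ on $H_{\mathrm{aux}}=\bb{C}^{m_\pi}\otimes\bb{C}^{n_\rho}$; by Proposition~\ref{p_split}, $V$ is local, and a direct calculation yields
\begin{equation*}
V\widetilde{E}'_{s,s',g,g'}\widetilde{F}'_{t,t',h,h'}\psi'=\widetilde{E}_{s,s',g,g'}\widetilde{F}_{t,t',h,h'}\psi\otimes\xi_{\mathrm{aux}},
\end{equation*}
witnessing $S'\preceq S_{\pi,\rho,\psi}$. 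The main obstacle lies in the combined use of $(\ref{e:ext})$ and marginal cyclicity to ensure the diagonalisations $\pi'\cong\pi\otimes 1$ and $\rho'\cong\rho\otimes 1$ admit compatible axis choices, so that the induced $\xi_{\mathrm{aux}}$ sits in the product auxiliary system as a genuine unit vector and no off-diagonal remnants in the isotypic decomposition obstruct the final identification.
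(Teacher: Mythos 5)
Your proposal starts from the wrong picture of what $\frak{M}(u)$ contains. By definition, a full rank unitary model is any quadruple $(H_A\otimes H_B,E,F,\xi)$ with $E_{s,s',g,g'}=U_{g,s}^*U_{g',s'}$ and $F_{t,t',h,h'}=V_{h,t}^*V_{h',t'}$ for \emph{arbitrary} block operator unitaries $(U_{g,s})$, $(V_{h,t})$; neither the block matrix $U$ nor the map $s\mapsto U_{s,s}$ is assumed to be a group representation, and indeed $S_{\pi,\rho,\psi}$ is merely one example in this class. Your argument, however, treats the competing model as $S_{\pi',\rho',\psi'}$ with $\pi',\rho'$ unitary representations of $G$, decomposing them into $G$-irreducibles and invoking the pure state $u$ on $C^*(G\times G)$ via GNS. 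That first step is simply not available: a large part of the work in the actual argument is first to \emph{derive} from the full-rank hypothesis, via the $\cl D_G$-bimodule structure of Schur multipliers, that the block unitaries are diagonal ($U_{s,s'}=\delta_{s,s'}U_s$), and then to observe that the diagonal unitaries $U_s$ need not be a $G$-representation. One instead has to work with the $C^*$-algebras $\cl A=C^*(\{U_s\})$, $\cl B=C^*(\{V_t\})$, decompose \emph{those} into irreducibles, use extremality of $\Theta(u)$ (which is where (\ref{e:ext}) enters, via the Landau--Streater criterion) to make the intertwiner $W_\lambda$ well defined, and finally a multiplicative-domain argument to show $W_\lambda$ is actually unitary. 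Without these steps the group-representation picture you use is unjustified.

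There is a second gap even in the special case where the competitor does arise from group representations. Purity of $u$ on $C^*(G\times G)$ only pins down the cyclic subspace of $\psi'$ as a copy of $(\pi\otimes\rho,\psi)$; it forces some block $\pi_i\otimes\rho_j\cong\pi\otimes\rho$, but in general this joint equivalence does \emph{not} factor into $\pi_i\cong\pi$ and $\rho_j\cong\rho$ (tensor products of inequivalent irreducibles can be equivalent). A split local isometry $V_A\otimes V_B$ requires precisely the factorized equivalences. The role of marginal cyclicity is to secure this factorization: it guarantees that the commutants $\cl B(H_\pi)\otimes 1$ and $1\otimes\cl B(H_\rho)$ are each singly generated from $\psi$, which is what lets one promote the joint intertwiner $W_\lambda$ to separate intertwiners for the Alice and Bob sides (via $W_\lambda(\pi(s)\otimes 1)=(\sigma_A^i(U_s)\otimes I)W_\lambda$ and its $\rho$-analogue). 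Your sketch mentions marginal cyclicity but only to make $\psi$ separating, and leaves unaddressed exactly this factorization issue, which is the crux of getting a \emph{local} isometry rather than a joint one.
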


\begin{proof}
Suppose that
$(H_A\ten H_B,(E_{s,s',g,g'})_{s,s',g,g'},(F_{t,t',h,h'})_{t,t',h,h'},\xi)$ is a full rank unitary model for $\Theta(u)$ in $\cl {Q}_{\rm q}$. This means that $E_{s,s',g,g'} = U_{g,s}^*U_{g',s'}$ and $F_{t,t',h,h'}=V_{h,t}^*V_{h',t'}$ for some block operator unitaries $(U_{g,s})_{g,s} :H_A^G\to H_A^G$ and 
$(V_{h,t})_{h,s} : H_B^G\to H_B^G$, the reduced densities 
$(\id\ten\tr)(\xi\xi^*)$ and $(\tr\ten\id)(\xi\xi^*)$ have full rank, and 
$$\la (E_{s,s',g,g'}\ten F_{t,t',h,h'}) \xi,\xi\ra=\delta_{s,g}\delta_{s',g'}\delta_{t,h}\delta_{t',h'}\la\pi(s^{-1}s')\ten\rho(t^{-1}t')\psi,\psi\ra.$$
In particular,
\begin{equation}\label{e:1}
\la (U_{s',s'}\ten V_{t',t'})\xi,
(U_{s,s}\ten V_{t,t}) \xi\ra = \la (\pi(s')\ten\rho(t'))\psi, (\pi(s)\ten\rho(t))\psi\ra
\end{equation}
for all $s,s',t,t'\in G$.
We now observe that the unitaries $(U_{g,s})_{g,s}$ and 
$(V_{h,t})_{h,t}$ are necessarily diagonal. 
Since (under the trace-duality convention)
$$\Theta(u)(X^{\rm t})^{\rm t} = (\id\ten\tr_{2,4})((U\ten V)(X\ten\xi\xi^*)(U^*\ten V^*)),$$ 
setting $\rho_\xi = (\id\ten\tr)(\xi\xi^*)$, we have 
$$\Theta(u)(\rho^{\rm t}\ten 1)^{\rm t} = 
(\id\ten \tr)U(\rho\ten\rho_{\xi})U^*.$$ 
Since the transformation 
$\rho \mapsto \Theta(u)(\rho^{\rm t}\ten 1)^{\rm t}$ 
is a $\cl{D}_G$-bimodule map, if 
$\{e_i\}_{i=1}^{m_A}$ is an orthonormal basis for $H_A$, 
it follows that the Kraus operator
$$(\id\ten e_i^*)U((\cdot)\ten\sqrt{\rho_{\xi}}e_j):\ell^2(G)\to\ell^2(G)$$
belongs to $\cl{D}_G'=\cl{D}_G$ for each $i,j \in [m_A]$. Hence,
$$U(1\ten\sqrt{\rho_{\xi}})(x\ten 1)=(x\ten 1)U(1\ten\sqrt{\rho_{\xi}}), \ \ \ x\in \cl{D}_G.$$
Since $\rho_{\xi}$, and hence $\sqrt{\rho_{\xi}}$, has full rank, it follows that $U(x\ten 1)=(x\ten 1)U$, so that $U\in\cl {D}_G\ten\cl B(H_A)$. Hence, $U_{s,s'}=\delta_{s,s'}U_{s,s}$ with $U_s:=U_{s,s}\in\cl U(H_A)$. 
Without loss of generality, we may also assume $U_e = 1$ (indeed, if not,
redefine $U_s$ as $U_e^*U_s$, noting that $E_{s,s',g,g'}=U_{g,s}^*U_{e}U_{e}^*U_{g',s'}$).
An analogous argument shows that $V_{t,t'}=\delta_{t,t'}V_{t,t}$ with $V_t:=V_{t,t}\in\cl U(H_B)$ and we may assume that $V_e=1$. 

Let $\cl A$ (resp. $\cl B$) 
be the (unital) $C^*$-subalgebra of $\cl B(H_A)$ (resp. $\cl B(H_B)$)
generated by $\{U_s\}_{s\in G}$ (resp. $\{V_t\}_{t\in G}$). 
By the finite-dimensionality of $\cl A$ and $\cl B$, there exist unitaries $W_A:H_A\to\bigoplus_{i=1}^{n_A} H_A^i\ten K_A^i$ and $W_B:H_B\to\bigoplus_{j=1}^{n_B}H_B^j\ten K_B^j$ such that 
$$W_A a W_A^*=\bigoplus_{i=1}^{n_A} \sigma_A^i(a)\ten I_{K_A^i}, \ \ \ a\in \cl A,$$
and 
$$W_B b W_B^*=\bigoplus_{j=1}^{n_B} \sigma_B^j(a)\ten I_{K_B^j}, \ \ \ b\in \cl B,$$
with $\sigma_A^i$ and $\sigma_B^j$ irreducible representations of $\cl A$ and $\cl B$, respectively. Fix orthonormal bases $\{e^i_k\}_{k=1}^{d_A^i}$ and $\{e^j_l\}_{l=1}^{d_B^j}$ of $K_A^i$ and $K_B^j$, respectively. It follows from (\ref{e:1}) that
\begin{align*}&\la (U_{s'}\ten V_{t'})\xi,(U_s\ten V_t)\xi\ra\\
&=\sum_{i=1}^{n_A}\sum_{j=1}^{n_B}\sum_{k=1}^{d_A^i}\sum_{l=1}^{d_B^j}p_{i,j,k,l}\la(\sigma_A^i(U_{s'})
\hspace{-0.1cm}\ten \hspace{-0.1cm}\sigma_B^j(V_{t'}))\xi_{i,j,k,l},(\sigma_A^i(U_{s})\hspace{-0.1cm}\ten\hspace{-0.1cm}\sigma_B^j(V_{t}))\xi_{i,j,k,l}\ra,
\end{align*}
where $\xi_{i,j,k,l}\in H_A^i\ten H_B^j$ is the normalisation of $(\id\ten e_k^i\ten\id\ten e_l^j)^*\xi$, and 
$$p_{i,j,k,l}=\norm{(\id\ten e_k^i\ten \id \ten e_l^j)^*\xi}^2\geq 0,$$ 
and note that $\sum_{i,j,k,l}p_{i,j,k,l}=1$. Combining (\ref{e:schur}) and (\ref{e:1}), we obtain a convex decomposition of the channel $\Theta(u)$, which is extreme within the set of completely positive trace preserving maps by (\ref{e:ext}) (see e.g. \cite[Theorem 3]{LS}). Indeed, 
setting $\psi_{(s,t)}:=(\pi(s)\otimes\varphi(t))\psi$, $s,t\in G$, 
in the notation and terminology of \cite{LS}, $\{\psi_{(s,t)}\}_{(s,t)\in G\times G}$ is a full set of vectors by (\ref{e:ext}) and $C_{(s,t),(s',t')}:=u(s^{-1}s',t^{-1}t')=\langle \psi_{(s',t')},\psi_{(s,t)}\rangle$.
Thus, for every $\lambda=(i,j,k,l)\in\Lambda=\{(i,j,k,l) : p_{i,j,k,l} > 0\}$, we have
\begin{align*}&\la(\pi(s')\ten\rho(t'))\psi,(\pi(s)\ten\rho(t))\psi\ra\\
&\hspace{1cm} =\la(\sigma_A^i(U_{s'})\ten \sigma_B^j(V_{t'}))\xi_{i,j,k,l},(\sigma_A^i(U_{s})\ten\sigma_B^j(V_{t}))\xi_{i,j,k,l}\ra.
\end{align*}
It follows that 
$$W_\lambda:H_\pi\ten H_\rho\ni (\pi(s)\ten\rho(t))\psi\to(\sigma_A^i(U_{s})\ten\sigma_B^j(V_{t}))\xi_{i,j,k,l}\in H_A^i\ten H_B^j$$
is a well-defined isometry for each $\lambda\in\Lambda$. Recalling that $\psi$ is marginally cyclic 
and that $\pi$ and $\rho$ are irreducible (so $\mathrm{span}\{\pi(s) : s\in G\}=\cl{B}(H_\pi)$ and $\mathrm{span}\{\rho(t) : t\in G\}=\cl{B}(H_\rho)$) for any $\eta\in H_\pi\ten H_{\rho}$ and $s\in G$, for suitable scalars $c_t$, 
$t\in G$, we have 
\begin{align*}W_\lambda(\pi(s)\ten1)\eta&=\sum_{t\in G}c_tW_\lambda(\pi(s)\ten\rho(t))\psi\\
&=\sum_{t\in G}c_t (\sigma_A^i(U_{s})\ten\sigma_B^j(V_{t}))\xi_{i,j,k,l}\\
&=\sum_{t\in G}c_t (\sigma_A^i(U_{s})\ten I_{H_B^j})(\sigma_A^i(U_{e})\ten\sigma_B^j(V_{t}))\xi_{i,j,k,l}\\
&=\sum_{t\in G}c_t (\sigma_A^i(U_{s})\ten I_{H_B^j})W_\lambda(1\ten\rho(t))\psi\\
&=(\sigma_A^i(U_{s})\ten I_{H_B^j})W_\lambda\eta.
\end{align*}
Thus, $W_\lambda(\pi(s)\ten1)=(\sigma_A^i(U_{s})\ten I_{H_B^j})W_\lambda$, $s\in G$. Similarly, $W_\lambda(1\ten\rho(t))=(I_{H_A^i}\ten \sigma_B^j(V_t))W_\lambda$, $t\in G$, so that
$$W_\lambda(\pi(s)\ten\rho(t))=(\sigma_A^i(U_{s})\ten \sigma_B^j(V_t))W_\lambda, \ \ \ s,t\in G.$$
Then the unitaries $\sigma_A^i(U_{s})\ten \sigma_B^j(V_t)$ are respectively mapped to the unitaries $\pi(s)\ten \rho(t)$ under the unital completely positive map 
$\Phi_{\lambda} : \cl{B}(H_A^i\ten H_B^j)\to\cl{B}(H_\pi\ten H_\rho)$, 
given by $\Phi_{\lambda}(T) = W_\lambda^* T W_\lambda$.
Thus, the unitaries $\sigma_A^i(U_{s})\ten \sigma_B^j(V_t)$ 
belong to the multiplicative domain $\cl{M}$ of $\Phi_{\lambda}$. Hence, by \cite[Proposition 1.5.7]{BO}, $\cl{M}$ contains the $C^*$-algebra generated by $\sigma_A^i(U_{s})\ten \sigma_B^j(V_t)$, which, by the irreducibility of $\sigma_A^i$ and $\sigma_B^j$ (and the definitions of $\cl A$ and $\cl B$), is equal to $\cl{B}(H_A^i\ten H_B^j)$. So $\Phi_\lambda:\cl{B}(H_A^i\ten H_B^j)\to\cl{B}(H_\pi\ten H_\rho)$ is a unital $*$-homomorphism, necessarily injective by simplicity of $\cl{B}(H_A^i\ten H_B^j)$ and surjective by irreducibility of $\pi$ and $\rho$. Hence, $W_\lambda$ is unitary.

Moreover, the local intertwining properties above show that 
the maps $\sigma_A : \cl A \to \cl{B}(H_\pi)\ten I_{H_\rho}$
and $\sigma_B : \cl B \to I_{H_\pi}\ten\cl{B}(H_\rho)$, given by 
$$\sigma_A(a) = W_\lambda^*(\sigma_A^i(a)\ten I_{H_B^J})W_\lambda
\ \mbox{ and } \ 
\sigma_B(b) = W_\lambda^*(I_{H_A^i}\ten\sigma_B^j(b))W_\lambda$$
are irreducible representations of $\cl A$ and $\cl B$, respectively, and are independent of $\lambda$. Clearly, $\sigma_A\ten\sigma_B$ is unitarily equivalent to each $\sigma_A^i\ten \sigma_B^j$, (via $W_\lambda$) so we must have local unitary equivalence: $\sigma_A\cong \sigma_A^i$ and $\sigma_B\cong\sigma_B^j$, say, via $W_A^i:H_A^i\to H_\pi$ and $W_B^j:H_B^j\to H_\vphi$. Then $W_A^i\ten W_B^j=\alpha_{i,j}W_\lambda^*$ for some phase $\alpha_{i,j}\in\mathbb{T}$.  

We now follow ideas from \cite[Theorem 4.12]{pszz}. Let 
$$\Lambda_A=\{i : p_{i,j,k,l}>0 \ \textnormal{for some} \ j,k,l\},$$ 
and define $\Lambda_B$ similarly. Fix unit vectors $\eta_A\in H_\pi$ and $\eta_B\in H_\rho$. For $i\notin\Lambda_A$, let 
$T_A^i:H_A^i\ten K_A^i \to H_\pi\ten H_A^i\ten K_A^i$ be given by
$T_A^i(\eta) = \eta_A\ten\eta$, and define 
$T_B^j:H_B^j\ten K_B^j\to H_\rho\ten H_B^j\ten K_B^j$, for $j\notin\Lambda_B$, similarly. Finally, set
$$H_A^{\rm aux}:=\bigg(\bigoplus_{i\in\Lambda_A}K_A^i\bigg)\bigoplus\bigg(\bigoplus_{i\notin\Lambda_A} H_i^A\ten K_A^i\bigg),$$
and define the isometry $T_A:H_A\to H_\pi\ten H_A^{\rm aux}$ by
$$T_A=\bigg(\bigg(\bigoplus_{i\in\Lambda_A}W_A^i\ten I_{K_A^i}\bigg)\oplus\bigg(\bigoplus_{i\notin\Lambda_A}T_A^i\bigg)\bigg)\circ W_A.$$
Define $H_B^{\rm aux}$ and $T_B:H_B\to H_\rho\ten H_B^{\rm aux}$ similarly, 
and let 
$$\xi^{\rm aux} = 
\bigoplus_{i=1}^{n_A} \bigoplus_{j=1}^{n_B}
\sum_{k=1}^{d_A^i} \sum_{l=1}^{d_B^j}
\alpha_{i,j} \sqrt{p_{i,j,k,l}}e_{k}^i\ten e_l^j\in H_A^{\rm aux}\ten H_B^{\rm aux}.$$ 
By construction, for any $a\in\cl A$ and $b\in\cl B$ we have
$$(T_A\ten T_B)(a\ten b)\xi=((\sigma_A(a)\ten \sigma_B(b))\psi)\ten\xi^{\rm aux}.$$
In particular,
\begin{align*}
& (T_A\ten T_B)(E_{s,s',g,g'}\ten F_{t,t',h,h'})\xi\\
& =\delta_{s,g}\delta_{s',g'}\delta_{t,h}\delta_{t',h'}T_A\ten T_B(U_s^*U_{s'}\ten V_t^*V_{t'})\xi\\
& =\delta_{s,g}\delta_{s',g'}\delta_{t,h}\delta_{t',h'}(\pi(s)^*\pi(s')\ten \rho(t)^*\rho(t'))\psi\ten\xi^{\rm aux}\\
& =\delta_{s,g}\delta_{s',g'}\delta_{t,h}\delta_{t',h'}(\pi(s^{-1}s')\ten \rho(t^{-1}t'))\psi\ten\xi^{\rm aux}.
\end{align*}
Thus, the model $(H_\pi\ten H_\rho, \{\widetilde{E}_{s,s',g,g'}\},\{\widetilde{F}_{t,t',h,h'}\},\psi)$ is a self-test, as claimed.
\end{proof}

We now exhibit a class of examples satisfying the hypotheses of Proposition \ref{p:prop}.

\begin{example}\label{ex:S_3} Let $G=S_3$, the symmetric group on 3 points. $S_3$ has a two dimensional irreducible representation $\pi:S_3\rightarrow \cl{U}(\mathbb{C}^2)$ given by
\begin{gather*}
    \pi(e)=\begin{bmatrix} 1 & 0\\ 0 & 1 \end{bmatrix},\hs\hs
    \pi(123)=\begin{bmatrix} e^{i2\pi/3} & 0\\ 0 & e^{-i2\pi/3} \end{bmatrix},\hs\hs
    \pi(132)=\begin{bmatrix} e^{-i2\pi/3} & 0\\ 0 & e^{i2\pi/3} \end{bmatrix},\\
    \pi(12)=\begin{bmatrix} 0 & 1\\ 1 & 0 \end{bmatrix},\hs\hs
    \pi(23)=\begin{bmatrix} 0 & e^{-i2\pi/3}\\ e^{i2\pi/3} & 0 \end{bmatrix},\hs\hs
    \pi(13)=\begin{bmatrix} 0 & e^{i2\pi/3}\\ e^{-i2\pi/3} & 0 \end{bmatrix}.
\end{gather*}
In the argument below, we will interpret the above matrices as rotations on the Bloch sphere. To that end, recall that the single qubit rotation operators
$$R_{x}(\theta)=e^{-i\frac{\theta}{2}\sigma_x}, \ \ R_{y}(\theta)=e^{-i\frac{\theta}{2}\sigma_y}, \ \ R_{z}(\theta)=e^{-i\frac{\theta}{2}\sigma_z},$$
induce rotations of angle $\theta\in\mathbb{R}$ about the $x$, $y$, and $z$ axes, respectively, where $\sigma_x$, 
$\sigma_y$ and $\sigma_z$ are the $2\times 2$ Pauli matrices.
Then
$$\pi(123)=R_z(2\pi/3), \ \ \pi(132)=R_z(4\pi/3), \ \ \pi(12)=X=-iR_x(\pi).$$
If $\{e_0,e_1\}$ denotes the standard basis of $\mathbb{C}^2$, for $\theta\in(0,\pi/2)\cup(\pi/2,\pi)$, we let 
$\{e_\theta,f_\theta\}$ denote the following $y$-rotated basis:
$$e_{\theta}=R_y(\theta)e_0=\begin{bmatrix}\cos\frac{\theta}{2}\\\sin\frac{\theta}{2}\end{bmatrix}, \ \ f_{\theta}=R_y(\theta)e_1=\begin{bmatrix}-\sin\frac{\theta}{2}\\\cos\frac{\theta}{2}\end{bmatrix}.$$
Let $\psi:=\alpha e_\theta\ten e_\theta+\beta f_{\theta}\ten f_{\theta}\in\mathbb{C}^2\ten\mathbb{C}^2$ for fixed $\alpha,\beta\in\mathbb{C}$ satisfying $|\alpha|^2+|\beta|^2=1$ and $|\alpha|^2\in(0,1/2)$. Then $\psi$ has full Schmidt rank so is marginally cyclic. We now show that 
$$V:=\mathrm{span}\{(\pi(s)\ten\pi(t))\psi\psi^*(\pi(s^{-1})\ten\pi(t^{-1}))
: s,t\in S_3\}=\cl{B}(\mathbb{C}^2\ten \mathbb{C}^2),$$
thereby obtaining a self-testing $\cl{Q}_{\rm q}$-model for the channel $\Theta(u)$, where $u(s,t) = \la(\pi(s)\ten\pi(t))\psi,\psi\ra$, by Proposition \ref{p:prop}. 

First, by irreducibility of $\pi$ and the orthogonality relation (\cite[Theorem III.1.1]{simon}), 
$$\sum_{s\in S_3}\pi(s)(\cdot)\pi(s)^*=\tr(\cdot)1,$$
so that 
$$\pi(s)\rho\pi(s)^*\ten 1, \ 1\ten\pi(t)\rho\pi(t)^*\in V, \ \ \ s,t\in G,$$
where $\rho$ is the reduced density matrix of $\psi$:
$$\rho=(\id\ten\tr)(\psi\psi^*) = |\alpha|^2e_\theta e_\theta^*+|\beta|^2f_\theta f_\theta^*=(\tr\ten\id)(\psi\psi^*).$$
Therefore, it suffices to show that \begin{equation}\label{e:2}\mathrm{span}_{\mathbb{R}}\{\pi(s)\rho\pi(s)^* : s\in S_3\}=M_2(\mathbb{C})_{sa}.\end{equation}
Recall that the space of trace-1 self-adjoint $2\times 2$ matrices is affinely isomorphic to $\mathbb{R}^3$ via 
$$(x,y,z)\Leftrightarrow\begin{bmatrix} 1/2+z & x-iy\\x+iy & 1/2-z\end{bmatrix},$$
with the convex subset of density operators corresponding to the unit ball. In particular, we can visualize $\rho$ as the vector $r_\rho=(x_\rho,y_\rho,z_\rho)\in\mathbb{R}^3_{\norm{\cdot}\leq 1}$, which lies on the straight line between the points $r_{e_\theta e_\theta^*}$ and $r_{f_\theta f_\theta^*}$ on the boundary sphere. Both these latter points lie on the great circle in the $xz$-plane, and, since $\theta \in(0,\pi/2)\cup(\pi/2,\pi)$, are neither on the $z$-axis, nor the $xy$-plane. (Recall that $r_{e_0e_0^*}$ and $r_{e_1e_1^*}$ are the North and South poles, respectively.) Thus, the points in the unit ball associated to the rotated states 
\begin{align*}\pi(123)\rho\pi(123)^*&=R_z(2\pi/3)\rho R_z(2\pi/3)^*,\\ 
\pi(132)\rho\pi(132)^*&=R_z(4\pi/3)\rho R_z(4\pi/3)^*,\\
\pi(12)\rho\pi(12)^*&=R_x(\pi)\rho R_x(\pi)^*
\end{align*}
are affinely independent in $\mathbb{R}^3$, so their affine hull yields all hermitian matrices of trace 1, and the equality (\ref{e:2}) follows.
\end{example}


\section{Connections with $C^*$-envelopes}\label{s_conncstarenv}

Recall that, if $\cl S$ is an operator system, 
its \textit{$C^*$-envelope} $C^*_e(\mc{S})$ is the unital C*-algebra, uniquely determined up to isomorphism by the following universal property: there is a unital complete order embedding $\iota:\cl{S}\to C^*_{e}(\cl{S})$ such that $C^*(\iota(\cl{S}))=C^*_{e}(\cl{S})$, and for any $C^*$-algebra 
$\cl A$ and unital complete order embedding $\vphi:\cl{S}\to \cl A$ with $C^*(\vphi(\cl{S})) = \cl A$, there is a surjective $*$-homomorphism $\pi : \cl A\to C^*_{e}(\cl{S})$ such that $\pi\circ\vphi=\iota$. 
In this penultimate section, we show that the examples in Sections \ref{ss_POVMNS} (PVM self-tests), \ref{s:qg} (quantum graph coloring) and \ref {s:schur} (Schur product channels) are all instances of a single phenomena: unique state extension across 
$$\mc{S}_A\ten_{\rm c} \mc{S}_B\subseteq C^*_e(\mc{S}_A)\ten_{\max} C^*_e(\mc{S}_B),$$
for pertinent operator systems $\mc{S}_A$ and $\mc{S}_B$ where the latter 
inclusion is valid. 

By contrast, general abstract self-testing concerns unique state extensions across
$$\mc{S}_A\ten_{\rm c} \mc{S}_B\subseteq C^*_u(\mc{S}_A)\ten_{\max} C^*_u(\mc{S}_B),$$
the tensor product of \textit{universal} C*-covers. 
Similar comparisons can be made when self-testing among models which factor through $C^*_e(\mc{S}_A)\ten_{\max} C^*_e(\mc{S}_B)$.

The observations that follow rely mainly on the work \cite{bks}, but along the way we establish new dilation results for stochastic operator matrices. 


\subsection{PVM's}

As mentioned in Remark \ref{r_POVMST}, PVM self-testing fits in our general framework, where the family of states $s : C^*_u(\cl S_{X,A})\otimes_{\max} C^*_u(\cl S_{Y,B})\to \C$ to be self-tested is restricted to 
those that factor through the quotient map
$$\pi_A\ten\pi_B:C^*_u(\cl S_{X,A})\otimes_{\max} C^*_u(\cl S_{Y,B})\to 
\cl A_{X,A}\otimes_{\max} \cl A_{Y,B}.$$
Note that the set of such states is precisely
$$(\pi_A\ten\pi_B)^*(\cl{S}(\cl A_{X,A}\otimes_{\max} \cl A_{Y,B}))\cong \cl{S}(\cl A_{X,A}\otimes_{\max} \cl A_{Y,B}).$$
It is well-known that
$$\cl A_{X,A}\cong \underbrace{\cl D_{A}\ast_1\cdots\ast_1 \cl D_{A}}_{|X| \mbox{ \tiny times}}\cong C^*_e(\cl S_{X,A}),$$
and similarly for $\cl A_{Y,B}$,  (for the latter isomorphism, see e.g. \cite[Corollary 2.9]{bks}). Thus, a state $f:\cl{S}_{X,A}\ten_c\cl{S}_{Y,B}\to\C$ has a unique extension to $$(\pi_A\ten\pi_B)^*(\cl{S}(C^*_e(\cl{S}_{X,A})\otimes_{\max} C^*_e(\cl{S}_{Y,B}))$$
if and only if it extends uniquely across
$$\cl{S}_{X,A}\ten_{\rm c}\cl{S}_{Y,B}\subseteq C^*_e(\cl{S}_{X,A})\otimes_{\max} C^*_e(\cl{S}_{Y,B}),$$
the above inclusion being valid by \cite[Lemma 3.10]{bks}.


\subsection{Semi-classical SOM's}
Extending the setup of Subsection \ref{s:qg}, let 
$$\cl B_{X,A} = \underbrace{M_A\ast_1\cdots\ast_1 M_A}_{|X| \mbox{ times}}.$$
For each $x\in X$, write $\{\epsilon_{x,a,a'} : a,a'\in A\}$ 
for the canonical matrix unit system of the $x$-th copy of $M_A$, and let
$$\cl R_{X,A} = {\rm span}\{\epsilon_{x,a,a'} : x\in X, a,a'\in A\},$$
considered as an operator subsystem of $\cl B_{X,A}$. By \cite[Corollary 2.9]{bks}, the $C^*$-algebra $\cl B_{X,A}$ is universal for unital $*$-homomorphisms $M_A\to \cl D_{X}\ten\cl A$, where $\cl A$ is a unital $C^*$-algebra (see \cite[Definition 2.2]{bks}), and the pair $(\cl R_{X,A},\cl B_{X,A})$ satisfies the hypotheses of \cite[Theorem 3.8]{bks}. Hence, $C^*_e(\cl R_{X,A})=\cl B_{X,A}$, and from \cite[Lemma 3.10]{bks} (or \cite[Lemma 2.8]{pt}),
$$\cl R_{X,A}\ten_{\rm c} \cl R_{Y,B}\subseteq C^*_e(\cl R_{X,A})\ten_{\rm max}C^*_e(\cl R_{Y,B}).$$
Following \cite[\S7.1]{tt-QNS}, an SOM $E\in M_X\ten M_A\ten \cl{B}(H)$ is called \textit{semi-classical} if $E=\sum_{x\in X}\epsilon_{x,x}\ten E_x$ with $E_x\in (M_A\ten\cl{B}(H))^+$ and $\tr_A E_x=I_H$, $x\in X$.
In \cite[Theorem 7.5]{tt-QNS} it was shown that semi-classical SOMs correspond to unital completely positive maps $\cl{R}_{X,A}\to\cl B(H)$. Moreover, the state space of $\cl{R}_{X,A}\ten_{\rm c}\cl{R}_{Y,A}$ is affinely isomorphic to the set of classical-to-quantum no-signalling correlations $\Gamma:\cl D_X\ten \cl D_Y\to M_A\ten M_B$ \cite[Theorem 7.7]{tt-QNS}. 

In light of the previous paragraphs, 
Corollary \ref{c:s_k} means that the state  $\tilde{s}_{\cl K_4}:\cl{R}_{4,2}\ten_{\rm c}\cl{R}_{4,2}\to \mathbb{C}$ given by 
$$\tilde{s}_{\cl K_4}(w)
= {\rm tr}_2((\pi_{\cl K_4}\cdot \pi_{\cl K_4})(w)), \ \ \ 
w\in \cl R_{4,2}\otimes_{\rm c} \cl R_{4,2},$$
has a unique extension to the finite-dimensional states of $C^*_e(\cl R_{4,2})\ten_{\rm max} C^*_e(\cl R_{4,2})$. Similarly, Proposition \ref{p:s_k} means that $\tilde{s}_{\cl K_4}$ is a self-test for the class of finite-dimensional semi-classical SOM models which factor through $C^*_e(\cl R_{4,2})\ten_{\rm max} C^*_e(\cl R_{4,2})$.


\subsection{Unistochastic Operator Matrices}\label{uni}

It is known that the entries of any SOM $E=(E_{x,x',a,a'})\in M_X\ten M_A\ten \cl{B}(H)$ can be represented as $E_{x,x',a,a'}=V_{a,x}^*V_{a',x'}$ for an isometry 
$V:H^X\to K^A$, where $V = (V_{a,x})_{a,x}$ \cite[Theorem 3.1]{tt-QNS}. When $X=A$, it is natural to study further 
the unistochastic SOM's (USOM's), that is, 
those SOM's for which $V$ can be taken unitary (see Section \ref{s:schur}). 
In this subsection, we show that any SOM can be dilated to a USOM, thereby establishing a matricial version of Naimark dilation between POVM's and PVM's. Along the way, we connect these notions to $C^*$-envelopes of pertinent operator systems, as done in the previously in this section. 


Let $\cl{B}_X$ denote the universal $C^*$-algebra generated by the elements $u_{a,x}$ $a,x\in X$ of a unitary matrix $u=(u_{a,x})_{a,x}$, commonly known as the Brown algebra \cite{brown}. It was shown in \cite[Proposition 2.3]{ghj} that if $S_1^X$ is the dual operator space $(M_X)^*$, 
then the map $e_{a,x}\mapsto u_{a,x}$, from 
$S^X_1$ into $\mathrm{span}\{u_{a,x}: a,x\in X\}\subseteq\cl{B}_X$
is a completely isometric isomorphism. 

Recall from Subsection \ref{ss_QNS} the universal TRO of a block operator isometry $v=(v_{a,x})_{a,x\in X}$ \cite{tt-QNS}, hereby denoted simply $\cl V_X$ as we assume 
that $X = A$ throughout this subsection; thus, $\cl V_X$ is universal for the relations
\begin{equation}\label{e:v}\sum_{a\in X}v_{a,x}^*v_{a,x'}=\delta_{x,x'}1, \ \ \ x,x'\in X,\end{equation}
in that every (concrete) block operator isometry $V = (v_{a,x})_{a,x}$, 
whose entries lie in $\cl B(H,K)$ for some Hilbert spaces $H$ and $K$,
gives rise to a unique 
ternary morphism $\theta_V : \cl V_X\to \cl B(H)$ such that $\theta(v_{a,x}) = V_{a,x}$, $x,a\in X$. 
Since the entries of $u$ satisfy the relations (\ref{e:v}), there 
exists a (unique) ternary morphism $\vphi:\cl{V}_X\to\cl{B}_X$ such that $\vphi(v_{a,x})=u_{a,x}$. On the other hand, since $v=(v_{a,x})_{a,x}$ belongs to the unit ball of 
$M_X(\cl V_X)$, the canonical completely isometric identification $M_X(\cl V_X)\cong \cl{CB}(S_1^X,\cl V_X)$
(see e.g.  \cite[Proposition 1.5.14]{BLM}),
yields a complete contraction $\vphi':\mathrm{span}\{u_{a,x}:a,x\in X\}\to\mathrm{span}\{v_{a,x}:a,x\in X\}$ satisfying $\vphi'(u_{a,x})=v_{a,x}$. Since ternary morphisms are completely contractive, it follows that the restriction
$$\kappa : \mathrm{span}\{u_{a,x}:a,x\in X\}\to \mathrm{span}\{v_{a,x}:a,x\in X\}$$
of $\vphi'$ is a complete isometry (see also \cite[Proposition 4.1]{cltt}). This first order isomorphism leads to the following dilation result.

\begin{proposition}\label{p:dilate} 
Let $H$ and $K$ be Hilbert spaces, for which
$(v_{a,x})_{a,x}\in M_X(\cl{B}(H,K))$. 
There exists a Hilbert space $L$, isometries $w_1:H\to L$, $w_2:K\to L$ and a unital *-homomorphism $\pi:\cl{B}_X\to\cl{B}(L)$ satisfying $$v_{a,x}=w_2^*\pi(u_{a,x})w_1 \ \ \ \textnormal{and}  \ \ w_2w_2^*\pi(u_{a,x})w_1=\pi(u_{a,x})w_1 \ \ \ x,a\in X.$$
\end{proposition}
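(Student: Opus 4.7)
The plan is to reduce the two stated identities to the single intertwining relation
\begin{equation*}
\pi(u_{a,x})w_1 = w_2\, v_{a,x}, \quad a,x\in X,
\end{equation*}
and then to construct data realising it by dilating the block isometry $(v_{a,x})$ to a block unitary on a larger Hilbert space. Indeed, given the intertwining: $w_2^*w_2 = 1_K$ yields $v_{a,x}=w_2^*\pi(u_{a,x})w_1$; and since the range of $w_2 v_{a,x}$ lies in $\mathrm{ran}(w_2)$, the projection $w_2w_2^*$ fixes it, giving $w_2w_2^*\pi(u_{a,x})w_1 = \pi(u_{a,x})w_1$.

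To produce such data, first re-encode the block isometry as a single operator $V:H^X\to K^X$ defined by $V(\xi\otimes e_x)=\sum_{a\in X}v_{a,x}\xi\otimes e_a$; condition (\ref{e:v}) is precisely $V^*V=I_{H^X}$, so $V$ is an isometry. Set $L:=H\oplus K$, let $w_1:H\to L$ and $w_2:K\to L$ be the canonical isometric inclusions into the first and second summands respectively, and identify $L^X=H^X\oplus K^X$ in the obvious way. The entry-level intertwining $U_{a,x}w_1=w_2 v_{a,x}$ for a unitary $U=(U_{a,x})\in M_X(\cl{B}(L))$ is then equivalent to the single operator-level identity $U(h,0)=(0,Vh)$ for all $h\in H^X$.

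Construct such a $U$ via the orthogonal decomposition $L^X=H^X\oplus V(H^X)\oplus V(H^X)^\perp$ (the second and third summands taken inside $K^X$): let $U$ swap the first two summands via the unitary bijection $V:H^X\to V(H^X)$ induced by the isometric $V$, and let $U$ act as the identity on $V(H^X)^\perp$. Then $U$ is manifestly a self-adjoint unitary on $L^X$, and $U(h,0,0)=(0,Vh,0)$ produces the required operator-level identity.

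Finally, invoke the universal property of the Brown algebra $\cl{B}_X$: the unitary matrix $U=(U_{a,x})\in M_X(\cl{B}(L))$ determines a unique unital $*$-homomorphism $\pi:\cl{B}_X\to\cl{B}(L)$ with $\pi(u_{a,x})=U_{a,x}$. The intertwining $\pi(u_{a,x})w_1=w_2 v_{a,x}$ follows at once, and with it both claimed identities. The only mildly delicate step is the construction and unitarity of $U$, for which the key point is that the isometry of $V$ makes $V:H^X\to V(H^X)$ a unitary bijection; the remaining verifications are routine bookkeeping between the block-matrix and operator-on-$L^X$ viewpoints.
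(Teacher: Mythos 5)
Your proof is correct, and it takes a genuinely different route from the paper's. The paper obtains $(L,w_1,w_2,\pi)$ abstractly: it starts from the complete isometry $\kappa : \mathrm{span}\{u_{a,x}\} \to \mathrm{span}\{v_{a,x}\}$, extends it to a complete contraction $\widetilde{\kappa}:\cl{B}_X\to\cl{B}(H,K)$ using injectivity of $\cl{B}(H,K)$, and then applies the Haagerup--Paulsen--Wittstock representation theorem, finishing with a separate range argument for the second identity. You bypass all of that: observing that both stated identities follow from the single intertwining $\pi(u_{a,x})w_1=w_2 v_{a,x}$ (via $w_2^*w_2 = 1_K$ and the fact that $w_2w_2^*$ fixes $\mathrm{ran}(w_2)$), you then set $L = H\oplus K$ with the canonical inclusions as $w_1,w_2$, and dilate the block isometry $V:H^X\to K^X$ to a block unitary $U=(U_{a,x})\in M_X(\cl{B}(L))$ by the Halmos-type construction
$$U = \begin{pmatrix} 0 & V^* \\ V & I-VV^* \end{pmatrix}$$
on $L^X\cong H^X\oplus K^X$ (your description via the decomposition $H^X\oplus V(H^X)\oplus V(H^X)^\perp$ is this operator), which is a self-adjoint unitary precisely because $V^*V=I_{H^X}$. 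The universal property of $\cl{B}_X$ applied to $U$ gives $\pi$, and entrywise $U_{a,x}w_1=w_2 v_{a,x}$, which is the intertwining. Your argument is more elementary and concrete, exploiting the specific structure of the problem rather than the general machinery of completely bounded maps; the paper's approach is a specialization of a more general dilation principle, and makes visible the role of the complete order/isometry structure that is used elsewhere in that section (e.g.\ in Corollary \ref{c:order2}). Both are correct; yours is arguably cleaner for this specific statement.
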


\begin{proof} 
By the injectivity of $\cl{B}(H,K)$, we can extend $\kappa$ to a complete contraction $\widetilde{\kappa}:\cl{B}_X\rightarrow\cl{B}(H,K)$. By the 
Haagerup-Paulsen-Wittstock representation theorem for complete contractions, there exists a Hilbert space $L$, isometries $w_1:H\to L$, $w_2:K\to L$ and a unital *-homomorphism $\pi:\cl{B}_X\to\cl{B}(L)$ such that $\widetilde{\kappa}(a)=w_2^*\pi(a)w_1$, $a\in \cl{B}_X$. (We note that this 
can be derived by appling the proof of \cite[Theorem B.7]{BO} together with the rectangular version of Paulsen's off-diagonal trick \cite[Lemma 1.3.15]{BLM}
and refer to \cite[Theorem 2.1.12]{ekt} for a complete argument.) 
In particular, 
$$v_{a,x}=\kappa(u_{a,x})=w_2^*\pi(u_{a,x})w_1, \ \ \ x,a\in X.$$
Then $v=(w_2^*\ten I_X)\pi_X(u)(w_1\ten I_X)$, where $\pi_X:=(\pi\ten\id_{M_X})$. Since $$(w_2\ten I_X)v=(w_2w_2^*\ten I_X)\pi_X(u)(w_1\ten I_X)$$ 
with both $(w_2\ten I_X)v$ and $\pi_X(u)(w_1\ten I_X)$ isometries, we have that 
$\pi_X(u)(w_1\ten I_X)H\ten\C^X\subseteq\mathrm{ran}(w_2\ten I_X)$.
\end{proof}

\begin{corollary}\label{c:order2} There exists a complete order isomorphism 
$\vphi:\cl{T}_X\to\mathrm{span}\{u_{a,x}^*u_{a',x'}:x,x',a,a'\in X\}$ satisfying
\begin{equation}\label{e:formulae}\vphi(v_{a,x}^*v_{a',x'})=u_{a,x}^*u_{a',x'}, \ \ \  x,x',a,a'\in X.\end{equation}
\end{corollary}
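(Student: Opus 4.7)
The plan is to construct mutually inverse unital completely positive maps between $\cl{T}_X$ and $\mathrm{span}\{u_{a,x}^*u_{a',x'} : a,a',x,x'\in X\}$ that act on generators according to (\ref{e:formulae}); granted both maps, $\varphi$ is automatically a complete order isomorphism. Throughout I will identify $\cl{T}_X$ canonically with $\mathrm{span}\{v_{a,x}^*v_{a',x'}\}$ inside the right $C^*$-algebra $\cl{V}_X^*\cl{V}_X$ of the TRO $\cl{V}_X$, so that the problem is transferred to exhibiting a unital complete order isomorphism between two concrete operator systems.

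For the forward map $\varphi$, I would invoke the universal property of $\cl{V}_X$. Since $(u_{a,x})_{a,x}$ is a block-operator unitary in $\cl{B}_X$, it is in particular a block-operator isometry satisfying the relations in (\ref{e:v}), and so yields a unique ternary morphism $\theta:\cl{V}_X\to \cl{B}_X$ with $\theta(v_{a,x})=u_{a,x}$. A standard fact about TRO morphisms is that $\theta$ extends canonically to a $*$-homomorphism $\cl{V}_X^*\cl{V}_X\to \cl{B}_X$ sending $v_{a,x}^*v_{a',x'}$ to $u_{a,x}^*u_{a',x'}$; this $*$-homomorphism is unital because $\sum_a v_{a,x}^*v_{a,x}=1=\sum_a u_{a,x}^*u_{a,x}$. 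Restricting to $\cl{T}_X$ provides the desired unital completely positive map $\varphi$.

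The inverse direction is where the real work lies, and I would build it from Proposition \ref{p:dilate}. Representing $\cl{V}_X$ faithfully on some Hilbert space, apply the proposition to the concrete block-operator isometry $v=(v_{a,x})_{a,x}$ to obtain a Hilbert space $L$, isometries $w_1,w_2$ and a unital $*$-representation $\pi:\cl{B}_X\to \cl{B}(L)$ with $v_{a,x}=w_2^*\pi(u_{a,x})w_1$ and the range condition $w_2w_2^*\pi(u_{a,x})w_1=\pi(u_{a,x})w_1$. Taking adjoints of the range condition gives $w_1^*\pi(u_{a,x})^*w_2w_2^*=w_1^*\pi(u_{a,x})^*$, whence
\begin{equation*}
v_{a,x}^*v_{a',x'}
=w_1^*\pi(u_{a,x})^*w_2w_2^*\pi(u_{a',x'})w_1
=w_1^*\pi(u_{a,x}^*u_{a',x'})w_1.
\end{equation*}
Consequently the compression $\psi:=w_1^*\pi(\cdot)w_1$ is a unital completely positive map on $\cl{B}_X$ whose restriction to $\mathrm{span}\{u_{a,x}^*u_{a',x'}\}$ takes values in $\cl{V}_X^*\cl{V}_X$ and sends $u_{a,x}^*u_{a',x'}$ back to $v_{a,x}^*v_{a',x'}$. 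By linearity $\psi\circ\varphi$ and $\varphi\circ\psi$ fix the spanning sets of their domains, so $\psi$ is a ucp inverse to $\varphi$ and the corollary follows.

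The main obstacle I anticipate is purely organisational: pinning down the canonical complete order identification $\cl{T}_X\cong \mathrm{span}\{v_{a,x}^*v_{a',x'}\}$ and verifying cleanly that a ternary morphism of TROs descends to a $*$-homomorphism on the associated right $C^*$-algebras. Both are largely folkloric, but since $\cl{V}_X$ is a genuinely ternary object, some care is needed. Once these identifications are in place, the dilation coming from Proposition \ref{p:dilate} does the rest of the work and delivers (\ref{e:formulae}) with both the forward map and its inverse unital completely positive.
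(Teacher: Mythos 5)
Your argument is correct and follows the paper's proof essentially verbatim: the forward map comes from the universal property of $\cl V_X$ applied to the block unitary $u$ together with the fact that a ternary morphism induces a $*$-homomorphism on the right $C^*$-algebra (the paper's $\cl C_X$, your $\cl V_X^*\cl V_X$), and the inverse $w_1^*\pi(\cdot)w_1$ comes from the dilation in Proposition \ref{p:dilate} via the same computation $v_{a,x}^*v_{a',x'}=w_1^*\pi(u_{a,x}^*)w_2w_2^*\pi(u_{a',x'})w_1=w_1^*\pi(u_{a,x}^*u_{a',x'})w_1$. The only difference is cosmetic bookkeeping about where $\cl T_X$ sits.
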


\begin{proof} 
By the universal property of 
$v=(v_{a,x})_{a,x}$, there exists a (non-degenerate) TRO morphism $\cl{V}_X\to\cl{B}_X$ mapping $v_{a,x}$ to $u_{a,x}$, $x,a\in X$. 
The latter morphism induces a unital $*$-homomorphism $\vphi:\cl{C}_X\to \mc{B}_X$ mapping $v_{a,x}^*v_{a',x'}$ to  $u_{a,x}^*u_{a',x'}$, 
$x,x',a,a'\in X$, where $\cl{C}_X:=\cl C_{X,X}$.
Thus $\vphi|_{\cl{T}_X}$ is a unital completely positive map satisfying (\ref{e:formulae}).

Represent $v$ faithfully inside $M_X(\mc{B}(H,K))$ 
for some Hilbert spaces $H$ and $K$. 
By Proposition \ref{p:dilate}, there exists a Hilbert space $L$, isometries $w_1:H\to L$, $w_2:K\to L$ and a unital *-homomorphism $\pi:\mc{B}_X\to\mc{B}(L)$ satisfying $$v_{a,x}=w_2^*\pi(u_{a,x})w_1 \ \ \ \textnormal{and}  \ \ w_2w_2^*\pi(u_{a,x})w_1=\pi(u_{a,x})w_1, \ \ \ x,a\in X.$$
But then 
$$v_{a,x}^*v_{a',x'}=w_1^*\pi(u_{a,x}^*)w_2w_2^*\pi(u_{a',x'})w_1=w_1^*\pi(u_{a,x}^*u_{a',x'})w_1,$$
for all $x,x',a,a'\in X$.
It follows that
$w_1^*\pi(\cdot)w_1$ is a unital completely positive inverse to $\vphi$.
\end{proof}

Let $E$ be a unistochastic operator matrix acting on a 
Hilbert space $H$. When $H=\C$, and $E$ is diagonal in the sense that $E_{x,x',a,a'}=\delta_{x,x'}\delta_{a,a'}E_{x,x,a,a}$, we recover the usual notion of unistochastic matrices. A simple application of the previous results yields the following dilation property.

\begin{corollary} 
Let $H$ be a Hilbert space and 
$E=(E_{x,x',a,a'})\in M_X\ten M_X\ten \cl{B}(H)$ be a SOM. Then there 
exists a Hilbert space $K$, an isometry $W:H\to K$ and a 
block operator unitary $(U_{a,x})_{a,x\in X} \in M_X\otimes\cl B(K)$, 
such that $E_{x,x',a,a'}=W^*U_{a,x}^*U_{a',x'}W$.
\end{corollary}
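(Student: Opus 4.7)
The plan is to reduce the statement to a direct application of Proposition \ref{p:dilate}, preceded by the standard Stinespring-type factorisation of SOMs. First I would invoke \cite[Theorem 3.1]{tt-QNS}, which provides a Hilbert space $K_0$ and a block operator isometry $(V_{a,x})_{a,x\in X}\in M_X(\cl B(H,K_0))$ such that $E_{x,x',a,a'} = V_{a,x}^* V_{a',x'}$ for all $x,x',a,a'\in X$. Since $(V_{a,x})_{a,x}$ has norm one in $M_X(\cl B(H,K_0))$, it falls within the scope of Proposition \ref{p:dilate}.

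Applying that proposition yields a Hilbert space $L$, isometries $w_1: H\to L$ and $w_2: K_0\to L$, and a unital $*$-homomorphism $\pi:\cl B_X\to \cl B(L)$ such that
\[
V_{a,x} = w_2^* \pi(u_{a,x}) w_1 \quad\text{and}\quad w_2 w_2^* \pi(u_{a,x}) w_1 = \pi(u_{a,x}) w_1,
\]
for all $a,x\in X$. Setting $K := L$, $W := w_1$, and $U_{a,x} := \pi(u_{a,x})$, the block operator $(U_{a,x})_{a,x}$ is unitary in $M_X\otimes \cl B(K)$: indeed, $(u_{a,x})_{a,x}$ is unitary in $M_X\otimes \cl B_X$ by the defining relations of the Brown algebra, and this property is preserved when $\pi$ is amplified by $\id_{M_X}$.

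It then remains to verify the required identity. Using both conclusions of the dilation in turn, I would compute
\[
E_{x,x',a,a'} = V_{a,x}^* V_{a',x'} = w_1^* \pi(u_{a,x})^* w_2 w_2^* \pi(u_{a',x'}) w_1 = w_1^* \pi(u_{a,x})^* \pi(u_{a',x'}) w_1 = W^* U_{a,x}^* U_{a',x'} W,
\]
where the crucial cancellation $w_2 w_2^* \pi(u_{a',x'}) w_1 = \pi(u_{a',x'}) w_1$ is exactly the second identity supplied by Proposition \ref{p:dilate} (and is equivalent to observing that $\pi(u_{a',x'}) w_1 H \subseteq \operatorname{ran}(w_2)$, a feature built into the proof of that proposition). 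No genuine obstacle is present: the corollary is a short packaging of the SOM factorisation of \cite[Theorem 3.1]{tt-QNS} with the Brown-algebra dilation of Proposition \ref{p:dilate}, and may equivalently be read as a Naimark-type passage from block operator isometries to block operator unitaries at the cost of enlarging the underlying Hilbert space.
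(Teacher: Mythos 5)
Your proof is correct, and it takes a genuinely different (though related) route from the paper's. The paper's proof passes through the abstract operator system machinery: it uses the universal property of $\cl T_X$ to realise $E$ as a unital completely positive map $\psi:\cl T_X\to\cl B(H)$, transports $\psi$ across the complete order isomorphism of Corollary \ref{c:order2} to a unital completely positive map on $\mathrm{span}\{u_{a,x}^*u_{a',x'}\}\subseteq\cl B_X$, extends by Arveson's theorem, and then applies Stinespring dilation. Your proof instead starts from the concrete factorisation $E_{x,x',a,a'}=V_{a,x}^*V_{a',x'}$ of \cite[Theorem 3.1]{tt-QNS} and applies Proposition \ref{p:dilate} directly to the block isometry $(V_{a,x})_{a,x}$; the key cancellation $w_2w_2^*\pi(u_{a',x'})w_1=\pi(u_{a',x'})w_1$ then does all the work. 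Both arguments rest on Proposition \ref{p:dilate} at their core (the paper's through Corollary \ref{c:order2}, whose proof invokes that proposition), so they are not independent, but your version is shorter and more concrete: it exhibits the dilating unitary and isometry explicitly, bypassing the Arveson extension and Stinespring steps entirely, and it reads naturally as the matricial Naimark-type passage from block operator isometries to block operator unitaries that the paper advertises in the surrounding discussion. The paper's route, by contrast, emphasises the role of the operator system $\cl T_X$ and its complete order structure, which is consistent with the section's goal of identifying $C^*_e(\cl T_X)$.
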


\begin{proof} By the universal property of $\cl{T}_X$, there is unital completely positive map $\psi:\cl{T}_X\to\mc{B}(H)$ satisfying $\psi(e_{x,x',a,a'})=E_{x,x',a,a'}$. Let  $\vphi^{-1}:\mathrm{span}\{u_{a,x}^*u_{a',x'}: x,x',a,a'\in X\}\to \cl{T}_X$ be the complete order isomorphism from Corollary \ref{c:order2}. Extending the unital completely positive map $\psi\circ\vphi^{-1}$ to $\cl{B}_X\to\mc{B}(H)$ (by injectivity of $\mc{B}(H)$) and appealing to a Stinespring representation of the extension yields the desired conclusion.
\end{proof}

Using results from \cite{bks}, 
we now show that the $C^*$-envelope of $\cl{T}_X$ is the 
C*-subalgebra $C^*(\cl U)$ of $\cl{B}_X$, generated by the 
operator system
$$\cl U = {\rm span}\{u_{a,x}^*u_{a',x'}:x,x',a,a'\in X\}.$$

\begin{proposition}\label{p:env} 
Let $X$ be a finite set. Then
$C^*_{e}(\cl{T}_X)=C^*(\cl U)$.
\end{proposition}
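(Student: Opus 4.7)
The plan is to realize $C^*(\cl U)$ as the minimal C*-cover of $\cl T_X$. By Corollary \ref{c:order2}, the map $\varphi:\cl T_X\to\cl U\subseteq C^*(\cl U)$ sending $v_{a,x}^*v_{a',x'}\mapsto u_{a,x}^*u_{a',x'}$ is a unital complete order embedding whose image generates $C^*(\cl U)$. Hence $C^*(\cl U)$ is a C*-cover of $\cl T_X$ and $\varphi$ extends (by the universal property of $\cl C_X=C^*(\cl T_X)$) to a surjective $*$-homomorphism $\Phi:\cl C_X\to C^*(\cl U)$. Since $\Phi|_{\cl T_X}=\varphi$ is a complete order embedding, $\ker\Phi$ is a boundary ideal and therefore contained in the Shilov ideal $I_{\rm Sh}$ of $\cl T_X$ in $\cl C_X$; consequently, there is a canonical surjection $C^*(\cl U)\twoheadrightarrow C^*_e(\cl T_X)$, and the claim reduces to showing this surjection is injective, equivalently, $I_{\rm Sh}\subseteq\ker\Phi$, equivalently, that $C^*(\cl U)$ admits no nontrivial boundary ideals for $\cl U\subseteq C^*(\cl U)$.

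I would prove the latter by invoking \cite[Theorem 3.8]{bks}, verifying its hypotheses in a manner parallel to the two preceding cases of this section. The key structural input is the universal property of the Brown algebra $\cl B_X$: a unital $*$-homomorphism $\cl B_X\to\cl A$ is the same as a block operator unitary in $M_X(\cl A)$. Together with the completely isometric identification of $\cl T_X$ with $\cl U$ from Corollary \ref{c:order2}, this is precisely the analogue, in the ``unitary'' setting, of the universal properties \cite[Corollary 2.9]{bks} used for $\cl A_{X,A}$ and $\cl B_{X,A}$ in the PVM and semi-classical cases. The hypotheses of \cite[Theorem 3.8]{bks} then translate into a statement that can be verified from this universal property combined with the dilation result Proposition \ref{p:dilate}.

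Concretely, to verify the boundary-representation criterion, I would take an irreducible representation $\pi$ of $C^*(\cl U)$ on some Hilbert space $H_\pi$, extend $\pi|_{\cl U}$ by injectivity of $\cl B(H_\pi)$ to a ucp map on $\cl B_X$, Stinespring-dilate to a $*$-representation $\pi':\cl B_X\to\cl B(L)$ with compression isometry $w:H_\pi\to L$, and then use the rigidity guaranteed by Proposition \ref{p:dilate}, namely the relation $w_2 w_2^*\pi'(u_{a,x})w_1=\pi'(u_{a,x})w_1$, to show that the resulting compression of $\pi'$ back to $H_\pi$ agrees multiplicatively with $\pi$ on all of $C^*(\cl U)$---thereby establishing the unique extension property of $\pi|_{\cl U}$. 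The main obstacle is this last step: matching the output of the Proposition \ref{p:dilate}-based dilation with the given irreducible representation $\pi$ on the whole C*-algebra $C^*(\cl U)$ (and not merely on the operator system $\cl U$), so as to conclude that every irreducible representation of $C^*(\cl U)$ is a boundary representation and hence that $\ker\Phi=I_{\rm Sh}$.
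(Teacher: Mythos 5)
Your reduction to showing that the canonical surjection $\Phi:\cl C_X\twoheadrightarrow C^*(\cl U)$ has kernel equal to the Shilov ideal is correct and well-organized, and the first half of the plan (that $C^*(\cl U)$ is a $C^*$-cover of $\cl T_X$ via Corollary \ref{c:order2}, that $\ker\Phi$ is a boundary ideal) is a faithful translation of the problem into Arveson's framework. But this is where the parallel with the paper ends: the paper does \emph{not} attack the Shilov/boundary-ideal condition directly. Instead, it identifies $C^*(\cl U)$ with the universal $C^*$-algebra $\cl A$ of \cite[Theorem 2.3]{bks} (by producing an explicit $*$-isomorphism $\Lambda:\cl A\to C^*(\cl U)$ from the two universal properties of $\cl A$ and $\cl B_X$) and then simply \emph{invokes} \cite[Theorem 3.8]{bks}, which already asserts that this universal algebra is the $C^*$-envelope of the universal operator system. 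Your plan instead tries to \emph{re-derive} the content of \cite[Theorem 3.8]{bks} for the specific pair $(\cl U, C^*(\cl U))$ by proving the unique extension property for every irreducible representation.

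That is a genuinely different, and in principle more self-contained, route — but it is where your argument has a real gap, which you acknowledge. The problem is that Proposition \ref{p:dilate} is a rectangular dilation statement for a block operator \emph{isometry}: it produces a pair $(w_1,w_2)$ of isometries and a representation $\pi'$ of $\cl B_X$ such that $v_{a,x}=w_2^*\pi'(u_{a,x})w_1$, with the rigidity $w_2w_2^*\pi'(u_{a,x})w_1=\pi'(u_{a,x})w_1$. This controls compressions of the linear generators $u_{a,x}$ (degree one), but to conclude that a ucp extension $\psi$ of $\pi|_{\cl U}$ agrees with $\pi$ on all of $C^*(\cl U)$ you need control over compressions of the degree-two elements $u_{a,x}^*u_{a',x'}$ and, after invoking multiplicativity, of arbitrary words in these. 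The degree-two case is handled in Corollary \ref{c:order2}, but the jump from there to ``$\pi$ has the unique extension property as a map out of $C^*(\cl U)$'' is precisely the substance you have not supplied — multiplicativity of the ucp extension on a generating set does not come for free, and one would need to run a multiplicative-domain argument carefully, which is not sketched. So the proposal, as written, does not constitute a proof: the abstract reduction is fine, but the core verification (that $C^*(\cl U)$ has no nontrivial boundary ideals) is an announced intention rather than an argument. The paper sidesteps this entirely by recognizing $C^*(\cl U)$ as a known universal object and citing the existing envelope result.
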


\begin{proof} 
Throughout the proof, a unital $*$-homomorphism between unital $C^*$-algebras will simply be called a morphism. Let $\cl{A}$ be the unital $C^*$-algebra which is universal for morphisms $M_X\mapsto M_X\ten {\cl A}$
in the sense that 
there exists a morphism $\alpha:M_X\to M_X\ten \cl{A}$ 
such that, for any unital $C^*$-algebra $\cl{B}$ and morphism $\beta:M_X\to M_X\ten \cl{B}$, there is a unique morphism $\Lambda:\cl{A}\to\cl{B}$ such that $\beta=(\id\ten\Lambda)\alpha$ 
(see \cite[Theorem 2.3]{bks}). Viewing $\cl{A}\subseteq\mc{B}(H)$, there exists a unitary $w:\C^X\ten H\to\C^X\ten H$ such that $\alpha(T) = w^*(T \ten I_H)w$, $T\in M_X$. Since $\alpha$ has range in $M_X\ten\cl{A}$, we have that $w_{a,x}^*w_{a',x'}\in\cl{A}$ for all $x,x',a,a'\in X$. Moreover, $\cl{A}$ is generated by $\{(\rho\ten\id)\alpha(A) : \rho\in M_X^*, \ A\in M_X\}$ \cite[Theorem 2.3]{bks}, so $\cl{A}=C^*(\{w_{a,x}^*w_{a',x'}: x,x',a,a'\in X\})$. 

If $u=(u_{a,x})_{a,x}$ is the generating matrix for $\cl{B}_X$, 
$$M_X\ni T\mapsto u^*(T\ten 1)u\in M_X\ten \cl{B}_X$$ 
is a morphism, so the universal property of $\cl{A}$ supplies a morphism $\Lambda:\cl{A}\to\cl{B}_X$ satisfying $\Lambda(w_{a,x}^*w_{a',x'})=u_{a,x}^*u_{a',x'}$, $x,x',a,a'\in X$. Then $\Lambda(\cl{A})\subseteq 
C^*(\cl U)$, necessarily. 

By the universal property of $\cl{B}_X$, there is a morphism 
$\pi : \cl{B}_X\to\cl{B}(H)$ such that $\pi(u_{a,x}) = w_{a,x}$, $x,a\in X$. The restriction of $\pi$ to 
$C^*(\cl U)$ is an inverse to $\Lambda$, so that 
$C^*(\cl U)\cong\cl{A}$.
The desired conclusion then follows from \cite[Theorem 3.8]{bks}.
\end{proof}

It follows from the proof of Proposition \ref{p:env} that the operator system $\cl{T}_X$ satisfies the universal property of \cite[Theorem 3.3]{bks} for unital completely positive maps $M_X\to M_X\ten \cl{T}_X$, so that \cite[Corollary 3.11]{bks} implies $$\cl{T}_X\ten_{c}\cl{T}_Y\subseteq C^*_e(\cl{T}_X)\ten_{\max} C^*_e(\cl{T}_Y).$$
Therefore, the self-testing examples from Proposition \ref{p:prop} give rise to finite-dimensional states $f:\cl{T}_X\ten_{c} \cl{T}_Y\to\C$ which extend uniquely to finite-dimensional states of $C^*_e(\cl{T}_X)\ten_{\max} C^*_e(\cl{T}_Y)$.


\section{Questions}\label{s_questions}

The notion of an approximate dilation $\tilde{S}$ of a model $S$
over the pair $(\cl S_{X,A},\cl S_{Y,B})$ was defined in 
\cite{zhao} for the case of classes of quantum models, 
and can be easily extended to classes of quantum 
commuting models over
a finitary context $(\cl S_A,\cl S_B)$,
where $\cl S_A = {\rm span}\{f_i\}_{i=1}^k$ and 
$\cl S_B = {\rm span}\{g_j\}_{j=1}^l$:
given $\delta > 0$, a quantum commuting model 
$\tilde{S} = (\tilde{H},\tilde{\nph}_A,\tilde{\nph}_B\tilde{\xi})$ is said to $\delta$-dilate a quantum commuting 
model $S = (H,\nph_A,\nph_B,\xi)$ if there exist 
an auxiliary system $H_{\rm aux}$ and a unit vector 
$\xi_{\rm aux}\in H_{\rm aux}$, and 
a local isometry $V : H\to \tilde{H}\otimes H_{\rm aux}$ 
such that 
$$V\phi_A(f_i)\phi_B(g_j)\xi\sim^{\delta} 
\tilde{\phi}_A(f_i)\tilde{\phi}_B(g_j)\tilde{\xi}
\otimes\xi_{\rm aux}, \ \ \ i\in [k], j\in [l]. $$
We say that a correlation $p$ of quantum commuting type 
over $(\cl S_A,\cl S_B)$ is a robust 
quantum commuting self-test if there exists a model 
$\tilde{S}$ of $p$ such that for every $\epsilon > 0$
there exists $\delta > 0$ with the property that, whenever 
$p_S$ is a correlation arising from a quantum commuting model 
$S$ over $(\cl S_A,\cl S_B)$ with $\|p_S - p\|_1 < \epsilon$, 
we have that $\tilde{S}$ is a $\delta$-dilation of $S$. 
We have the following implications for 
a correlation $p$: 
$$p \mbox{ is a robust self-test} \ \Rightarrow \ 
p \mbox{ is a weak self-test} \ \Rightarrow \ 
p \mbox{ is a self-test}.
$$
These implications show that weak self-testing 
is a natural concept to study, but suggest the following 
initial question:

\begin{question}\label{q_weak}
\rm
Is every weak self-test a self-test?
\end{question}

If $S$ is a centrally supported quantum POVM model for $p\in\cl C_{\rm q}$ and $S_r$ is the reduced model of $S$, then $S\preceq S_r$ (see \cite[Lemma 4.2]{pszz}), which reduces the question of self-testing of $p$ to the study of full rank models of $p$. The latter was important for establishing a number of self-testing results and allowed the application of  representations of certain algebraic relations (see e.g. \cite{supic-bowles} for a self-test of the optimal strategy of CHSH game, and \cite{mps} for self-tests of some synchronous correlations). We therefore formulate the following:

\begin{question}
For general model $S=(_{\cl A}H_{\cl B}, \varphi_A, \varphi_B, \xi)$, 
is it true that $S\preceq S_r$?
\end{question}

We point out that, currently, 
we do not see how the construction of local isometries for the dilation $S\preceq S_r$ can be modified to give one in the non-tensor split case. 

In the definition of self-testing, an assumption is made on 
the existence of an auxiliary quantum system that ampliates the ideal 
model for the self-test in question. What types of auxiliary systems 
may arise is an interesting question in its own right. As an example, 
we formulate the following:

\begin{question} 
When applying our definition of self-testing to (quantum) no-signalling correlations of quantum type (e.g. for the classes $\mathcal{C}_{\rm q}$ and 
$\mathcal{Q}_{\rm q}$), do the auxiliary bimodules arising from the dilation 
pre-order automatically tensor factorise into quantum spacial systems? 
\end{question}



\medskip

\noindent 
{\bf Acknowledgements. } 
We are grateful to The Banff International Research Station for Mathematical Innovation and Discovery (BIRS) for hospitality and support during the initial phase of the project. The first author was partially supported by the NSERC Discovery Grant RGPIN-2023-05133. The second author was supported by NSF grants 
CCF-2115071 and DMS-2154459. The third author was supported by the Swedish Research Council project grant 2023-04555 and GS Magnusons Fond MF2022-0006.



\begin{thebibliography}{99}

\bibitem{acin-etal}
{\sc A.\,Ac\'in, T.\,Fritz, A.\,Leverrier and A.\,B.\,Sainz},
{\it A combinatorial approach to nonlocality and contextuality},
{\rm Comm. Math. Phys. 334 (2015), no. 2, 533-628}.

\bibitem{act}
{\sc M.\,Anoussis, A.\,Chatzinikolaou and I.\,G.\,Todorov},
{\it Operator systems, contextuality and non-locality},
{\rm preprint (2024), arXiv:2405.11152}.

\bibitem{arv}
{\sc W.\,B.\,Arveson}, 
{\it Notes on the unique extension property},
{\rm Unpublished note, 2003}.

\bibitem{brv-etal}
{\sc K.\,Bharti, M.\,Ray, A.\,Varvitsiotis, N.\,A.\,Warsi, A.\,Cabello and L.-C.\,Kwek}, 
{\it Robust self-testing of quantum systems via noncon- textuality inequalities}, 
{\rm Phys. Rev. Letters 122 (2019), no. 25, 250403}.

\bibitem{BLM}
{\sc D.\,P.\,Blecher and C.\,Le Merdy},
{\em Operator Algebras and Their Modules: An operator space approach},
{\rm Oxford University Press, 2004}.

\bibitem{bks}
{\sc A.\,Bochniak, P.\,Kasprzak and P.\,M.\,So\l tan},
{\it Quantum correlations on quantum spaces}, {\rm  Int. Math. Res. Not. IMRN 14 (2023), 12400-12440}.


\bibitem{bsca-1}
{\sc J.\,Bowles, I.\,Supi\'c, D.\,Cavalcanti and 
A.\,Ac\'in}, 
{\it Device-independent entanglement certification of all entangled states}, 
{\rm Phys. Rev. Lett. 121 (2018), no. 18, 180503}.


\bibitem{bsca-2}
{\sc J.\,Bowles, I.\,\v Supi\'c, D.\,Cavalcanti and 
A.\,Ac\'in}, 
{\it Self-testing of Pauli observables for device-independent entanglement certification}, 
{\rm Phys. Rev. A 98 2018, no. 4, 042336}.



\bibitem{bhtt-JFA}
{\sc M.\,Brannan, S.\,J.\,Harris, I.\,G.\,Todorov and L.\,Turowska}, 
{\it Synchronicity for quantum non-local games},
{\rm J. Funct. Anal. 284 (2023), no. 2, Paper No. 109738, 54 pp.}

\bibitem{bhtt-Adv}
{\sc M.\,Brannan, S.\,J.\,Harris, I.\,G.\,Todorov and L.\,Turowska}, 
{\it Quantum no-signalling bicorrelations},
{\rm Adv. Math. 449 (2024), Paper No. 109732, 81 pp.}


\bibitem{broadbent}
{\sc A.\,Broadbent, A.\,Mehta and Y.\,Zhao}, 
{\it Quantum delegation with an off-the-shelf device},
{\rm Leibniz International Proceedings in Informatics (2023), arXiv:2304.03448}.

\bibitem{brown}
{\sc L.\,G.\,Brown}, 
{\it Ext of certain free product $C^*$-algebras},
{\rm J. Operator Theory 6 (1981),
no. 1, 135-141}.

\bibitem{BO}
{\sc N.\,P.\,Brown and N.\,Ozawa},
{\it $C^*$-algebras and finite-dimensional approximations},
{\rm  American Mathematical Society, 2008}.


\bibitem{chsh}
{\sc J.\,F.\,Clauser, M.\,A.\,Horne, A.\,Shimony and R.\,A.\,Holt},
{\it Proposed experiment to test local hidden-variable theories},
{\rm Phys. Rev. Letters 23 (1969), 880-884}.

\bibitem{cgjv}
{\sc A.\,Coladangelo, A.\,B.\,Grilo, S.\,Jeffery and T.\,Vidick}, 
{\it Verifier-on-a-leash: New schemes for verifiable delegated quantum computation, with quasilinear resources}, 
{\rm Th. Comp. 20 (2024), no. 1, 1-87}.



\bibitem{cckl}
{\sc A.\,Conlon, J.\,Crann, D.\,W.\,Kribs and R.\,H.\,Levene},
{\it Quantum teleportation in the commuting operator framework},
{\rm Ann. Henri Poincar\'{e} 24 (2023), no. 5, 1779-1821}.

\bibitem{cklt} 
{\sc J.\,Crann, D.\,W.\,Kribs, R.\,H.\,Levene and I.\,G.\,Todorov}, 
{\it State convertibility in the von Neumann algebra framework}, 
{\rm Comm. Math. Phys. 378 (2020), no. 2, 1123-1156}.

\bibitem{cltt}
{\sc J.\,Crann, R.\,H.\,Levene, I.\,G.\, Todorov and L.\,Turowska},
{\it Values of cooperative quantum games},
{\rm preprint (2023), arXiv:2310.17735}.

\bibitem{davk}
{\sc K.\,R.\,Davidson and M.\,Kennedy}, 
{\it The Choquet boundary of an operator system},
{\rm Duke Math. J. 164 (2015), no. 15, 2989-3004}.


\bibitem{dixmier}
{\sc J.\,Dixmier},
{\it C*-algebras}, 
{\rm North Holland, 2011}.

\bibitem{dixmier_von_neumann}
{\sc J.\,Dixmier},
{\it Von Neumann algebras}, 
{\rm Elsevier, 2011}.


\bibitem{dw}
{\sc R.\,Duan and A.\,Winter},
{\it No-signalling assisted zero-error capacity of quantum channels and an
information theoretic interpretation of the Lov\'{a}sz number},
{\rm IEEE Trans. Inf. Theory 62 (2016), no. 2, 891-914}.

\bibitem{EL} 
{\sc E.\,G.\,Effros and C.\,E.\,Lance},
{\it Tensor products of operator algebras},
{\rm Adv. Math. 25 (1977), no. 1, 1-34}.

\bibitem{ekt} 
{\sc G.\,K.\,Eleftherakis, E.\,T.\,A.\,Kakariadis and I.\,G.\,Todorov},
{\it Symmetrisations of operator spaces},
{\rm preprint (2025), arXiv:2503.15192}. 

\bibitem{ghj}
{\sc L.\,Gao, S.\,Harris and M.\,Junge},
{\it Quantum teleportation and super-dense coding in operator algebras}.
{\rm Int. Math. Res. Not. IMRN (2021), no. 12, 9146-9179}.

\bibitem{haagerup-musat}
{\sc U.\,Haagerup and M.\,Musat},
\textit{Factorization and dilation problems for completely positive maps on von Neumann algebras},
{\rm Comm. Math. Phys. 303 (2011), no. 2, 555-594}.


\bibitem{jnvwy}
{\sc Z.\,Ji, A.\,Natarajan, T.\,Vidick, J.\,Wright and H.\,Yuen},
{\it MIP*=RE},
{\rm preprint (2020), arXiv:2001.04383}.



\bibitem{js} 
{\sc V.\,F.\,R.\,Jones and V.\,S.\,Sunder},
{\it Introduction to subfactors},
{\rm Cambridge University Press, 1997}.


\bibitem{kadison-ringrose}
{\sc R.\,V.\,Kadison and J.\,R.\,Ringrose},
{\it Fundamentals of the theory of operator algebras II}, 
{\rm American Mathematical Society, 1997}.

\bibitem{kar}
{\sc P.\,N.\,Kar},
{\it Robust self-testing for synchronous correlations and games},
{\rm preprint (2025), arXiv:2503.23500}.


\bibitem{KPTT11}
{\sc A.\,S.\,Kavruk, V.\,I.\,Paulsen, I.\,G.\,Todorov and M.\,Tomforde},
{\it Tensor products of operator systems},
{\rm J.\ Funct.\ Anal.\ 261 (2011), no.\ 2, 267-299}.

\bibitem{kps}
{\sc S.-J.\,Kim, V.\,I.\,Paulsen and C.\,Schafhauser},
{\it A synchronous game for binary constraint systems},
{\rm J. Math. Phys. 59 (2018), no. 3, 032201, 17 pp.}


\bibitem{kita}
{\sc Y.\,Kitajima},
{\it Local operations and completely positive maps in algebraic quantum field theory},
{\rm In: M. Ozawa, J. Butterfield, H. Halvorson, M. R\'{e}dei, Y. Kitajima, Y, F. Buscemi (eds) Reality and Measurement in Algebraic Quantum Theory, NWW 2015, Springer Proceedings in Mathematics \& Statistics, vol. 261, Springer, Singapore, 2018}.

\bibitem{kr}
{\sc A.\,Klappenecker and M.\,R\"{o}tteler},
{\it On the monomiality of nice error bases},
{\rm IEEE Trans. Inf. Theory 51 (2005), no. 3, 1084-1089}.

\bibitem{LS}
{\sc L.\,J.\,Landau and R.\,F.\,Streater},
{\it On Birkoff's theorem for doubly stochastic completely positive maps of matrix algebras},
{\rm Lin. Alg. App. 193, 107-127 (1993)}.

\bibitem{lsww}
{\sc L.\,van Luijk, A.\,Stottmeister, R.\,F.\,Werner and  H.\,Wilming},
{\it Pure state entanglement and von Neumann algebras},
{\rm preprint (2024), arXiv:2409.17739}.


\bibitem{lmprsstw}
{\sc M.\,Lupini, L.\,Man\v{c}inska, V.\,I.\,Paulsen, D.\,E.\,Roberson, G.\,Scarpa, S.\,Severini, I.\,G.\,Todorov and A.\,Winter},
{\it Perfect strategies for non-local games},
{\rm Math. Phys. Anal. Geom. 23 (2020), no. 1, Paper No. 7, 31 pp.}



\bibitem{mps}
{\sc L.\,Man\v{c}inska, J.\,Prakash and C.\,Schafhauser},
{\it Constant-sized robust self-tests for states and measurements of unbounded dimension},
{\rm Comm. Math. Phys.   
405 (2024), no. 9, Paper No. 221, 36 pp.}

\bibitem{myao}
{\sc D.\,Mayers and A.\,Yao},
{\it Self testing quantum apparatus},
{\rm Quantum Inf. Comp. 4 (2004), no. 4, 273-286}.


\bibitem{mpw}
{\sc A.\,Mehta, C.\,Paddock and L.\,Wooltorton},
{\it Self-testing in the compiled setting via tilted-CHSH inequalities},
{\rm preprint (2024), arXiv:2406.04986}.


\bibitem{mckyc}
{\sc M.\,McKague, T.\,H.\,Yang and V.\,Scarani}, 
{\it Robust self-testing of the singlet}, 
{\rm J. Phys. A 45 (2012), no. 45, 455304}.


\bibitem{npa}
{\sc M.\,Navascu\'es, S.\,Pironio and A.\,Ac\'in}, 
{\it A convergent hierarchy of semdeﬁnite programscharacterizing the set of quantum correlations}, 
{\rm New J. Phys. 10 (2008), 073013}.

\bibitem{os}
{\sc V.\,Ostrovskyi and Yu.\,Samoilenko},
{\it Introduction to the theory of representations of finitely presented *-algebras. I. Representations by bounded operators},
{\rm Harwood Academic Publishers, 1999}.

\bibitem{pszz}
{\sc C.\,Paddock, W.\,Slofstra, Y.\,Zhao and Y.\,Zhou},
{\it An operator-algebraic formulation of self-testing},
{\rm Ann. Henri Poincare 25 (2024), no. 10, 4283-4319}.

\bibitem{Pa}
{\sc V.\,I.\,Paulsen},
{\it Completely bounded maps and operator algebras},
{\rm Cambridge University Press, 2002}.


\bibitem{psstw}
{\sc V.\,I.\,Paulsen, S.\,Severini, D.\,Stahlke, I.\,G.\,Todorov and A.\,Winter}, 
{\it Estimating quantum chromatic numbers}, 
{\rm J. Funct. Anal. 270 (2016), no. 6, 2188-2222}.


\bibitem{pt}
{\sc V.\,I.\,Paulsen and I.\,G.\,Todorov}, 
{\it Quantum chromatic numbers via operator systems},
{\rm Q. J. Math. 66 (2015), no. 2, 677-692}.


\bibitem{pabgms}
{\sc S.\,Pironio, A.\,Ac\'in, N.\,Brunner, N.\,Gisin, S.\,Massar and V.\,Scarani}, 
{\it Device-independent quantum key distribution secure against collective
attacks}, 
{\rm New J. Phys. 11 (2009), no. 4, 045021}.

\bibitem{ruv}
{\sc B.\,W.\,Reichardt, F.\,Unger and U.\,Vazirani}, 
{\it Classical command of quantum systems}, 
{\rm Nature 496 (2013), 456-460}.


\bibitem{seitz}
{\sc G.\,Seitz},
{\it Finite groups having only one irreducible representation of degree greater than one},
{\rm Proc. Amer. Math. Soc. 19 (1968), 459-461}.

\bibitem{simon}
{\sc B.\,Simon},
{\it Representations of finite and compact groups}, 
{\rm American Mathematical Society, 1996}.

\bibitem{supic-bowles}
{\sc I.\,S\v{u}pic and J.\,Bowles},
{\it Self-testing of quantum systems: a review},
{\rm Quantum 4, 337 (2020)}.

\bibitem{takesaki}
{\sc M.\,Takesaki},
{\it Theory of Operator Algebras I},
{\rm Springer-Verlag, 2001}.

\bibitem{takesaki2} 
{\sc M.\,Takesaki} ,
{\it Theory of Operator Algebras II},
{\rm Springer-Verlag, 2003}.

\bibitem{tt-QNS} 
{\sc I.\,G.\,Todorov and L.\,Turowska},
{\it Quantum no-signalling correlations and non-local games},
{\rm Comm. Math. Phys. 405 (2024), no. 6, 
Paper No. 141, 65 pp.}


\bibitem{tsirelson}
{\sc B.\,S.\,Tsirelson},
{\it Quantum analogues of Bell's inequalities. The case of two spatially divided domains},
{\rm Zap. Nauchn. Sem. Leningrad. Otdel. Mat. Inst. Steklov. (LOMI)   142 (1985), 174-194, 200}.

\bibitem{vw}
{\sc R.\,Verch and R.\,F.\,Werner},
{\it Distillability and positivity of partial transposes in general quantum field systems},
{\rm Rev. Math. Phys. 17 (2005), no. 5, 545--576}.

\bibitem{zhao}
{\sc Y.\,Zhao}, 
{\it Robust self-testing for nonlocal games with robust game algebras}, 
{\rm preprint (2024), arXiv:2411.03259}.



\end{thebibliography}
\end{document}